\documentclass[11pt]{article}

\usepackage{amssymb}
\usepackage{mathrsfs}
\usepackage{graphics}
\usepackage{amsmath}
\usepackage{amsmath,amsthm}
\usepackage{fullpage}
\usepackage[numbers]{natbib}
\usepackage{setspace}
\usepackage[demo]{graphicx}
\usepackage{soul}
\usepackage{graphicx}
\usepackage{xcolor}
\usepackage{url}
\usepackage[hidelinks]{hyperref}

\usepackage{booktabs} % For formal tables
\usepackage[ruled]{algorithm2e} % For algorithms

\SetAlFnt{\small}
\SetAlCapFnt{\small}
\SetAlCapNameFnt{\small}
\SetAlCapHSkip{0pt}
\IncMargin{-\parindent}

% Choose a citation style by commenting/uncommenting the appropriate line:
\setcitestyle{authoryear}
%\setcitestyle{acmnumeric}

\usepackage{mathtools,nicefrac,amsthm}

\newtheorem{theorem}{Theorem}[section]
\newtheorem*{theorem*}{Theorem}
\newtheorem{open}{ Open Question}
\newtheorem{claim}[theorem]{Claim}
\newtheorem{definition}[theorem]{Definition}   % Added this line to turn definitions into italic font
\newtheorem{remark}[theorem]{Remark}
\newtheorem{example}[theorem]{Example}
\newtheorem{lemma}[theorem]{Lemma}
\newtheorem{corollary}[theorem]{Corollary}

\newcommand{\overv}{\overline{v}}
\newcommand{\underv}{\underline{v}}

\newcommand{\xvec}{\mathbf{x}}
\newcommand{\rvec}{\mathbf{r}}

\newcommand{\qvec}{\mathbf{q}}
\newcommand{\calF}{\mathcal{F}}
\newcommand{\calL}{\mathcal{L}}
\newcommand{\calM}{\mathcal{M}}
\newcommand{\calQ}{\mathcal{Q}}

\newcommand{\N}{\mathbb{N}}

\DeclarePairedDelimiter\abs{\lvert}{\rvert}

\DeclareMathOperator*{\argmax}{argmax}
\DeclareMathOperator*{\argmin}{argmin}
\DeclareMathOperator*{\Rev}{Rev}
\DeclareMathOperator*{\iid}{i.i.d.}

%\title{Robust mechanism design with moments: From one to two bidders}
\title{Distributional Robustness:
From Pricing to Auctions}
% moment conditions/constraints
% Carrasco et al ``Optimal selling mechanisms under moment conditions''

\begin{document}

% \begin{titlepage}
% \maketitle
% \end{titlepage}

\author{ Nir Bachrach \thanks{Technion -- Israel Institute of Technology. Email: nir.bachrach@campus.technion.ac.il.}
\and Inbal Talgam-Cohen\thanks{Technion -- Israel Institute of Technology. Email: inbaltalgam@gmail.com.}
}

\maketitle

\begin{abstract}
Robust mechanism design is a rising alternative to Bayesian mechanism design, which yields designs that do not rely on assumptions like full distributional knowledge. We apply this approach to mechanisms for selling a single item, assuming that only the mean of the value distribution and an upper bound on the bidder values are known. We seek the mechanism that maximizes revenue over the worst-case distribution compatible with the known parameters. 
Such a mechanism arises as an equilibrium of a zero-sum game between the seller and an adversary who chooses the distribution, and so can be referred to as the max-min mechanism.

Carrasco et al.~[2018] derive the
max-min \emph{pricing} when the seller faces a single bidder for the item.
We go from max-min pricing to max-min \emph{auctions} by studying the canonical setting of two i.i.d.~bidders,
and show the max-min mechanism is the second-price auction with a randomized reserve.
We derive a closed-form solution for the distribution over reserve prices, as well as the worst-case value distribution, for which there is simple economic intuition. In fact we derive a closed-form solution for the reserve price distribution for any number of bidders.

Our technique for solving the zero-sum game is quite different than that of Carrasco et al.~-- it involves analyzing a discretized version of the setting, then refining the discretization grid and deriving a closed-form solution for the non-discretized, original setting. Our results establish a difference between the case of two bidders and that of $n \ge 3$ bidders.
\end{abstract}

\newpage
\section{Introduction}
\label{sec:intro}

\noindent{\bf Bayesian mechanism design.}
Traditional mechanism design centers around the Bayesian approach, in which unknown features of the environment (such as agents’ preferences) are assumed to come from a commonly-known prior distribution. The design goal is to maximize an objective like revenue in expectation over this prior.
A beautiful theory has developed for the Bayesian model;
in particular, consider one of the simplest standard mechanism design problems: a seller with a single item for sale, and bidders with privately-known values $v_1,v_2$ drawn independently from a known distribution~$F$. The design goal is to maximize the expected revenue (over the random values and any internal randomness of the mechanism). 
%For an auction environment knowledge of the distribution of the bidders' private values for the item being sold is assumed. 
For the special case of a single bidder, it is not hard to deduce the optimal pricing mechanism -- set a deterministic price $r$ that maximizes the expected revenue $r(1-F(r))$. 
The celebrated theory of \citet{Myerson81} extends this result from optimal pricing to optimal auctions, yielding the mechanism with optimal expected revenue for the case of two or more bidders. %with values drawn independently from distributions. %When there is correlation among the bidders, \cite{Ronen,Dobzinski} study this.
One of the ``all-time greatest hits'' of mechanism design is the Myerson optimal auction for i.i.d.~values, which turns out to be the ubiquitous second-price auction with a deterministic reserve $r=r(F)$.

%``Traditional Bayesian design problems put a prior distribution on unknown features of the environment (such as agents’ preferences), and maximize the expectation of some objective, such as profit.''

\vspace{0.05in}
\noindent{\bf Robust mechanism design.}
One drawback of the Bayesian theory is its reliance on a strong common-knowledge assumption with respect to the distribution. As Nobel laureate Robert Wilson pointed out, for economic theory to be applicable to practice we must strive to repeatedly weaken such assumptions~\cite{Wilson87}. The Wilson doctrine combined with classic robust optimization models inspired a rich literature on \emph{robust mechanism design}. %'s doctrine  calling for weakening and ultimately dropping strong common knowledge assumptions. 
In our single-item auction context, a distributionally-robust mechanism does not have full knowledge of the value distribution -- although such a distribution exists ``in the background''~\cite[\emph{cf}.,][]{DevanurHY15} -- and is required to perform well for a worst-case such distribution.
Robust mechanism design can be seen as bridging the worst- and average-case approaches of CS and economics, respectively,%
\footnote{Robust mechanism design can thus be viewed as part of the \emph{beyond worst-case analysis} agenda~\cite{Roughgarden20}.} 
thus holding much promise for generating new insight.

\vspace{0.05in}
\noindent{\bf Distributional robustness.}
Rather than full Bayesian knowledge, the designer is assumed to have access either to certain \emph{parameters} of the distribution (e.g.,~mean value), or to samples from it, or to additional bidders with values from the same distribution. The goal is to use these alternatives to design ``optimal'' distributionally-robust auctions, for an appropriate definition of optimality. Different notions of optimality have led to different branches of research, notably \emph{prior-independent} versus \emph{max-min} optimal mechanism design. 
Our focus in this paper is on parametric knowledge of the distribution and on max-min optimal auctions. 

The canonical problem in this area is described by
\citet{Carroll19}, as follows: 
\begin{quote}
``It can be natural to ask what happens if the designer has only partial information about the distribution, and wishes to maximize a guarantee under this partial information [...] 
%A natural starting point is to consider the simplest standard mechanism design problem: a monopolist selling a single object, to a bidder with unknown value $v$ drawn from some distribution, trying to maximize expected profit. 
For example, what happens if the seller instead does not know the distribution, but only knows the mean and an upper bound on $v$, and wishes to design a mechanism to \emph{maximize expected profit in the worst case} over all distributions consistent with this knowledge?'' \end{quote}
This is the problem we wish to make progress on in this paper.

\vspace{0.05in}
\noindent{\bf Zero-sum game perspective.}
Mathematically, the problem of designing the max-min optimal mechanism given parametric knowledge of the distribution can be modeled as a zero-sum game, where a seller-player is trying to maximize expected revenue, and an adversary-player (also referred to as ``nature'') is trying to minimize the revenue by picking the worst-case distribution compatible with the parametric knowledge. 
A \emph{Nash equilibrium} of the zero-sum game yields a stable max-min strategy for the seller, i.e., a distributionally-robust optimal auction. Many recent works on robust mechanism design have taken the zero-sum game approach~\cite[e.g.,][]{BeiGLT19,GravinL18,Carroll17,Carroll15,DuttingRT19,BabaioffFGLT20}. 

The zero-sum game model highlights two core ideas of the max-min approach: 
First, the seller should \emph{randomize} in order to hedge her uncertainty; indeed, the seller-player's equilibrium strategy is generally a \emph{mixed} one.
A second idea is to strike the right balance between relaxation of unrealistic knowledge assumptions, and good performance of the mechanism. This is achieved by carefully balancing the powers of the seller and adversary in the zero-sum game. 
If the adversary has too much power (reflecting extreme lack of knowledge on behalf of the seller), the robust auction arising from the equilibrium might be trivial and uninformative. 
%E.g., assuming the distribution over the values can be completely arbitrary might be giving too much power to the adversary. %-- mechanism design n practice one hedges against some uncertainties but under some anchors of knowldege. 
An intuitive indication of striking the right balance is that the max-min strategy of the seller yields a robustly-optimal auction format that turns out to be ubiquitous in practice (a notable example is~\cite{Carroll17}). 
%assuming too much is unknown (a lot of power to the adversary), we don’t account for the possibility of aliens… assuming i.i.d. is a natural assumption in many scenarios, e.g. online. Correlated – they know something you don’t

%uncertain about some aspect of the environment, but is not assumed to know a Bayesian prior on the uncertain aspect. 

%We focus on the objective of revenue and on distributionally robust mechanisms, which are uncertain about the value distribution of the bidders (y.

\vspace{0.05in}
\noindent{\bf A gap in the literature and our main result.}
The problem in the above description by \citet{Carroll17} has been addressed for several important cases. First and foremost, \citet{CarrascoFK+18} solve it when there is a \emph{single} bidder for the item, establishing the robust parallel to the optimal Bayesian \emph{pricing}. 
This parallel turns out to be a random pricing scheme where the price is distributed according to a log-uniform distribution -- in line with the above intuition of hedging against uncertainty through randomization. 

\citet{Che22} tackles a generalization to more than one bidder, but allows the bidder values to be \emph{correlated} in an unknown way (i.e., the adversary chooses the worst-case correlation among the values).
\citet{Suzdaltsev20a} also tackles a generalization beyond a single bidder; in his model the bidders are i.i.d., but the seller's strategies in the zero-sum game are limited to \emph{deterministic} auctions. To our knowledge, the case of max-min randomized auction design for more than one bidder with i.i.d.~values has not been addressed. 
In other words, the literature is missing the robust parallel to the optimal Bayesian \emph{auction}, which in this case is the “greatest hit” second-price auction with deterministic reserve.

Our main result in this paper addresses this gap in the literature for the two-bidder case.
In our zero-sum game, the adversary is \emph{not} allowed to correlate the bidder values, whereas the seller \emph{is} allowed to choose a randomized mechanism; we view this as a natural balance of powers between the players, and indeed the equilibrium of this game yields a standard and wide-spread auction format as the max-min optimum.
%\cite{CarrascoFK+18} derive the optimal distribution over prices for this case. They also consider multiple moments of the value distribution are known, although in this case the optimal mechanism cannot be given explicitly. 
{\bf As our main result, we solve this game and show that the robust parallel to the optimal auction for two i.i.d.~bidders is the second-price auction with \emph{random} reserve.} We also give a closed-form solution for the seller's optimal distribution over reserve prices.
%The intuition that you want to randomize, and indeed this is exactly what we show that you want a . This is a standard auction format as we wanted. 

Interestingly, we can partially extend our result to any number of i.i.d.~bidders: For $n\ge3$ bidders, we find the robust-optimal distribution of the reserve price assuming the seller applies a second-price auction with reserve. However, such an auction is no longer the overall robust-optimal mechanism.

\subsection{Our Contribution and Takeaways}

%\vspace{0.05in}
\noindent{\bf Two bidders.} 
Our main theorem is the following:

\begin{theorem*}[Informal, see Theorem~\ref{thm:two_bidders:main:among_all_truthful}] %\label{thm:two_bidders:main:equilibrium}
Consider a parametric auction setting with two bidders, whose values are drawn i.i.d.~from a distribution with known mean value $\mu$ and support contained in $[0,\overv]$.
Consider the zero-sum game between the seller choosing a mechanism, and the adversary choosing a value distribution with mean $\mu$ and support contained in $[0,\overv]$, where the seller's payoff is the expected revenue.
If $M^*$ is the second-price auction with randomized reserve sampled from $Q^*$, then $(M^*,F^*)$ is an equilibrium of the zero-sum game, where $Q^*,F^*$ are CDFs defined (roughly) as follows:
\begin{itemize}
    \item $\underv$ is the unique solution in $[0, \overv]$ to $\underv\left(1+\log\nicefrac{\overv}{\underv}\right) = \mu$.
    \item $F^*$ is the (generalized) equal-revenue distribution over the support $[\underv,\overv]$, that is, $F^*(v)=1-\nicefrac{\underv}{v}$. 
    \item $Q^*$ is a normalized ratio between the log-uniform distribution and the equal-revenue distribution:
    $$
    Q^*(r)\propto
    \frac{\log\nicefrac{r}{\underv}}{\log \nicefrac{\overv}{\underv}} \cdot \frac{1}{1-\nicefrac{\underv}{r}} 
    \text{ where }
    r\in(\underv,\overv].
    $$
\end{itemize}
\end{theorem*}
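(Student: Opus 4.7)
I would prove the theorem by verifying that $(M^*,F^*)$ is a saddle point of the zero-sum game, i.e.\ $\Rev(M,F^*)\le \Rev(M^*,F^*) \le \Rev(M^*,F)$ for every truthful mechanism $M$ and every distribution $F$ with mean $\mu$ and support in $[0,\overv]$. By the revelation principle it suffices to restrict the seller to truthful direct mechanisms. As a first step I would compute the value $V^{\star}:=\Rev(M^*,F^*)$ explicitly. The expected revenue of an SPA with reserve $r$ against two i.i.d.\ copies of $F$ admits the clean form
$$
R(r,F) \;=\; r\bigl(1-F(r)^2\bigr) + \int_r^{\overv}\bigl(1-F(v)\bigr)^2\,dv.
$$
Plugging in $1-F^*(v)=\underv/v$ collapses both terms and yields $R(r,F^*) = 2\underv-\underv^2/\overv$, \emph{independent} of $r\in[\underv,\overv]$. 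This is an equal-revenue-style property lifted to two bidders along the support of $Q^*$; in particular $V^{\star} = 2\underv-\underv^2/\overv$.

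\textbf{Seller's side (upper bound).} To show that $M^*$ is a best response against $F^*$, I would invoke Myerson's virtual-value theorem. On the continuous part $(\underv,\overv)$ of $F^*$ one computes $\phi(v) = v - (1-F^*(v))/f^*(v) = 0$, while the atom of mass $\underv/\overv$ at $\overv$ contributes virtual value $\overv$. Since every truthful mechanism has expected revenue equal to expected virtual surplus, this surplus is at most $\overv\cdot\Pr[\text{some bidder realizes the atom at }\overv] = \overv\bigl(1-(1-\underv/\overv)^2\bigr) = V^{\star}$. Any SPA with reserve in $[\underv,\overv]$ attains this upper bound (it always allocates to an atom-bidder whenever one exists, and never allocates to a negative-virtual-value bidder since none lie in the support of $F^*$), and hence so does the randomized mixture $M^*$.

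\textbf{Nature's side (lower bound).} For the harder inequality $\Rev(M^*,F)\ge V^{\star}$ I would write
$$
\Rev(M^*,F) \;=\; \int_{\underv}^{\overv} R(r,F)\,dQ^*(r),
$$
interchange the order of integration on the tail integral via Fubini to obtain $\int_{\underv}^{\overv}(1-F(v))^2 Q^*(v)\,dv$, and treat the resulting expression as a quadratic functional of the survival function $G=1-F$. Setting up a Lagrangian with multiplier $\lambda$ for the mean constraint $\int_0^{\overv} G(v)\,dv = \mu$, the first-order variation at $F=F^*$ reduces to a constant in $v$ \emph{precisely} when $Q^*$ satisfies a certain first-order ODE whose solution is the stated closed form. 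This confirms $F^*$ as a stationary point of the Lagrangian and explains the otherwise mysterious formula for $Q^*$.

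\textbf{Main obstacle.} The hard part is upgrading stationarity to a genuine global minimum, since the integrand is neither convex nor concave in $F$ (the $r(1-F(r)^2)$ term is concave in $F(r)$, while $(1-F(v))^2$ is convex in $F(v)$). I would tackle this by either (i) a careful second-order analysis exploiting the particular structure of $Q^*$, using monotonicity of $F$ as a CDF to control sign-indefinite perturbations, or (ii) following the authors' discretization strategy: solve the bilinear game on a finite grid of values where the minimax theorem applies directly, derive closed-form equilibrium strategies on the grid, and pass to the limit as the grid is refined. Route (ii) has the appeal of simultaneously \emph{motivating} the precise form of $(Q^*,F^*)$ and \emph{verifying} that they attain the game value in the continuum.
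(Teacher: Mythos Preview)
Your approach is essentially the same as the paper's. The seller-side argument via Myerson virtual values is identical to the paper's Claim~3.6, and the nature-side argument via a Lagrangian with multiplier $\lambda$ for the mean constraint is exactly the paper's Claim~3.5.

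The one place you overestimate the difficulty is your ``main obstacle.'' After you rewrite $\Rev_F(Q^*)$ in the pointwise form
\[
\int_0^{\overv}\Bigl((Q^*)'(v)\,v\bigl(1-F(v)^2\bigr)+Q^*(v)\bigl(1-F(v)\bigr)^2-\lambda\bigl(1-F(v)\bigr)\Bigr)dv,
\]
the integrand at each fixed $v$ is a scalar \emph{quadratic} $h_v(z)$ in $z=F(v)$, with second derivative $h_v''(z)=2\bigl(Q^*(v)-v\,(Q^*)'(v)\bigr)$ independent of $z$. For the specific $Q^*$ in the theorem this quantity is positive on $(\underv,\overv]$, so $h_v$ is convex and its unique stationary point $z=1-\underv/v$ is the global minimum; on $[0,\underv)$ one has $(Q^*)'=0$ and the integrand reduces to $Q^*(0)(z^2-1)$ with global minimum at $z=0$. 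Thus your option~(i) reduces to a one-line sign check, and no monotonicity or discretization argument is needed for the verification. The paper does use discretization, but only heuristically to \emph{guess} the candidate $(Q^*,F^*)$; the actual proof is the direct pointwise verification you outlined.

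One small gap: $Q^*$ has an atom at $0$, so you also need $R(0,F^*)=V^\star$, not just constancy on $[\underv,\overv]$. Your own computation extends to this case, since for $r\in[0,\underv)$ the same identity $R(r,F^*)=2\underv-\underv^2/\overv$ holds.
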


\noindent The theorem provides the optimal distributionally-robust auction for a seller whose parametric knowledge is $(\mu,\overv)$, as well as the worst-case distribution $F^*$ of the bidders' values.
%there is an advantage in finding an equilibrium, not just a max-min strategy, since there will be no regret in this case.
The key takeaway is that while a randomized reserve price is the optimal distributionally-robust pricing for a single bidder~\cite{CarrascoFK+18}, the second-price auction with a randomized reserve is the optimal distributionally-robust pricing for two bidders. 

\vspace{0.05in}
\noindent{\bf Economic interpretation: Indifference.} 
The worst-case (min-max) equilibrium distribution~$F^*$ over bidder values has a clear economic interpretation -- it always induces the same expected revenue for the seller, regardless of the reserve price realization. Indeed, this ability to ``induce indifference'' is the salient property of the equal-revenue distribution, and the reason why it emerges in the analysis of so many mechanism design settings. 
The indifference on the seller's behalf means that the adversary is ``robust'' to the seller's randomization over prices, making equal-revenue a natural candidate for the adversary's min-max optimum.
Intuitively, in the absence of knowledge (in this case about the seller's reserve price), the player (adversary) should opt for a strategy whose payoff is agnostic to the unknown details; any dependence on such details will be taken advantage of by the opponent (seller). 

Similar explanations for robustly-optimal designs that induce indifference can be found in~\cite[e.g.,][]{Carroll15,DuttingRT19,Carroll17,GravinL18,CarrascoFK+18}. 
However, as noted by~\citet{BabaioffFGLT20}, this principle of agnosticism does not always hold. In our case, the max-min distribution $Q^*$ of the seller does not induce indifference on the adversary's side.  
%``One would like to be able to give some economic interpretation to the specific form of the maxmin-optimal mechanism
%because a linear contract always induces the same effort regardless of the noise realization, the principal’s expected profit depends on the noise distribution only through its mean, which makes such contracts a natural candidate for the maxmin optimum.In fact, the optimal contract is indeed linear, except for a flat part at the bottom where a limited liability constraint binds.''
%Another take-away is that for value distribution $F^*$
%We notice that nature plays the equal-rev distribution on part of the support. this makes the seller indifferent on the same support (note: not including $\overv$). This means that can distribute weight however we want (the seller would be indifferent). On the other side we have a suggestive form of 1/ER times the log-uniform distribution (of course normalized), which appears also in carrasco. 
%the CDF is $F(v)=\frac{\log\nicefrac{v}{\underv}}{\log \nicefrac{\overv}{\underv}}$ for $v\in [\underv,\overv]$.
In other words, the seller fails to make the adversary indifferent, thus enabling the adversary to take advantage of the seller's ignorance. 
%The indifference is very characteristic of maxmin results (give some examples) -- when something is unknown, design the strategy such that the other side can't advantage of my ignorance, that is, the other side is indifferent.

\vspace{0.05in}
\noindent{\bf Mathematical explanation of the indifference.}
Our analysis gives an intuitive mathematical explanation for why the principle of agnosticism holds for the seller but not for the adversary, and a general way to predict for which zero-sum games and which players we expect to get indifference. 
Consider a zero-sum game which has an equilibrium composed of strategies in the interior of the players' strategy spaces.%
\footnote{If the equilibrium is not composed of such strategies, the indifference can be only on a subspace where the strategies do fall in the interior.} %certain range.}  
%and also that the strategies are vectors. (section 3)
Consider the game's payoff function. If $R$ is the random reserve price of the second-price auction played by the seller, and $F$ is the value distribution played by the adversary, the payoff (expected revenue) can be shown to be
$R(1-F^2(R)) + \int_{R}^{\overv} \left(1 - F(v) \right)^2 dv$.
%(example of the payoff in our case -- see... NOTE: we need an integration over this). 
%What interests is how this payoff depends on the strategy function (in our particular case these is the CDF; note - in the discrete case the startegy is a vector, not a functions). 
A key observation is that the payoff is \emph{linear} in the distribution of $R$ (the seller's strategy) and \emph{quadratic} in $F$ (the adversary's strategy). 

By this observation, we expect indifference for the seller but not the adversary: Since the equilibrium is a saddle point of the payoff function (which by assumption falls in the interior), the gradients of the payoff function should be zero at the equilibrium point. Furthermore, by linearity in the seller's strategy, the gradient is a function of only the adversary's strategy. 
Putting these two together, the gradient is zero given the equilibrium strategy of the adversary for any strategy of the seller. Hence the seller's strategy does not affect her payoff -- i.e., the seller is indifferent. For similar reasons, the non-linearity of the payoff in the adversary's strategy means we do not expect indifference for the adversary.

\vspace{0.05in}
\noindent{\bf Beyond two bidders.} Our main theorem beyond two bidders is the following:

\begin{theorem*}[Informal, see Theorem~\ref{thm:multi:main:equilibrium}]
Consider a parametric auction setting with $n \ge 3$ bidders, whose values are drawn i.i.d.~from a distribution with known mean value $\mu$ and support contained in $[0,\overv]$.
Consider the zero-sum game between the seller choosing a distribution for a randomized reserve price in a second price auction, and the adversary choosing a value distribution with mean $\mu$ and support contained in $[0,\overv]$, where the seller's payoff is the expected revenue.
% If $M^*$ is the second-price auction with randomized reserve sampled from $Q^*$,
Then $(Q^*,F^*)$ is an equilibrium of the zero-sum game, where $Q^*,F^*$ are CDFs defined (roughly) as follows:
\begin{itemize}
    \item $\underv$ is the smaller among two solutions to $\left(n-\frac{1}{n-1}+\log\nicefrac{\overv}{\underv}\right)\underv=\left(n-1\right)^{2}\mu.$
    \item Case 1: $\underv\le \overv$.
\begin{itemize}
    \item $F^*(v)$ mixes a generalized equal-revenue distribution over the support $[\underv,\overv]$, that is $1 - \frac{\underv}{(n-1)^2 v}$, and a point mass distribution at $v_0 = \frac{\underv}{n-1}$. 
    \item $Q^*(r)$ is a normalized ratio between a truncated log-uniform distribution and $\left(F^* \right)^{n-1}$:
    $$
    Q^*(r)\propto
    \frac{n-1 - \frac{1}{n-1} + \log\nicefrac{r}{\underv}}{n-1 - \frac{1}{n-1} +\log \nicefrac{\overv}{\underv}} \cdot \left( F^*(r) \right)^{-(n-1)}
    \text{ where }
    r\in(\underv,\overv].
    $$
    For $r\in [0, \underv)$ it is simply $Q^*(r) = Q^*(\underv)$.
\end{itemize}
\item Case 2: $\underv\ge \overv$.
\begin{itemize}
    \item $Q^*$ is simply a point mass distribution on zero, which is equivalent to a simple second price auction with no reserve price.
    \item $F^*$ is a binary distribution between two possible values $\{v_0, \overv\}$, where $v_0 = \mu-\frac{\overv-\mu}{\left(n-1\right)^{2}-1}$.
\end{itemize}
\end{itemize}
\end{theorem*}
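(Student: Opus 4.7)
The plan is to verify directly that $(Q^*, F^*)$ is a saddle point of the expected-revenue payoff, which by the minimax theorem suffices for equilibrium. Concretely, I would establish the two best-response conditions separately: that $Q^*$ best-responds to $F^*$ for the seller, and that $F^*$ best-responds to $Q^*$ for the adversary. Note that the seller's strategy space here is restricted a priori to second-price auctions with a randomized reserve, which is essential since, as pointed out in the introduction, this is no longer the overall robust-optimal mechanism for $n\ge 3$.

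For the seller's side, I would first write the expected revenue of a second-price auction with deterministic reserve $r$ against value distribution $F$ as
\[
\Rev(r,F) \;=\; r\bigl(1-F^{n}(r)\bigr) \;+\; \int_{r}^{\overv}\!\bigl(1-F^{n}(t)-nF^{n-1}(t)(1-F(t))\bigr)\,dt,
\]
which follows from $\Rev=\mathrm{E}\bigl[\max(\orederv{2},r)\cdot\mathbf{1}[\orederv{1}\ge r]\bigr]$ together with the tail formula $\mathrm{E}[X]=\int P(X>t)\,dt$. Substituting the proposed $F^*$ in Case~1, direct computation should show that $\Rev(r,F^*)$ is constant in $r$ on $[\underv,\overv]$ and no larger for $r<\underv$. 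The equal-revenue tail form $F^*(v)=1-\underv/((n-1)^{2}v)$ is tailored precisely so that the integrand collapses into a constant, while the atom at $v_{0}=\underv/(n-1)$ serves both to calibrate the mean to $\mu$ and to rule out profitable deviation to reserves below $\underv$. Given pointwise indifference on $[\underv,\overv]$, any distribution supported there is a best response, and in particular $Q^*$ is.

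For the adversary's side, I would compute $\int \Rev(r,F)\,dQ^*(r)$ as a functional of $F$, interchange integration via Tonelli, and extract a pointwise ``virtual cost'' kernel against $F$. The defining proportionality $Q^*(r)\propto (F^*(r))^{-(n-1)}$ times a truncated log-uniform factor is chosen precisely so that after pairing with the $F^{n-1}$-type terms from the auction payoff, the kernel reduces to a log-uniform expression analogous to the one in the single-bidder analysis of \citet{CarrascoFK+18}. Treating the mean constraint $\int v\,dF=\mu$ via a Lagrange multiplier $\lambda$, KKT optimality of $F^*$ reduces to: the virtual cost equals $\lambda$ on the equal-revenue tail $[\underv,\overv]$ and at the atom $v_{0}$, and is strictly greater elsewhere. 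The defining equation for $\underv$, the location of the atom $v_{0}$, and the normalization constant of $Q^*$ all emerge from these conditions together with the mean constraint. Case~2 ($\underv\ge\overv$) is the degenerate version in which the equal-revenue tail collapses: here $F^*$ concentrates on the binary support $\{v_{0},\overv\}$, the adversary's problem becomes finite-dimensional, and verifying that $Q^*$ is a point mass at zero amounts to checking that no positive reserve beats the no-reserve auction against this binary distribution.

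The main obstacle lies on the adversary's side. In the two-bidder case, the payoff was quadratic in $F$ and the agnosticism argument pinned $F^*$ down pointwise; for $n\ge 3$ the payoff is a degree-$n$ polynomial in $F$ and no analogous clean indifference holds, making the Lagrangian verification genuinely delicate. I would need to separately justify that no mean-preserving perturbation of $F^*$ lowers the revenue -- including perturbations that move mass between the atom at $v_{0}$, the continuous tail, and the interval $(0,\underv)$ -- and to verify the correct sign of the complementary-slackness multipliers at the support endpoints. Finally, the transition between Case~1 and Case~2 should be handled carefully to confirm it occurs precisely where the two roots of the defining equation for $\underv$ merge and the equal-revenue tail shrinks to a single point.
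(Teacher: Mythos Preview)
Your high-level plan matches the paper's: verify both best-response conditions directly, using a Lagrangian with multiplier $\lambda$ for the mean constraint on the adversary's side, and reduce to a pointwise analysis. The paper carries this out via Claims~4.3--4.8, and your anticipated difficulty---that the adversary's pointwise problem is a degree-$n$ polynomial with no clean indifference---is exactly where the work lies (Claim~4.7).

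There is, however, one concrete gap on the seller's side. You write that $\Rev(r,F^*)$ is constant on $[\underv,\overv]$ and ``no larger for $r<\underv$,'' and then conclude that ``any distribution supported there is a best response, and in particular $Q^*$ is.'' But $Q^*$ is \emph{not} supported on $[\underv,\overv]$: by its definition $Q^*(r)=Q^*(\underv)>0$ for all $r\in[0,\underv)$, so $Q^*$ places a positive atom at $r=0$. Consequently you need \emph{equality} $\Rev(0,F^*)=\Rev(\underv,F^*)$, not merely ``no larger,'' or else $Q^*$ would not be a best response. In the paper this equality is exactly the content of Claim~4.4, and it is what pins down the relation $a(n-(n-1)v_0/\underv)=v_0$ between the atom location $v_0$ and the tail parameter; together with the mean constraint (Claim~4.3) and the two adversary-side conditions (Claim~4.5), this gives the four equations determining $\underv,\,v_0,\,a,\,\lambda$. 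Your description of the atom's role (``calibrate the mean \ldots\ and rule out profitable deviation to reserves below $\underv$'') understates this: the atom's size and position are forced precisely by the requirement that $r=0$ lie in the seller's indifference set.

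A minor remark on Case~2: the transition is not where two roots merge but where the (unique) solution in $(0,\overv]$ hits $\overv$; equivalently, where $(n-\tfrac{1}{n-1})\overv=(n-1)^2\mu$. The paper handles this boundary via Claim~4.8, citing Suzdaltsev for the adversary's binary best response to the no-reserve auction, and then checking directly that no positive reserve beats $r=0$.
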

This theorem provides a solution to the seller's optimal distribution for a reserve price in a second price auction, but it does not guarantee that this is the robust optimal mechanism overall, and indeed we can find mechanisms that would achieve better results if the bidders' valuations are drawn from $F^*$ (Section \ref{subsec:multi:not_optimal}).
In this case, the indifference that nature induces on the seller is only partial, as the indifference only happens outside of $[v_0, \min\{\underv, \overv\}]$. While a second price auction with randomized reserve cannot take advantage of this, other mechanisms can abuse $F^*$ better, and get a higher expected revenue.

\subsection{Our Techniques}
\label{sub:techniques}

The method of \citet{CarrascoFK+18} does not seem to easily extend beyond a single bidder. \citet{CarrascoFK+18} characterize the transfer function of the distributionally-robust optimal pricing as the non-negative monotonic hull of a linear polynomial.% 
\footnote{They also address a parametric setting where the first $N$ moments of the distribution are known, in which case the polynomial is of degree $N$.}
In the case of multiple bidders, the polynomial is no longer single-dimensional, and so applying this method seems more complicated. Instead, we suggest the following approach. 
%This enables us to transform the seller’s problem into a much simpler optimization problem over Nvariables The optimal mechanism is found by choosing the coefficients of the polynomial subject to a resource constraint.

We first limit the seller from applying general mechanisms to using only second-price auctions with a random reserve. We refer to the resulting game (in which the adversary's strategy space has not changed) as the \emph{reduced} zero-sum game.
In order to solve this game, we consider a simpler case, where the distributions that constitute the strategies of both the seller and the adversary are discrete, and the possible reserves/values are constrained to a known finite set.

For the discrete case, we only find recursive relations between adjacent points in the constrained finite set. By increasing the size of the finite set, and making the distance between the points smaller and smaller, we can take the limit of the recursive relations, and get ODEs (ordinary differential equations) that describe a solution to the non-discretized problem.
This enable us to guess a potential solution.
% We get a complete description of this game's solution, but one which is recursive rather than closed-form (see Section~\ref{sec:discrete}). 
% We thus develop in Section~\ref{sec:toolbox} a toolbox for going from a sequence of recursive solutions (with finer and finer discretization) to an ODE that the limit must satisfy. This enables us to pin down the limit in Section~\ref{sec:cont}. 

It remains to show that our guess based on the ODEs is an equilibrium of the reduced zero-sum game.
This is done by showing that no player can do better, even with perfect prior knowledge on the other player's chosen distribution.
In fact we show that in the two bidder case, it is an equilibrium even of the non-reduced game, in which the seller can choose any mechanism (rather than only second-price-based mechanisms). To achieve this we utilize the indifference of the seller over the interval $[0, \overv]$.

In Appendices~\ref{appndx:two_bidders_discrete}-\ref{appndx:sec:cont} we provide an alternative analysis in which we are not required to guess a solution, rather we deduce it mathematically. Appendix~\ref{appndx:sec:toolbox} in particular provides a mathematical foundation for the technique of deducing the general solution based on the discrete one. 
%Afterwards, we use the solutions to the discrete case, and use them to find a solution to the general case.
%In order to solve the general case, we first solve the problem in a discrete setting.
%%%then explain why it works for the non-reduced game. nature is not indifferent but if could do better then could improve the discrete case as well. 
We also show that the idea of discretization can help solve other variants, e.g. one in which no upper bound on the values is known, by revisiting the single-bidder case solved by \citet{CarrascoFK+18}. Our analysis of this case appears in Appendix \ref{appndx:single}.

\subsection{Additional Related Work}

\noindent{\bf Robust mechanism design.}
There is a large and steadily-growing body of literature on the design of robust economic mechanisms in various domains, where the robustness is to different kinds of uncertainties (e.g., contract design robust to uncertain technologies~\cite{Carroll15}, Bayesian persuasion robust to uncertain utilities~\cite{BabichenkoTXZ21}, or auction design robust to the bidders’ attitude toward ambiguity~\citet{Kocyigit19}). The literature is too large to cover here comprehensively -- see \cite{Carroll19} for a recent survey.
%\citet{BergemannBM16} study a similar setting to ours (selling a single item to two bidders), but with uncertainty regarding \emph{beliefs} of the bidders on the item's quality (see also \cite{BergemannM05}). 
%also study the sale of an item, but the uncertainty is about .

\vspace{0.05in}
\noindent{\bf Distributional robustness.} 
A very active research area studies distributional robustness, with several routes being explored. %Here too we cannot mention all related work 
First, one needs to choose the benchmark against which the robust mechanism's performance is measured. The main benchmarks are (1) the optimal non-robust auction that knows the distribution, and (2) our benchmark, namely the max-min robust auction that maximizes the performance against the worst-case distribution.
A third benchmark -- (3)~the welfare -- was recently considered in very interesting work of~\cite{Anunrojwong22}, which is concurrent and independent to our work. 

For the first benchmark, the performance is usually measured by a multiplicative approximation guarantee, and this branch of research is known as \emph{prior-independent} mechanism design~\cite[e.g.,][]{DhangwatnotaiRY15,Talgam20}.
An alternative measure is additive regret~\cite[e.g.,][]{BabichenkoTXZ21,Savage51}.%
\footnote{Regret in repeated settings has also been studied but is beyond our scope.} %cesa-bianchi regret EC21?
For the second benchmark, in this work we aim to achieve it exactly rather than approximately, but approximations of it have been considered in other contexts~\cite[e.g.,][]{BeiGLT19}). 
In some cases connections have been established between the two benchmarks~\cite{AzarM13,GiannakopoulosP20}.
For the third benchmark,~\citet{Anunrojwong22} show that the second price auction with randomized reserve minimizes the worst-case regret in comparison to extracting full welfare. In other words, it minimizes the regret due to not knowing the values and instead only knowing their supports. Their work thus establishes another, very different way in which the second price auction with randomized reserve is robustly optimal, and their robust optimality result persists across several distribution families.

Besides choosing the benchmark, another choice is whether to allow the augmentation of resources to compensate for the uncertainty, e.g.~the addition of competing bidders~\cite{BulowK96,Kocyigit19}, or alternatively whether to allow sample access to the unknown distribution~\cite[e.g.,][]{ColeR14,AllouahAB21,AllouahBB21a,BabaioffGMM18}. In this work we allow neither; closer to our approach are works that allow access to some statistic~\cite[e.g.,][]{AllouahBB21}. 
%A final choice is whether

\vspace{0.05in}
\noindent{\bf Robustness to correlation.} The basic setting of distributionally robust mechanism design includes a single seller, bidder and item. A recent line of work studies \emph{more than one item}, applying robustness to the challenging domain of multi-parameter mechanism design. The robustness in this context is usually to correlation among the item values~\cite{Carroll17,GravinL18,BabaioffFGLT20}, and sometimes (in addition to correlation) to details of the marginal distributions~\cite{CheZ21,BrooksD21a,GiannakopoulosP20}.
%\cite{} -- very recent, EC paper. moment info. 
%\cite{} - here interestingly both only parametric knowledge and unknown correlation.
%
The basic setting has also been generalized to the complementary case of \emph{multiple bidders}, where the robustness is to correlation among them (and possibly to distributional details as well)~\cite{Suzdaltsev20b,BeiGLT19,Che22,Kocyigit19,HeL22}.
%\cite{HeL22} assume the auctioneer has a good estimate of the marginal distribution of the bidders, but no estimate on the correlation between them, and they are able to find an optimal deterministic reserve price that for maximizing the expected revenue.
Distributional robustness has also been studied with asymptotically-many bidders~\cite[e.g.][]{Segal03}. 

\vspace{0.05in}
\noindent{\bf Max-min mechanisms with parametric information.}
The closest work to ours is by \citet{CarrascoFK+18}. They study the same problem as ours, but with a \emph{single} bidder. The parametric information they consider is either the first moment and an upper bound on the values, or the first two moments. 
They also consider multiple moments, although in this case the max-min mechanism cannot be
given explicitly. 
% , with partial knowledge on the bidder's private value distribution, that includes either some number of moments, or (as in our setting) the mean value and an upper bound on the support (see their Section~5).
% The authors essentially extend the model of Azar and Micali [1] to randomized mechanisms, solving the maximin robust optimization problem with respect to revenue. Again, in principle their results cannot be immediately translated to tight bounds for the approximation ratio; however, unlike the deterministic case for which in the present paper we have to design a new mechanism in order to achieve ratio optimality, 
Their work has generated many follow-ups for a single bidder:
Their max-min design is shown by \citet{GiannakopoulosP20} to also be a good prior-independent design; %optimal for the ratio benchmark.
\citet{CarrascoFMM19} study an extension to a divisible item; %and the agent has nonlinear preferences, and characterize the maxmin-optimal mechanism by an ODE
and \citet{ChenHW21} show an alternative proof through via the minimax theorem and a novel geometric approach. 
\citet{Suzdaltsev20a} studies almost the same problem as ours -- in his model there is also more than one i.i.d.~bidder, but unlike us he only considers \emph{deterministic} mechanisms. He therefore gets a completely different design with no reserve price, which does not form an equilibrium of the zero-sum game.
%\cite{Che22} considers a similar setting to ours where the seller knows only the mean and an upper bound on the values, and they allow a random reserve price, however they also allow correlation between the bidders (whereas we assume that the bidders are independent).

%\noindent{\bf Prior-independence with parametric information.}
To our knowledge, the first works on parametric information were by~\citet{AzarM13} and~\citet{AzarMDW13}, in the context of prior-independent mechanism design.
%\citep{AzarMDW13} they consider a model where there are multiple independent bidders, with possibly different marginal distributions for their valuations, and the seller only knows the median, or other quantiles, for each of the bidders. They also assume that the distributions are regular or monotone hazard rate, these two properties are wildly used in mechanism design. They approximate both optimal revenue and optimal welfare, compared to a seller that knows the distributions, and they present an upper bound for the revenue approximation.
%design parametric auctions for very general single-dimensional settings.
En route to prior-independence, \citet{AzarM13} find a deterministic max-min solution in a setting with multiple bidders and \emph{infinitely many copies} of the item -- their setting is thus closer to a single-bidder setting than to a multi-bidder one. 
%, and the seller only knows the the mean and variance of the marginal distribution of each of the bidders. In the paper, they only consider the deterministic case, and find a maxmin solution in this model while showing that it guarantees a constant fraction of the revenue achieved by an optimal mechanism that knows the full distribution.
%Azar and Micali [1] provide an exact solution, for deterministic mechanisms, to the robust optimization problem of maximizing the expected revenue. Then, they use this maximin revenue-optimal mechanism and compare it to the optimal social welfare
\citet{AllouahB20} also study prior-independence with parametric information.

\vspace{0.05in}
\noindent{\bf Robustness for two bidders.}
Several recent works on robust mechanism design focus on the canonical two-bidder case, but in different contexts than ours: 
First, a central open question in prior-independent mechanism design is to identify the prior-independent revenue-optimal mechanism for selling a single item to two agents with i.i.d.~values. There is a line of work on this question, including \cite{HartlineJL20,FuILS15, DhangwatnotaiRY15,AllouahB20}. %design optimal prior-independent auctions for two bidders with i.i.d.~values. 
%This was a~\cite{5,}.
\citet{HartlineJ21} develop a method for lower-bounding prior-independent guarantees and apply it to a pair of two-bidder problems.
\citet{BergemannBM16} study max-min auction design for two bidders, but where the seller's uncertainty is about the correct model of the bidders’ beliefs on the item's quality, so the max-min optimum is very different than ours.

%\subsection{Organization}

%After presenting our model in Section~\ref{sec:model}, in Section~\ref{sec:two_bidders} 

%we solve a %version of the zero-sum games in which the players are limited to distributions over discretized supports. 
%zero-sum game between the seller and adversary in which the players are limited to distributions over discretized and finite supports.
%This allows us to use much simpler tools to analyze the problem, and describe an equilibrium that is depended on the number of values in the support, which results in a sequence of equilibriums.
%Our goal is then to take the limit in order to solve the non-discretized game. We develop a framework for this in Section~\ref{sec:toolbox}, applying it in Section~\ref{sec:cont} to find the limit of this sequence.
%In Section~\ref{sec:cont} we also prove that the limit of the sequence of equilibriums is also, by itself, an equilibrium for the non-discretized game, and thus it is the distributionally-robust optimum.

%Due to space limitations some of the proofs are deferred to the appendices.

% why is parametric knowledge natural

\section{Model}
\label{sec:model}

%\subsection{Setting and Design Spaces}

\subsection{Parametric Auction Settings} 

We study a simple auction setting in which a seller offers a single item to $n$ bidders. Each bidder~$i$ has a privately-known value $v_i\ge 0$ for the item, and the values are i.i.d.~samples from a distribution~$F$. Let $\vec{v}=(v_1,\dots, v_n)$. The goal of the seller (a.k.a.~\emph{designer}) is to maximize her expected revenue from selling the item to the competing bidders, where the expectation is taken over the random values as well as the internal randomness of the mechanism. 
What distinguishes our setting from standard mechanism design is that the seller has only \emph{parametric knowledge} of the distribution~$F$. 
We denote the parameters of the distribution as follows: let $\mu$ be its expectation, and let $\overv$ be an upper bound on its support, i.e.: 
$$
\mathbb{E}_{v_i\sim F}[v_i]=\mu;~~~
v_i\in [0,\overv].
$$
A parametric auction setting is thus defined by a pair $(\mu,\overv)$. Let $\calF=\calF(\mu,\overv)$ be the space of all \emph{compatible} distributions, where $F$ is compatible if it has expectation $\mu$ and support within $[0,\overv]$.

\vspace{0.05in}
\noindent{\bf The design space: Truthful auctions.} 
The auctions we consider as the seller's design space are standard:
The bidders submit bids $b_1, \dots, b_n$ for the item. The seller uses an allocation rule (possibly randomized) to map the bids $\vec{b}=(b_1, \dots, b_n)$ to an allocation of the item. A payment rule $p$ determines how much to charge the bidders in expectation: $p_i(\vec{b})$ is the expected payment of bidder $i$ given bid profile~$\vec{b}$.
The allocation and payment rules may depend on the parametric knowledge of the distribution from which the bids are independently drawn.

Bidders have quasi-linear utilities: %the loser's utility is 0 whereas the winner's 
bidder $i$'s expected utility is his expected value for the (possibly randomized) allocation, less his expected payment $p_i(\vec{b})$. 
We focus on \emph{truthful} -- i.e., dominant-strategy \emph{incentive compatible (IC)} and \emph{individually rational (IR)} -- auctions, where the bidders maximize their expected utilities by participating and reporting their true values to the seller (regardless of how others report).% 
\footnote{Since distribution $F$ is unknown, the natural truthfulness notion in our model is dominant-strategy rather than Bayesian, since it is hard to predict how bidders would behave under uncertainty~\cite{BeiGLT19}. Even if bidders did know the distribution, one would like mechanisms that do not rely on the bidders' precise knowledge, especially since such mechanisms tend to be unnatural (i.e., asking the bidders to report the distribution, and punishing them all if the reports disagree).}
Thus we can assume $\vec{b}=\vec{v}$. 
Requiring truthfulness is without loss of generality by the revelation principle~\cite{Myerson81}. 
Denote the design space of truthful auctions for a parametric auction setting $(\mu,\overv)$ by $\calM=\calM(\mu,\overv)$. %for our benchmarks.  

%A truthful auction is an auction where the bidder's optimal strategy is to set their bid to be equal to their value for the item.

\vspace{0.05in}
\noindent{\bf A reduced design space: Second-price auctions with a random reserve.} A simple, ubiquitous auction format that will play a central role in our results is the \emph{second price auction with reserve}. Let $r$ (or $R$) denote the reserve price. This auction is (dominant-strategy) truthful, so we can assume that the bidders truthfully report $v_1,\dots, v_n$ as their bids. The item is allocated to a highest bidder $i^*\in \argmax_i\{v_i\}$ if the highest value $v_{i^*}$ strictly exceeds the reserve $r$ (tie-breaking among the bidders can be arbitrary). The item is not allocated if neither value clears the reserve. %(note we assume strict clearance -- more on this below). 
If the item is allocated, the winner's payment is the maximum between the reserve $r$ and the second-highest bid.
%It is well known that a second price auction with reserve price is truthful. 
Note that the reserve price is allowed to be \emph{randomized}; %, in which case we sometimes denote it by $R$. 
we denote the distribution over reserve prices by $Q$. We are interested in studying the reduced design space of second-price auctions with a random reserve. If the seller is limited to the reduced design space, then $Q$~will be chosen depending on the seller's parametric knowledge. Denote the space of distributions over the reserve price (which constitutes the reduced design space) %of the second-price auction 
by $\calQ=\calQ(\mu,\overv$).

\subsection{Robust Revenue Performance and Zero-Sum Games}
The seller's goal is to use her parametric knowledge of the value distribution to choose an auction $M$ with optimal \emph{robust revenue performance} -- the expected revenue of $M$ when the bidders' values are drawn from the \emph{worst-case} compatible distribution $F\in\calF$. %that is compatible with the partial distributional knowledge $(\mu,\overv)$. 
Following a long tradition in robust mechanism design~\cite{Wald50}, we approach this problem by treating it as a zero-sum game between two players: the seller, and a player we refer to as the \emph{adversary} or \emph{nature}. 

\vspace{0.05in}
\noindent{\bf The general zero-sum game.}
In the general zero-sum game, the seller can choose any mechanism $M$ in the design space $\calM$ as her action. The action space of the adversary is $\calF$, consisting of every compatible distribution $F$. For a pair of actions $(M,F)$, the seller's payoff is the revenue of $M$ in expectation over its internal randomness as well as over the random i.i.d.~values $v_1, \dots, v_n$ drawn from~$F$. We denote this payoff by $\Rev_{F}(M)$ where
$$
\Rev_{F}(M)=\mathbb{E}_{v_1,v_2\sim_{\iid}F}\left[\sum_i p_i(\vec{v})\right]
$$
(recall that $p$ is the payment rule of $M$; the adversary's payoff is of course $-\Rev_{F}(M)$). %the expected revenue, and the utility of the adversary is the negative of the expected revenue.

\vspace{0.05in}
\noindent{\bf The reduced zero-sum game.}
We are also interested in a related zero-sum game in which the action space of the adversary is unchanged, and the action space of the seller is reduced to include only second-price auctions with random reserves. This action space can be described as $\calQ$, where every $Q\in\calQ$ is a distribution over reserve prices.
Note that in a setting where there is a known upper bound $\overv$ on the values, there is no point in setting a reserve price above $\overv$. Therefore from now on we assume without loss that $\calQ$ contains only distributions with supports contained in $[0,\overv]$.
%We denote by $R=R(Q)$ the random reserve price distributed according to $Q$ (omitting $Q$ where clear from context). %The action space of the adversary remains $\calF$.  %The goal of the seller is to maximize her own expected revenue by choosing an optimal distribution $Q$ for the randomized reserve price, and the adversary respond with the worst case distribution $F$ that minimizes the expected revenue the seller would make with a randomized reserved price using distribution $Q$.
%his is a zero-sum game where the utility of the seller is the expected revenue, and the utility of the adversary is the negative of the expected revenue.

To define the payoff of the reduced zero-sum game, we introduce the following (slightly overloaded) notation:
let $\Rev_{F}(r)$ denote the expected revenue of the second-price auction with a fixed reserve price $r$ (the expectation is over the i.i.d.~values from $F$).
For a pair of actions $(Q,F)$, the seller's payoff is $\Rev_{F}(r)$ in expectation over the random reserve $r\sim Q$. %as well as over the i.i.d.~values from $F$. 
We denote this payoff by 
$$
\Rev_{F}(Q) = \mathbb{E}_{r\sim Q}\left[\Rev_{F}(r)\right].
$$

%To formally present the problem, we denote by $\calQ$ the set of all distributions functions over $[0,\overv]$ and by $\calF$ the set of all distribution functions over $[0,\overv]$ of distributions with a mean value of $\mu$.
\noindent{\bf Problem formulation.} %The designer's objective.}
In each of the zero-sum games, the seller's problem is to find a max-min strategy. In fact, we seek an equilibrium for each of the zero-sum games.%
\footnote{The advantage of an equilibrium is that it ensures the seller's strategy is stable, i.e., the seller would not gain from deviating.} 
We denote an equilibrium of the general zero-sum game by $(M^*,F^*)$, and refer to the problem of finding it by Problem~\eqref{eq:problem}. We denote an equilibrium of the reduced zero-sum game by $(Q^*,F^*)$, and refer to the problem of finding it by Problem~\eqref{eq:problem_reduced}:
\begin{align}
    M^*&\in \argmax_{M\in \calM} \min_{F\in \calF} \Rev_{F}(M),&
    F^*&\in \argmin_{F\in \calF} \max_{M\in \calM} \Rev_{F}(M)
    ;\tag{P1} \label{eq:problem}\\
    Q^*&\in \argmax_{Q\in \calQ} \min_{F\in \calF} \Rev_{F}(Q),&
    F^*&\in \argmin_{F\in \calF} \max_{Q\in \calQ} \Rev_{F}(Q).\tag{P2} \label{eq:problem_reduced}
\end{align}
% \begin{eqn} 
%     M^*\in \argmax_{M\in \calM} \min_{F\in \calF} \Rev_{F}(M);~~~
%     Q^*\in \argmax_{Q\in \calQ} \min_{F\in \calF} \Rev_{F}(Q).
% \end{equation}
Observe that $M^*$ (respectively, $Q^*$) maximizes the robust revenue performance over the seller's design space (respectively, reduced design space), since in both cases the worst-case $F\in\calF$ is chosen.
Thus, solving Problem~\eqref{eq:problem} achieves our ultimate goal of finding the auction with optimal robust revenue performance.
%Thus, the designer indeed plays a maximin strategy. 

Note that it is not immediately clear that the general and the reduced zero-sum games have equilibria; there is a large supply of minimax theorems that can potentially be applied, but this is redundant as we shall explicitly construct pure-strategy equilibria for both games.
%Thus 
%Note that in both games, the action (pure strategy) spaces are compact and convex. This holds for $\calF$ and $\calQ$ as spaces of (restricted) distributions, and also for $\calM$ (see~\cite{GravinL18}). One implication is that in these games, mixed strategies are also pure strategies and so the seller can optimize over pure strategies without loss of generality. While it is not immediately clear that a strategy that achieves the maximum exists, this will follow from our analysis; in fact we will find an equilibrium of both zero-sum games.
%\footnote{For existence only, one could consider an appropriate minimax theorem; in our case this is redundant given our constructive results.}
%where $\Rev_{F}(r)$ is expected revenue of a second price auction with a reserved price $R$ when the bidders' distribution function is $F$.

\vspace{0.05in}
\noindent{\bf A note on coinciding with the reserve price.} 
Notice that in our formulation of second-price auctions with a reserve price, the item is sold only when there is at least one bidder with a \emph{strictly} higher bid than the reserve. The strictness is not necessary for our results, but guarantees the existence of a worst-case distribution~$F$ for any distribution $Q$ over reserve prices: 
For example, if the seller (who chooses $Q$) places a probability mass on reserve price $r$, our formulation avoids a situation where the adversary (who chooses $F$) wishes to place probability mass on values as close as possible to $r$ from below, but not on $r$ itself. %yet strictly below it, 
In this situation, without strictness, $F$ could only approach the worst case distribution but could never quite reach it. 
%the bidders can use in order to minimize the expected revenue.
%While under this assumption, the seller does not always have a best reserve price for a fixed distribution the bidders may use, it does not prevent the existence of a solution since we need to maximize over the worst cases, and not over any fixed distribution.
%Tie-breaking assumption: for the continuous case, it's wlog; 
%also we believe it's necessary for the max-min approach. 

\vspace{0.05in}
\noindent{\bf Distributions of interest.}
The \emph{equal-revenue} distribution ``arises in many examples'' in the mechanism design literature~\cite[Chapter 4]{Hartline21}. In most appearances in the literature, the support of this distribution is $[1,\infty)$ or $[1,\overv]$ in truncated versions. We consider a slightly generalized version in which the support is $[\underv,\overv]$: the CDF is $F(v)=1-\nicefrac{\underv}{v}$ for $v\in [\underv,\overv)$, and $F(\overv)=1$. Another distribution of interest for us is the \emph{reciprocal} or \emph{log-uniform} distribution: the CDF is $F(v)=\frac{\log\nicefrac{v}{\underv}}{\log \nicefrac{\overv}{\underv}}$ for $v\in [\underv,\overv]$.
Throughout, all logarithms are in the natural base $e$ unless indicated otherwise.

\section{The Case of Two Bidders} \label{sec:two_bidders}
In this section we solve Problem~\eqref{eq:problem}, when there are only two bidders. We will achieve that by solving Problem~\eqref{eq:problem_reduced}, and proving that no mechanism can do better.

\begin{theorem}[Solution of Problem~\eqref{eq:problem_reduced} for two bidders] \label{thm:two_bidders:main:equilibrium}
When there are two bidders, there exists an equilibrium $\left(Q^*, F^*\right) \in \calQ \times \calF$ that solves Problem \eqref{eq:problem_reduced}. The equilibrium is described by:
\begin{align*}
    F^*(v) &= \begin{cases}
    0 & v \in [0, \underv], \\
    1-\nicefrac{\underv}{v} & v\in [\underv, \overv) ,\\
    1 & v\ge\overv
    \end{cases}, \\
    Q^*(r) &= \begin{cases}
         \left( 1 - \nicefrac{\underv}{\overv} \right) \cdot \left(\frac{1}{\log \overv - \log \underv}\right) & r\in [0, \underv],\\
        \left( 1 - \nicefrac{\underv}{\overv} \right) \cdot \left( \frac{r}{r-\underv} \right) \cdot \left( \frac{\log r - \log \underv}{ \log \overv - \log \underv} \right) & r\in (\underv, \overv],
    \end{cases}
\end{align*}
where $\underv$ is the unique solution to $\underv\left(1+\log\overv-\log\underv\right) = \mu$.
\end{theorem}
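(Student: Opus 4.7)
My plan is to verify directly that $(Q^*,F^*)$ forms a Nash equilibrium. This decomposes into three tasks: feasibility of both strategies; the seller's best-response condition given $F^*$; and the adversary's best-response condition given $Q^*$.

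\emph{Feasibility.} The map $x\mapsto x(1+\log(\overv/x))$ on $(0,\overv]$ is continuous and strictly increasing from $0$ to $\overv$ (its derivative is $\log(\overv/x)$), so $\underv$ exists and is unique for $\mu\in(0,\overv]$. A direct integration then gives $\int_0^{\overv}(1-F^*(v))\,dv=\underv+\underv\log(\overv/\underv)=\mu$, so $F^*\in\calF$. For $Q^*$: it equals the constant $A:=(1-\underv/\overv)/\log(\overv/\underv)$ on $[0,\underv]$ (so there is a point mass of size $A$ at $0$), is continuous at $\underv$ (the apparent singularity cancels), attains $Q^*(\overv)=1$ by direct substitution, and is monotone on $(\underv,\overv]$ because its density is proportional to $[r-\underv-\underv\log(r/\underv)]/(r-\underv)^2\ge 0$ (the numerator vanishes at $\underv$ and has non-negative derivative $1-\underv/r$).

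\emph{Seller's best response: indifference.} Plugging $F=F^*$ into the standard two-bidder revenue identity $\Rev_F(r)=r(1-F^2(r))+\int_r^{\overv}(1-F(v))^2\,dv$ yields, after a short computation using $1-F^*(v)=\underv/v$ on $[\underv,\overv)$, the \emph{constant} value $\Rev_{F^*}(r)=2\underv-\underv^2/\overv$ for every $r\in[0,\overv]$. Since every deterministic reserve gives the same revenue against $F^*$, every $Q\in\calQ$---in particular $Q^*$---is a best response.

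\emph{Adversary's best response: the main step.} Set $\Psi(F):=\Rev_F(Q^*)$. As noted in the introduction, $\Psi$ is a \emph{quadratic} functional of $F$, so the Taylor expansion around $F^*$ with $\Delta:=F-F^*$ is exact:
\[
\Psi(F)=\Psi(F^*)+D\Psi(F^*)[\Delta]+\tfrac12 D^2\Psi(F^*)[\Delta,\Delta].
\]
I will show the first-order term vanishes for every feasible $\Delta$ and the second-order term is always non-negative, which gives $\Psi(F)\ge\Psi(F^*)$ for every $F\in\calF$. For the first order, splitting $dQ^*$ into the atom of mass $A$ at $0$ and the density $q^*$ on $(\underv,\overv]$, the coefficient of $\Delta(v)$ in $D\Psi(F^*)[\Delta]$ works out to $-2A$ on all of $[0,\overv]$: trivially so on $[0,\underv]$ (since $F^*\equiv 0$ and $Q^*\equiv A$), and on $(\underv,\overv]$ via the identity $(r-\underv)q^*(r)+(\underv/r)Q^*(r)=A$, which I verify by plugging in the closed form of $Q^*$. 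Combined with $\int_0^{\overv}\Delta\,dv=0$ (the mean constraint), this gives $D\Psi(F^*)[\Delta]=0$. For the second order, a direct computation yields
\[
D^2\Psi(F^*)[\Delta,\Delta]=2A\int_0^{\underv}\!\Delta^2(v)\,dv+2\int_{\underv}^{\overv}\!\Delta^2(r)\bigl[Q^*(r)-rq^*(r)\bigr]dr,
\]
which is non-negative because $Q^*(r)-rq^*(r)$ simplifies to a positive multiple of $r\log(r/\underv)-(r-\underv)$, itself non-negative by $\log x\ge 1-1/x$ applied at $x=r/\underv\ge 1$. I expect the main obstacle to be the first-order identity $(r-\underv)q^*(r)+(\underv/r)Q^*(r)=A$ on $(\underv,\overv]$: this is precisely the condition that forced the specific closed form of $Q^*$, and it reflects the ODE that the discretization technique of Section~\ref{sub:techniques} was designed to solve. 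Once that identity is in hand, the seller-side indifference and the second-order convexity bound make the rest of the argument routine.
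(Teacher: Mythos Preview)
Your proposal is correct, and the overall verification structure---feasibility plus the two best-response checks---matches the paper's Section~\ref{subsec:two:proof}. The differences are in how each best-response is certified.

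On the seller's side, you compute directly that $\Rev_{F^*}(r)=2\underv-\underv^2/\overv$ is constant over $[0,\overv]$, so every reserve distribution is optimal. The paper instead invokes Myerson's lemma (Claim~\ref{clm:two:Q_is_optimal}): since $F^*$ is equal-revenue on $[\underv,\overv)$ the virtual value vanishes there, and the revenue of \emph{any} truthful mechanism is determined solely by the allocation at $v=\overv$. Your route is more elementary and exactly what Theorem~\ref{thm:two_bidders:main:equilibrium} requires; the paper's route is less direct but buys the stronger conclusion of Theorem~\ref{thm:two_bidders:main:among_all_truthful} (optimality among all truthful mechanisms, not just second-price with reserve) for free.

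On the adversary's side, your global quadratic expansion is a clean repackaging of the paper's pointwise Lagrangian minimization (Claim~\ref{clm:two:F_is_optimal}). The identity $(r-\underv)q^*(r)+(\underv/r)Q^*(r)=A$ you isolate is precisely the first-order condition $h_v'(F^*(v))=0$ with the paper's multiplier $\lambda/2=A$, and your second-order bound $Q^*(r)-rq^*(r)\ge 0$ is the paper's check that $h_v''>0$. The substantive calculus is identical; your framing avoids naming $\lambda$ by absorbing the mean constraint into $\int\Delta=0$, which is arguably tidier.
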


\begin{figure}[t]
\centering
\includegraphics[width=0.7\textwidth]{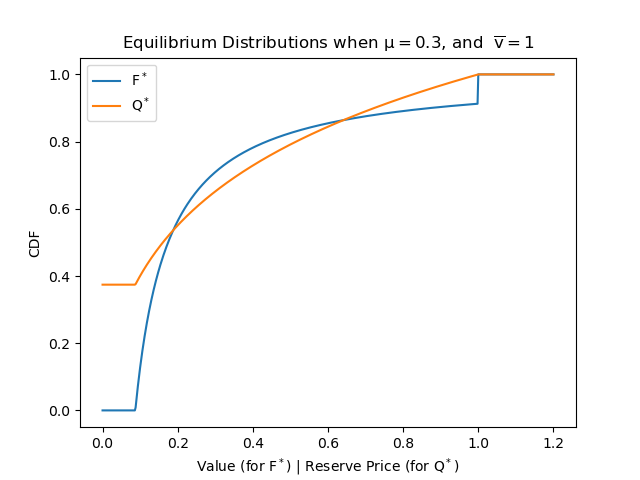}
\caption{
The solution to Problem~\eqref{eq:problem_reduced} with two bidders for the parametric settings in which $\mu=0.3$ and $\overv=1$. The pair $(F^*,Q^*)$ is an equilibrium of the reduced zero-sum game, where $F^*$ is the adversarial distribution of bidder values, and $Q^*$ is the seller's distribution over reserve prices for the second-price auction. 
$F^*$ is a generalized equal-revenue distribution with support $[\underv,\overv]$ for an appropriate $\underv$, and $Q^*$ is a ratio of a log-uniform distribution and an equal-revenue one, with the same support.
In fact, we show that $(F^*,Q^*)$ also constitutes a solution to Problem~\eqref{eq:problem}, yielding a distributionally-robust optimal auction for two bidders.}
\centering
\label{fig:two_bidders:general:equilibrium_mu0.3_overv1}
\end{figure}

Interestingly, the distribution chosen by the adversary is identical to the case of a single bidder solved by~\cite{CarrascoFK+18}. The distribution chosen by the seller is different: it has a positive mass on zero, and the rest of the mass is spread continuously on $[\underv, \overv]$.
This is illustrated (for a particular parametric setting) in Fig \ref{fig:two_bidders:general:equilibrium_mu0.3_overv1}. For a single bidder, the seller chooses the log-uniform distribution $\frac{\log \nicefrac{r}{\underv}}{\log \nicefrac{\overv}{\underv}}$. In the two bidder case, we observe that the sellers uses the log-uniform distribution divided by the equal-revenue distribution, and normalized to be in $[0,1]$.

An important and possibly surprising implication of Theorem~\ref{thm:two_bidders:main:equilibrium} is that the second-price auction with a randomized reserve sampled from $Q^*$ is an \emph{overall} robustly optimal mechanism. This constitutes our main result for two bidders and is proven in Section~\ref{subsec:two:proof}.

\begin{theorem}[Main: Solution of Problem~\eqref{eq:problem} for two bidders] \label{thm:two_bidders:main:among_all_truthful}
Let $M^*$ be the second-price auction with randomized reserve sampled from $Q^*$, where $Q^*$ is as described in Theorem \ref{thm:two_bidders:main:equilibrium}. Let $F^*$ be as described in Theorem \ref{thm:two_bidders:main:equilibrium} as well. Then, if there are exactly two bidders, then $(M^*,F^*)$ is an equilibrium that solves Problem \eqref{eq:problem}.
\end{theorem}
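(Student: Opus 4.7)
The plan is to leverage Theorem~\ref{thm:two_bidders:main:equilibrium} and upgrade the equilibrium of the reduced game to one of the general game, simply by viewing $Q^*$ as the mechanism $M^*$. Since $M^* \in \calQ \subseteq \calM$, the payoff $\Rev_{F^*}(M^*)$ equals the value $V$ of the reduced game, which from the indifference property of the equal-revenue distribution $F^*$ I would compute directly to be $V = 2\underv - \underv^2/\overv$. It then suffices to verify the two one-sided inequalities that characterize a pure-strategy equilibrium.

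The adversary-side inequality $\Rev_F(M^*) \geq V$ for every $F \in \calF$ is inherited verbatim from Theorem~\ref{thm:two_bidders:main:equilibrium}: the mechanism $M^*$ is exactly the randomized second-price auction indexed by $Q^*$, so this is precisely the statement that $Q^*$ is a max-min strategy in the reduced game. No new work is required on this side.

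The heart of the argument is the seller-side inequality: no truthful mechanism $M \in \calM$ extracts more than $V$ from $F^*$. The plan is to apply Myerson's virtual-value characterization of expected revenue. On the interior $[\underv, \overv)$, $F^*$ has density $f^*(v) = \underv/v^2$, and one computes $\phi^*(v) = v - (1-F^*(v))/f^*(v) = v - v = 0$; thus any allocation to a bidder whose value lies in the interior contributes nothing to expected virtual surplus. The sole remaining source of revenue is the atom at $\overv$, whose effective virtual value is $\overv$ itself (this is consistent with the revenue curve of $F^*$ attaining its maximum $\underv$ throughout $[\underv,\overv]$, with the value at the top realized by the atom). Myerson's bound then yields
$$
\Rev_{F^*}(M) \;\leq\; \mathbb{E}\bigl[\max\{0,\, \phi^*(v_1),\, \phi^*(v_2)\}\bigr] \;=\; \Pr[\max\{v_1,v_2\} = \overv]\cdot\overv \;=\; \bigl(1-(1-\underv/\overv)^2\bigr)\overv \;=\; V.
$$

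The main obstacle is justifying the use of the formula $\phi = v - (1-F)/f$ in the presence of the atom at $\overv$, since $f$ is not a classical density there. The cleanest way I would handle this is by approximation: write $F^*$ as the pointwise limit of continuous distributions $F^*_\epsilon$ that smear the atom smoothly over $[\overv - \epsilon, \overv]$. For $F^*_\epsilon$, the virtual value can be made $0$ on $[\underv,\overv-\epsilon]$ and uniformly close to $\overv$ on $[\overv - \epsilon, \overv]$, so Myerson's bound gives an upper bound on $\Rev_{F^*_\epsilon}(M)$ that converges to $V$ as $\epsilon \to 0$. Combining this with the continuity of $\Rev_F(M)$ in $F$ (for any fixed truthful $M$) completes the seller-side bound and hence the proof of equilibrium.
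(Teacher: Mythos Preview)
Your high-level plan matches the paper exactly: the adversary-side inequality is inherited verbatim from Theorem~\ref{thm:two_bidders:main:equilibrium}, and the seller-side inequality is obtained via Myerson's characterization, exploiting that the virtual value of $F^*$ vanishes on $[\underv,\overv)$ so that all revenue comes from the atom at~$\overv$. This is precisely the content of the paper's Claim~\ref{clm:two:Q_is_optimal}. Your explicit computation of $V=2\underv-\underv^2/\overv$ is correct and a nice addition.

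Where you diverge from the paper is in handling the atom. The paper simply applies Myerson's identity ``expected revenue $=$ expected virtual welfare'' directly to $F^*$ (this identity extends to distributions with atoms, with the atom at the top carrying virtual value $\overv$), and concludes in one line that revenue is maximized by any mechanism that always allocates to a bidder with value~$\overv$ when one exists --- which $M^*$ does. Your proposed detour through continuous approximations $F^*_\epsilon$ is unnecessary and, as written, has a genuine gap: the continuity claim $\Rev_{F^*_\epsilon}(M)\to\Rev_{F^*}(M)$ is \emph{false} for mechanisms whose allocation jumps at~$\overv$. Concretely, take the truthful mechanism that allocates to bidder~$1$ and charges $\overv$ if and only if $v_1=\overv$ exactly; then $\Rev_{F^*}(M)=(\underv/\overv)\cdot\overv=\underv>0$, whereas $\Rev_{F^*_\epsilon}(M)=0$ for every~$\epsilon$ since $F^*_\epsilon$ is atomless. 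So the limit you need does not hold in general, and the argument cannot close as stated. The fix is simply to drop the approximation and invoke Myerson for distributions with atoms directly, as the paper does.
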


\vspace{0.05in}
\noindent{\bf Proof overview.}
Since the adversary is limited to choosing only distributions with a known mean value, we can use the method of Lagrange multipliers, and find a saddle point for the Lagrangian developed in Section~\ref{subsec:two:lagrangian}.
In Section~\ref{subsec:two:discrete} we consider a discrete case, where the seller's possible reserve prices, and the bidders' possible valuations are limited to a finite set.
We then find certain conditions that a saddle point of the Lagrangian would satisfy in the discrete case (Claims~\ref{clm:two_bidders:disc:nature_induction}-\ref{clm:two_bidders:disc:seller_induction}).
Afterwards, by importing these conditions to the general case, we find ODEs which we suspect that a saddle point for the non-discrete Lagrangian would satisfy, and use them to find a candidate solution $(Q^*, F^*)$ (Section~\ref{subsec:two:guess_odes}).
After finding the suspected solution $(Q^*, F^*)$, we prove in Section~\ref{subsec:two:proof} that it is a saddle point of the Lagrangian, and simultaneously prove that no other mechanism could achieve a better expected revenue if the adversary does choose distribution $F^*$. This completes the proof.

\vspace{0.05in}
\noindent{\bf A discrete-to-continuous technique of possible independent interest.}
Note that our proof, in particular Section~\ref{subsec:two:guess_odes}, uses a technique that allows us to go from the discrete case to the general case, as follows. 
As described in the proof overview, in Section~\ref{subsec:two:discrete} we find conditions which a saddle point of the Lagrangian (i.e.~an equilibrium of the discrete game) would satisfy assuming it's not on the boundary.
Roughly, these conditions describe a recursive relation that adjacent points in the discretization should satisfy to be an equilibrium. For example, if we consider points $y_1,\dots,y_{k}$ of the discrete grid, a recursive relation is of the form $y_{i+1}=\psi(y_i,\dots)$ (where the $\dots$ can be replaced by additional arguments).

In Section~\ref{subsec:two:guess_odes} we wish to go from the discrete to the general case. We do this by making the discretization finer and finer, and more importantly, translating the recursive relation into an ODE condition that holds in the limit. 
Interestingly, we do not need to solve the discrete case, prove that the limit exists, nor show that the condition on the limit necessarily holds. This is because we use this technique only to derive a guess of what the equilibrium is in the general case, and later prove that we guessed correctly. For a complete picture however, in Appendices \ref{appndx:two_bidders_discrete}-\ref{appndx:sec:cont} we prove the correctness of this technique, namely we solve the discrete case, and establish both convergence and that the resulting ODE condition on the equilibrium indeed holds.
This proof of correctness strengthens our belief that the described discrete-to-continuous technique may be of independent interest.

\subsection{The Lagrangian for Maximizing Expected Revenue} \label{subsec:two:lagrangian}
We denote by $v^{(i)}$ the $i$-th highest valuation of the bidders.
When the seller uses a second price auction with a reserve price $r$, the expected revenue is given by
\begin{align*}
    \Rev_{F}(r) &= E\left[r\cdot1_{v^{\left(2\right)}\le r<v^{\left(1\right)}}\right]+E\left[v^{\left(2\right)}\cdot1_{r<v^{\left(2\right)}}\right]\\
    &= r\left(1-F^{2}\left(r\right)\right)+\int_{r}^{\overv}\left(1-F\left(v\right)\right)^{2}dv.
\end{align*}
Therefore, when the seller randomizes a reserve price sampled from distribution $Q$, the expected revenue is
\begin{align*}
    \Rev_{F}(Q) &= \int_{-\infty}^{\infty}\left(r\left(1-F^{2}\left(r\right)\right)+\int_{r}^{\overv}\left(1-F\left(v\right)\right)^{2}dv\right)dQ\left(r\right).
\end{align*}
We can assume that $Q$'s support is contained in $[0,\overv)$, since negative reserve price is meaningless, and reserve price of $\overv$ or higher would yield now revenue.
If we also assume that $Q$ is continuous over $(0, \overv)$ and differentiable almost everywhere, we get
\begin{align}
    \Rev_{F}(Q) &= Q\left(0\right)\int_{0}^{\overv}\left(1-F\left(v\right)\right)^{2}dv+\int_{0}^{\overv}Q'\left(r\right)\left(r\left(1-F^{2}\left(r\right)\right)+\int_{r}^{\overv}\left(1-F\left(v\right)\right)^{2}dv\right)dr \nonumber \\
    &=\int_{0}^{\overv}\left(Q'\left(v\right)v\left(1-F^{2}\left(v\right)\right)+Q\left(v\right)\left(1-F\left(v\right)\right)^{2}\right)dv.\nonumber %\label{eq:two_bidders:revenue_expresstion}
\end{align}
To get this we use
\begin{align*}
    \int_{0}^{\overv}\int_{r}^{\overv}Q'\left(r\right)\left(1-F\left(v\right)\right)^{2}dvdr &= \int_{0}^{\overv}\int_{0}^{v}Q'\left(r\right)\left(1-F\left(v\right)\right)^{2}drdv\\
    &=\int_{0}^{\overv}\left(1-F\left(v\right)\right)^{2}\left(Q\left(r\right)-Q\left(0\right)\right)dv.
\end{align*}

Since nature is limited to choose distributions with a mean value of $\mu$, we have the constraint
\[\int_{0}^{\overv}\left(1-F\left(v\right)\right)dv=\mu.\]
So we get the Lagrangian we get is
\begin{equation}
    \calL(Q,F) = \int_{0}^{\overv}\left(Q'\left(v\right)v\left(1-F^{2}\left(v\right)\right)+Q\left(v\right)\left(1-F\left(v\right)\right)^{2}-\lambda\left(1-F\left(v\right)\right)\right)dv. \label{eq:two_bidders:Lagrangian_expresstion}
\end{equation}

\begin{claim} \label{clm:lagrangian}
For every distribution $Q \in \calQ$, and any $\lambda$, distribution $F$ minimizes $\Rev_{F}(Q)$ over $\calF$ if and only if $F$ minimizes $\calL(Q,F)$ over $\calF$.
\end{claim}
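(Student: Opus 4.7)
The plan is to show that on the feasible set $\calF$, the Lagrangian and the expected revenue differ only by an additive constant (independent of $F$), from which the equivalence of their minimizers is immediate.

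First, I would record the standard identity that for any non-negative random variable $V$ with CDF $F$ supported in $[0,\overv]$, the mean admits the tail representation
\[
\mathbb{E}_{v\sim F}[v]=\int_{0}^{\overv}\left(1-F(v)\right)dv.
\]
Consequently, the defining constraint of $\calF$ (support in $[0,\overv]$, mean $\mu$) is equivalent, for any CDF on $[0,\overv]$, to the integral equality
\[
\int_{0}^{\overv}\left(1-F(v)\right)dv=\mu.
\]

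Next, I would compare the integrand of $\calL(Q,F)$ in \eqref{eq:two_bidders:Lagrangian_expresstion} with the expression derived just above for $\Rev_{F}(Q)$. The only discrepancy is the extra term $-\lambda(1-F(v))$ inside the integral. Therefore, for every $F\in\calF$,
\[
\calL(Q,F)=\Rev_{F}(Q)-\lambda\int_{0}^{\overv}\left(1-F(v)\right)dv=\Rev_{F}(Q)-\lambda\mu,
\]
where the second equality uses $F\in\calF$.

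Since $\lambda\mu$ is a constant independent of the choice of $F$ within $\calF$, the two functionals $F\mapsto \Rev_{F}(Q)$ and $F\mapsto\calL(Q,F)$ differ on $\calF$ by the same additive constant, so their sets of minimizers over $\calF$ coincide. This gives both directions of the ``if and only if'' simultaneously. There is no real obstacle here; the content of the claim is essentially that because the mean constraint is baked into $\calF$, the Lagrange multiplier term is a constant on the feasible set, so the Lagrangian relaxation preserves the minimizer exactly (rather than merely providing a bound).
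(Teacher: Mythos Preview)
Your proof is correct and follows essentially the same approach as the paper: both observe that for any $F\in\calF$ the mean constraint forces $\int_{0}^{\overv}(1-F(v))\,dv=\mu$, so $\calL(Q,F)=\Rev_{F}(Q)-\lambda\mu$ differs from $\Rev_{F}(Q)$ by a constant independent of $F$, and hence the minimizers over $\calF$ coincide.
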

\begin{proof}
Since $\int_0^{\overv} (1-F(v))dv = \mu$ for any $F \in \calF$, then
\[\calL(Q,F) = \Rev_{F}(Q) - \lambda \mu.\]
The difference between the two functions does not depend on $F$, and that completes the proof.
\end{proof}

\subsection{The Discrete Case} \label{subsec:two:discrete}
For simpler analysis, we consider a discretization of the problem.
Let $k> 1$ be the discretization parameter. We effectively limit the support of the seller's distribution over reserve prices and the adversary's distribution over values to $k+1$ possible options $\rvec_k=(r_{k,1},\dots, r_{k,k}, r_{k,k+1})$, where $r_{k,i} = \frac{i-1}{k}\cdot \overv$ for every $1\le i \le k+1$.

In the discrete problem, the distribution functions are step functions with jumps only on values of $\rvec_k$, so we can represent the distributions using finite vectors $\qvec, \xvec$ of size $k$, where $q_i = Q(r_{k,i})$ and $x_i = F(r_{k,i})$.
With this, we can represent the revenue as a function of $\qvec, \xvec$.

Similarly to how we got Formula \eqref{eq:two_bidders:Lagrangian_expresstion}, we get
\begin{equation}
    \calL_k(\qvec, \xvec) = \sum_{i=1}^k (q_i-q_{i-1}) r_{k,i}(1-x_i^2)  + \frac{\overv}{k} \cdot \sum_{i=1}^k q_i \left(1 - x_i \right)^2 - \lambda\frac{\overv}{k}\sum_{i=1}^k \left(1-x_i\right), \label{eq:two_bidders:Lagrangian_expresstion_discrete}
\end{equation}
where $q_0 = 0$.

We note that we can also assume that $q_k = 1$. That is because if $q_k < 1$, then there is a non zero probability for a reserve price of zero, which would yield no revenue, because the seller will never sell the item.

We can use Formula~\eqref{eq:two_bidders:Lagrangian_expresstion_discrete} to find equations a saddle point should satisfy.

\begin{claim} \label{clm:two_bidders:disc:nature_induction}
For any $\xvec = (x_1, \dots, x_k) \in [0,\infty)^k$ and for every $1 \le i \le k-1$
\[\frac{\partial}{\partial q_i}\calL_k(\qvec, \xvec) = 0 \iff x_{i+1} = \sqrt{\frac{\overv}{k} \cdot \frac{2 }{r_{k,i} +\nicefrac{\overv}{k}}(x_i - x_i^2) + x_i^2}.\]
\end{claim}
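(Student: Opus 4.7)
The claim is a pure calculation: differentiate $\calL_k$ with respect to $q_i$, set the derivative to zero, and solve for $x_{i+1}$. My plan is to identify the three terms of $\calL_k$ from equation~\eqref{eq:two_bidders:Lagrangian_expresstion_discrete} that depend on $q_i$: in the first (telescoping-type) sum, $q_i$ appears in the $i$-th summand with coefficient $r_{k,i}(1-x_i^2)$ and in the $(i+1)$-th summand with coefficient $-r_{k,i+1}(1-x_{i+1}^2)$; in the second sum it appears with coefficient $\tfrac{\overv}{k}(1-x_i)^2$; and the third sum does not involve $q_i$ at all. Putting these together gives
\[
\frac{\partial}{\partial q_i}\calL_k(\qvec,\xvec)
= r_{k,i}(1-x_i^2) \;-\; r_{k,i+1}(1-x_{i+1}^2) \;+\; \tfrac{\overv}{k}(1-x_i)^2.
\]

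The next step is to equate this to zero and use the identity $r_{k,i+1}=r_{k,i}+\tfrac{\overv}{k}$ to isolate $(1-x_{i+1}^2)$. This produces
\[
(r_{k,i}+\tfrac{\overv}{k})(1-x_{i+1}^2) = r_{k,i}(1-x_i^2) + \tfrac{\overv}{k}(1-x_i)^2.
\]
From here I would compute $x_{i+1}^2-x_i^2$ directly by moving $(r_{k,i}+\tfrac{\overv}{k})x_i^2$ across and simplifying. The $r_{k,i}$-terms in $x_i^2$ cancel neatly, leaving only $\tfrac{\overv}{k}$-terms; grouping them gives
\[
x_{i+1}^2 - x_i^2 \;=\; \frac{2\,\tfrac{\overv}{k}\,(x_i-x_i^2)}{r_{k,i}+\tfrac{\overv}{k}},
\]
which is exactly the expression under the square root in the claim, once $x_i^2$ is added back. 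Taking square roots (choosing the non-negative branch, which is the only branch that can equal a CDF value) yields the stated formula, and the argument is an iff because every step is a reversible algebraic manipulation.

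\textbf{Main obstacle.} There is no genuine obstacle; the entire proof is routine differentiation and algebra. The only care required is in the bookkeeping: making sure that $q_i$'s appearance in two consecutive terms of the first telescoping sum is accounted for with the correct signs, and remembering that the boundary conventions $q_0=0$ and $q_k=1$ from the paragraph following equation~\eqref{eq:two_bidders:Lagrangian_expresstion_discrete} do not affect interior indices $1\le i\le k-1$, so $q_i$ is genuinely a free variable in that range. A minor subtlety is that the claim allows $\xvec\in[0,\infty)^k$ rather than $[0,1]^k$, so I would briefly note that the square root is well-defined precisely when $x_i^2 + \tfrac{\overv}{k}\cdot\tfrac{2}{r_{k,i}+\overv/k}(x_i-x_i^2)\ge 0$ (which is automatic for $x_i\in[0,1]$, the regime of interest when $\xvec$ encodes a CDF), and that the non-negative branch is the relevant one.
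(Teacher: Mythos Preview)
Your proposal is correct and follows essentially the same approach as the paper: compute $\partial \calL_k/\partial q_i = r_{k,i}(1-x_i^2) + \tfrac{\overv}{k}(1-x_i)^2 - r_{k,i+1}(1-x_{i+1}^2)$, set it to zero, use $r_{k,i+1}=r_{k,i}+\tfrac{\overv}{k}$, and solve algebraically for $x_{i+1}^2$ before taking the non-negative square root. The only cosmetic difference is that the paper solves directly for $x_{i+1}^2$ rather than for the increment $x_{i+1}^2 - x_i^2$, but the computations are equivalent.
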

Likewise, we do the same to describe $\qvec$.
\begin{claim} \label{clm:two_bidders:disc:seller_induction}
For any $\xvec \in (0,\overv)^n$, $ \qvec \in (0,\overv)^{n-1}$, and for every $2 \le i \le k$
\[\frac{\partial}{\partial x_i}\calL_k(\qvec, \xvec) = 0 \iff q_{i-1}=q_{i}-\frac{\nicefrac{\lambda}{2}-q_{i}\left(1-x_{i}\right)}{r_{k,i}x_{i}}\cdot\nicefrac{\overv}{k},\]
and
\[\frac{\partial}{\partial x_1}\calL_k(\qvec, \xvec) = 0 \iff q_1=\frac{\lambda}{2(1-x_1)}.\]
\end{claim}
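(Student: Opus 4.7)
The plan is to verify both equivalences by a direct first-derivative computation on the explicit expression of $\calL_k$ in~\eqref{eq:two_bidders:Lagrangian_expresstion_discrete}, followed by an algebraic rearrangement into the stated recurrence and boundary identity. Since $\calL_k$ is a finite sum of polynomials in the coordinates $\qvec,\xvec$, everything is purely symbolic; the only real care needed is at the boundary index $i = 1$, where the discretization places the smallest reserve/value at $r_{k,1} = 0$.

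For a fixed index $2 \le i \le k$, I would isolate the three summands of $\calL_k$ in which $x_i$ appears, namely the $i$-th term $(q_i - q_{i-1})\, r_{k,i}\,(1 - x_i^2)$ of the first sum, the $i$-th term $\tfrac{\overv}{k}\, q_i\,(1 - x_i)^2$ of the second sum, and the $i$-th term $-\lambda\,\tfrac{\overv}{k}\,(1 - x_i)$ of the third sum. Differentiating each with respect to $x_i$ and combining gives
\[
\frac{\partial \calL_k}{\partial x_i}
\;=\; -2(q_i - q_{i-1})\, r_{k,i}\, x_i \;-\; 2\tfrac{\overv}{k}\, q_i\,(1 - x_i) \;+\; \lambda\tfrac{\overv}{k}.
\]
Setting this to zero and solving the resulting \emph{linear} equation for $q_{i-1}$ yields exactly the recurrence in the claim. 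The equivalence goes in both directions because, under the standing assumptions $x_i \in (0,1)$ and $r_{k,i} = \tfrac{i-1}{k}\overv > 0$ (which holds for $i \ge 2$), the division by $r_{k,i}x_i$ used in the rearrangement is reversible.

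The boundary case $i = 1$ is where the only subtlety appears: since $r_{k,1} = 0$, the $i = 1$ summand of the first sum is identically zero and in particular contributes nothing to $\partial\calL_k/\partial x_1$. The first-order condition therefore collapses to $-2\tfrac{\overv}{k}\, q_1\,(1 - x_1) + \lambda\tfrac{\overv}{k} = 0$, which rearranges immediately (under the tacit assumption $x_1 \ne 1$) to $q_1 = \nicefrac{\lambda}{2(1 - x_1)}$. This vanishing of the would-be first-sum contribution at $i = 1$ is precisely why the two halves of the claim take structurally different forms, and it is the only place one must be careful; once it is noticed, the rest of the argument is pure bookkeeping on the three sums that make up $\calL_k$.
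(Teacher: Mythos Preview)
Your proposal is correct and follows essentially the same approach as the paper: compute the partial derivative of the three summands in~\eqref{eq:two_bidders:Lagrangian_expresstion_discrete} that involve $x_i$, set it to zero, and solve linearly for $q_{i-1}$ (respectively $q_1$), noting that $r_{k,1}=0$ kills the first-sum contribution at $i=1$. The paper's own proof is exactly this derivative computation with the same rearrangement.
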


Claims~\ref{clm:two_bidders:disc:nature_induction} and \ref{clm:two_bidders:disc:seller_induction} provide a recurrence relation for a potential saddle point of $\calL_k$.

In order to continue with the proof of Theorem~\ref{thm:two_bidders:main:equilibrium}, we do not need to prove anything more for the discrete case. However, it is also possible to prove the existence of an equilibrium in the discrete case. We do it in Appendix~\ref{appndx:two_bidders_discrete}.

\subsection{Candidate Equilibrium} \label{subsec:two:guess_odes}
Based on Claims~\ref{clm:two_bidders:disc:nature_induction} and \ref{clm:two_bidders:disc:seller_induction}, we are able to guess ODEs that an equilibrium for the non-discrete problem would satisfy.
We do not need to prove the steps to achieve those ODEs, or that solving them is a saddle point of $\calL(Q,F)$. That is because the goal here is only to find candidates for an equilibrium. Only after finding the candidates (see Equations~\eqref{eq:two:F_guess} and \eqref{eq:two:Q_guess}), we prove in Section~\ref{subsec:two:proof} that they do solve Problem~\eqref{eq:problem}.

\paragraph{\bf Guesses for ODEs based on the discrete case.}

We assume that for any $k > 1$, there exists $\qvec_k^*, \xvec_k^*$ that are a saddle point for $\calL_k(\qvec, \xvec)$. We denote the corresponding distributions by $Q^*_k, F^*_k$, which means that $Q^*_k(r_{k,i}) = q_i$, and $F^*_k(r_{k,i}) = x_i$.

Next, we assume that there exist $Q^*, F^*$ such that
\begin{align*}
    \lim_{k\to\infty} Q^*_k &= Q^*, \\
    \lim_{k\to\infty} F^*_k &= F^*.
\end{align*}

Based on Claim~\ref{clm:two_bidders:disc:nature_induction}, we guess that for $F=F^*$
\begin{align*}
    \frac{dF(r)}{dr} &= \lim_{k \to \infty} \frac{F_k^*\left(r+\nicefrac{\overv}{k}\right) - F_k^*(r)}{\nicefrac{\overv}{k}} \\
    &= \lim_{\delta \to 0} \frac{\sqrt{\delta \cdot \frac{2 }{r + \delta}(F(r) - F^2(r)) + F^2(r)} - F(r)}{\delta}.
\end{align*}
This a partial derivative of the function
\begin{align*}
    \psi_F(\delta, r, y) = \sqrt{\delta \cdot \frac{2 }{r +\delta}(y - y^2) + y^2}.
\end{align*}
This is the function that describes the recursive relation in Claim~\ref{clm:two_bidders:disc:nature_induction}, since according to the claim, $F_k^*(r_{k,i+1}) = \psi_F\left( \nicefrac{\overv}{k}, r_{k,i}, F_k^*(r_{k,i}) \right)$.

This yields the ODE
\begin{align}
    \frac{d}{dr}F^*(r) &= \frac{\partial\psi_F(\delta, r, y)}{\partial \delta} \left(0, r, F^*(r)\right) \nonumber \\
    &= \frac{(1-F^*(r))F^*(r)}{rF^*(r)} \nonumber \\
    &= \frac{1-F^*(r)}{r}. \label{eq:two_bidders:ode:nature}
\end{align}
This only applies when $F^*(r) \ne 0$.\footnote{For the case where $F^*(r) = 0$, we can see that $\psi(\delta, r, 0) = 0$, and thus the derivative would be zero as well.}

We can use a similar idea for $Q^*$.
We define
\[\psi_Q(\delta, r, y,z) = y-\frac{\nicefrac{\lambda}{2}-y \left(1-z\right)}{rz}\cdot\delta.\]
According to Claim~\ref{clm:two_bidders:disc:seller_induction} 
\[Q^*_k\left(r_{k,i} - \nicefrac{\overv}{k}\right) = \psi_Q\left(\nicefrac{\overv}{k}, r_{k,i}, Q^*_k(r_{k,i}), F^*_k(r_{k,i}) \right).\]
And we guess that
\begin{align}
    \frac{dQ^*(r)}{dr} &= \lim_{k \to \infty} \frac{Q_k^*\left(r-\nicefrac{\overv}{k}\right) - F_k^*(r)}{-\nicefrac{\overv}{k}} \nonumber \\
    &= \lim_{\delta \to 0} \frac{\psi_Q\left(\delta, r, Q(r), F^*(r) \right) - \psi_Q\left(0, r, Q(r), F^*(r) \right)}{-\delta}. \nonumber \\
    &= -\frac{\partial\psi_Q(\delta, r, y, z)}{\partial \delta} \left(0, r, Q^*(r), F^*(r)\right) \nonumber \\
    &= \frac{\nicefrac{\lambda}{2}-Q^*(r) \left(1-F^*(r)\right)}{rF^*(r)}. \label{eq:two_bidders:ode:seller}
\end{align}

\paragraph{\bf Solution to the ODEs.}
A solution for Eq~\eqref{eq:two_bidders:ode:nature} is of the form
\[F^*(v) = 1 - \frac{c}{v},\]
everywhere where $F^*(v) \ne 0$.
With the assumptions of continuity over $[0, \overv)$, and a mean value of $\mu$, we get
\begin{equation}
    F^*(v) = \begin{cases}
0 & v \in [0, \underv), \\
1 - \nicefrac{\underv}{v} & v\in [\underv, \overv), \\
1 & v \ge \overv,
\end{cases} \label{eq:two:F_guess}
\end{equation}
where $\underv\left(1+\log\overv-\log\underv\right) = \mu$.

Now, we can solve for the seller's distribution.
Since the ODE is not defined where $F^*(r) = 0$, it is only defined on $(\underv, \overv]$.
We substitute $F^*(v) = 1 - \nicefrac{\underv}{v}$ in Eq~\eqref{eq:two_bidders:ode:seller}, and we get
\[\frac{dQ(r)}{dr} = \frac{\nicefrac{\lambda}{2}-Q^*(r) \cdot \nicefrac{\underv}{r}}{r - \underv}.\]
The solution to this ODE is
\[Q^*(r) = \left( \frac{r}{r-\underv} \right) \cdot \left( c + \nicefrac{\lambda}{2} \cdot \log r \right),\]
for an unknown $c$.

We know that the seller would never set a reserve price of $\overv$, so the continuity on $\overv$ is known.
Since $Q^*(\overv) = 1$, we can solve $c = 1 - \nicefrac{\underv}{\overv} - \nicefrac{\lambda}{2} \log \overv$, and thus
\[Q^*(r) = \left( \frac{r}{r-\underv} \right) \cdot \left( 1 - \nicefrac{\underv}{\overv} - \nicefrac{\lambda}{2} \cdot \log \nicefrac{\overv}{r} \right).\]

We also know that as a distribution function, $Q^*(r) \in [0, 1]$ for every $r$.
Since $\lim_{r \to \underv^+}\left( \frac{r}{r-\underv} \right) = \infty$, we know that
\begin{align}
    0 &= \lim_{r \to \underv^+} \left( 1 - \nicefrac{\underv}{\overv} - \nicefrac{\lambda}{2} \cdot \log \nicefrac{\overv}{r} \right) \nonumber \\
    & =  1 - \nicefrac{\underv}{\overv} - \nicefrac{\lambda}{2} \cdot \log \nicefrac{\overv}{\underv}, \nonumber \\
    \implies \nicefrac{\lambda}{2} &= \frac{1 - \nicefrac{\underv}{\overv}}{\log \nicefrac{\overv}{\underv}}. \label{eq:two:lambda}
\end{align}

This suggests that for $r\in (\underv, \overv]$
\[Q^*(r) = \left( 1 - \nicefrac{\underv}{\overv} \right) \cdot \left( \frac{r}{r-\underv} \right) \cdot \left( \frac{\log r - \log \underv}{ \log \overv - \log \underv} \right).\]
Now we only need to solve for $r\in [0, \underv]$.
From Claim~\ref{clm:two_bidders:disc:seller_induction}, $Q^*_k(0) = \frac{\lambda}{2\left(1-F_k^*(0)\right)}$.
Therefore
\begin{align*}
    Q^*(0) &= \frac{\lambda}{2\left(1-F^*(0)\right)} \\
    &= \frac{1 - \nicefrac{\underv}{\overv}}{\log \nicefrac{\overv}{\underv}}.
\end{align*}
Likewise $\lim_{r\to \underv^+} Q^*(r) = \frac{1 - \nicefrac{\underv}{\overv}}{\log \nicefrac{\overv}{\underv}}$, and since $Q^*(r)$ is a distribution function, it is non-decreasing.
Therefore, our guess for $Q^*$ is
\begin{equation}
    Q^*(r) = \begin{cases}
         \left( 1 - \nicefrac{\underv}{\overv} \right) \cdot \left(\frac{1}{\log \overv - \log \underv}\right) & r\in [0, \underv],\\
        \left( 1 - \nicefrac{\underv}{\overv} \right) \cdot \left( \frac{r}{r-\underv} \right) \cdot \left( \frac{\log r - \log \underv}{ \log \overv - \log \underv} \right) & r\in (\underv, \overv].
    \end{cases} \label{eq:two:Q_guess}
\end{equation}

\subsection{Proof of Theorem~\ref{thm:two_bidders:main:among_all_truthful}} \label{subsec:two:proof}
We have guessed that $Q^*, F^*$ (Equations~\eqref{eq:two:F_guess}-\eqref{eq:two:Q_guess}) are an equilibrium for Problem~\eqref{eq:problem_reduced}. We will prove this guess directly.

\begin{claim} \label{clm:two:F_is_optimal}
If the seller uses a second price auction with a randomized reserve price sampled from $Q^*$, then $F^*$ minimizes the expected revenue of the seller among all distributions with a mean value of $\mu$, and support contained in $[0, \overv]$.
\end{claim}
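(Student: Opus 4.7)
By Claim~\ref{clm:lagrangian} applied with $\lambda$ as in Eq~\eqref{eq:two:lambda}, it suffices to show that $F^*$ minimizes $\calL(Q^*, F)$ over $\calF$. My plan is to prove the stronger statement that $F^*$ minimizes the Lagrangian integrand pointwise at every $v \in (0, \overv)$ among all values $F(v) \in [0, 1]$. Since $F^* \in \calF$, this will immediately yield the claim. Fix $v \in (0, \overv)$ and view the integrand as a quadratic in $y = F(v)$:
\[
g_v(y) = (Q^*)'(v)\, v \, (1 - y^2) + Q^*(v)(1-y)^2 - \lambda(1-y),
\]
whose second derivative is $g_v''(y) = 2[Q^*(v) - v(Q^*)'(v)]$.

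First, I would verify convexity of $g_v$, i.e.\ $Q^*(v) \geq v(Q^*)'(v)$ on $(0, \overv)$. This is immediate on $(0, \underv)$ where $(Q^*)'(v) = 0$; on $(\underv, \overv)$ I would apply the quotient rule to compute $(Q^*)'(v)$ and show that the inequality reduces to $v\log(v/\underv) \geq v - \underv$. The latter holds because $h(v) := v\log(v/\underv) - (v - \underv)$ satisfies $h(\underv) = 0$ and $h'(v) = \log(v/\underv) \geq 0$ for $v \geq \underv$.

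Next, I would verify the first-order condition $g_v'(F^*(v)) = 0$, equivalently
\[
2v(Q^*)'(v)\, F^*(v) + 2Q^*(v)(1 - F^*(v)) = \lambda.
\]
For $v \in (0, \underv)$ this collapses to $2Q^*(v) = \lambda$, which follows from the constant value of $Q^*$ on this interval together with Eq~\eqref{eq:two:lambda}. For $v \in (\underv, \overv)$, substituting $F^*(v) = 1 - \underv/v$ and the closed forms for $Q^*$ and $(Q^*)'$ should express the two summands on the LHS as $2(1-\underv/\overv)/\log(\overv/\underv)$ times $1 - \underv\log(v/\underv)/(v-\underv)$ and times $\underv\log(v/\underv)/(v-\underv)$, respectively; these telescope to $\lambda$. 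This algebraic cancellation will be the main obstacle, but it is not surprising: the ODE determining $Q^*$ in Section~\ref{subsec:two:guess_odes} is precisely this first-order condition in the discretization limit, so $Q^*$ has been engineered to make the identity close up.

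Finally, convexity combined with a vanishing derivative at the interior point $F^*(v) \in [0,1]$ yields $F^*(v) \in \argmin_{y \in [0,1]} g_v(y)$ (for $v \in (0,\underv)$ the minimizer $F^*(v) = 0$ lies on the boundary, where convexity together with $g_v'(0) = 0$ still certifies a minimum). Integrating pointwise minimality in $v$ shows that $F^*$ minimizes $\calL(Q^*, \cdot)$ over all measurable $[0,1]$-valued functions on $[0,\overv]$, and in particular over $\calF$, as desired.
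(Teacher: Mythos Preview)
Your proposal is correct and follows essentially the same approach as the paper: both minimize the Lagrangian integrand pointwise as a quadratic in $F(v)$, verify the second derivative is nonnegative, and check that $F^*(v)$ satisfies the first-order condition on each of $(0,\underv)$ and $(\underv,\overv)$. The only difference is organizational---you establish convexity first and then invoke the first-order condition, whereas the paper solves for the critical point and then checks the sign of the second derivative---but the computations and the logic are the same.
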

\begin{proof}
We have previously found the Lagrangian for this problem in Eq~\eqref{eq:two_bidders:Lagrangian_expresstion}.
% \begin{align*}
%     \calL(Q,F) = \int_{0}^{\overv} \left( Q'\left(v\right) v \left( 1 - F^2\left(v\right) \right) + Q\left(v\right) \left(1-F\left(v\right)\right)^2 -\lambda\left(1-F\left(v\right)\right)\right)dv.
% \end{align*}
We will minimize it point-wise, which means we need to minimize the integrand as a function of $F(v)$, which we denote by $h_v(z)$. So
\begin{align*}
    h_{v}\left(z\right)=Q'\left(v\right)v\left(1-z^{2}\right)+Q\left(v\right)\left(1-z\right)^{2}-\lambda\left(1-z\right).
\end{align*}
We use $\lambda=\frac{2\left(1-\nicefrac{\underv}{\overv}\right)}{\log\nicefrac{\overv}{\underv}}$ as in Eq~\eqref{eq:two:lambda}.
When $v\in (\underv, \overv]$ we use $Q\left(v\right) = \left(1-\nicefrac{\underv}{\overv}\right)\cdot\left(\frac{v}{v-\underv}\right)\cdot\left(\frac{\log v-\log\underv}{\log\overv-\log\underv}\right)$, which leads to
\begin{align*}
    h_{v}\left(z\right)=\left(\frac{1-\nicefrac{\underv}{\overv}}{\log\nicefrac{\overv}{\underv}}\right)\left(\frac{v}{v-\underv}\right)\left(\left(1-\frac{\underv\log\nicefrac{v}{\underv}}{v-\underv}\right)\left(1-z^{2}\right)+\left(\log\nicefrac{v}{\underv}\right)\left(1-z\right)^{2}-2\left(1-\nicefrac{\underv}{v}\right)\left(1-z\right)\right).
\end{align*}
After deriving by $z$, and searching for a root of the derivative, we get
\begin{align*}
    \left(1-\frac{\underv\log\nicefrac{v}{\underv}}{v-\underv}\right)z+\left(\log\nicefrac{v}{\underv}\right)\left(1-z\right)&=1-\nicefrac{\underv}{v}, \\
    \implies  z &= 1 - \nicefrac{\underv}{v}.
\end{align*}

We can also verify that the second derivative is positive, which proves that $F(v) = 1 - \nicefrac{\underv}{v}$ does minimizes the integrand point-wise, for $v\in (\underv, \overv]$.

When $v\in [0, \underv)$, we get that
\begin{align*}
    h_{v}\left(z\right) = \left(\frac{1-\nicefrac{\underv}{\overv}}{\log\nicefrac{\overv}{\underv}}\right)\left(z^{2}-1\right),
\end{align*}
which clearly has a minimum at $z = 0$, proving that $F(v) = F^*(v)$ minimizes the Lagrangian and thus the expected revenue among all distributions with support contained in $[0, \overv]$ and a mean value of $\mu$.
\end{proof}

\begin{claim} \label{clm:two:Q_is_optimal}
If the bidders' valuation is sampled from $F^*$, using a second-price auction with a randomized reserve price sampled from $Q^*$ maximizes the expected revenue among all truthful mechanisms.
\end{claim}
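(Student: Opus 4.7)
The plan is to apply Myerson's revenue characterization to the fixed distribution $F^*$ and show that the upper bound it produces is matched by the second-price auction with random reserve $Q^*$. The first step is to compute the Myerson virtual value function $\phi^*(v) := v - (1-F^*(v))/f^*(v)$ of $F^*$. On the continuous portion $(\underv,\overv)$, the equal-revenue shape $F^*(v)=1-\underv/v$ with density $f^*(v)=\underv/v^2$ makes $\phi^*(v)$ identically zero -- this is the defining property of the generalized equal-revenue distribution. At the atom $v=\overv$ of mass $\underv/\overv$, the virtual value is $\overv$ (since $1-F^*(\overv)=0$ and the seller can extract full value from a bidder at the upper endpoint). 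Because $\phi^*$ is weakly monotone, no ironing is required.

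Next, I would invoke Myerson's payment identity for i.i.d.\ settings: every truthful (DSIC+IR) mechanism has expected revenue equal to the expected virtual surplus of its allocation, $\mathbb{E}\bigl[\sum_i \phi^*(v_i)\, x_i(\vec v)\bigr]$. Since $\phi^*$ is non-negative and takes only the values $0$ and $\overv$, this is bounded above by
$$\overv\cdot\Pr[\max(v_1,v_2)=\overv] \;=\; \overv\Bigl(1-(1-\underv/\overv)^2\Bigr) \;=\; \underv\bigl(2-\underv/\overv\bigr),$$
achieved by any allocation rule that always sells to a bidder whose value is $\overv$ whenever one exists.

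Finally, I would verify by direct computation that the second-price auction with random reserve $Q^*$ attains exactly $\underv(2-\underv/\overv)$ against $F^*$. Using the expected-revenue formula $\Rev_{F^*}(r)=r(1-F^{*2}(r))+\int_r^{\overv}(1-F^*(v))^2\,dv$ from Section~\ref{subsec:two:lagrangian}, a short calculation on the two pieces $r\in[0,\underv]$ and $r\in(\underv,\overv]$ shows $\Rev_{F^*}(r)=\underv(2-\underv/\overv)$ independent of $r$ -- this is the ``indifference'' phenomenon promised by the economic interpretation. Consequently every distribution $Q$ supported on $[0,\overv]$, and in particular $Q^*$, yields the Myerson-optimal revenue against $F^*$, proving the claim.

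The main obstacle I anticipate is rigorously justifying Myerson's formula in the presence of the atom at $\overv$, since the textbook statement assumes a density. Two safe workarounds are: (i) regularize $F^*$ by replacing the atom with a narrow uniform spike of width $\epsilon$, apply Myerson, and let $\epsilon\to 0$; or (ii) bypass virtual values and derive the upper bound from first principles by bounding the expected payment of any truthful mechanism by $\overv\cdot\Pr[\text{allocate to a bidder with value }\overv]$ (using IR and the fact that only bidders with value $\overv$ can be charged more than $\underv$ in expectation without violating IC, since the revenue curve of $F^*$ is flat on $[\underv,\overv]$). Both routes reduce the proof to the elementary revenue calculation above.
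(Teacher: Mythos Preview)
Your proposal is correct and follows essentially the same approach as the paper: both arguments observe that $F^*$ has zero Myerson virtual value on $(\underv,\overv)$, invoke the revenue-equals-virtual-welfare identity, conclude that the optimum is achieved by any mechanism that always allocates to a bidder with value $\overv$ when one exists, and note that the second-price auction with reserve drawn from $Q^*$ does exactly this. Your treatment is more explicit (computing the actual revenue $\underv(2-\underv/\overv)$ and flagging the atom at $\overv$), but the underlying reasoning is the same.
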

\begin{proof}
$F^*$ coincides with the equal-revenue distribution on $[\underv, \overv)$, which means the \emph{virtual valuation} there is always zero. %~\cite{}.
Therefore, since by \cite{Myerson81} the expected revenue of a truthful mechanism is equal to the expected virtual welfare, it is fully determined by the allocation rule in the case where the value is exactly $\overv$. The maximum revenue is achieved if the item is always allocated to a seller with value $\overv$ if one exists.
Since this is the case for a second price auction with randomized price sampled from $Q^*$, then it maximizes the expected revenue among all truthful mechanisms.
\end{proof}

Claims \ref{clm:lagrangian}, \ref{clm:two:F_is_optimal} and \ref{clm:two:Q_is_optimal} simultaneously complete the proof for both Theorem \ref{thm:two_bidders:main:equilibrium} and Theorem \ref{thm:two_bidders:main:among_all_truthful}.

\section{Beyond Two Bidders} \label{sec:multi_bidders}

In this section we assume that are $n$ bidders, and find the optimal pricing distribution the seller can use in a second price auction with a randomized reserve price.
% We denote by $\Rev_{F}(Q)$ the expected revenue of the seller, when she is using a second price auction with a randomized reserve price sampled using distribution $Q$, and the bidder valuations are sampled i.i.d using distribution $F\in \calF$.
% The seller's problem in this case is to find
% \begin{equation*}
%     Q \in \argmax_{Q \in \calQ} \min_{F\in \calF} \Rev_{F}(Q).
% \end{equation*}
% We treat this problem as a game between nature and the seller. Since we are using a randomized pricing, we can find an equilibrium, and the problem becomes
% \begin{align}
%      Q^*&\in \argmax_{Q\in \calQ} \min_{F\in \calF} \Rev_{F}(Q),&
%     F^*&\in \argmin_{F\in \calF} \max_{Q\in \calQ} \Rev_{F}(Q).\tag{P3} \label{eq:multi:problem_second_price}
% \end{align}

\begin{theorem}[Solution of Problem~\eqref{eq:problem_reduced} for $n\ge3$ bidders] \label{thm:multi:main:equilibrium}
Consider $n\ge 3$ bidders.
There exists an equilibrium $\left(Q^*, F^*\right) \in \calQ \times \calF$ that solves Problem \eqref{eq:problem_reduced}.
Let $\underv$ be the smaller among the two solutions to
\[\left(n-\frac{1}{n-1}+\log\nicefrac{\overv}{\underv}\right)\underv=\left(n-1\right)^{2}\mu.\]
If $\underv \le \overv$ then the equilibrium is described by:
\begin{align*}
    F^*(v) &= \begin{cases}
    0 & v \in \left[0, v_0\right), \\
    1-\frac{1}{\left(n-1\right)^2 } & v\in \left[v_0, \underv \right), \\
    1-\frac{v_0}{v \left(n-1\right) } & v\in [\underv, \overv), \\
    1 & v \ge \overv,
    \end{cases} \\
    Q^*(r) &= \begin{cases}
        \left(1-\frac{v_0}{\overv \left(n-1\right) }\right)^{n-1} \left( \frac{\left(n-1\right)^2}{\left(n-1\right)^2-1} \right)^{n-1} \left(\frac{n - 1 - \frac{1}{n-1}}{n - 1 - \frac{1}{n-1} + \log\nicefrac{\overv}{\underv}}\right) & r\in [0, \underv), \\
        \left(1-\frac{v_0}{\overv \left(n-1\right) }\right)^{n-1} \left( \frac{\left(n-1\right)r}{\left(n-1\right)r-v_0} \right)^{n-1} \left(\frac{n - 1 - \frac{1}{n-1} + \log\nicefrac{r}{\underv} }{n - 1 - \frac{1}{n-1} + \log\nicefrac{\overv}{\underv}}\right) & r \in [\underv, \overv],
    \end{cases}
\end{align*}
where $v_0 = \frac{\underv}{n-1}$.

If $\underv \ge \overv$, the equilibrium is described by
\begin{align*}
    F^*(v) &= \begin{cases}
    0 & v \in \left[0, v_0 \right), \\
    1-\frac{1}{(n-1)^2} & v\in [v_0, \overv), \\
    1 & v \ge \overv,
    \end{cases} \\
    Q^*(r) &= 1 \quad r \ge 0,
\end{align*}
where $v_0 = \mu-\frac{\overv-\mu}{\left(n-1\right)^{2}-1}$.
\end{theorem}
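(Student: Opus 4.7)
The plan is to mirror the strategy used in Section~\ref{sec:two_bidders} for the two-bidder case, adapted to $n$ bidders. Since the theorem concerns only Problem~\eqref{eq:problem_reduced}, I only need to verify that $(Q^*,F^*)$ is an equilibrium of the reduced game (not that a second-price auction with random reserve is overall optimal, which is false for $n\ge3$). First I would derive the analogue of the Lagrangian~\eqref{eq:two_bidders:Lagrangian_expresstion} for $n$ bidders. Using $\Rev_F(r) = r(1-F^n(r)) + \int_r^{\overv} P[v^{(2)} > t]\,dt$ with $P[v^{(2)} > t] = 1 - F^n(t) - nF^{n-1}(t)(1-F(t))$, and integrating by parts exactly as in Section~\ref{subsec:two:lagrangian}, the Lagrangian becomes
$\calL(Q,F) = \int_0^{\overv} \bigl[Q'(v)\,v(1-F^n(v)) + Q(v)\bigl(1 - F^n(v) - nF^{n-1}(v)(1-F(v))\bigr) - \lambda(1-F(v))\bigr]\,dv$. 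The analogue of Claim~\ref{clm:lagrangian} holds by the same argument, so it suffices to minimize $\calL(Q^*,F)$ pointwise over $F\in\calF$ and to maximize $\Rev_{F^*}(Q)$ over $Q\in\calQ$.

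Following Section~\ref{subsec:two:discrete}, I would then discretize to get $\calL_k(\qvec,\xvec)$ with the summand modified to reflect the $n$-bidder formula, and set $\partial\calL_k/\partial q_i = 0$ and $\partial\calL_k/\partial x_i = 0$ to obtain recurrences between adjacent $x_i$'s and $q_i$'s whose continuum limits yield two ODEs. A short computation shows that the $\partial/\partial q_i$ relation collapses to the same ODE as for $n=2$, namely $F'(r) = (1-F(r))/r$, whose solutions on any subinterval where $F>0$ have the form $F(v) = 1 - c/v$. The $\partial/\partial x_i$ relation becomes the first-order linear ODE $Q'(r) = \lambda/(nrF^{n-1}(r)) - (n-1)Q(r)(1-F(r))/(rF(r))$. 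Plugging in $F(r) = 1 - c/r$, the integrating factor $\bigl((r-c)/r\bigr)^{n-1}$ reduces this to $\bigl[((r-c)/r)^{n-1}Q(r)\bigr]' = \lambda/(nr)$, giving $Q(r) = (r/(r-c))^{n-1}\bigl(K + (\lambda/n)\log r\bigr)$ on the continuous part of the support.

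Next I would pin down the unknown constants. In the two-bidder case the lower endpoint of $F^*$ coincided with $c$, so the integrating factor vanished at $\underv$ and forced $K$; for $n\ge 3$ one expects $c = \underv/(n-1)^2 < \underv$, so a different condition must pin $K$ down. That condition will be the introduction of a point mass of $F^*$ at some $v_0 < \underv$ of size $\alpha = 1 - 1/(n-1)^2$ (chosen so that the density part $1 - c/v$ on $[\underv,\overv)$ satisfies $F^*(\underv) = \alpha$ by continuity), together with the seller-indifference condition $\Rev_{F^*}(0) = \Rev_{F^*}(\underv)$. Equating these two quantities yields $v_0 = \underv/(n-1)$ after simplification, while normalization $Q^*(\overv)=1$ and the mean constraint $\int_0^{\overv}(1-F^*(v))\,dv = \mu$ fix $\lambda$ and give the stated equation $(n - 1/(n-1) + \log(\overv/\underv))\,\underv = (n-1)^2\mu$ for $\underv$. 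This equation has two solutions; if the smaller satisfies $\underv \le \overv$ we obtain Case~1, and otherwise the density part of $F^*$ is vacuous and $F^*$ degenerates to a two-point distribution on $\{v_0,\overv\}$, which with a direct mean calculation gives Case~2 (and the optimal reserve is $0$).

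Finally, verification. For $F^*$ as a best response to $Q^*$: the integrand $h_v(z) = Q^{*\prime}(v)\,v(1-z^n) + Q^*(v)(1-z^n-nz^{n-1}(1-z)) - \lambda(1-z)$ is a smooth function of $z\in[0,1]$, and I would check directly (by examining $h_v'(z)$ and $h_v''(z)$, as in the proof of Claim~\ref{clm:two:F_is_optimal}) that $F^*(v)$ is its pointwise minimizer on each of $[0,v_0)$, $[v_0,\underv)$, $[\underv,\overv]$. For $Q^*$ as a best response to $F^*$ in $\calQ$: differentiating $\Rev_{F^*}(r)$ and using the ODE gives $\tfrac{d}{dr}\Rev_{F^*}(r) = nF^{*\,n-1}(r)\bigl[(1-F^*(r)) - rF^{*\prime}(r)\bigr] = 0$ on $[\underv,\overv]$, and $\Rev_{F^*}(r)$ is constant on $[0,v_0)$ since $F^*\equiv 0$ there; the calibration $v_0 = \underv/(n-1)$ makes the two constants equal. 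On $(v_0,\underv)$, $F^*$ is flat at $\alpha$, so a short computation shows $\tfrac{d}{dr}\Rev_{F^*}(r) = n\alpha^{n-1}(1-\alpha) > 0$, i.e.\ the seller strictly prefers not to place mass in this interval; since $Q^*$'s support is contained in $\{0\}\cup[\underv,\overv]$, this confirms its optimality in $\calQ$. The main obstacle will be Step~3: unlike the two-bidder case, the ODEs leave one degree of freedom that only the indifference condition at $r=v_0$ resolves, and managing the case split $\underv\lessgtr\overv$ requires carefully tracking when the candidate support $[\underv,\overv]$ for the density part of $F^*$ becomes infeasible.
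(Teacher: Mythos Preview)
Your approach is essentially the same as the paper's: derive the $n$-bidder Lagrangian, pass through the discretization to obtain the ODEs $F'=(1-F)/r$ and the linear ODE for $Q$, posit a point mass for $F^*$ at $v_0$ with a flat piece on $[v_0,\underv)$, pin down the parameters, and then verify separately that $F^*$ minimizes the integrand pointwise and that $Q^*$ is a best response via the seller-indifference argument. Two small points worth tightening. First, you should include a check that your $Q^*$ is actually a valid CDF on $[\underv,\overv]$ (nondecreasing with range in $[0,1]$); the paper does this as a separate claim, and it is not automatic from the ODE solution. Second, your bookkeeping for the constants is slightly off: normalization $Q^*(\overv)=1$ fixes the integration constant $K$ in terms of $\lambda$, not $\lambda$ itself. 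What actually pins down $\lambda$ (and simultaneously forces the jump size $\alpha=1-1/(n-1)^2$) is the requirement that on $[0,\underv)$, where $Q^*$ is constant, the integrand $h_v(z)$ has \emph{two} global minimizers at $z=0$ and $z=\alpha$; this yields both the specific value of $\alpha$ and the ratio $\lambda/Q^*(0)$. The paper handles this via a lemma of Suzdaltsev, but a direct first/second-derivative analysis of $h_v$ as you plan will recover the same conditions.
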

Note that for $\underv=\overv$ the two expressions for $(F^*,Q^*)$ coincide.

This theorem is true for $n\ge3$, since for $n=2$, $Q^*(r)$ as defined here is not defined for $r\in[0,\underv]$. However, the theorem does agree with Theorem~\ref{thm:two_bidders:main:equilibrium} over $(\underv, \overv]$, which means that the limit $\lim_{r\to \underv^+}Q^*(r)$ is the same as in Theorem~\ref{thm:two_bidders:main:equilibrium}.

\begin{figure}[t]
  \centering
  \begin{minipage}[t]{0.49\textwidth}
    \includegraphics[width=\textwidth]{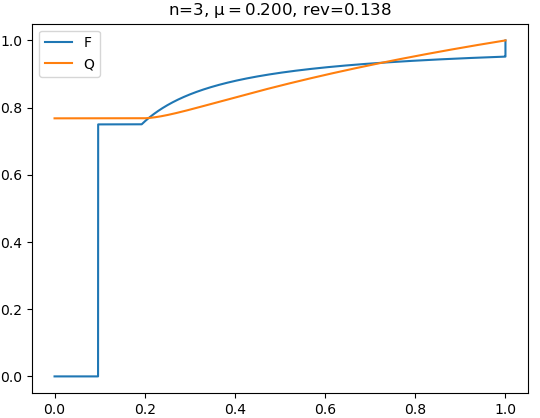}
    \caption{
        The solution $(F,Q)$ to Problem~\eqref{eq:problem_reduced} for the parametric settings in which $n=3, \mu=0.3$ and $\overv=1$.
        $F$ represents that each of the bidders has a probability of $\nicefrac{3}{4}$ to have a valuation of $v_0=0.0965$, and otherwise sample a valuation from a generalized truncated equal-revenue distribution between $\underv = 0.193$ to $1$.
        $Q$ is a ratio of a truncated log-uniform distribution and an equal-revenue distribution to the power of $n-1$, with the same support.
    }
    \label{fig:multi:low_mu}
  \end{minipage}
  \hfill
  \begin{minipage}[t]{0.49\textwidth}
    \includegraphics[width=\textwidth]{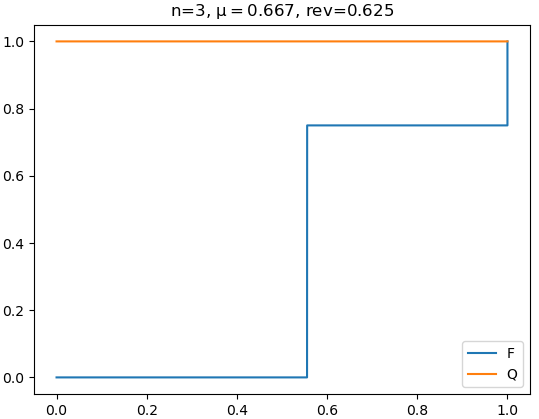}
    \caption{
        The solution to Problem~\eqref{eq:problem_reduced} for the parametric settings in which $n=3, \mu=\nicefrac{2}{3}$ and $\overv=1$.
        In this case $\mu$ is sufficiently large that the seller would always choose a reserve price of 0, and each of the bidders randomizes between $v_0=\nicefrac{5}{9}$ with probability $0.75$, and $\overv=1$.
    }
    \label{fig:multi:high_mu}
  \end{minipage}
\end{figure}

Unlike the case where there are two bidders that we have solved with Theorem~\ref{thm:two_bidders:main:equilibrium}, or the case with a single bidder solved by~\cite{CarrascoFK+18}, the adversary's optimal distribution is not a truncated equal revenue distribution, as it has a point of discontinuity at $v_0 \in (0, \overv)$.
As a result, this mechanism is not the optimal among all truthful mechanisms; we expand upon this point in Section~\ref{subsec:multi:not_optimal}.
The theorem also shows that depending on $n$, if the known upper bound is small enough (or equivalently, the mean value is large enough), then the optimal pricing distribution is to always choose a reserve price of zero, which simply results in a standard second price auction.
When the upper bound is large enough, the seller would do better by randomizing between a reserve price of zero, and a reserve price sampled from a truncated log-uniform distribution divided by $(F^*)^{n-1}$. We illustrate the optimal distributions in Figures~\ref{fig:multi:low_mu} and \ref{fig:multi:high_mu}.

\vspace{0.05in}
\noindent{\bf Proof overview.}
The proof starts in a similar way to the proof of Theorem~\ref{thm:two_bidders:main:equilibrium} by developing the Lagrangian (Section~\ref{subsec:multi:lagrangian}).
In Section~\ref{subsec:multi:discrete} we define the discrete case, and use it to find ODEs that an equilibrium would satisfy.
However, in this case, assuming continuity will result with a non monotone CDF, which is impossible (see failed attempt described in Section~\ref{subsec:multi:failed}).
This leads us to guess equilibrium distributions, which are not continuous, but depend on some unknown parameters (also in Section~\ref{subsec:multi:failed}).
After guessing the potential equilibrium, in (Section~\ref{subsec:multi:candidate}) we set the unknown parameters in such a way that will allow us to prove that both the seller and nature are playing a best response to the other player's chosen distribution (the proof appears in Section~\ref{subsec:multi:proof}).

\subsection{The Lagrangian for Maximizing Expected Revenue} \label{subsec:multi:lagrangian}
%In order to prove the theorem, we first need to analyze this problem.
We write the formula for the expected revenue, with $n$ bidders.

\begin{claim} \label{clm:multi:revenue_expression}
For any $F \in \calF$, and distribution $Q \in \calQ$ that is continuous over $(0,\overv]$, the revenue is expressed by
\[\Rev_{F}(Q) = \int_{0}^{\overv}\left(Q'\left(v\right)v\left(1-F^{n}\left(v\right)\right)+Q\left(v\right)\left(1-nF^{n-1}\left(v\right)+\left(n-1\right)F^{n}\left(v\right)\right)\right)dv.\]
\end{claim}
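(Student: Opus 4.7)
The plan is to mirror the two-bidder derivation at the end of Section~\ref{subsec:two:lagrangian}, but with the order statistics of $n$ i.i.d.~samples. First I would establish a closed form for the expected revenue $\Rev_F(r)$ of a second-price auction with a \emph{deterministic} reserve $r$; then, integrating against $Q$ and swapping the order of a double integral via Fubini, I would simplify to the claimed expression. The continuity assumption on $Q$ over $(0,\overv]$ lets me write $dQ(r) = Q'(r)\,dr$ on that interval, with at most an atom at $r=0$ (a reserve exceeding $\overv$ is never used).

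\textbf{Revenue at a fixed reserve $r$.} Decompose $\Rev_F(r) = \mathbb{E}[r\cdot \mathbf{1}_{v^{(2)} \le r < v^{(1)}}] + \mathbb{E}[v^{(2)} \cdot \mathbf{1}_{v^{(2)} > r}]$. Counting how many bidders exceed $r$ gives $P(v^{(1)} > r) = 1 - F^n(r)$ and $P(v^{(2)} \le r) = F^n(r) + nF^{n-1}(r)(1-F(r))$, hence $P(v^{(2)} \le r < v^{(1)}) = nF^{n-1}(r)(1-F(r))$ and $P(v^{(2)} > r) = 1 - nF^{n-1}(r) + (n-1)F^n(r)$. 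For the second expectation I would use the tail identity $\mathbb{E}[X\cdot \mathbf{1}_{X>r}] = r\,P(X>r) + \int_r^{\overv} P(X>v)\,dv$ applied to $X = v^{(2)}$. After combining terms, the $nF^{n-1}(r)$ contributions collapse and I get
\[
\Rev_F(r) = r\bigl(1 - F^n(r)\bigr) + \int_r^{\overv}\!\bigl(1 - nF^{n-1}(v) + (n-1)F^n(v)\bigr)\,dv.
\]

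\textbf{Integrating against $Q$ and Fubini.} I would then write $\Rev_F(Q) = Q(0)\,\Rev_F(0) + \int_0^{\overv} Q'(r)\,\Rev_F(r)\,dr$, substituting the formula above. The double integral $\int_0^{\overv}\int_r^{\overv} Q'(r)\,(1 - nF^{n-1}(v) + (n-1)F^n(v))\,dv\,dr$ becomes, after swapping order of integration, $\int_0^{\overv}(1 - nF^{n-1}(v) + (n-1)F^n(v))\,(Q(v) - Q(0))\,dv$. Note that $\Rev_F(0) = \int_0^{\overv}(1-nF^{n-1}(v)+(n-1)F^n(v))\,dv$, so the $Q(0)$ contribution from the atom exactly cancels the $-Q(0)$ arising from Fubini, yielding
\[
\Rev_F(Q) = \int_0^{\overv} Q'(v)\,v\,(1-F^n(v))\,dv + \int_0^{\overv} Q(v)\,(1 - nF^{n-1}(v) + (n-1)F^n(v))\,dv,
\]
which is the claim. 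The only delicate point is the bookkeeping of the possible atom of $Q$ at $0$; apart from that, the computation is a routine generalization of the $n=2$ case already carried out in the paper, so I do not expect a substantive obstacle.
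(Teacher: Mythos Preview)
Your proposal is correct and follows essentially the same approach as the paper: derive $\Rev_F(r)=r(1-F^n(r))+\int_r^{\overv}(1-nF^{n-1}(v)+(n-1)F^n(v))\,dv$ from the order-statistics decomposition, then integrate against $Q$ with the atom at $0$ separated out and swap the order of integration so that the $Q(0)$ terms cancel. Your write-up is in fact slightly more explicit than the paper's about the intermediate probabilities and the use of the tail identity, but the structure is identical.
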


Since nature is limited to a known expected value, we will minimize the Lagrangian instead, where the Lagrangian is defined as
\begin{align}
    \calL(Q,F)=& \Rev_{F}(Q) - \lambda\int_{0}^{\overv} \left(1 - F(v)\right) dv. \label{eq:multi:lagrangian}
\end{align}
Similarly to the two bidder case, it is enough to solve the problem on $\calL$ instead of $\Rev$.

\subsection{ODEs through the Discrete Case} \label{subsec:multi:discrete}
Like in the two bidder case, we consider a discretization of the problem.
Let $k> 1$ be the discretization parameter. We limit the support of the seller's distribution over reserve prices and the adversary's distribution over values to $k+1$ possible options $\rvec_k=(r_{k,1},\dots, r_{k,k}, r_{k,k+1})$, where $r_{k,i} = \frac{i-1}{k}\cdot \overv$ for every $1\le i \le k+1$.

Like in the two bidder case, we can represent the distributions using finite vectors $\qvec, \xvec$ of size $k$, where $q_i = Q(r_{k,i})$ and $x_i = F(r_{k,i})$. The Lagrangian we get is
\begin{align*}
    \calL_k(\qvec, \xvec) = \sum_{i=1}^{k}\left(\left(q_{i}-q_{i-1}\right)r_{k,i}\left(1-x_{i}^{n}\right)+\nicefrac{\overv}{k}\cdot q_{i}\left(1-nx_{i}^{n-1}+\left(n-1\right)x_{i}^{n}\right)-\nicefrac{\overv}{k}\cdot\lambda\left(1-x_{i}\right)\right).
\end{align*}
 Since we are looking for a saddle point, we find conditions for the derivative to be zero.
\begin{claim} \label{clm:multi:disc:nature_induction}
For any $\xvec = (x_1, \dots, x_k) \in [0,\infty)^k$ and for every $1 \le i \le k-1$, the Lagrangian for $n$ bidders satisfies
\[\frac{\partial}{\partial q_i}\calL_k(\qvec, \xvec) = 0 \iff x_{i+1} = \sqrt[n]{\frac{n}{r_{k,i}+\nicefrac{\overv}{k}}\left(x_{i}^{n-1}-x_{i}^{n}\right)\cdot\frac{\overv}{k}+x_{i}^{n}}.\]
\end{claim}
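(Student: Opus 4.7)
The claim is a direct partial-derivative calculation, completely analogous to Claim~\ref{clm:two_bidders:disc:nature_induction} but with the two-bidder squares replaced by the $n$-th power expressions. The plan is to identify the three terms in $\calL_k(\qvec,\xvec)$ that depend on $q_i$, differentiate, set equal to zero, and isolate $x_{i+1}^n$.

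First I would collect the contributions of $q_i$ to $\calL_k$. Only three summands contain $q_i$: the $i$-th summand contributes $+q_i \, r_{k,i}(1-x_i^n)$ from the $(q_i-q_{i-1})r_{k,i}(1-x_i^n)$ part and $+(\overv/k)\,q_i(1-nx_i^{n-1}+(n-1)x_i^n)$ from the middle term, while the $(i+1)$-st summand contributes $-q_i\, r_{k,i+1}(1-x_{i+1}^n)$ from the $(q_{i+1}-q_i)r_{k,i+1}(1-x_{i+1}^n)$ part. Writing $\delta=\overv/k$, we therefore obtain
\begin{equation*}
\frac{\partial \calL_k}{\partial q_i} = r_{k,i}(1-x_i^n) - r_{k,i+1}(1-x_{i+1}^n) + \delta\bigl(1-nx_i^{n-1}+(n-1)x_i^n\bigr).
\end{equation*}

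Next I would set this expression equal to zero, substitute $r_{k,i+1}=r_{k,i}+\delta$, and solve for $x_{i+1}^n$. After moving terms this yields
\begin{equation*}
r_{k,i+1}\, x_{i+1}^n = r_{k,i+1} - r_{k,i}(1-x_i^n) - \delta\bigl(1-nx_i^{n-1}+(n-1)x_i^n\bigr).
\end{equation*}
Expanding $r_{k,i+1}-r_{k,i}=\delta$ makes the $-\delta$ cancel, and the remaining coefficient of $x_i^n$ on the right becomes $r_{k,i}-\delta(n-1)$, while the coefficient of $x_i^{n-1}$ is $\delta n$. Grouping these via the identity $n x_i^{n-1} - (n-1) x_i^n = n(x_i^{n-1}-x_i^n) + x_i^n$ gives
\begin{equation*}
r_{k,i+1}\, x_{i+1}^n = (r_{k,i}+\delta)\, x_i^n + \delta n(x_i^{n-1}-x_i^n) = r_{k,i+1}\, x_i^n + \delta n(x_i^{n-1}-x_i^n),
\end{equation*}
from which dividing by $r_{k,i+1}=r_{k,i}+\delta$ and taking the $n$-th root gives exactly the stated expression for $x_{i+1}$.

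Since the above manipulations are all equivalences (the $n$-th root is monotonic on nonnegative reals, and we only ever divide by the strictly positive quantity $r_{k,i+1}$), the biconditional $\partial_{q_i}\calL_k=0\iff x_{i+1}=(\cdots)^{1/n}$ follows. There is no real obstacle here; the only care needed is in the algebraic regrouping of the middle-term polynomial $1-nx_i^{n-1}+(n-1)x_i^n$, so that the $n$-th-power structure emerges cleanly. A minor sanity check is that plugging $n=2$ recovers Claim~\ref{clm:two_bidders:disc:nature_induction} verbatim.
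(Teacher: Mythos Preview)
Your proof is correct and follows exactly the approach the paper indicates: the paper does not give a separate proof of this claim but says it is ``almost identical'' to Claim~\ref{clm:two_bidders:disc:nature_induction}, and your argument is precisely that identical computation with $x_i^2$ replaced by $x_i^n$ and $(1-x_i)^2$ by $1-nx_i^{n-1}+(n-1)x_i^n$. The algebraic regrouping via $n x_i^{n-1}-(n-1)x_i^n = n(x_i^{n-1}-x_i^n)+x_i^n$ is the right move, and your sanity check against $n=2$ is a good confirmation.
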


\begin{claim} \label{clm:multi:disc:seller_induction}
For any $\xvec \in (0,\overv)^n$, $ \qvec \in (0,\overv)^{n-1}$, and for every $2 \le i \le k$, the Lagrangian for $n$ bidders satisfies
\[\frac{\partial}{\partial x_i}\calL_k(\qvec, \xvec) = 0 \iff q_{i-1}=q_{i}-\frac{\nicefrac{\lambda}{n}-q_{i}\left(n-1\right)x_{i}^{n-2}\left(1-x_{i}\right)}{r_{k,i}x_{i}^{n-1}}\cdot\nicefrac{\overv}{k}.\]
\end{claim}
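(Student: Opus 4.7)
The plan is to compute $\partial \calL_k / \partial x_i$ directly and solve the equation that results from setting it to zero. The calculation is short because in the expression
\[
\calL_k(\qvec, \xvec) = \sum_{j=1}^{k}\Big(\left(q_{j}-q_{j-1}\right)r_{k,j}\left(1-x_{j}^{n}\right)+\nicefrac{\overv}{k}\cdot q_{j}\left(1-nx_{j}^{n-1}+\left(n-1\right)x_{j}^{n}\right)-\nicefrac{\overv}{k}\cdot\lambda\left(1-x_{j}\right)\Big),
\]
the variable $x_i$ appears only in the $j=i$ summand, so the partial derivative reduces to differentiating a single one-variable polynomial. This is the same strategy used to prove Claim~\ref{clm:multi:disc:nature_induction} for the $q_i$-derivative, just carried out on the other variable.

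First I would differentiate the three relevant pieces of the $i$-th summand. The first piece contributes $-n(q_i - q_{i-1})r_{k,i}x_i^{n-1}$; the middle piece contributes $\nicefrac{\overv}{k}\cdot q_i\cdot\bigl(-n(n-1)x_i^{n-2} + n(n-1)x_i^{n-1}\bigr)$, which factors as $-\nicefrac{\overv}{k}\cdot n(n-1)q_i x_i^{n-2}(1-x_i)$; the Lagrange piece contributes $\nicefrac{\overv}{k}\cdot\lambda$. Adding these, setting the sum to zero and dividing through by $n$ yields
\[
(q_i - q_{i-1})\, r_{k,i}\, x_i^{n-1} = \nicefrac{\overv}{k}\cdot\bigl(\nicefrac{\lambda}{n} - q_i(n-1)x_i^{n-2}(1-x_i)\bigr).
\]
Then, dividing by $r_{k,i}x_i^{n-1}$ and rearranging isolates $q_{i-1}$ in exactly the form stated in the claim.

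The restriction $i \geq 2$ and the positivity assumption $\xvec \in (0,\overv)^{\bullet}$ are what make the last division legal, and hence what upgrade the derivation from an implication to the claimed equivalence: we need $r_{k,i} > 0$ (which fails at $i=1$ since $r_{k,1}=0$) and $x_i > 0$. There is no real obstacle beyond careful bookkeeping of the factors of $n$ and $n-1$ and of signs in the middle piece; the factored identity $\frac{d}{dx}\bigl(1 - nx^{n-1} + (n-1)x^n\bigr) = -n(n-1)x^{n-2}(1-x)$ is what produces the particular $(n-1)x_i^{n-2}(1-x_i)$ factor in the numerator of the recurrence.
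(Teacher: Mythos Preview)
Your proposal is correct and follows exactly the approach the paper indicates: the paper omits the proof, stating it is ``almost identical'' to that of Claim~\ref{clm:two_bidders:disc:seller_induction}, whose appendix proof is precisely a direct differentiation of the single relevant summand followed by division by $r_{k,i}x_i$ (here $r_{k,i}x_i^{n-1}$), using $r_{k,i}>0$ for $i\ge 2$ and $x_i>0$ to justify the equivalence. Your bookkeeping of the factors, including the identity $\frac{d}{dx}\bigl(1-nx^{n-1}+(n-1)x^n\bigr)=-n(n-1)x^{n-2}(1-x)$, is accurate.
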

We do not prove these claims, since the proofs are almost identical to the proofs of Claims~\ref{clm:two_bidders:disc:nature_induction}-\ref{clm:two_bidders:disc:seller_induction}.

Like in the two bidder case, we find ODEs based on Claims~\ref{clm:multi:disc:nature_induction}-\ref{clm:multi:disc:seller_induction}.
We define
\[\psi_F(\delta, v, y) = \sqrt[n]{\frac{n}{v+\delta}\left(y^{n-1}-y^{n}\right)\delta+y^{n}},\]
and we get the ODE
\begin{align}
    \frac{d}{dr}F(r) &= \frac{\partial\psi_F(\delta, r, y)}{\partial \delta} \left(0, r, F(r)\right) \nonumber \\
    &= \frac{1-F(r)}{r}. \label{eq:multi:ode:nature}
\end{align}

Likewise, we define
\begin{align*}
    \psi_Q(\delta, r, y, z) = y-\frac{\nicefrac{\lambda}{n}-y\left(n-1\right)z^{n-2}\left(1-z\right)}{rz^{n-1}}\cdot\delta,
\end{align*}
and we get the ODE
\begin{align}
    \frac{d}{dr}Q(r) &= - \frac{\partial\psi_Q(\delta, r, y, z)}{\partial \delta} \left(0, r, Q(r), F(r)\right) \nonumber \\
    &= \frac{\lambda}{n r F^{n-1} (r)} - \frac{\left(n-1\right)Q(r)\left(1-F(r)\right)}{r F(r)}. \label{eq:multi:ode:seller}
\end{align}

\subsection{Failed Attempt and Intuition}
\label{subsec:multi:failed}
Here we attempt the same approach as in the case of two bidders and show that it fails. Through the failed attempt we gain intuition for the correct shape of the equilibrium.

We assume first that there exist $Q^*, F^*$ that are an equilibrium for Problem~\eqref{eq:problem_reduced}, and they solve the ODEs~\eqref{eq:multi:ode:nature}-\eqref{eq:multi:ode:seller}, everywhere where their derivative is positive.\footnote{The reason we don't assume the ODEs are correct when the derivative is zero, is that a negative derivative may produce better results, but it's impossible for a distribution function, which mean that the derivative in the discrete case may not be zero.}

If $F^*(v)$ is continuous over $[0,\overv)$ and differentiable almost everywhere on the same interval, then
\[F^*(v) = \begin{cases}
0 & v \in [0, \underv], \\
1-\nicefrac{\underv}{v} & v\in [\underv, \overv) ,\\
1 & v\ge\overv,
\end{cases} \]
where $\underv$ is defined such that the mean value would be $\mu$.
If that is the case, then we can solve for $Q^*$ similarly to how we solve it in the two bidder case, and we get
\begin{align*}
    Q^*(r) = \left(\frac{r}{r-\underv}\right)^{n-1}\left(\left(1-\nicefrac{\underv}{\overv}\right)^{n-1}+\frac{\lambda\log\nicefrac{r}{\overv}}{n}\right),
\end{align*}
for any $r\in (\underv, \overv]$. However, even if we use the $\lambda$ that solves $\left(1-\nicefrac{a}{\overv}\right)^{n-1}+\frac{\lambda\log\nicefrac{r}{\overv}}{n} = 0$, we still end up with
\[\lim_{r\to \underv^+} \left(\frac{r}{r-\underv}\right)^{n-1}\left(\left(1-\nicefrac{\underv}{\overv}\right)^{n-1}+\frac{\lambda\log\nicefrac{r}{\overv}}{n}\right) = \infty.\]

This is impossible, which suggests that our assumptions are incorrect. The most ``suspicious'' assumption is the continuity of $F^*$ over $[0, \overv)$.
To fix this, we guess that $F^*$ has mass on a point $v_0$, which together with the ODE~\eqref{eq:multi:ode:nature}, results with
\begin{align}
    F^*(v) &= \begin{cases}
    0 & v\in [0, v_0),\\
    1 - \nicefrac{a}{\underv} & v\in [v_0, \underv), \\
    1 - \nicefrac{a}{v} & v\in [\underv, \overv), \\
    1 & v\ge\overv,
    \end{cases} \label{eq:multi:F_guess}
\end{align}
for some unknown $0 \le a, v_0, \underv$.
When we use the ODE~\eqref{eq:multi:ode:seller} over $[\underv, \overv]$, with $Q^*(\overv) = 1$, and $F^*(v) = 1 - \nicefrac{a}{v}$, we get
\[Q^*(r) =  \left(\frac{r}{r-a}\right)^{n-1}\left(\left(1-\nicefrac{a}{\overv}\right)^{n-1}+\frac{\lambda}{n}\log\nicefrac{r}{\overv}\right).\]
We guess that similarly to the two bidder case, $Q^*$ has no mass in $(0,\underv)$, which means that our guess is
\begin{equation}
    Q^*(r) = \begin{cases} 
     \left(\frac{\underv}{\underv-a}\right)^{n-1}\left(\left(1-\nicefrac{a}{\overv}\right)^{n-1}+\frac{\lambda}{n} \log\nicefrac{\underv}{\overv} \right) & r \in [0,\overv), \\
     \left(\frac{r}{r-a}\right)^{n-1}\left(\left(1-\nicefrac{a}{\overv}\right)^{n-1}+\frac{\lambda}{n}\log\nicefrac{r}{\overv}\right) & r\in [\underv, \overv],
    \end{cases} \label{eq:multi:Q_guess}
\end{equation}
where $a, \underv$ are the same as in $F^*$, and $\lambda$ is unknown.

Our next goal is to find values of $a, v_0, \underv$ and $\lambda$ such that those formulas will indeed describe an equilibrium. We will also find when such values don't exist, and prove that it only happens when a deterministic reserve price of zero is the optimal pricing distribution for the second price auction.

\subsection{Candidate Equilibrium}
\label{subsec:multi:candidate}
We will now find $a,v_0, \underv, \lambda$ such that $(Q^*,F^*)$ can be an equilibrium. We will find equations that an equilibrium must satisfy. Afterwards, we will be able to prove that $(Q^*,F^*)$ is indeed an equilibrium.

Assume that there exist $a, v_0, \underv$, and $\lambda$ such that $(Q^*,F^*)$ described by formulas \eqref{eq:multi:F_guess} and \eqref{eq:multi:Q_guess} is an equilibrium. %We will find several equations that $a, v_0, \underv, \lambda$ must satisfy.
The next three claims (Claims \ref{clm:multi:candidate:mean}, \ref{clm:multi:candidate:equal_rev}, \ref{clm:multi:candidate:integrand_min}) provide four equations which the variables must satisfy, allowing us to find their values.

\begin{claim} \label{clm:multi:candidate:mean}
$a,v_0$ and $\underv$ satisfy
\[a\left(1-\frac{v_{0}}{\underv}-\log\nicefrac{\underv}{\overv}\right)=\mu-v_{0}.\]
\end{claim}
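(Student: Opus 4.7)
\textbf{Proof plan for Claim~\ref{clm:multi:candidate:mean}.}
The plan is to use the constraint that, at an equilibrium, the adversary's distribution $F^*$ must lie in $\calF$, and in particular must have mean exactly $\mu$. Since $F^*$ is supported in $[0,\overv]$, we may compute the mean via the standard tail-integral identity
\[
\mathbb{E}_{v\sim F^*}[v] \;=\; \int_{0}^{\overv}\bigl(1 - F^*(v)\bigr)\,dv,
\]
and then simply equate the result to $\mu$ and rearrange.

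First, I would read off $1 - F^*(v)$ piecewise from Eq.~\eqref{eq:multi:F_guess}: it equals $1$ on $[0,v_0)$, the constant $a/\underv$ on $[v_0,\underv)$, and $a/v$ on $[\underv,\overv)$; the contribution from $\{v\ge \overv\}$ vanishes. Splitting the tail integral along these three intervals gives
\[
\int_{0}^{v_0} 1\,dv \;+\; \int_{v_0}^{\underv} \frac{a}{\underv}\,dv \;+\; \int_{\underv}^{\overv} \frac{a}{v}\,dv
\;=\; v_0 \;+\; a\Bigl(1 - \tfrac{v_0}{\underv}\Bigr) \;+\; a\log\nicefrac{\overv}{\underv}.
\]

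Second, I would set this expression equal to $\mu$, factor out $a$ from the last two terms, and use $\log\nicefrac{\overv}{\underv} = -\log\nicefrac{\underv}{\overv}$ to rewrite it in the form displayed in the claim:
\[
a\Bigl(1 - \tfrac{v_0}{\underv} - \log\nicefrac{\underv}{\overv}\Bigr) \;=\; \mu - v_0.
\]
No real obstacle arises: the claim is a direct encoding of the mean-constraint on $F^*$ expressed in terms of the three shape parameters $a,v_0,\underv$, and the entire argument is one line of elementary integration followed by rearrangement. The substantive work (pinning down all four unknowns $a,v_0,\underv,\lambda$) is left to the subsequent Claims~\ref{clm:multi:candidate:equal_rev} and~\ref{clm:multi:candidate:integrand_min}, which supply the remaining three equations.
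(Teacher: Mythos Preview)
Your proposal is correct and follows essentially the same approach as the paper: both identify the claim as the mean constraint $\int_0^{\overv}(1-F^*(v))\,dv=\mu$ for the piecewise $F^*$ of Eq.~\eqref{eq:multi:F_guess}. The paper's proof is the one-line statement of this equivalence, while you have written out the three-piece integral explicitly; the content is the same.
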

\begin{proof}
This is equivalent to the constraint
\[\int_0^{\overv}(1-F^*(v))dv = \mu\]
(using the definition of $F^*(v)$ according to~\eqref{eq:multi:F_guess}).
\end{proof}

If $(Q^*, F^*)$ is an equilibrium, that means that the seller cannot find a better distribution than $Q^*$ if we fix $F^*$.
Since $\Rev_{F^*}(Q^*) = E_{r\sim Q^*}\left[ \Rev_{F^*}(r) \right]$, and we know that $\Rev_{F^*}(Q^*) \ge \Rev_{F^*}(r)$ for every $r$, than we can conclude that for every $r_1, r_2$ in the support of $Q^*$
\begin{equation*}
    \Rev_{F^*}(r_1) = \Rev_{F^*}(r_2).
\end{equation*}

This requirement yields another equation according to the following claim.

\begin{claim} \label{clm:multi:candidate:equal_rev}
The following statements are equivalent
\begin{itemize}
    \item $\Rev_{F^*}(r_1) = \Rev_{F^*}(r_2)$ for every $r_1, r_2$ in the support of $Q^*$.
    \item $a,v_0$ and $\underv$ satisfy
\[a\left(n-\left(n-1\right)\nicefrac{v_{0}}{\underv}\right)=v_{0}.\]
\end{itemize}
\end{claim}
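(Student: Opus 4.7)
The plan is to first reduce the ``equal revenue'' condition to two sub-conditions, then handle each using a direct computation with $F^*$ as given in~\eqref{eq:multi:F_guess}. Specializing Claim~\ref{clm:multi:revenue_expression} to a point mass at $r$ yields the closed form
\[
\Rev_{F^*}(r) \;=\; r\bigl(1 - F^{*n}(r)\bigr) + \int_{r}^{\overv} \bigl(1 - nF^{*n-1}(v) + (n-1)F^{*n}(v)\bigr)\,dv.
\]
Reading off~\eqref{eq:multi:Q_guess}, the support of $Q^*$ is $\{0\} \cup [\underv, \overv]$ (a point mass at $0$, no mass in $(0,\underv)$, and a continuous piece on $[\underv,\overv]$). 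So the claim's first statement is equivalent to the conjunction of (i) $\Rev_{F^*}$ is constant on $[\underv,\overv]$, and (ii) $\Rev_{F^*}(0) = \Rev_{F^*}(\underv)$.

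I would establish (i) by differentiating the closed form and simplifying to
\[
\tfrac{d}{dr}\Rev_{F^*}(r) \;=\; n\,F^{*n-1}(r)\bigl[(1-F^*(r)) - r\,F^{*\prime}(r)\bigr].
\]
On $(\underv,\overv)$ we have $F^*(r) = 1 - a/r$ and $F^{*\prime}(r) = a/r^2$, so the bracket vanishes identically; hence $\Rev_{F^*}$ is constant on $[\underv,\overv]$, and (i) holds with no further hypothesis. Note that (i) is automatic for any $a, v_0, \underv$ satisfying~\eqref{eq:multi:F_guess}, so the real content of the claim lies in condition (ii).

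For (ii), I would compute the difference
\[
\Rev_{F^*}(0) - \Rev_{F^*}(\underv) \;=\; -\underv\bigl(1-F^{*n}(\underv)\bigr) + \int_{0}^{\underv}\!\!\bigl(1 - nF^{*n-1}(v) + (n-1)F^{*n}(v)\bigr)\,dv,
\]
splitting the integral according to $F^*=0$ on $[0,v_0)$ and $F^*=\alpha:=1-a/\underv$ on $[v_0,\underv)$. Using that $F^*$ is continuous at $\underv$ (both pieces give $1-a/\underv$), the expression collapses, after routine algebra, to
\[
\alpha^{n-1}\bigl[n(v_0 - a) - (n-1)\,v_0\,\alpha\bigr].
\]
Since $\alpha \ne 0$, setting this to zero and substituting $\alpha = 1 - a/\underv$ yields exactly $a\bigl(n - (n-1)v_0/\underv\bigr) = v_0$, proving both directions of the claim.

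The calculations are entirely mechanical; the only ``obstacle'' is bookkeeping, namely correctly splitting the $[0,\underv)$ integral at $v_0$ and verifying that $F^*$ is continuous at $\underv$ so that no boundary term from the jump at $\underv$ enters the computation. No new ideas beyond those already used in Section~\ref{subsec:two:proof} are required.
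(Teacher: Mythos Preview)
Your proposal is correct and follows essentially the same route as the paper: both arguments observe that the support of $Q^*$ is $\{0\}\cup[\underv,\overv)$, so the equal-revenue condition splits into (i) $\Rev_{F^*}$ constant on $[\underv,\overv)$---which holds automatically from the equal-revenue form $F^*(v)=1-a/v$---and (ii) $\Rev_{F^*}(0)=\Rev_{F^*}(\underv)$, which is then reduced by the same integral-splitting computation to the stated algebraic identity. Your factored form $\alpha^{n-1}\bigl[n(v_0-a)-(n-1)v_0\alpha\bigr]$ is a slightly cleaner way to present the final algebra than the paper's, but the content is identical.
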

\begin{proof}
If we fix $F = F^*$ as described in \eqref{eq:multi:F_guess} by $a,v_0$ and $\underv$, the seller's best pricing distribution would be any distribution over $\argmax_r \Rev_{F}(r)$.

On $[\underv, \overv)$, we know that $F(v) = 1-\frac{a}{v}$, and thus
\[\frac{d}{dr} r(1-F^n(r)) = 1-nF^{n-1}\left(v\right)+\left(n-1\right)F^{n}\left(v\right).\]
Therefore, for every $r\in [\underv, \overv)$
\begin{align*}
    \Rev_{F}(r) &= r\left(1-F^{n}\left(r\right)\right)+\int_{r}^{\overv}\left(1-nF^{n-1}\left(v\right)+\left(n-1\right)F^{n}\left(v\right)\right)dv \\
    &= r\left(1-F^{n}\left(r\right)\right)+ \left( \overv\left(1-\lim _{v\to \overv}F^{n}\left(v\right)\right) - r\left(1-F^{n}\left(r\right)\right) \right) \\
    &= \overv\left(1 - \left(1 - \nicefrac{a}{\overv}\right)^n \right).
\end{align*}
This means that $\Rev_{F}(r_1) = \Rev_{F}(r_2)$ for any $r_1, r_2 \in [\underv, \overv)$.

Now, we only need to solve for $\Rev_{F}(0) = \Rev_{F}(\underv)$.
This is equivalent to
\[\int_0^{\underv} \left( 1-nF^{n-1}\left(v\right)+\left(n-1\right)F^{n}\left(v\right) \right) = \underv\left(1-\left(1-\frac{a}{\underv}\right)^{n}\right),\]
which is
\[v_{0}+\left(\underv-v_{0}\right)\left(1-n\left(1-\frac{a}{\underv}\right)^{n-1}+\left(n-1\right)\left(1-\frac{a}{\underv}\right)^{n}\right) = \underv\left(1-\left(1-\frac{a}{\underv}\right)^{n}\right).\]
This is equivalent to
\[a\left(n-\left(n-1\right)\nicefrac{v_{0}}{\underv}\right)=v_{0},\]
which completes the proof.
\end{proof}

To make sure that $F^*$ is nature's best response to $Q^*$, we will prove that it minimizes the integrand 
$$
Q'\left(v\right)v\left(1-F^{n}\left(v\right)\right)+Q\left(v\right)\left(1-nF^{n-1}\left(v\right)+\left(n-1\right)F^{n}\left(v\right)\right) -\lambda(1-F(v))
$$
of the Lagrangian in \eqref{eq:multi:lagrangian} point-wise, where $Q=Q^*$. The next claim yields two additional equations that make sure $F^*$ minimizes the integrand for every $v \in [0, \underv)$.

\begin{claim} \label{clm:multi:candidate:integrand_min}
If $a, v_0, \underv,$ and $\lambda$ satisfy
\begin{itemize}
    \item $\frac{\lambda}{Q^*(0)} = \frac{n}{n-1}\left(1-\frac{1}{\left(n-1\right)^{2}}\right)^{n-2}$,
    \item $a\left(n-1\right)^{2}=\underv$,
\end{itemize}
then $F=F^*$ minimizes
\[1-nF^{n-1}\left(v\right)+\left(n-1\right)F^{n}\left(v\right) -\frac{\lambda}{Q^*(0)}\left(1-F\left(v\right)\right),\]
for every $v\in[0, \underv]$.
\end{claim}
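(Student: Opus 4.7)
The expression
\[
g(z) := 1 - nz^{n-1} + (n-1)z^{n} - c(1-z), \qquad c := \tfrac{\lambda}{Q^*(0)},
\]
depends only on $z = F(v)$ and not on $v$ itself, so point-wise minimization on $[0,\underv]$ reduces to showing that every value attained by $F^*$ on that interval is a global minimizer of $g$ on $[0,1]$. From the guess \eqref{eq:multi:F_guess} together with the hypothesis $a(n-1)^2 = \underv$, the function $F^*$ attains only two values on $[0,\underv]$: it equals $0$ on $[0,v_0)$ and $\alpha := 1 - \tfrac{1}{(n-1)^2}$ on $[v_0,\underv)$. The claim therefore reduces to showing that both $z=0$ and $z=\alpha$ are global minimizers of $g$ on $[0,1]$.

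First I would verify the first-order condition at $\alpha$. Differentiating gives $g'(z) = -n(n-1)z^{n-2}(1-z) + c$, so $g'(\alpha)=0$ if and only if
\[
c \;=\; n(n-1)\alpha^{n-2}(1-\alpha) \;=\; \tfrac{n}{n-1}\bigl(1-\tfrac{1}{(n-1)^2}\bigr)^{n-2},
\]
which is exactly the first hypothesis. Substituting this relation for $c$ into $g(\alpha)-g(0)$ and simplifying yields
\[
g(\alpha) - g(0) \;=\; \alpha^{n-1}\bigl[\,n(n-2) - (n-1)^2\alpha\,\bigr],
\]
and the bracketed factor vanishes since $\alpha = \tfrac{(n-1)^2-1}{(n-1)^2} = \tfrac{n(n-2)}{(n-1)^2}$. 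Hence $g(0) = g(\alpha)$, and the two candidate minimizers take the same value.

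To upgrade this to global optimality, I would analyze the shape of $g'$. The function $h(z) := n(n-1)z^{n-2}(1-z)$, which equals $c$ at every critical point of $g$, is unimodal on $[0,1]$ with maximum at $z_m = \tfrac{n-2}{n-1}$; since $\alpha > z_m$, the equation $g'(z) = c - h(z) = 0$ has at most two roots in $(0,1)$, exactly one on each side of $z_m$, the larger being $\alpha$. A quick second-order check,
\[
g''(\alpha) \;=\; n(n-1)\alpha^{n-3}\bigl[(n-1)\alpha - (n-2)\bigr] \;=\; n(n-1)\alpha^{n-3}\cdot\tfrac{n-2}{n-1} \;>\; 0,
\]
confirms that $\alpha$ is a local minimum, while the smaller critical point $z_1 \in (0,z_m)$ is a local maximum. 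Combined with $g'(0) = g'(1) = c > 0$, this pins down the shape of $g$: it rises from $g(0)$ to a local max at $z_1$, falls to $g(\alpha)$, and rises again to $g(1)$. The established equality $g(0)=g(\alpha)$ identifies both as the global minimum on $[0,1]$.

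The main obstacle is the polynomial identity $g(\alpha) = g(0)$: it is precisely the coincidence that the two hypotheses on $a,v_0,\underv,\lambda$ have been engineered to produce, and any miscalibration of $c$ or $\alpha$ would break it, leaving one of the two subintervals of $[0,\underv]$ without an optimizer. The second delicate point is distinguishing global from merely local optimality, which the shape analysis settles cleanly once one observes that $\alpha$ lies strictly above the mode $z_m$ of $h$ (this is where $n \ge 3$ enters, consistent with the setting of Section~\ref{sec:multi_bidders}).
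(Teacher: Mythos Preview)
Your proof is correct and follows the same overall strategy as the paper: reduce to showing that both $z=0$ and $z=\alpha=1-\tfrac{1}{(n-1)^2}$ are global minimizers of $g$ on $[0,1]$, then check that $F^*$ takes only these two values on $[0,\underv]$.

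The difference is in how the key optimality fact is established. The paper simply invokes Lemma~3 of \cite{Suzdaltsev20a} for the statement that, with $c=\tfrac{n}{n-1}\bigl(1-\tfrac{1}{(n-1)^2}\bigr)^{n-2}$, both $z=0$ and $z=\alpha$ minimize $g$. You instead prove this from scratch: you verify the first-order condition at $\alpha$, establish the identity $g(0)=g(\alpha)$ via the factorization $g(\alpha)-g(0)=\alpha^{n-1}\bigl[n(n-2)-(n-1)^2\alpha\bigr]$, and then pin down the global shape of $g$ by exploiting the unimodality of $h(z)=n(n-1)z^{n-2}(1-z)$ together with $\alpha>z_m=\tfrac{n-2}{n-1}$. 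Your route is more self-contained and makes explicit where $n\ge3$ is used (namely, to ensure $\alpha>z_m$ strictly and hence that $\alpha$ is a genuine local minimum rather than coinciding with the mode). The paper's route is shorter but relies on an external reference. Both are valid; yours has the advantage of being a complete argument within the paper's own framework.
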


\begin{proof}
According to Lemma 3 in \cite{Suzdaltsev20a}, if $\frac{\lambda}{Q(0)} = \frac{n}{n-1}\left(1-\frac{1}{\left(n-1\right)^{2}}\right)^{n-2}$, then both $F(v) = 0$, and $F(v) = 1-\frac{1}{(n-1)^2}$ minimize the integrand pointwise over $F(v) \in [0,1]$.
So, when $\frac{\lambda}{Q(0)} = \frac{n}{n-1}\left(1-\frac{1}{\left(n-1\right)^{2}}\right)^{n-2}$, we only need to make sure that $F(v) \in \left\{0, 1-\frac{1}{(n-1)^2}\right\}$ for every $v\in [0, \underv]$.

Since $F(v) = 0$ for every $v\in[0,v_0)$, we only demand that $F(v) = 1-\frac{1}{(n-1)^2}$ for every $v\in[v_0, \underv]$, which yields $a\left(n-1\right)^{2}=\underv$.
\end{proof}

Combining Claims \ref{clm:multi:candidate:mean}, \ref{clm:multi:candidate:equal_rev}, \ref{clm:multi:candidate:integrand_min} we get four equations that $a, v_0, \underv, \lambda$ must satisfy. This allows us to solve for those variables.
We get that $\underv$ must solve
\begin{equation} \label{eq:multi:underv_equation}
    \left(n-\frac{1}{n-1}+\log\nicefrac{\overv}{\underv}\right)\underv=\left(n-1\right)^{2}\mu.
\end{equation}
This has a unique solution in $(0,\overv]$ only if $\left(n-\frac{1}{n-1}\right)\overv > \left(n-1\right)^{2}\mu$, otherwise, it has no solutions in $(0,\overv]$.
When it does have a solution, the rest of the variables are described by
\begin{align}
    v_0 &= \frac{\underv}{n-1}, \label{eq:multi:v_0_equation}\\
    a &= \frac{\underv}{(n-1)^2}, \label{eq:multi:a_equation}\\
    \lambda &= \frac{n\left(1-\nicefrac{a}{\overv}\right)^{n-1}}{n-1-\frac{1}{n-1}+\log\nicefrac{\overv}{\underv}}. \label{eq:multi:lambda_equation}
\end{align}

Now, when Eq~\eqref{eq:multi:underv_equation} does have a solution in $(0, \underv]$ we have a candidate for an equilibrium.

\subsection{Proof of Theorem \ref{thm:multi:main:equilibrium}}
\label{subsec:multi:proof}
When we use Equations \eqref{eq:multi:underv_equation}-\eqref{eq:multi:lambda_equation} and Formulas~\eqref{eq:multi:Q_guess}-\eqref{eq:multi:F_guess}, we get the same distributions as in Theorem~\ref{thm:multi:main:equilibrium}, assuming $\underv \le \overv$. 
We now prove Claims~\ref{clm:multi:valid} and~\ref{clm:multi:candidate:integrand_min_underv_to_overv} that establish these are valid distributions that form an equilibrium. The complementary case $\underv \ge \overv$ is addressed in Claim~\ref{clm:multi:proof_when_underv_ge_overv}. Combining these claims proves the theorem.

%The first step of the proof is to prove that $Q^*, F^*$ are distribution functions.
\begin{claim}
\label{clm:multi:valid}
If $\underv \in [0, \overv]$ solves Eq~\eqref{eq:multi:underv_equation}, and $a, v_0, \lambda$ are defined by Equations \eqref{eq:multi:v_0_equation}-\eqref{eq:multi:lambda_equation},
then $F^*,Q^*$ defined as in Formulas \eqref{eq:multi:Q_guess}, \eqref{eq:multi:F_guess}, are non decreasing and their image is contained in $[0,1]$.
\end{claim}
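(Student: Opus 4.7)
The plan is to verify the two required properties---non-decreasing, and image contained in $[0,1]$---separately for $F^*$ and for $Q^*$, since they are defined piece by piece. For $F^*$ the argument is routine: on each piece $F^*$ is either constant or equal to $1-\nicefrac{a}{v}$ (which is increasing in $v$); the pointwise values $0$, $1-\nicefrac{1}{(n-1)^2}$, and $1-\nicefrac{a}{v}\in[1-\nicefrac{1}{(n-1)^2},\,1-\nicefrac{a}{\overv}]$ all lie in $[0,1]$ (since $n\ge 3$ gives $\nicefrac{1}{(n-1)^2}\le\nicefrac{1}{4}$), and at the breakpoints $F^*$ is continuous at $\underv$ (both sides equal $1-\nicefrac{1}{(n-1)^2}$ using $a(n-1)^2=\underv$ from \eqref{eq:multi:a_equation}), while the jumps at $v_0$ and $\overv$ are non-negative. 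Hence $F^*$ is a valid CDF.

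For $Q^*$, the function is constant on $[0,\underv)$, so monotonicity and the bound $\le 1$ reduce to the interval $[\underv,\overv]$, while $Q^*(\overv)=1$ follows by direct substitution. For monotonicity on $[\underv,\overv]$, I would write $Q^*(r)=h(r)\,g(r)$ with
\[
h(r)=\Bigl(\tfrac{r}{r-a}\Bigr)^{n-1},\qquad g(r)=\bigl(1-\tfrac{a}{\overv}\bigr)^{n-1}+\tfrac{\lambda}{n}\log\tfrac{r}{\overv},
\]
and compute
\[
Q^{*\prime}(r)=\frac{r^{n-2}}{(r-a)^{n}}\cdot\phi(r),\qquad \phi(r):=\frac{(r-a)\lambda}{n}-(n-1)\,a\,g(r).
\]
Since $a=\nicefrac{\underv}{(n-1)^2}<\underv\le r$, the prefactor is positive, so the sign of $Q^{*\prime}$ matches the sign of $\phi$. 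A short differentiation gives $\phi'(r)=\tfrac{\lambda}{nr}\bigl(r-(n-1)a\bigr)$, which is non-negative on $[\underv,\overv]$ because $(n-1)a=\nicefrac{\underv}{(n-1)}\le\underv\le r$. Hence $\phi$ is non-decreasing on that interval.

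The main obstacle is the identity $\phi(\underv)=0$, which together with $\phi'\ge 0$ yields $\phi\ge 0$ and hence $Q^{*\prime}\ge 0$ on $[\underv,\overv]$. To verify it, I would substitute the expression for $\lambda$ from \eqref{eq:multi:lambda_equation} and simplify $g(\underv)=(1-\nicefrac{a}{\overv})^{n-1}\cdot\bigl(n-1-\tfrac{1}{n-1}\bigr)\big/\bigl(n-1-\tfrac{1}{n-1}+\log\nicefrac{\overv}{\underv}\bigr)$ to obtain
\[
\phi(\underv)=\frac{(1-\nicefrac{a}{\overv})^{n-1}}{n-1-\tfrac{1}{n-1}+\log\nicefrac{\overv}{\underv}}\cdot\bigl[\underv-a(n-1)^2\bigr],
\]
and the bracket vanishes by \eqref{eq:multi:a_equation}. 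This cancellation is the algebraic reason the constructed pair yields valid CDFs rather than just formal ODE solutions; it relies on all four parameter equations \eqref{eq:multi:underv_equation}--\eqref{eq:multi:lambda_equation} being imposed simultaneously. Combining $Q^{*\prime}\ge 0$ with $Q^*(\overv)=1$ gives $Q^*\le 1$ everywhere, and the lower bound $Q^*(\underv)\ge 0$ follows from $h(\underv)>0$ and $g(\underv)>0$, where $g(\underv)>0$ uses $n-1-\tfrac{1}{n-1}>0$ (valid for $n\ge 3$) together with $\underv\le\overv$.
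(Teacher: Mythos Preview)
Your proof is correct and follows essentially the same approach as the paper: both arguments compute the derivative of $Q^*$ on $[\underv,\overv]$ and reduce non-negativity to the elementary sign condition $\frac{\underv}{n-1}\log\nicefrac{r}{\underv}+\underv-r\le 0$ (your $\phi(r)\ge 0$ is exactly this condition up to a positive constant factor). Your presentation via the factorization $Q^*=hg$ and the identity $\phi(\underv)=0$ makes the role of the parameter relation $a(n-1)^2=\underv$ more explicit, whereas the paper substitutes everything at once and asserts the resulting inequality directly; the underlying mathematics is the same.
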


\begin{proof}
Since $0<a<\underv$, then $F^*$ is non decreasing and its image is contained in $[0,1]$.
When using these values for $Q^*$, we get (when $r\in [\underv, \overv]$)
\begin{equation}
    Q^*(r) = \left(1-\nicefrac{a}{\overv}\right)^{n-1}\left(\frac{r}{r-a}\right)^{n-1}\left(\frac{n-1-\frac{1}{n-1}+\log\nicefrac{r}{\underv}}{n-1-\frac{1}{n-1}+\log\nicefrac{\overv}{\underv}}\right).
\end{equation}
The derivative over $[\underv, \overv]$ is
\[\frac{Q^*(r)}{dr} = -\left(n-1\right)\left(1-\nicefrac{a}{\overv}\right)^{n-1}\left(\frac{r}{r-a}\right)^{n}\frac{\frac{\underv}{n-1}\log\nicefrac{r}{\underv}+\underv-r}{r^{2}\left(\left(n-2\right)n+\left(n-1\right)\log\nicefrac{\overv}{\underv}\right)}.\]

To show that this is non negative, we only need to prove that
\[\frac{\underv}{n-1}\log\nicefrac{r}{\underv}+\underv-r \le 0.\]

This is equivalent to
\[\frac{\log\nicefrac{r}{\underv}}{n-1}+1-\nicefrac{r}{\underv} \le 0,\]
which is true for $r \in [\underv, \overv]$, for any $n \ge 3$.
\end{proof}

The former claim proves that $Q^*$ and $F^*$ are distribution functions. Now we will prove that they also solve Problem~\eqref{eq:problem_reduced}.

\begin{claim} \label{clm:multi:candidate:integrand_min_underv_to_overv}
If $\underv \in [0, \overv]$ solves Eq~\eqref{eq:multi:underv_equation}, and $a, v_0, \lambda$ are defined by Equations \eqref{eq:multi:v_0_equation}-\eqref{eq:multi:lambda_equation},
and $F^*,Q^*$ are defined as in Formulas \eqref{eq:multi:Q_guess}, \eqref{eq:multi:F_guess}, then $F = F^*$ minimizes the expression
\[\frac{Q^*\left(v\right)}{dv} v\left(1-F^{n}\left(v\right)\right)+Q^*\left(v\right)\left(1-nF^{n-1}\left(v\right)+\left(n-1\right)F^{n}\left(v\right)\right) - \lambda\left(1-F\left(v\right)\right),\]
for any $v\in [\underv, \overv)$ and $F$ that its image is contained in $[0,1]$.
% Further, for every $v \in (\underv, \overv)$, the minimum is unique.
\end{claim}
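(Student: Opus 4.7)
Fix $v \in [\underv, \overv)$ and write
\[
h_v(z) := Q^{*\prime}(v)\, v\,(1 - z^n) + Q^*(v)\bigl(1 - n z^{n-1} + (n-1) z^n\bigr) - \lambda(1 - z),
\]
so that the integrand in the claim is $h_v(F(v))$. The goal is to show that $h_v$ is minimized over $z \in [0,1]$ at $z^* := 1 - a/v = F^*(v)$. The proof will mirror the strategy of Claim \ref{clm:two:F_is_optimal}, but with an extra layer of care because for $n \ge 3$ the polynomial $h_v$ is not convex on $[0,1]$.

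The first step is to check the first-order condition $h_v'(z^*) = 0$. This is almost automatic by construction: on $(\underv, \overv]$, $Q^*$ was chosen to solve the ODE \eqref{eq:multi:ode:seller}, which is exactly the continuous-limit version of the condition $\partial \calL_k/\partial x_i = 0$ from Claim \ref{clm:multi:disc:seller_induction}. So after substituting the closed-form expressions for $Q^*(v)$ and $Q^{*\prime}(v)$, together with $a = \underv/(n-1)^2$ and the value of $\lambda$ from \eqref{eq:multi:lambda_equation}, the identity $h_v'(z^*) = 0$ should fall out by routine algebra.

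The second step is to promote $z^*$ to a local minimizer, and then to the global minimizer over $[0,1]$. A direct computation yields
\[
h_v''(z) = n(n-1)\,z^{n-3}\bigl[\,\bigl((n-1)Q^*(v) - Q^{*\prime}(v)\,v\bigr)\,z \;-\; (n-2)Q^*(v)\,\bigr],
\]
whose sign on $(0,1)$ is controlled by a single linear factor and hence changes at most once. So $h_v$ has at most one interior local min and one interior local max on $(0,1)$. Plugging the equilibrium values into the bracket at $z = z^*$ reduces the inequality $h_v''(z^*) \ge 0$ to the same elementary inequality $\tfrac{1}{n-1}\log(v/\underv) + 1 \le v/\underv$ that was used in Claim \ref{clm:multi:valid}, which is valid for $v \in [\underv, \overv]$ and $n \ge 3$. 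This places the unique interior local minimum at $z^*$.

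The third step is to rule out the boundary values. Direct substitution gives $h_v(1) = -Q^*(v)$ and $h_v(0) = Q^{*\prime}(v)\,v + Q^*(v) - \lambda$; using the explicit forms of $Q^*, Q^{*\prime}, \lambda$ together with the equal-revenue identity from Claim \ref{clm:multi:candidate:equal_rev} (which says $\Rev_{F^*}(r)$ is constant across the support of $Q^*$), both comparisons $h_v(z^*) \le h_v(0)$ and $h_v(z^*) \le h_v(1)$ reduce to algebraic identities in $a, \underv, \overv, \lambda$ that hold at the equilibrium values \eqref{eq:multi:v_0_equation}--\eqref{eq:multi:lambda_equation}. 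Combining the three steps gives that $F^*(v) = z^*$ is the pointwise minimizer of the integrand on $[0,1]$.

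The main obstacle I expect is Step 3. Since $h_v$ is a degree-$n$ polynomial with mixed-sign coefficients and is not globally convex on $[0,1]$, being a local min does not immediately give a global min, and one must explicitly compare $h_v(z^*)$ against the two corners. Carrying out these comparisons cleanly will require simultaneously exploiting the indifference identity of Claim \ref{clm:multi:candidate:equal_rev}, the mean constraint from Claim \ref{clm:multi:candidate:mean}, and the relation $a(n-1)^2 = \underv$; Steps 1 and 2, in contrast, should be direct algebraic verifications once the closed forms are in hand.
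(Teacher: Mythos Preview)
Your plan follows the paper's proof almost exactly: define $h_v(z)$, verify $h_v'(z^*)=0$, analyze the single sign change of $h_v''$ to get a local minimum at $z^*$, and then compare against the corners. A few corrections to Step~3:

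\begin{itemize}
\item $h_v(1)=0$, not $-Q^*(v)$: all three terms vanish at $z=1$.
\item The paper does \emph{not} compare $h_v(z^*)$ with $h_v(1)$ by direct evaluation. Instead it notes that the linear bracket in $h_v''$ equals $-(n-2)Q^*(v)<0$ at $z=0$, so the unique sign change of $h_v''$ occurs in $(0,z^*)$; hence $h_v$ is convex on $(z^*,1)$, and since $h_v'(z^*)=0$ it is increasing there, giving $h_v(z^*)<h_v(1)$ for free. You already have this structure in Step~2; use it.
\item For the corner $z=0$ the paper does compute directly, but the comparison reduces (after a positive rescaling) to the \emph{inequality} $\log(v/\underv)\ge\frac{1-\underv/v}{n-1}$, not to an identity. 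The equal-revenue relation of Claim~\ref{clm:multi:candidate:equal_rev} is not used anywhere in this argument.
\item A minor point in Step~2: the positivity of $h_v''(z^*)$ reduces not to the inequality from Claim~\ref{clm:multi:valid} but, after simplification, to a quantity bounded below by $\log(v/\underv)-(1-\underv/v)\ge 0$.
\end{itemize}
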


\begin{proof}
For any $v \in [\underv, \overv)$ we define
\[h_v(z) = \frac{dQ^*\left(v\right)}{dv} v\left(1-z^n\right)+Q^*\left(v\right)\left(1-n z^{n-1} + \left(n-1\right) z^n \right) - \lambda\left(1-z\right).\]
We get
\[\frac{dh_v(z)}{dz} = -n\frac{dQ^*\left(v\right)}{dv} v z^{n-1}-n\left(n - 1\right) Q^*\left(v\right)\left(1 - z\right)z^{n-2} + \lambda.\]
After substituting for $Q^*(v)= \left(1-\nicefrac{a}{\overv}\right)^{n-1}\left(\frac{v}{v-a}\right)^{n-1}\left(\frac{n-1-\frac{1}{n-1}+\log\nicefrac{v}{\underv}}{n-1-\frac{1}{n-1}+\log\nicefrac{\overv}{\underv}}\right)$, and $z=1-\frac{a}{v}$, we get
\begin{align*}
    \frac{dh_v(z)}{dz} =& n \left(n-1\right)\left(1-\nicefrac{a}{\overv}\right)^{n-1}\left(\frac{v}{v-a}\right)
    \frac{ \left(n-1\right)a \log\nicefrac{v}{\underv}+\underv-v}{v \left(\left(n-2\right)n-\left(n-1\right)\log\nicefrac{\underv}{\overv}\right)} \\
    &- n\left(n - 1\right) 
    \left(1-\nicefrac{a}{\overv}\right)^{n-1}\left(\frac{v}{v-a}\right)\left(\frac{n-1-\frac{1}{n-1}+\log\nicefrac{v}{\underv}}{n-1-\frac{1}{n-1}+\log\nicefrac{\overv}{\underv}}\right) \cdot \frac{a}{v}
    + \lambda
\end{align*}
We dived by $ n \left(n-1\right)\left(1-\nicefrac{a}{\overv}\right)^{n-1}$, and write $\lambda_2 = \frac{\lambda}{ n \left(n-1\right)\left(1-\nicefrac{a}{\overv}\right)^{n-1}}$, and we get
\begin{align*}
    & \left(\frac{1}{v-a}\right)
    \frac{ a \log\nicefrac{v}{\underv}+\frac{\underv}{n-1} - \frac{v}{n-1}}{ n-1-\frac{1}{n-1}+\log\nicefrac{\overv}{\underv} }
    - \left(\frac{1}{v-a}\right)\left(\frac{n-1-\frac{1}{n-1}+\log\nicefrac{v}{\underv}}{n-1-\frac{1}{n-1}+\log\nicefrac{\overv}{\underv}}\right) \cdot a
    + \lambda_2 \\
    =& -\frac{1}{n-1} \cdot
    \frac{1}{ n-1-\frac{1}{n-1}+\log\nicefrac{\overv}{\underv} }
    + \lambda_2.
\end{align*}
Therefore, when $\lambda= \frac{n\left(1-\nicefrac{a}{\overv}\right)^{n-1}}{n-1-\frac{1}{n-1}+\log\nicefrac{\overv}{\underv}}$, $Q(v)= \left(1-\nicefrac{a}{\overv}\right)^{n-1}\left(\frac{v}{v-a}\right)^{n-1}\left(\frac{n-1-\frac{1}{n-1}+\log\nicefrac{v}{\underv}}{n-1-\frac{1}{n-1}+\log\nicefrac{\overv}{\underv}}\right)$, and $z=1-\frac{a}{v}$ we get
\[\frac{dh_v(z)}{dz} = 0.\]

To prove that it is a local minimum, we look at the second derivative
\begin{align*}
    \frac{d^2 h_v(z)}{dz^2} = n\left(n-1\right) \left( \left( \left( n - 1 \right) Q^*\left(v\right) - v \frac{dQ^*\left(v\right)}{dv} \right) z - \left( n - 2 \right) Q^*\left(v\right) \right) z^{n-3}.
\end{align*}
Since we only care about the sign, we only need to analyze
\begin{equation}
    \left( \left( n - 1 \right) Q^*\left(v\right) - v \frac{dQ^*\left(v\right)}{dv} \right) z - \left( n - 2 \right) Q^*\left(v\right). \label{eq:multi:second_derivative_term}
\end{equation}
Substituting  $Q^*(v) = \left(1-\nicefrac{a}{\overv}\right)^{n-1}\left(\frac{v}{v-a}\right)^{n-1}\left(\frac{n-1-\frac{1}{n-1}+\log\nicefrac{v}{\underv}}{n-1-\frac{1}{n-1}+\log\nicefrac{\overv}{\underv}}\right)$, and $z=1-\nicefrac{a}{v}$,  we get a positive value for every $v>\underv$ (See Appendix~\ref{appndx:clm:integrand_min_underv_to_overv:details} for details).

This proves that we found a local minimum. Since $\frac{d^2 h_v(z)}{dz^2}$ has only a single root in $(0,\infty)$, this is the single local minimum in this interval.
When $z=0$, we get that \eqref{eq:multi:second_derivative_term} is negative, since $Q^*(v) > 0$. That proves that the second derivative's single positive root is in $\left(0, 1 - \nicefrac{a}{v}\right)$. Therefore, $h_v$ is rising on $\left(1-\nicefrac{a}{v}, 1\right)$, which means that $h_v\left(1-\nicefrac{a}{v}\right) < h_v(1)$.

We now only need to prove that $h_v\left(1-\nicefrac{a}{v}\right) \le h_v(0)$.
To shorten the analysis ,we define
\begin{align*}
    \tilde{h}_{v}\left(z\right) =& \left(\frac{n-1-\frac{1}{n-1}+\log\nicefrac{\overv}{\underv}}{\left(1-\nicefrac{a}{\overv}\right)^{n-1}}\right)\cdot h_{v}\left(z\right) \\
    =& \left(\frac{v}{v-a}\right)^{n-1}\left(n-1-\frac{1}{n-1}+\log\nicefrac{v}{\underv}\right)\left(1-nz^{n-1}+\left(n-1\right)z^{n}\right) \\
    & -\left(\frac{v}{v-a}\right)^{n-1}\left(\frac{\left(n-1\right)a\log\nicefrac{v}{\underv}+\underv-v}{v-a}\right)\left(1-z^{n}\right)-n\left(1-z\right).
\end{align*}
It is enough to show that $\tilde{h}_v(0) - \tilde{h}_v\left(1-\nicefrac{a}{v}\right) \ge 0$, and indeed (see Appendix~\ref{appndx:clm:integrand_min_underv_to_overv:details})
\begin{align*}
    \tilde{h}_v(0) - \tilde{h}_v\left(1-\nicefrac{a}{v}\right) = \log\nicefrac{v}{\underv}-\frac{1-\nicefrac{\underv}{v}}{n-1}
    \ge 0.
\end{align*}
\end{proof}

The next claim handles the case when there is no solution $\underv$ to \eqref{eq:multi:underv_equation} in $[0, \overv]$.

\begin{claim}
\label{clm:multi:proof_when_underv_ge_overv}
If there is no solution for \eqref{eq:multi:underv_equation} in $[0, \overv]$, then $(Q^*, F^*)$ solve Problem~\eqref{eq:problem_reduced}, where
\begin{align*}
     F^*(v) &= \begin{cases}
    0 & v \in \left[0, v_0 \right), \\
    1-\frac{1}{(n-1)^2} & v\in [v_0, \overv), \\
    1 & v \ge \overv,
    \end{cases} \\
    Q^*(r) &= 1 \quad r \ge 0,
\end{align*}
where $v_0 = \mu-\frac{\overv-\mu}{\left(n-1\right)^{2}-1}$.
Furthermore, $F^*$ is nature's unique best response to the seller's $Q^*$.
\end{claim}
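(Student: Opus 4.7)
The plan is to verify that $(Q^*,F^*)$ is an equilibrium of the reduced zero-sum game, and moreover that $F^*$ is the \emph{unique} best response of nature, using the same Lagrangian approach as in the case $\underv \le \overv$. Concretely, I would establish three things: (i) $F^* \in \calF$; (ii) $F^*$ is nature's unique best response to $Q^*$; (iii) $Q^*$ is the seller's best response to $F^*$. For (i), a direct computation using $v_0 = \mu - \frac{\overv-\mu}{(n-1)^2-1}$ verifies $\mathbb{E}_{F^*}[v] = \mu$, while non-negativity of $v_0$ is a consequence of the hypothesis (see below). For (ii), observe that since $Q^* \equiv 1$ has no density on $(0,\overv)$, Claim~\ref{clm:multi:revenue_expression} reduces the revenue to the second-order statistic $\mathbb{E}[v^{(2)}] = \int_0^{\overv}\bigl(1 - n F^{n-1}(v) + (n-1)F^n(v)\bigr)\,dv$. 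Applying Lemma~3 of \cite{Suzdaltsev20a} with $\lambda = \frac{n}{n-1}\bigl(1 - \frac{1}{(n-1)^2}\bigr)^{n-2}$, exactly as in Claim~\ref{clm:multi:candidate:integrand_min}, the pointwise integrand of the Lagrangian is uniquely minimized over $z \in [0,1]$ by $z \in \{0,\, 1-\frac{1}{(n-1)^2}\}$. The only CDF realizing these pointwise minima while being non-decreasing and satisfying $F(\overv)=1$ is a two-point distribution that jumps from $0$ to $1-\frac{1}{(n-1)^2}$ at some $v_0' \in [0,\overv]$ and then to $1$ at $\overv$; the mean constraint pins $v_0' = v_0$ uniquely, establishing (ii).

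For (iii), I would compute $\Rev_{F^*}(r)$ at each fixed reserve $r$. Write $p = 1/(n-1)^2$, $q = 1-p$, and let $K$ denote the number of bidders drawing $\overv$. For $r \in [0,v_0)$ every bid strictly exceeds $r$ so the reserve is inert and $\Rev_{F^*}(r) = \mathbb{E}[v^{(2)}] = v_0(q^n + n p q^{n-1}) + \overv(1 - q^n - n p q^{n-1})$; for $r \in [v_0,\overv)$ a sale occurs only when $K \ge 1$, with payment $r$ if $K=1$ and $\overv$ if $K \ge 2$, giving
\[
\Rev_{F^*}(r) = r \cdot n p q^{n-1} + \overv \cdot (1 - q^n - n p q^{n-1}),
\]
which is linear in $r$, so it suffices to check the endpoint $r \to \overv^-$. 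A short calculation, using $q/(np) = n-2$, reduces $\Rev_{F^*}(\overv^-) \le \Rev_{F^*}(0)$ to the clean condition $v_0 \ge \overv/(n-1)$. Since this automatically implies $v_0 \ge 0$, the seller has no profitable deviation from $Q^*$.

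\textbf{Main obstacle.} The only genuine use of the hypothesis is in translating ``no solution to \eqref{eq:multi:underv_equation} in $[0,\overv]$'' into the needed inequality $v_0 \ge \overv/(n-1)$. Since the map $\underv \mapsto \underv\bigl(n - \frac{1}{n-1} + \log\nicefrac{\overv}{\underv}\bigr)$ is monotone increasing on $(0,\overv]$ with value $(n - \frac{1}{n-1})\overv$ at $\underv = \overv$, the no-solution regime is precisely $\mu > \frac{n(n-1) - 1}{(n-1)^3}\overv$. Substituting into $v_0 = \mu - \frac{\overv-\mu}{(n-1)^2-1}$ and clearing denominators, I expect this to simplify to $v_0 \ge \overv/(n-1)$, simultaneously closing (i) and (iii) and completing the proof.
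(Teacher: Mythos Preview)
Your proposal is correct and follows essentially the same route as the paper: both establish nature's (unique) best response via pointwise minimization of the Lagrangian integrand with $\lambda=\frac{n}{n-1}\bigl(1-\tfrac{1}{(n-1)^2}\bigr)^{n-2}$ (you invoke Claim~\ref{clm:multi:candidate:integrand_min}/Lemma~3 of \cite{Suzdaltsev20a}, the paper cites Proposition~1), and both handle the seller side by comparing $\Rev_{F^*}(0)$ with the supremum of $\Rev_{F^*}(r)$ over positive reserves, reducing to the equivalent conditions $v_0\ge \overv/(n-1)$ and $(n-1)^2\mu\ge (n-\tfrac{1}{n-1})\overv$, which is exactly the no-solution regime of \eqref{eq:multi:underv_equation}. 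The only cosmetic difference is that the paper temporarily switches the tie-breaking convention to evaluate $\Rev_{F^*}(\overv)$ directly, whereas you take $r\to\overv^-$; the resulting value $\overv(1-q^n)$ and the ensuing comparison are identical.
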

\begin{proof}
Since there is no solution for \eqref{eq:multi:underv_equation} in $[0, \overv]$, we know that
\[\left(n-\frac{1}{n-1}\right)\overv > \left(n-1\right)^{2}\mu,\]
since the term $\left(n-\frac{1}{n-1}+\log\nicefrac{\overv}{\underv}\right)\underv$ is increasing in $[0, \overv]$.

When the seller uses no reserve price, then $F^*$ is nature's best response according to Proposition~1 in \cite{Suzdaltsev20a}.

$F^*$ is unique since $z=0, 1-\frac{1}{(n-1)^2}$ are the only minimum points of the integrand $h_v(z)$ of Lagrangian, and therefore $F^*$ is the only CDF in $\calF$ that minimizes it point-wise. For clarity
\[h_v(z) = 1-nF^{n-1}\left(v\right)+\left(n-1\right)F^{n}\left(v\right) - \lambda(1-F(v)),\]
where $\lambda$ is given by equations \eqref{eq:multi:underv_equation}-\eqref{eq:multi:lambda_equation}

Now we need to show that if we fix $F^*$, then any reserve price higher than zero is no better than no reserve price.
To simplify our analyses, we consider the case where the highest bidder does buy the item when the reserve price is equal to her valuation.

Since the lowest valuation any bidder can have is $v_0$, then any reserve price equal or lower than $v_0$ will have no effect, so we only need to check reserve prices higher than $v_0$.
Since the nature is binary, any reserve price between $v_0$ and $\overv$ will not change the allocation of the item, so we can only benefit from increasing the reserve price. We only need to show that a reserve price of $\overv$ is no better than no reserve price.
The expected revenue for no reserve price is
\begin{align*}
    \Rev_{F^*}(0, F^*) &= v_{0}+\left(\overv-v_{0}\right)\left(1-n\left(1-\frac{1}{\left(n-1\right)^{2}}\right)^{n-1}+\left(n-1\right)\left(1-\frac{1}{\left(n-1\right)^{2}}\right)^{n}\right) \\
    &= \overv\left(1-\left(1-\nicefrac{v_{0}}{\overv}\right)\left(1+\frac{1}{n-1}\right)\left(1-\frac{1}{\left(n-1\right)^{2}}\right)^{n-1}\right).
\end{align*}
The expected revenue of reserve price $\overv$ is
\begin{align*}
    \Rev_{F^*}(\overv)  = \overv\left(1-\left(1-\frac{1}{\left(n-1\right)^{2}}\right)^{n}\right).
\end{align*}
Therefore
\begin{align*}
     & \Rev_{F^*}(0) \ge \Rev_{F^*}(\overv) \\
     \iff& \left(1-\nicefrac{v_{0}}{\overv}\right)\left(1+\frac{1}{n-1}\right)\le1-\frac{1}{\left(n-1\right)^{2}} \\
     \iff & \left(n-1\right)^{2}\mu\ge\overv\left(n-\frac{1}{n-1}\right).
\end{align*}
This is true from the assumption that Eq~\eqref{eq:multi:underv_equation} has no solutions in $[0, \overv]$.
\end{proof}

\begin{proof} [Proof of Theorem \ref{thm:multi:main:equilibrium}]
If there is a solution $\underv$ to Eq~\eqref{eq:multi:underv_equation} in $[0, \overv]$, then from Claim~\ref{clm:multi:candidate:equal_rev} we know that $\Rev_{F^*}(r_1) = \Rev_{F^*}(r_2)$, for any $r_1, r_2$ in the support of $Q^*$, which includes $\underv$.
Since $F^*$ has no mass on $(v_0, \underv)$, then any $r\in[v_0, \underv)$ cannot yield higher revenue than a reserve price of $\underv$. Likewise, since there is no mass over $[0, v_0)$, any reserve price lower than $v_0$ would have the same revenue as a reserve price of zero.
Therefore, we can conclude that $Q^*$ maximizes the expected revenue if nature uses $F^*$.
From Claims \ref{clm:multi:candidate:integrand_min}-\ref{clm:multi:candidate:integrand_min_underv_to_overv}, we get that distribution $F^*$ is nature's best response to the seller's $Q^*$, which completes the proof for the case that $\underv \in [0, \overv] $ is a solution to Eq~\eqref{eq:multi:underv_equation}.

The case where there is no solution $\underv$ to Eq~\eqref{eq:multi:underv_equation} in $[0, \overv]$ is handled by Claim~\ref{clm:multi:proof_when_underv_ge_overv}. This completes the proof.
\end{proof}

\subsection{Not an Optimal Mechanism} \label{subsec:multi:not_optimal}
Assuming an equilibrium exists, the second price auction with randomized reserve price is not the optimal mechanism when $n\ge3$, this is the case because nature does not use an equal-revenue distribution.
We now provide an example.

\begin{example} \label{ex:multi:not_optimal}
For $n=3, \overv = 1, \mu = \nicefrac{2}{3}$, Theorem~\ref{thm:multi:main:equilibrium} says that when restricted to second price auctions with randomized reserve, it is optimal for the seller to always choose a reserve price of zero, and the adversary's optimal distribution is
\begin{align*}
    F^*(v) &= \begin{cases}
    0 & v \in \left[0, \nicefrac{5}{9} \right), \\
    \nicefrac{3}{4} & v\in \left[\nicefrac{5}{9}, \overv\right), \\
    1 & v \ge \overv.
    \end{cases}
\end{align*}
The expected revenue in this case is $0.625$, see Fig~\ref{fig:multi:high_mu}.

However, the seller can do better.
Using \emph{ironing} \cite{Myerson81} (see also \cite{Hartline21}), we can find an optimal mechanism for the seller assuming the nature uses $F^*$.
This mechanism randomly picks a winner, uniformly among all the highest bidders, and allocates the item to the winner. The payment for the item is as follows
\begin{itemize}
    \item If all bidders reported a value of $\nicefrac{5}{9}$ then the payment is $\nicefrac{5}{9}$.
    \item If at least two bidders reported a value of $1$, then the payment is $1$.
    \item If only one bidder reported a value of $1$, then the payment is $\frac{23}{27}$.
\end{itemize}
Note: the mechanism handles false reports by ignoring bidders with bidding of less than $\nicefrac{5}{9}$, and treat any bidding in $\left[\nicefrac{5}{9}, 1\right)$ the same as $\nicefrac{5}{9}$. Bidding a value higher than 1 is treated the same as as bidding a value of 1.

In this mechanism, being truthful is a dominant strategy. The only case where it is different then second price, is when one bidder (assume bidder 1) has a valuation of $1$, and the other two have valuations of $\nicefrac{5}{9}$. The utility for bidder 1 of truthful bidding in this case is $1-\frac{23}{27} = \frac{4}{27}$.
By reporting less than 1, the bidder would win with probability of $\nicefrac{1}{3}$, and would pay $\nicefrac{5}{9}$. This results in an expected utility of $\nicefrac{1}{3} \cdot \left( 1 - \nicefrac{5}{9} \right) = \frac{4}{27}$ as well. Therefore, the bidders cannot gain anything by reporting falsely, and the mechanism is truthful.

The expected revenue in this case is $\frac{5}{9} \cdot \left( \frac{3}{4} \right)^3 + \frac{23}{27} \cdot 3 \cdot \frac{1}{4} \cdot \left( \frac{3}{4} \right)^2 + 1 \cdot \left( \frac{1}{4^3} + 3 \cdot \frac{1}{4^2} \cdot \frac{3}{4}\right) = \frac{3}{4}$.
Which is higher than the revenue we get from the second price auction.
\end{example}

The mechanism presented in Example~\ref{ex:multi:not_optimal} is optimal for $F^*$, but not the other way around, and it would be the case for any other mechanism that is equal to it for $F^*$. That is easy to see since nature can choose the trivial point mass distribution only on $\mu = \nicefrac{2}{3}$, which would guarantee a revenue no higher than $\nicefrac{2}{3}$.

According to Claim~\ref{clm:multi:proof_when_underv_ge_overv}, $F^*$ is nature's unique best response to $Q^*$, in the same setting of the example.
That means that every other equilibrium of Problem~\eqref{eq:problem_reduced} is of the form $(Q,F^*)$ for some $Q\in \calQ$, and it achieve the same expected revenue.
Therefore, the example here prove that no second price auction with randomized reserve can be an equilibrium for the general Problem~\eqref{eq:problem}.

\section{Discussion}

In this work we identify the second price auction with random reserve as the robustly revenue-optimal mechanism for two i.i.d.~bidders, in settings where only the mean and support of the value distribution are known. We develop an expression for the robustly optimal distribution of the random reserve, and generalize this finding to $n\ge 3$ bidders. 
This distribution constitutes part of an equilibrium of a zero-sum game between the seller, who controls the random reserve price, and an adversary who controls the value distribution. Our main technique is to start from a discretization of the setting, uncover recursive relations between the discretized probabilities, and use these relations to derive a guess of the equilibrium distributions. We also give an explanation of when to expect the equilibrium to induce indifference among one or both of the zero-sum game players. These technical contributions can possibly be of more general applicability in robustly optimal mechanism design. 

Our work raises several open questions for future research. 

\paragraph{\bf Beyond two bidders.} 
First, while for $n\ge 3$ bidders we establish the max-min optimal distribution of the reserve price when the seller is confined to second price auctions, the following is still open: 

\begin{open}
\label{opn:beyond-two}
What is the max-min optimal auction given parametric distributional knowledge for $n\ge 3$ bidders? 
\end{open}

In Section~\ref{sec:multi_bidders} we show that the answer to Open Question~\ref{opn:beyond-two} is generally \emph{not} a second price auction with randomized reserve. This raises the next question:

\begin{open}
Are there natural conditions under which the second price auction with random reserve is max-min optimal for $n\ge 3$ bidders, such as \emph{regularity} of the value distribution~\citep{Myerson81}?
\end{open}

The regularity assumption may require new techniques for the analysis, but it seems like an intuitively logical condition to try given that even for known distributions, the optimality of the second price auction with (deterministic) reserve holds only for a regular value distribution (in fact, Example~\ref{ex:multi:not_optimal} demonstrates its sub-optimality for irregular distributions). An interesting, high-level research direction is to understand whether there is a formal connection between the optimal and robustly optimal auctions. E.g., is it just a ``coincidence'' that for two bidders, the optimal and robustly optimal auction formats are quite similar (up to randomization of the reserve price)? 

%\noindent We believe that our techniques (Section~\ref{sub:techniques}) will enable progress on this question in future work.

\paragraph{\bf Beyond i.i.d.~bidders.} 

Our focus in this work is on the case of i.i.d.~bidders. Indeed, in revenue maximization with known value distributions, the case of bidders with values that are independent and even i.i.d.~is very central: independence enables the fundamental theory of Myerson, and precludes ``unrealistic'' results like the Cr\'emer-McLean auction~\cite{CremerM88}. Furthermore, i.i.d.~leads to the celebrated result that the second price auction with deterministic reserve is optimal. This makes the i.i.d.~case interesting for robust revenue maximization as well, but a natural next step is to go beyond this assumption.

\begin{open}
What is the max-min optimal auction given parametric distributional knowledge for independent bidders with value distributions that are not necessarily identical? 
\end{open}

As for the case of correlated bidders, this case has been studied in the robust revenue maximization literature. In particular, \citet{Che22} studies a similar setting to ours while dropping the i.i.d.~assumption. His results are as follows: If the seller chooses a second price auction with randomized reserve, then nature will strongly correlate the bidders in the sense that just one (randomly chosen) bidder will be a true contender for the item, and all other bidders will have the same value $\alpha$. In effect, nature is using the power to correlate in order to abuse the second price auction format, by eliminating the revenue-inducing competition among the bidders and leaving a single bidder ``in the game''. When nature has this power (due to dropping the i.i.d.~restriction, or even just the independence restriction), Che demonstrates that second price is not robustly optimal in general even for two bidders. (He goes on to show that under additional uncertainty about the bidders’ information, second price is robustly optimal among a subclass of mechanisms he terms “competitive”.)

An interesting direction for future work is to limit the correlation that nature can introduce among bidders, and study the optimal mechanism robust against limited correlation (rather than against extreme coordination of the bidders as above, which is arguably less realistic).

% Bibliography
\bibliographystyle{ACM-Reference-Format}
\bibliography{abb,references}

% Appendix
\appendix

\section{Equilibrium in the Two Bidder Discrete Case} \label{appndx:two_bidders_discrete}
In this section we focus exclusively on the reduced zero-sum game, solving Problem~\eqref{eq:problem_reduced} in a discretized model.
In Section~\ref{subsec:two:discrete} we present two Claims (\ref{clm:two_bidders:disc:nature_induction}, \ref{clm:two_bidders:disc:seller_induction}) about a potential solution in this setting.
We will prove its existence and correctness here.

\vspace{0.05in}
\noindent{\bf The discretized model.} Let $k> 1$ be the discretization parameter. We effectively limit the support of both the seller's distribution over reserve prices and the adversary's distribution over values to $k+1$ possible options $\rvec_k=(r_{k,1},\dots, r_{k,k}, r_{k,k+1})$, where $r_{k,i} = \frac{i-1}{k}\cdot \overv$ for every $1\le i \le k+1$.
We denote by $\calF_k$ the set of all distribution functions over $\rvec_k$ with a mean value of $\mu$, and by $\calQ_k$ the set of all distribution functions over $\rvec_k$. To be precise, the functions in $\calF_k$ and $\calQ_k$ are step functions that only have jumps in points of $\rvec_k$.

\vspace{0.05in}
\noindent{\bf Our main result for the discretized model.} The following theorem guarantees the existence of a solution to Problem~\eqref{eq:problem_reduced} in the discrete case, and provides properties that the solution satisfies. 
Together with the mean value constraint, these properties uniquely describe the solution, and will allow us to solve the general case in subsequent sections.

\begin{theorem} 
\label{thm:two_bidders:disc}
For every $k > 1$, there exists an equilibrium $\left(Q^*_k, F^*_k \right) \in \calQ_k \times \calF_k$ of the reduced zero-sum game in the discretized model, where the seller and the adversary are limited to $\rvec_k$. The equilibrium satisfies the following:
\begin{itemize}
    \item $F^*_k(r_{k,i+1}) = \sqrt{\frac{\overv}{k} \cdot \frac{2 }{r_{k,i} +\nicefrac{\overv}{k}}(F^*_k(r_{k,i}) - F^*_k(r_{k,i})^2) + F^*_k(r_{k,i})^2}$ for every $1 \le i \le k-1$;
    \item $Q^*_k(r_{k,i-1}) = Q^*_k(r_{k,i}) - \frac{Q^*_k(0)(1-F^*_k(0)) - Q^*_k(r_{k,i})(1-F^*_k(r_{k,i}))}{r_{k,i} F^*_k(r_{k,i})} \cdot\nicefrac{\overv}{k}$ for every $2 \le i \le k$;
    \item $Q^*_k(r_{k,k}) = 1$.
\end{itemize}
\end{theorem}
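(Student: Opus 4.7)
The plan is to decompose the proof into three steps: establish existence of a saddle point of $\calL_k$ on $\calQ_k \times \calF_k$, argue that $q_k = 1$ holds at any such saddle point, and then apply the first-order conditions of Claims~\ref{clm:two_bidders:disc:nature_induction}--\ref{clm:two_bidders:disc:seller_induction} to read off the two recursions.

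For existence I would note that $\calF_k$ (the probability simplex on $\rvec_k$ intersected with the mean-value hyperplane) and $\calQ_k$ (the simplex itself) are compact convex polytopes, and $\calL_k$ is a polynomial, hence continuous on their product. Since $\calL_k$ is linear in $\qvec$ for each fixed $\xvec$, the seller's best-response correspondence is non-empty, convex-valued and upper hemicontinuous on $\calF_k$. On the adversary side the Lagrangian is quadratic in $\xvec$, so convexity of the best-response set is not automatic; here I would either appeal to standard existence results for continuous zero-sum games on compact strategy spaces (a Kakutani-style fixed-point argument applied to an appropriately convexified joint best-response map) or, as indicated below, give a constructive argument that simultaneously establishes existence and interiority.

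For the $q_k = 1$ step: by the strict inequality in the allocation rule, any mass the seller places on the reserve price $r_{k,k+1} = \overv$ produces zero revenue, since no bidder's value strictly exceeds $\overv$. Shifting that mass to any strictly smaller reserve weakly improves the seller's expected revenue against every $F \in \calF_k$, so in equilibrium no positive mass can remain on $\overv$, giving $q_k = 1$.

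Finally, provided the saddle point lies in the relative interior of $\calQ_k \times \calF_k$---so that $q_i \in (0,1)$ for $i < k$ and $x_i \in (0,1)$ for all $i$---the gradient of $\calL_k$ must vanish at the equilibrium. Claim~\ref{clm:two_bidders:disc:nature_induction} then immediately rewrites $\partial\calL_k/\partial q_i = 0$ as the first bulleted recursion for $F^*_k$, and Claim~\ref{clm:two_bidders:disc:seller_induction} rewrites $\partial\calL_k/\partial x_i = 0$ as the recursion for $Q^*_k$. The main obstacle I anticipate is justifying interiority: on the adversary side, any revenue-minimizing $F$ against a non-degenerate $Q$ must spread mass over multiple points of $\rvec_k$ in order to match the mean $\mu$, which rules out boundary configurations; on the seller side, the equilibrium $Q^*$ must hedge against nature's randomization and so cannot collapse to a point mass at a single $r_{k,i}$. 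A cleaner alternative route---handling existence, interiority, and the recursions at once---is to treat the system as a discrete initial-value problem: the choice of $x_1$ determines $x_2, \dots, x_k$ through the first recursion; the boundary condition $q_k = 1$ together with a choice of $\lambda$ determines $q_{k-1}, \dots, q_1$ through the second; and the two unknowns $(x_1, \lambda)$ are then pinned down by the mean-value constraint on $F^*_k$ and by the $i = 1$ case of Claim~\ref{clm:two_bidders:disc:seller_induction}, namely $q_1 = \lambda / (2(1 - x_1))$. An intermediate-value argument in $(x_1, \lambda)$ yields a solution in the interior, and a direct check using Claims~\ref{clm:two_bidders:disc:nature_induction}--\ref{clm:two_bidders:disc:seller_induction} (together with point-wise minimization of the Lagrangian's integrand, as in the continuous analysis of Claim~\ref{clm:two:F_is_optimal}) verifies that the resulting $(\qvec^*_k, \xvec^*_k)$ is indeed a saddle point of $\calL_k$.
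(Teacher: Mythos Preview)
Your ``cleaner alternative route'' is exactly what the paper does: it first constructs $\xvec^*$ by an intermediate-value argument in $x_1$ alone (the $F$-recursion of Claim~\ref{clm:two_bidders:disc:nature_induction} is independent of $\lambda$, so the system decouples), then---given $\xvec^*$---finds $\lambda$ by a second intermediate-value argument using $q_k=1$ and the $i=1$ condition $q_1=\lambda/(2(1-x_1^*))$, and finally verifies the saddle by checking $\partial^2\calL_k/\partial x_i^2>0$ at $\qvec^*$ (which is the discrete analogue of your point-wise minimization). Your initial Kakutani sketch would not go through as stated, since $\calL_k$ is not convex in $\xvec$ for general $\qvec$---the sign of $\partial^2\calL_k/\partial x_i^2 = -2(q_i-q_{i-1})r_{k,i}+2q_i\overv/k$ depends on $\qvec$---but you already identify this obstacle and correctly pivot to the constructive argument.
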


\subsection{Existence}
We will prove that there exist non decreasing vectors $\qvec^*, \xvec^* \in [0,1]^{k}$ solving Claims~\ref{clm:two_bidders:disc:nature_induction}-\ref{clm:two_bidders:disc:seller_induction}.
\begin{claim} \label{appndx:clm:two_bidders:disc:x_star_existence}
There exists a unique vector $\xvec^* = (x_1^* ,\dots, x_k^*)$ that satisfies the following:
\begin{itemize}
    \item $\frac{\partial}{\partial q_i}\calL(\xvec^*, \qvec) = 0$ for every $1\le i \le k-1$ and every $\qvec \in \mathbb{R}^{k-1}$.
    \item $\nicefrac{\overv}{k}\cdot\sum_{i=1}^k(1 - x^*_i) = \mu$.
    \item $0 < x_1^* < x_2^* \dots <x_k^* < 1$.
\end{itemize}
\end{claim}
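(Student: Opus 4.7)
The plan is to use the recurrence from Claim~\ref{clm:two_bidders:disc:nature_induction} to reduce the existence statement to the invertibility of a single one-dimensional map. Claim~\ref{clm:two_bidders:disc:nature_induction} says that the first bullet of the statement is equivalent to the recurrence $x_{i+1} = g_i(x_i)$ where
$$g_i(x) = \sqrt{\tfrac{\overv}{k}\cdot\tfrac{2}{r_{k,i}+\nicefrac{\overv}{k}}(x-x^2) + x^2}.$$
Since $r_{k,i} = \frac{i-1}{k}\overv$, the factor $\frac{\overv/k}{r_{k,i}+\overv/k}$ simplifies to $\frac{1}{i}$, so the recurrence is the parameter-free map $g_i(x)=\sqrt{(1-\tfrac{2}{i})x^2 + \tfrac{2}{i}x}$. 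Thus $\xvec^*$ is completely determined by $x_1^*$, and the problem reduces to finding the correct starting point $x_1^*\in(0,1)$ so that the mean constraint holds.

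First I would verify the three properties we need of each $g_i$: (i) $g_i(0)=0$ and $g_i(1)=1$, (ii) $g_i(x)>x$ for $x\in(0,1)$ (directly from $(1-\tfrac{2}{i})x^2+\tfrac{2}{i}x > x^2$ when $x\in(0,1)$), and (iii) $g_i$ is strictly increasing on $[0,1]$ (by checking the derivative of the radicand is positive). Property (ii) gives the strict inequality $x_{i+1}>x_i$, and combined with the easily verified fact that $g_i$ maps $[0,1)$ into $[0,1)$ (which follows by factoring $(i-2)x^2+2x-i = ((i-2)x+i)(x-1)$ to see it is negative for $x\in[0,1)$), it establishes the chain $0<x_1^*<\cdots<x_k^*<1$ automatically, provided we start with $x_1^*\in(0,1)$.

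Next I would introduce the continuous map $\Phi:(0,1)\to(0,\overv)$ defined by $\Phi(t) = \tfrac{\overv}{k}\sum_{i=1}^k(1-x_i(t))$, where $x_1(t) = t$ and $x_{i+1}(t) = g_i(x_i(t))$. Each $x_i(t)$ is a composition of strictly increasing continuous functions of $t$, so $\Phi$ is continuous and strictly decreasing on $(0,1)$. The endpoint behavior follows from (i): as $t\to 0^+$ each $x_i(t)\to 0$ so $\Phi(t)\to \overv$; as $t\to 1^-$ each $x_i(t)\to 1$ so $\Phi(t)\to 0$.

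Finally, because $\mu\in(0,\overv)$ (by the standing assumption on the parametric setting), the intermediate value theorem yields a unique $x_1^*\in(0,1)$ with $\Phi(x_1^*)=\mu$, and the corresponding $\xvec^*=(x_1^*,g_1(x_1^*),g_2(g_1(x_1^*)),\dots)$ satisfies all three bullets. I do not expect any serious obstacle: the only place where one must be a little careful is verifying that $g_i$ indeed maps $[0,1)$ strictly into $[0,1)$ so that the iteration stays in the open interval for every $i$, but this is just the algebraic factorization above.
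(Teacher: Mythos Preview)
Your proposal is correct and follows essentially the same approach as the paper: both parameterize the whole vector $\xvec^*$ by its first coordinate via the recurrence of Claim~\ref{clm:two_bidders:disc:nature_induction}, verify that the resulting map is continuous, strictly monotone, and keeps the iterates in $(0,1)$, and then apply the intermediate value theorem to the mean constraint. The only cosmetic differences are that you simplify the recurrence using $r_{k,i}=\tfrac{i-1}{k}\overv$ to get $g_i(x)=\sqrt{(1-\tfrac{2}{i})x^2+\tfrac{2}{i}x}$ and apply the IVT directly to $\Phi$, whereas the paper keeps the unsimplified form and phrases the IVT step as the intersection of the increasing function $\chi_k$ with a decreasing function $\bar{x}_k$; the arguments are equivalent.
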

\begin{proof}
We define a new function $\chi_i:[0,1] \to [0,1]$ by
\begin{align*}
    \chi_1(x) &= x ;\\
    \chi_{i+1}(x) &= \sqrt{\frac{\overv}{k} \cdot \frac{2 }{r_{k,i} +\nicefrac{\overv}{k}}(\chi_i(x) - \chi_i^2(x)) + \chi_i^2(x)} \quad 1\le i \le k-1.
\end{align*}
It is not hard to see that for every $1 \le i \le k-1$ and $x\in[0,1]$
\[x \le \sqrt{\frac{\overv}{k} \cdot \frac{2 }{r_{k,i} +\nicefrac{\overv}{k}}(x - x^2) + x^2} \le 1,\]
and if $x_i\in(0,1)$ then the inequalities are strict.
From that we can see that $0 \le \chi_1(x) \le \dots \le \chi_k(x) \le 1$ for every $x \in [0,1]$, and equality can only happen at $x=0,1$.
It is also not hard to see that $\chi_i(x)$ is continuous and increasing for every $1 \le i \le k$.

We now define
\[\bar{x}_k(x) = 1 - \frac{k \cdot \mu}{\overv} + \sum_{i=1}^{k-1}(1 - \chi_i(x)).\]
Since $\chi_1(x), \dots, \chi_{k-1}(x)$ are all continuous and increasing, then $\bar{x}_k(x)$ is continuous and decreasing.
We can also get the values of $\chi_k(x)$ and $\bar{x}_k(x)$ at $x=0,1$ with ease:
\begin{align*}
    \chi_k(0) &= 0, &\chi_k(1) &= 1; \\
    \bar{x}_k(0) &= k \left(1 - \nicefrac{\mu}{\overv}  \right) > 0, &\bar{x}_k(1) &= 1 - k\cdot \nicefrac{\mu}{\overv} < 1.
\end{align*}
We now get that there is a unique value $x^* \in (0,1)$ such that $\chi_k(x^*) =\bar{x}_k(x^*) $.
We denote $x_i^* = \chi_i(x^*)$ for every $1 \le i \le k$, and $\xvec^* = (x_1^* ,\dots, x_k^*)$.
From Claim \ref{clm:two_bidders:disc:nature_induction}  we get that for every $1 \le i \le k-1$
\[\frac{\partial}{\partial q_i}\calL(\xvec^*, \qvec) = 0.\]
Since $x^*_k = \bar{x}_k(x^*_1)$, we also know that $\nicefrac{\overv}{k}\cdot\sum_{i=1}^k(1 - x^*_i) = \mu$, which concludes the proof.
\end{proof}

\begin{claim} \label{appndx:clm:two_bidders:disc:q_star_existence}
There exists a unique vector $\qvec^* = (q_1^* ,\dots, q_{k}^*)$ that satisfies the following:
\begin{itemize}
    \item $\frac{\partial}{\partial x_i}\calL(\xvec^*, \qvec^*) = 0$ for every $1\le i \le k-1$;
    \item $0 < q_1^* \dots <q_{k-1}^* < 1$.
\end{itemize}
\end{claim}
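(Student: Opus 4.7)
Given $\xvec^*$ from Claim~\ref{appndx:clm:two_bidders:disc:x_star_existence}, the plan is to construct $\qvec^*$ by forward iteration of the seller-optimality recurrence from Claim~\ref{clm:two_bidders:disc:seller_induction}, treating the Lagrange multiplier $\lambda$ as a free parameter and pinning it down via the normalization $q_k^* = 1$ from Section~\ref{subsec:two:discrete}. The condition $\partial \calL_k/\partial x_1(\xvec^*,\qvec^*) = 0$ yields $q_1 = \lambda/(2(1-x_1^*))$, a positive linear-homogeneous function of $\lambda$. For each $i=2,\dots,k$, the corresponding condition from Claim~\ref{clm:two_bidders:disc:seller_induction} can be inverted to express $q_i = (q_{i-1}\, r_{k,i} x_i^* + (\lambda/2)\overv/k)/(r_{k,i} x_i^* + (1-x_i^*)\overv/k)$ as a positive affine function of $q_{i-1}$ and $\lambda$. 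Iterating forward, each $q_i$ takes the form $C_i(\xvec^*)\,\lambda$ with $C_i>0$ depending only on $\xvec^*$, and imposing $q_k^* = 1$ yields a unique $\lambda^* = 1/C_k > 0$, hence a unique $\qvec^*$ with $q_i^* = C_i/C_k$.

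Positivity $q_i^* > 0$ is then immediate from $C_i,\lambda^* > 0$. Unpacking the recurrence one sees that the strict monotonicity $C_{i-1}<C_i$ is equivalent to the auxiliary inequality $C_{i-1} < 1/(2(1-x_i^*))$. I would establish this by induction on $i$: the base case $C_1 = 1/(2(1-x_1^*)) < 1/(2(1-x_2^*))$ uses $x_1^* < x_2^*$ from Claim~\ref{appndx:clm:two_bidders:disc:x_star_existence}; for the inductive step, the affine map $C_{i-1}\mapsto C_i$ has the value $1/(2(1-x_i^*))$ as a fixed point, so the hypothesis $C_{i-1} < 1/(2(1-x_i^*))$ implies $C_i < 1/(2(1-x_i^*))$, and then $C_i < 1/(2(1-x_{i+1}^*))$ follows from $x_i^* < x_{i+1}^*$. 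This yields the full chain $0 < q_1^* < q_2^* < \dots < q_{k-1}^* < q_k^* = 1$.

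The main obstacle is spotting the correct auxiliary inequality $C_{i-1} < 1/(2(1-x_i^*))$: it is strong enough to imply the monotonicity step $C_{i-1}<C_i$ and, crucially, its truth at index $i$ propagates to index $i+1$ thanks to the fixed-point structure of the recurrence together with the strict monotonicity $x_i^* < x_{i+1}^*$ supplied by the previous claim. Once this ``fixed-point bound'' is identified, the induction is essentially routine, and uniqueness of $\qvec^*$ follows immediately from the determinism of the forward iteration together with the normalization $q_k^* = 1$.
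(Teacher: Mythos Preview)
Your proposal is correct and takes a genuinely different route from the paper's own proof. The paper runs the recurrence \emph{backward} from the normalization $q_k^*=1$, treating each $q_i^*$ as a function of $\lambda$, and then locates the correct $\lambda$ by an intermediate-value argument (checking the endpoints $\lambda=0$ and $\lambda=2$); monotonicity of $\qvec^*$ is then established by contradiction. You instead run the recurrence \emph{forward} from $q_1=\lambda/(2(1-x_1^*))$, observe that the forward map is linear-homogeneous in $\lambda$ so that $q_i=C_i\lambda$ with $C_i>0$, and simply read off $\lambda^*=1/C_k$ from $q_k=1$; monotonicity is proved by a direct forward induction exploiting that $1/(2(1-x_i^*))$ is the fixed point of the contracting affine step and that $x_i^*<x_{i+1}^*$. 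Your approach buys a cleaner uniqueness argument (no IVT, no endpoint checks) and a constructive monotonicity proof; the paper's backward viewpoint is perhaps more natural given the boundary datum $q_k=1$, but otherwise your argument is at least as economical.
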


\begin{proof}
Using Claim \ref{clm:two_bidders:disc:seller_induction}, and the knowledge that $q_k = 1$ would be optimal, we can calculate the rest of $q^*$ as a function of $\lambda$, where
\begin{align*}
    q^*_{i-1}(\lambda) &= q_{i}(\lambda)-\frac{\nicefrac{\lambda}{2}-q^*_{i}(\lambda)\left(1-x^*_{i}\right)}{r_{k,i}x^*_{i}}\cdot\nicefrac{\overv}{k},\\
    q^*_k(\lambda) &= 1.
\end{align*}
It is clear that $q^*_i(\lambda)$ is decreasing for every $1 \le i \le k$.
It is also easy to see that for every $2 \le i \le k$
\[q^*_{i-1}(\lambda) < q^*_i(\lambda) \iff  q^*_i(\lambda) < \frac{\nicefrac{\lambda}{2}}{1-x^*_i}.\]
We can see that for $\lambda = 2$
\begin{align*}
    q^*_{k-1}(2) = 1-\frac{x^*_{k}}{r_{k,k}x^*_{k}}\cdot\nicefrac{\overv}{k} = 1 - \frac{\nicefrac{\overv}{k}}{1 - \nicefrac{\overv}{k}} < 1 < \frac{\nicefrac{\lambda}{2}}{1-x^*_{k-1}}.
\end{align*}
With recursion we can easily see that $q_1^*(2) < 1$ as well.
For $\lambda = 0$ we see that
\begin{align*}
    q^*_{k-1}(0) = 1 + \frac{1-x^*_{i}}{r_{k,i}x^*_{i}}\cdot\nicefrac{\overv}{k} > 1 > \frac{\nicefrac{\lambda}{2}}{1-x^*_{k-1}},
\end{align*}
and similarly we get that $q_1^*(0) > 1$.

From Claim~\ref{clm:two_bidders:disc:seller_induction}, we also know that
\[q_1^* = \frac{\nicefrac{\lambda}{2}}{1-x^*_1}.\]
Since this expression is increasing in $\lambda$, and lower than 1 for $\lambda=0$, and higher than 1 for $\lambda =2$, we get that there is a unique solution $\lambda \in (0,2)$ such that
\[q_1^*(\lambda) = \frac{\nicefrac{\lambda}{2}}{1-x^*_1}\].

In order to prove that $q_1^* \dots <q_{k-1}^*$, we assume that there exists $i$ such that $q_{i-1}^* > q_i^*$.
Since $x_i^* > x_{i-1}^*$ we get
\[q_{i-1}^* > q_i^* > \frac{\nicefrac{\lambda}{2}}{1-x_i^*} > \frac{\nicefrac{\lambda}{2}}{1-x_{i-1}^*} \implies q_{i-2}^* > q_{i-1}^* \implies q_1^* > \dots > \frac{\nicefrac{\lambda}{2}}{1-x_i^*} > \frac{\nicefrac{\lambda}{2}}{1-x_1^*},\]
which is a contradiction, since $\frac{\nicefrac{\lambda}{2}}{1-x_1^*} = q_1^*$.
\end{proof}

\subsection{Proof that the Saddle Point is an Equilibrium} %Equilibrium Solves the Maximin Problem}

Claims \ref{appndx:clm:two_bidders:disc:x_star_existence} and \ref{appndx:clm:two_bidders:disc:q_star_existence} only prove that the gradient is zero at $(\xvec^*, q^*)$, but they do not guarantee that it is an equilibrium. The next claim will prove that $\xvec^*$ is the adversary's best response to $\qvec^*$.

\begin{claim} \label{appndx:clm:two_bidders:disc:x_star_is_minimum}
$\xvec = \xvec^*$ from Claim \ref{appndx:clm:two_bidders:disc:x_star_existence} minimizes the Lagrangian $\calL(\xvec,\qvec^*)$, where $\qvec^*$ is given by Claim \ref{appndx:clm:two_bidders:disc:q_star_existence}.
\end{claim}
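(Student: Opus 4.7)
The key observation is that the discrete Lagrangian is \emph{separable} in $\xvec$: from Formula~\eqref{eq:two_bidders:Lagrangian_expresstion_discrete}, for fixed $\qvec=\qvec^*$ we can write
\begin{equation*}
\calL_k(\qvec^*, \xvec) = \sum_{i=1}^k f_i(x_i),
\end{equation*}
where
\begin{equation*}
f_i(y) = (q^*_i - q^*_{i-1}) r_{k,i}(1-y^2) + \tfrac{\overv}{k} q^*_i (1-y)^2 - \lambda \tfrac{\overv}{k}(1-y).
\end{equation*}
Each $f_i$ is a \emph{univariate quadratic} in $y$. So I plan to reduce the multidimensional minimization of $\calL_k(\qvec^*,\cdot)$ to minimizing each $f_i$ separately, and show that $x^*_i$ is the unique global minimizer of $f_i$ on $\mathbb{R}$.

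The coefficient of $y^2$ in $f_i$ equals $\tfrac{\overv}{k} q^*_i - (q^*_i - q^*_{i-1})r_{k,i}$. My plan is to show this coefficient is strictly positive (so $f_i$ is strictly convex) by plugging in the recurrence from Claim~\ref{clm:two_bidders:disc:seller_induction} satisfied by $\qvec^*$: namely, $(q^*_i - q^*_{i-1}) r_{k,i} = \tfrac{\overv}{k} \cdot (\lambda/2 - q^*_i(1-x^*_i))/x^*_i$ for $i \ge 2$. A brief algebraic simplification reduces the leading coefficient to $\tfrac{\overv}{k} (q^*_i - \lambda/2)/x^*_i$. For $i=1$ we have $r_{k,1}=0$, so the coefficient is simply $\tfrac{\overv}{k} q^*_1 > 0$. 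For $i \ge 2$, strict convexity reduces to the inequality $q^*_i > \lambda/2$, which follows because Claim~\ref{appndx:clm:two_bidders:disc:q_star_existence} gives $q^*_1 = (\lambda/2)/(1-x^*_1) > \lambda/2$ (as $x^*_1 > 0$) and the $q^*_i$ are monotone non-decreasing.

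Once strict convexity of each $f_i$ on $\mathbb{R}$ is established, the unique stationary point of $\calL_k(\qvec^*, \cdot)$ is its global minimizer on $\mathbb{R}^k$. By construction of $\qvec^*$ in Claim~\ref{appndx:clm:two_bidders:disc:q_star_existence}, this stationary point is $\xvec^*$. Since $\xvec^*$ lies in $\calF_k$ (by Claim~\ref{appndx:clm:two_bidders:disc:x_star_existence}), it is in particular the minimizer over the feasible set $\calF_k$, which is the content of the claim. Invoking Claim~\ref{clm:lagrangian}, this also shows $\xvec^*$ minimizes $\Rev_{F}(\qvec^*)$ over $\calF_k$, i.e., $F^*_k$ is the adversary's best response to $Q^*_k$.

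The main obstacle is verifying positivity of the leading quadratic coefficient of $f_i$. This is ``just algebra'' once one uses the $\qvec^*$ recurrence, but it crucially relies on the non-trivial property $q^*_i > \lambda/2$, which in turn rests on the base case $q^*_1 = (\lambda/2)/(1-x^*_1)$ and the strict monotonicity of $\qvec^*$ from Claim~\ref{appndx:clm:two_bidders:disc:q_star_existence}. Everything else (separability, reduction to univariate quadratics, uniqueness of the critical point of a strictly convex function) is routine.
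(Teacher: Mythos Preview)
Your proposal is correct and follows essentially the same approach as the paper: both arguments compute the coefficient of $y^2$ in each coordinate (equivalently, the diagonal of the Hessian) and verify it is positive using the recurrence for $\qvec^*$ from Claim~\ref{clm:two_bidders:disc:seller_induction}, reducing to the inequality $q^*_i \ge \lambda/2$. Your write-up is slightly more explicit in two places---you state the separability of $\calL_k$ in $\xvec$ up front (so positivity of the diagonal Hessian entries suffices), and you spell out why $q^*_i > \lambda/2$ via $q^*_1 = (\lambda/2)/(1-x^*_1)$ and monotonicity---whereas the paper leaves both points implicit.
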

\begin{proof}
In order to show that $\xvec^*$ is a minimum, we look on the second derivative, and prove that it is always positive at $q^*$, which implies that $x^*$ is the minimum.
\begin{align*}
     \frac{\partial^2}{\partial x_i^2} \calL_k(\qvec^*, \xvec) =  -2 (q^*_i-q^*_{i-1}) r_{k,i} + 2q^*_i \cdot \nicefrac{\overv}{k}
\end{align*}
It is enough to show that
\begin{align*}
    (q^*_i-q^*_{i-1}) r_{k,i} \le q^*_i \cdot \nicefrac{\overv}{k}.
\end{align*}
For $i=1$ this is true since $r_{k,1} = 0$ (and $q^*_0 = 0$).
For $2 \le i \le k$ this is equivalent to 
\begin{align*}
    \frac{q^*_i - q^*_{i-1}}{\nicefrac{\overv}{k}}\cdot r_{k,i}\le q^*_i.
\end{align*}
From Claim \ref{clm:two_bidders:disc:seller_induction} we get
\begin{align*}
     \frac{q^*_i - q^*_{i-1}}{\nicefrac{\overv}{k}}\cdot r_{k,i} &= \frac{\nicefrac{\lambda}{2} - q^*_i(1-x_i)}{x_i} \\
     &= q^*_i - \frac{\nicefrac{q^*_i - \lambda}{2}}{x_i} \\
     &= q^*_i - \frac{q^*_i - q^*_1(1-x^*_1)}{x_i} \\
     &\le q^*_i.
\end{align*}
This completes the proof.
\end{proof}

Now we are ready to complete the proof of Theorem \ref{thm:two_bidders:disc}.

\begin{proof}[Proof of Theorem \ref{thm:two_bidders:disc}]
From Claims \ref{clm:two_bidders:disc:nature_induction}, \ref{clm:two_bidders:disc:seller_induction}, \ref{appndx:clm:two_bidders:disc:x_star_existence}, \ref{appndx:clm:two_bidders:disc:q_star_existence}
we get the unique vectors $\xvec^*$ and $\qvec^*$ such that their corresponding distributions $F^*_k \in \calF_k$ and $Q^*_k \in \calQ_k$, respectively, satisfy the same formulas as in the theorem statement.

From Claim \ref{appndx:clm:two_bidders:disc:x_star_is_minimum} we get that $F^*_k$ is the adversary's best response to $Q^*_k$.
The only thing left is to prove that $Q^*$ is the seller's best response to $F^*_k$.

From Claim \ref{appndx:clm:two_bidders:disc:x_star_existence} we get that the derivative of the expected revenue in regard to any of the seller's variables is always zero at $F^*_k$, which means that the revenue is constant, as long as there is no mass on $\overv$. Since any mass on $\overv$ will only lower the expected revenue, we get that $Q^*_k$ is a best response to $F^*_k$.
\end{proof}

\section{A Discrete-to-General Formalism} %A Discrete-to-General Toolbox}
\label{appndx:sec:toolbox}

% In the previous section (Section~\ref{sec:discrete}) we solved the reduced zero-sum game -- Problem~\eqref{eq:problem_reduced} -- in the discretized setting parameterized by $k$. Our high-level goal is to build upon this in order to find a solution for the reduced zero-sum game in the general non-discretized setting (i.e., the setting in which both players' strategies can have positive weight on the entire range $[0,\overv]$).
% Notice that Theorem~\ref{thm:two_bidders:disc} does not provide us with closed-form equilibrium functions $Q^*_k$ and $F^*_k$; rather, for every $k$, it gives us a way to calculate the values of these functions at $r_{k,i-1}$ or $r_{k,i+1}$, respectively, based on the value at $r_{k,i}$ (among other arguments). 
% Thus, in a sense, the solution given by this Theorem is ``recursive''.
% Despite the lack of closed-form formulas for the discrete case, we wish to find a closed-form expression for the limit in the general case. This section develops a toolbox for achieving this, which we believe may be of independent interest.

In the proofs of Theorems \ref{thm:two_bidders:main:equilibrium},\ref{thm:multi:main:equilibrium} we have first presented recursive relations for a saddle point in the discrete case (For example Claim~\ref{clm:two_bidders:disc:nature_induction}).
To get ODEs that the non-discrete distributions satisfy, we have defined a function that represent the recursive relations, and derived it at zero, to get the ODEs.
Since we used it to guess an equilibrium, we did not need to provide a formal proof, but this technique is proven formally here (Lemmas \ref{lemma:disc_to_cont_uniform_exists}, \ref{lemma:disc_to_cont_using_derivative}).
There Lemmas are able to provide close form expression when going from a discrete solution to a general one, without needing a close form solution to the discrete case, which we believe may be of independent interest.

In Section~\ref{sub:toolbox-defs} we define formally what we mean by a sequence of recursive solutions to the discretized settings indexed by $k$.
Section~\ref{sub:subsequence-exist} establishes conditions under which such a sequence is guaranteed to have a converging subsequence with a continuous limit.
Section~\ref{sub:limit-guarantees} provides a further guarantee on the limit (under slightly stronger conditions), namely that it must satisfy a certain ordinary differential equation \emph{(ODE)} appearing in Lemmas~\ref{lemma:disc_to_cont_using_derivative}-\ref{lemma:disc_to_cont_using_derivative:open}.

\subsection{Definitions}
\label{sub:toolbox-defs}

\begin{definition}[Recursive discrete+] \label{def:psi_recursive_discrete}
Let $\mathcal{I} \subseteq [0,\overv]$ be an interval and let $\mathcal{C},\mathcal{Z}$ be compact intervals.
Consider a sequence $\{f_k\}_{k\in\N}$ of functions from a domain $\mathcal{X}\subseteq [0,\overv]$ to a range $\mathcal{Y}$, where $f_k$ is a step function with jumps at $\rvec_k\cap \mathcal{X}$ for every $k$. 
Let $I^+_k = \left\{i \mid 1\le i \le k \text{ and } r_{k,i},r_{k,i+1} \in \mathcal{X} \right\}$ be the set of all indices such that $f_k(\cdot)$ is defined on both $r_{k,i}$ and $r_{k,i+1}$. 
The sequence of functions $\{f_k\}$ is called ``\textbf{recursive discrete+}'' if for every $k$ and $i\in I_k^+$ we can express $f_k\left(r_{k,i+1}\right)$ as follows:
\begin{equation}
    f_k\left(r_{k,i+1}\right) = \psi \left(\frac{\overv}{k},r_{k,i},f_{k}\left(r_{k,i}\right),g_{k}\left(r_{k,i}\right),a_k\right), \label{eq:def:psi_recursive_discrete+}
\end{equation}
where $\{a_k\} \subset \mathcal{C}$ is some sequence, $\{g_k\}$ is a sequence of functions from $\mathcal{X}$ to $\mathcal{Z}$ that uniformly converge to a continuous function, 
and $\psi(\delta, x, y, z, a): \mathcal{I} \times \mathcal{X} \times \mathcal{Y} \times \mathcal{Z} \times \mathcal{C} \to \mathcal{Y}$ is a continuous function such that 
%(shared by all the functions in the sequence). 
for every $x,y,z,a$,
\begin{equation}
    \psi(0,x,y,z,a) = y. \label{eq:def:psi_recursive_discrete_delta_0}
\end{equation}
\end{definition}

Similarly to the definition of ``recursive discrete+'' we say that a sequence $\{f_k\}_{k\in\N}$ is ``\textbf{recursive discrete-}'', if it satisfies the same conditions, but instead of condition \eqref{eq:def:psi_recursive_discrete+} it satisfies
\begin{equation}
    f_k\left(r_{k,i-1}\right) = \psi \left(\frac{\overv}{k},r_{k,i},f_{k}\left(r_{k,i}\right),g_{k}\left(r_{k,i}\right),a_k\right).
    \label{eq:def:psi_recursive_discrete-}
\end{equation}
for every $k$ and $i\in I_k^-$, where $I^-_k = \left\{i \mid 1\le i \le k \text{ and } r_{k,i},r_{k,i-1} \in \mathcal{X} \right\}$ the set of all indices where $f_k$ is defined on both $r_{k,i}$ and $r_{k,i-1}$.
If a sequence $\{f_k\}_{k\in\N}$ is either recursive discrete+ or recursive discrete- we simply say it is  "\textbf{recursive discrete}".

The last two arguments of $\psi$ are optional, and $\psi$ can use only the first three arguments; This is the case for $\{F^*_k\}$, where the value of $F^*_k(r_{k,i+1})$ is only dependent on $\delta = \frac{\overv}{k}, x = r_{k,i}, y=F^*_k(r_{k,i})$.
Those arguments are used for $\{Q^*_k\}$, where the value at $r_{k,i-1}$ is dependent on $\frac{\overv}{k}, r_{k,i}, Q^*_k(r_{k,i})$ and on $F^*_k(r_{k,i})$ and $Q^*_k(0)(1-F^*_k(0))$ as well.

\subsection{Existence of a Uniformly-Convergent Subsequence with a Continuous Limit}
\label{sub:subsequence-exist}
In this section we prove that a recursive discrete sequence, as defined above, is guaranteed to contain uniformly converges subsequences, that converge into a continuous function, when the corresponding function  $\psi$ is Lipschitz.

\begin{lemma} \label{lemma:disc_to_cont_uniform_exists}
Let $\mathcal{I}, \mathcal{X}, \mathcal{Y}, \mathcal{Z}, \mathcal{C}$ be intervals where $0 \in \mathcal{I}$, $\mathcal{I}, \mathcal{X} \subseteq [0,\overv]$, and $\mathcal{Y}, \mathcal{Z}, \mathcal{C}$ are compact.
Let $\left\{ f_{k}\right\}$ be a recursive discrete sequence for a suitable function $\psi: \mathcal{I} \times \dots \times \mathcal{C} \to \mathcal{Y}$, and corresponding sequences $\{g_k\}$ and $\{a_k\}$, where $\{g_k\}$ converges uniformly to a continuous function $g:\mathcal{X} \to \mathcal{Z}$.
Denote by $S$ the space of functions from $\mathcal{X}$ to $\mathcal{Y}$ endowed by the uniform metric $d_S(h_1,h_2) = \sup_{x\in \mathcal{X}} \abs*{h_1(x) - h_2(x)}$.
If $\psi\left(\delta,x,y,z,a\right)$ is Lipschitz in $\delta$, then the sequence $\left\{ f_{k}\right\}$ is relatively compact in $S$, and any limit is continuous.
\end{lemma}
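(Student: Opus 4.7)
The plan is to verify the hypotheses of a step-function version of the Arzela-Ascoli theorem: uniform boundedness, which is immediate since $f_k(\mathcal{X})\subseteq\mathcal{Y}$ and $\mathcal{Y}$ is compact, together with an asymptotic equicontinuity property from which both the extraction of a uniformly convergent subsequence and the continuity of its limit will follow.

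The key step is to show that the jump heights of $f_k$ are uniformly $O(1/k)$. Let $L$ denote the Lipschitz constant of $\psi$ in its first argument, uniform in the remaining arguments (the natural reading of ``Lipschitz in $\delta$'' given that the other arguments lie in $\mathcal{X}\subseteq[0,\overv]$ and in the compact sets $\mathcal{Y},\mathcal{Z},\mathcal{C}$, with $g_k(r_{k,i})\in\mathcal{Z}$ and $a_k\in\mathcal{C}$). By the recursion together with the identity $\psi(0,x,y,z,a)=y$, for each adjacent grid pair in $\mathcal{X}$,
\[
|f_k(r_{k,i+1})-f_k(r_{k,i})| \;=\; |\psi(\overv/k,\cdot)-\psi(0,\cdot)| \;\le\; L\overv/k,
\]
and the identical bound holds in the ``recursive discrete$-$'' case. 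Telescoping across consecutive grid cells yields the following near-equicontinuity: for every $\varepsilon>0$, choosing $\eta<\varepsilon/(2L)$ and $K>4L\overv/\varepsilon$ ensures $|f_k(x)-f_k(y)|<\varepsilon$ whenever $k\ge K$, $x,y\in\mathcal{X}$, and $|x-y|<\eta$, since such $x,y$ span at most $\lceil k\eta/\overv\rceil+2$ grid cells, contributing total variation at most $L\eta+2L\overv/k$.

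The remaining steps are then standard. To extract a uniformly convergent subsequence, fix a countable dense $D\subseteq\mathcal{X}$; compactness of $\mathcal{Y}$ together with a diagonal argument produces a subsequence $\{f_{k_n}\}$ converging pointwise on $D$; near-equicontinuity combined with a finite $\eta$-net covering the bounded set $\mathcal{X}$ upgrades pointwise convergence on $D$ to uniform Cauchy-ness on $\mathcal{X}$, hence uniform convergence to some $f\in S$. Continuity of $f$ then follows from the same near-equicontinuity via the triangle inequality $|f(x)-f(y)|\le|f(x)-f_{k_n}(x)|+|f_{k_n}(x)-f_{k_n}(y)|+|f_{k_n}(y)-f(y)|$: the outer two terms are controlled by uniform convergence and the middle term by near-equicontinuity, for $|x-y|$ small and $n$ large.

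The main obstacle I anticipate is only a technical one: the $f_k$ are themselves discontinuous, so classical Arzela-Ascoli does not apply as a black box. This is circumvented by the near-equicontinuity above, with care taken in the telescoping count to allow for the (at most two) extra grid cells introduced by the positions of $x,y$ relative to the $k$-th grid. Note also that the hypotheses on $\{g_k\}$ (uniform convergence to a continuous function) and $\{a_k\}$ (values in compact $\mathcal{C}$) enter only through the uniformity of the Lipschitz constant $L$; the stronger convergence of $\{g_k\}$ is not needed here and will presumably be used in the companion lemma identifying the limit as a solution of the associated ODE.
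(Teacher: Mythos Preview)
Your proposal is correct and follows essentially the same approach as the paper: bound each jump of $f_k$ by $L\overv/k$ via the Lipschitz hypothesis together with $\psi(0,\cdot)=y$, telescope to obtain the asymptotic equicontinuity estimate $|f_k(x_1)-f_k(x_2)|\le L|x_1-x_2|+O(\overv/k)$, and conclude by an Arzel\`a--Ascoli-type argument. The only difference is cosmetic: the paper invokes a generalized Arzel\`a--Ascoli theorem for discontinuous functions as a black box (and treats non-compact $\mathcal{X}$ by passing to the closure), whereas you spell out the diagonal extraction and the upgrade to uniform convergence by hand using a finite $\eta$-net in a countable dense set.
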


\begin{proof}
We assume that $\{f_k\}$ is recursive discrete+, and the proof for recursive discrete- is similar.
We define $I_k \coloneqq I_k^+$ where $I_k^+$ is defined the same way as in Definition \ref{def:psi_recursive_discrete}.
$\psi$ is $c$-Lipschitz in $\delta$, which means that for every $k>1$ such that $\frac{\overv}{k} \in \mathcal{I}$ every and $i \in I_k$
\begin{align*}
    & \abs*{f_{k}\left(r_{k,i+1}\right) - f_{k}\left(r_{k,i}\right)} \\
    = & \abs*{\psi\left(\frac{\overv}{k},r_{k,i},f_{k}\left(r_{k,i}\right),g_{k}\left(r_{k,i}\right),a_{k}\right)-\psi\left(0,r_{k,i},f_{k}\left(r_{k,i}\right),g_{k}\left(r_{k,i}\right),a_{k}\right)} \\
    \le & \frac{\overv}{k}\cdot c.
\end{align*}
This is an upper bound for the height of the jumps in $f_k$. Since in every interval $[x_1, x_2]$ there is no more than $\left\lceil \frac{k}{\overv} \cdot \abs*{x_1 - x_2} \right\rceil \le \frac{k}{\overv} \cdot \abs*{x_1 - x_2} + 1$ jumps, we get that for every $x_1, x_2 \in [0, \overv]$
\[\abs*{f_{k}\left(x_1\right) - f_{k}\left(x_2\right)} \le c\abs*{x_1 - x_2} + \frac{\overv}{k}\cdot c.\]
If $\mathcal{X}$ is closed, and therefore compact, we use a generalization for non-continuous functions of the Arzelà–Ascoli theorem~\cite[see, e.g.,][]{WikiArzelaAscoli21}. We conclude that $\{f_k\}$ is relatively compact in $S$, and any limit is continuous.

If $\mathcal{X}$ is not compact, extend the definition of $f_k$ for each $k$ to $\overline{\mathcal{X}}$, the closure of $\mathcal{X}$. It is compact since $\overline{\mathcal{X}} \subseteq [0,\overv]$.
We define $\overline{f}_k$ to be identical to $f_k$ on $\mathcal{X}$.
If a point $x' \in \overline{\mathcal{X}} \setminus \mathcal{X}$, we simply define $\overline{f}_k(x') = \lim_{x \to x'}f_k(x)$, which exists since $f_k$ is a step function and hence has no more than $k+1$ points of discontinuity.
The inequality 
\[\abs*{\overline{f}_k\left(x_1\right) - \overline{f}_k\left(x_2\right)} \le c\abs*{x_1 - x_2} + \frac{\overv}{k}\cdot c\]
is still true for every $x_1, x_2 \in [0, \overv]$, which allows us to use the same generalization for non-continuous functions for Arzelà–Ascoli theorem.
Since the sequence $\{\overline{f}_k\}$ is relatively compact in $S$, so is the sequence $\{f_k\}$.
\end{proof}

\subsection{The Limit Must Satisfy an ODE}
\label{sub:limit-guarantees}

In the previous section, Lemma \ref{lemma:disc_to_cont_uniform_exists} gives the existence of a uniformly convergent subsequence with a continuous limit. Under slightly stronger assumptions, we now prove a property that any limit must satisfy. %(Lemma~\ref{lemma:disc_to_cont_using_derivative}). %In some cases, we can find other properties, and only one function would satisfy them, which will prove convergence of the sequence.
One way to use the resulting stronger lemma (Lemma~\ref{lemma:disc_to_cont_using_derivative}) is to combine the property it establishes with other known properties that any limit must satisfy, such that together this combination of properties is only satisfied by one function; this will prove convergence of the sequence. 
We also show an almost equivalent version (Lemma~\ref{lemma:disc_to_cont_using_derivative:open}) with less limitations.

\begin{lemma} \label{lemma:disc_to_cont_using_derivative}
Let $\mathcal{I}, \mathcal{X}, \mathcal{Y}, \mathcal{Z}, \mathcal{C}$ be compact intervals where $0 \in \mathcal{I}$, and $\mathcal{I}, \mathcal{X} \subseteq [0,\overv]$.
Let $\left\{ f_{k}\right\}$ be a recursive discrete sequence for a suitable function $\psi: \mathcal{I} \times \dots \times \mathcal{C} \to \mathcal{Y}$, and corresponding sequences $\{g_k\}$ and $\{a_k\}$ where $\{g_k\}$ converges uniformly to a continuous function $g:\mathcal{X} \to \mathcal{Z}$.
Denote by $S$ the space of functions from $\mathcal{X}$ to $\mathcal{Y}$ endowed by the uniform metric $d_S(h_1,h_2) = \sup_{x\in \mathcal{X}} \abs*{h_1(x) - h_2(x)}$.

If $\psi\left(\delta,x,y,z,a\right)$ is differentiable in $\delta$ and $\frac{\partial}{\partial \delta}\psi\left(\delta,x,y,z,a\right)$ is continuous, then the sequence $\left\{ f_{k}\right\}$ is relatively compact in $S$, and every limit $f$ is differentiable and satisfies
\[f'(r) = \begin{cases}
    \frac{\partial \psi\left(\delta,x,y,z,a\right)}{\partial\delta} \left(0,r,f(r),g(r),a\right) & \{f_k\}\text{ is recursive discrete+} \\
    -\frac{\partial \psi\left(\delta,x,y,z,a\right)}{\partial\delta} \left(0,r,f(r),g(r),a\right) & \{f_k\}\text{ is recursive discrete-}
\end{cases},\]
where $a$ is some limit of the subsequence $\{a_{k_j}\}$ and $f_{k_j}$ is the subsequence the converges into~$f$.
\end{lemma}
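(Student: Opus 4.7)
The plan is to first bootstrap Lemma~\ref{lemma:disc_to_cont_uniform_exists} for the relative-compactness claim, then identify the limit ODE by a telescoping argument that turns the discrete recursion into an integral equation, from which differentiability follows.

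First I would observe that the hypotheses of this lemma imply those of Lemma~\ref{lemma:disc_to_cont_uniform_exists}. Indeed, $\frac{\partial\psi}{\partial\delta}$ is continuous on the compact box $\mathcal{I}\times\mathcal{X}\times\mathcal{Y}\times\mathcal{Z}\times\mathcal{C}$, hence bounded by some constant $c$, and the mean-value theorem then gives that $\psi$ is $c$-Lipschitz in its first argument uniformly in the other arguments. Therefore Lemma~\ref{lemma:disc_to_cont_uniform_exists} applies and yields relative compactness of $\{f_k\}$ in $S$, with every limit continuous. It remains to identify the derivative of any such limit. Fix a uniformly convergent subsequence $f_{k_j}\to f$; by compactness of $\mathcal{C}$, we may pass to a further subsequence and assume $a_{k_j}\to a$.

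Next I would derive an integral identity for the limit $f$. Assume $\{f_k\}$ is recursive discrete+ (the minus case is symmetric). Fix $r<r'$ in $\mathcal{X}$ and, for each $k_j$ large enough that $\tfrac{\overv}{k_j}<r'-r$, let $i_j,\,i'_j\in I_{k_j}^+$ be the largest indices with $r_{k_j,i_j}\le r$ and $r_{k_j,i'_j}\le r'$. Telescoping over the recursion~\eqref{eq:def:psi_recursive_discrete+} and using~\eqref{eq:def:psi_recursive_discrete_delta_0} together with the mean-value theorem in $\delta$, for every $i$ there is some $\tilde\delta_{k_j,i}\in(0,\tfrac{\overv}{k_j})$ with
\begin{equation*}
f_{k_j}(r_{k_j,i+1})-f_{k_j}(r_{k_j,i}) = \tfrac{\overv}{k_j}\cdot \tfrac{\partial\psi}{\partial\delta}\bigl(\tilde\delta_{k_j,i},r_{k_j,i},f_{k_j}(r_{k_j,i}),g_{k_j}(r_{k_j,i}),a_{k_j}\bigr).
\end{equation*}
Summing over $i_j\le i< i'_j$ expresses $f_{k_j}(r_{k_j,i'_j})-f_{k_j}(r_{k_j,i_j})$ as a Riemann sum. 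Because $\tfrac{\partial\psi}{\partial\delta}$ is uniformly continuous on the compact domain, $f_{k_j}\to f$ uniformly (and the sampling points lie within $\tfrac{\overv}{k_j}$ of the corresponding grid points), $g_{k_j}\to g$ uniformly with $g$ continuous, and $a_{k_j}\to a$, the integrand $\tfrac{\partial\psi}{\partial\delta}(\tilde\delta_{k_j,i},r_{k_j,i},f_{k_j}(r_{k_j,i}),g_{k_j}(r_{k_j,i}),a_{k_j})$ converges uniformly in $i$ to $\tfrac{\partial\psi}{\partial\delta}(0,s,f(s),g(s),a)$ along the sampled grid. Passing to the limit $j\to\infty$ in both sides therefore yields
\begin{equation*}
f(r')-f(r)=\int_r^{r'}\tfrac{\partial\psi}{\partial\delta}\bigl(0,s,f(s),g(s),a\bigr)\,ds.
\end{equation*}

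Finally, the integrand is a continuous function of $s$ (composition of continuous maps with the continuous functions $f$ and $g$), so the fundamental theorem of calculus gives that $f$ is differentiable with $f'(r)=\tfrac{\partial\psi}{\partial\delta}(0,r,f(r),g(r),a)$, as required. The recursive discrete- case is handled identically except that the telescoping runs in the opposite direction, producing the minus sign in the statement.

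The main obstacle I anticipate is the uniformity needed when passing to the limit in the Riemann sum: one must argue that replacing $\tilde\delta_{k_j,i}$ by $0$, and replacing $(r_{k_j,i},f_{k_j}(r_{k_j,i}),g_{k_j}(r_{k_j,i}),a_{k_j})$ by $(s,f(s),g(s),a)$, produces an error that is uniformly $o(1)$ across all summands. This is where the joint compactness of $\mathcal{I},\mathcal{X},\mathcal{Y},\mathcal{Z},\mathcal{C}$ (and hence uniform continuity of $\tfrac{\partial\psi}{\partial\delta}$) together with the uniform convergences $f_{k_j}\to f$ and $g_{k_j}\to g$ do the real work; without compactness one would need to localize and argue more carefully, which is presumably the content of the companion Lemma~\ref{lemma:disc_to_cont_using_derivative:open}.
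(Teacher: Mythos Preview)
Your proposal is correct and follows essentially the same approach as the paper: both deduce the Lipschitz bound from continuity of $\partial\psi/\partial\delta$ on the compact box to invoke Lemma~\ref{lemma:disc_to_cont_uniform_exists}, then apply the mean-value theorem in $\delta$ to each recursive step and use uniform continuity of $\partial\psi/\partial\delta$ together with the uniform convergences $f_{k_j}\to f$, $g_{k_j}\to g$, $a_{k_j}\to a$ to identify the limit. The only presentational difference is that the paper passes through piecewise-linear interpolants $\tilde f_k$ and shows their derivatives converge uniformly (then cites the integral representation in a footnote), whereas you telescope directly into a Riemann sum and obtain the integral equation $f(r')-f(r)=\int_r^{r'}\partial_\delta\psi(0,s,f(s),g(s),a)\,ds$, which is arguably a cleaner route to the same conclusion.
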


\begin{proof}
We prove this for the case of recursive discrete+. The case of recursive discrete- is very similar.
Since the derivative of $\psi$ in $\delta$ is continuous on a compact space, $\psi$ is Lipschitz in $\delta$, and from Lemma \ref{lemma:disc_to_cont_uniform_exists} we get that $\{f_k\}$ is relatively compact in $S$, and any limit is continuous.
For every $k$ we define the function $\tilde{f}:\mathcal{X} \to \mathcal{Y} $ to be equal to $f_k$ on $\rvec_k$, but $\tilde{f}_k$ is continuous, and linear on $[r_{k,i}, r_{k,i+1}]$ for every $i \in I_k$.
It is easy to see that for every subsequence $\left\{f_{k_j}\right\}$ that converges uniformly to a continuous function $f$, the subsequence $\left\{\tilde{f}_{k_j}\right\}$ also converges uniformly to $f$.
For every $k$  such that $\frac{\overv}{k} \in \mathcal{I}$, and every $i \in I_k$, and every $r\in [r_{k,i}, r_{k,i+1})$
\begin{align*}
    \tilde{f}_k'(r) &= \frac{f_{k}\left(r_{k,i+1}\right) - f_{k}\left(r_{k,i}\right)}{\frac{\overv}{k}} \\
    &=\frac{\psi\left(\frac{\overv}{k},r_{k,i},f_{k}\left(r_{k,i}\right),g_{k}\left(r_{k,i}\right),a_{k}\right)-\psi\left(0,r_{k,i},f_{k}\left(r_{k,i}\right),g_{k}\left(r_{k,i}\right),a_{k}\right)}{\frac{\overv}{k}} \\
    &= \frac{\partial \psi\left(\delta,x,y,z,a\right)}{\partial \delta} \left(\delta',r_{k,i},f_{k}\left(r_{k,i}\right),g_{k}\left(r_{k,i}\right),a_{k}\right),
\end{align*}
where $\delta' \in \left[0,\frac{\overv}{k}\right]$ and $\tilde{f}_k'(r)$ is defined to be the right derivative if $r=r_{k,i,k}$.
The derivative $\frac{\partial \psi\left(\delta,x,y,z,a\right)}{\partial \delta}$ is continuous on a compact space, and thus uniformly continuous.
Let $\{k_j\}$ be a sequence such that $\{f_{k_j}\}$ converges uniformly to a continuous function $f$ and $\{a_{k_j}\}$ converges to some $a\in \mathcal{C}$.
We get that for every $\varepsilon > 0$, for any large enough $j$, and $r\in \mathcal{X}$.
\[\abs*{\tilde{f}_{k_j}'(r) - \frac{\partial \psi\left(\delta,x,y,z,a\right)}{\partial \delta} \left(0,r,f\left(r\right),g\left(r\right),a\right)} < \varepsilon.\]
Therefore, the sequence $\left\{\tilde{f}_{k_j}' \right\}$ is uniformly convergent, and it converges to $f'$.%
\footnote{This is known when the functions of the sequence are differentiable on the entire domain. While this is not the case here, the functions are continuous and piecewise linear, which suffices. This is true since for every continuous and piecewise linear function $g$, the equality $g(x) =  g(x_0) + \int_{x_0}^x g'(t)dt$ holds, and since $\lim_{k \to \infty}\int_{\mathcal{X}'} \tilde{f}_k'(t)dt = \int_{\mathcal{X}'} \lim_{k \to \infty}\tilde{f}_k'(t)dt$ for every interval $\mathcal{X}' \subseteq \mathcal{X}$, we conclude that $\lim_{k \to \infty}\tilde{f}_k'(x) = f'(x)$.} 
We conclude that for every $r \in \mathcal{X}$
\[f'(r) = \lim_{j\to \infty} \tilde{f}_{k_j}'(r) = \frac{\partial \psi\left(\delta,x,y,z,a\right)}{\partial \delta} \left(0,r,f\left(r\right),g\left(r\right),a\right).\]
\end{proof}

Lemma \ref{lemma:disc_to_cont_uniform_exists} requires $\psi$ to be Lipschitz over all the domain, and Lemma \ref{lemma:disc_to_cont_using_derivative} only works when we limit the domain of the sequence $\{f_k\}$ to a compact interval.
By representing a general interval as a countable infinite union of compact intervals, and applying the lemmas to each of those intervals, we can get a similar result with less limitations, as follows:

\begin{lemma}\label{lemma:disc_to_cont_using_derivative:open}
Let $\mathcal{I}, \mathcal{X}, \mathcal{Y}, \mathcal{Z}, \mathcal{C}$ be any intervals where $0 \in \mathcal{I}$, $\mathcal{I}, \mathcal{X} \subseteq [0,\overv]$, and $\mathcal{C}$ is compact.
Let $\left\{ f_{k}\right\}$ be a recursive discrete sequence for a suitable function $\psi: \mathcal{I} \times \dots \times \mathcal{C} \to \mathcal{Y}$, and corresponding sequences $\{g_k\}$ and $\{a_k\}$ where $\{g_k\}$ converges uniformly to a continuous function $g:\mathcal{X} \to \mathcal{Z}$.

If for every compact interval $\tilde{\mathcal{X}} \subseteq \mathcal{X}$, there are compact intervals $\tilde{\mathcal{Y}} \subseteq \mathcal{Y}$ and $\tilde{\mathcal{Z}} \subseteq \mathcal{Z}$ such that for every large enough $k$ $f_k\left( \tilde{\mathcal{X}} \right) \subseteq \tilde{\mathcal{Y}}$ and $g_k\left( \tilde{\mathcal{X}} \right) \subseteq \tilde{\mathcal{Z}}$, then the following is true:
\begin{enumerate}
    \item If $\psi\left(\delta,x,y,z,a\right)$ is locally Lipschitz in $\delta$, then any subsequence of $\left\{ f_{k}\right\}$ has a locally uniform convergent subsequence, and any limit is continuous. \label{lemma:disc_to_cont_using_derivative:lipschitz:open}
    \item If $\psi\left(\delta,x,y,z,a\right)$ is differentiable in $\delta$ and $\frac{\partial}{\partial \delta}\psi\left(\delta,x,y,z,a\right)$ is continuous, then then any subsequence of $\left\{ f_{k}\right\}$ has a a locally uniform convergent subsequence, and every limit $f$ is differentiable and satisfies
    \[f'(r) = \begin{cases}
        \frac{\partial \psi\left(\delta,x,y,z,a\right)}{\partial\delta} \left(0,r,f(r),g(r),a\right) & \{f_k\}\text{ is recursive discrete+} \\
        -\frac{\partial \psi\left(\delta,x,y,z,a\right)}{\partial\delta} \left(0,r,f(r),g(r),a\right) & \{f_k\}\text{ is recursive discrete-}
    \end{cases},\]
    where $a$ is some limit of the subsequence $\{a_{k_j}\}$ and $f_{k_j}$ is the subsequence the converges into~$f$. \label{lemma:disc_to_cont_using_derivative:continuous_derivative:open}
\end{enumerate}
\end{lemma}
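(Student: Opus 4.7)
The plan is to reduce to the compact case already handled by Lemmas~\ref{lemma:disc_to_cont_uniform_exists} and~\ref{lemma:disc_to_cont_using_derivative} via an exhaustion of $\mathcal{X}$ by compact sub-intervals, combined with a Cantor diagonalization to produce a subsequence that converges locally uniformly on all of $\mathcal{X}$. This is precisely the strategy hinted at in the paragraph preceding the statement.

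First, I would write $\mathcal{X} = \bigcup_{n=1}^{\infty} \tilde{\mathcal{X}}_n$ as a countable nested union of compact sub-intervals with $\tilde{\mathcal{X}}_n \subseteq \tilde{\mathcal{X}}_{n+1}$, which is possible since any sub-interval of $[0,\overv]$ can be exhausted in this way. By the hypothesis of the lemma, for each $n$ there exist compact $\tilde{\mathcal{Y}}_n \subseteq \mathcal{Y}$ and $\tilde{\mathcal{Z}}_n \subseteq \mathcal{Z}$ with $f_k(\tilde{\mathcal{X}}_n) \subseteq \tilde{\mathcal{Y}}_n$ and $g_k(\tilde{\mathcal{X}}_n) \subseteq \tilde{\mathcal{Z}}_n$ for all sufficiently large $k$. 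I would also fix a compact $\tilde{\mathcal{I}}_n \subseteq \mathcal{I}$ containing $0$ (so that $\overv/k \in \tilde{\mathcal{I}}_n$ for all sufficiently large $k$); since $\mathcal{C}$ is already compact, the restriction of $\psi$ to $\tilde{\mathcal{I}}_n \times \tilde{\mathcal{X}}_n \times \tilde{\mathcal{Y}}_n \times \tilde{\mathcal{Z}}_n \times \mathcal{C}$ lives on a compact domain. On this compact domain, local Lipschitz-in-$\delta$ upgrades to genuine Lipschitz-in-$\delta$ (a standard covering argument), and continuity of $\partial_\delta \psi$ upgrades to uniform continuity. Hence, for each fixed $n$, the restricted sequence $\{f_k|_{\tilde{\mathcal{X}}_n}\}$ fits the hypotheses of Lemma~\ref{lemma:disc_to_cont_uniform_exists} in part~\ref{lemma:disc_to_cont_using_derivative:lipschitz:open}, and the hypotheses of Lemma~\ref{lemma:disc_to_cont_using_derivative} in part~\ref{lemma:disc_to_cont_using_derivative:continuous_derivative:open}.

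Given an arbitrary subsequence of $\{f_k\}$, I would then perform a standard Cantor diagonal extraction. Start by using relative compactness on $\tilde{\mathcal{X}}_1$ to extract a sub-subsequence that converges uniformly on $\tilde{\mathcal{X}}_1$, and simultaneously extract a further sub-subsequence along which the elements $a_{k_j}$ converge in the compact space $\mathcal{C}$ to some limit $a$; then successively refine the indexing to obtain uniform convergence on $\tilde{\mathcal{X}}_n$ for each $n$, and take the diagonal. The resulting subsequence converges locally uniformly on $\mathcal{X}$ to some function $f$, with the same $a$ witnessing the conclusion on every $\tilde{\mathcal{X}}_n$ because the sub-subsequences are nested. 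The limit $f$ is continuous on each $\tilde{\mathcal{X}}_n$ (part~\ref{lemma:disc_to_cont_using_derivative:lipschitz:open}), respectively differentiable on each $\tilde{\mathcal{X}}_n$ and satisfying the stated ODE with parameter $a$ (part~\ref{lemma:disc_to_cont_using_derivative:continuous_derivative:open}), by direct application of the earlier lemmas to the restricted sequence. Since the $\tilde{\mathcal{X}}_n$ exhaust $\mathcal{X}$ and these are local properties, they automatically transfer to all of $\mathcal{X}$.

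The main obstacle I anticipate is not any deep mathematical issue but rather bookkeeping: carefully formulating the exhaustion so that the compact intervals $\tilde{\mathcal{Y}}_n, \tilde{\mathcal{Z}}_n, \tilde{\mathcal{I}}_n$ are simultaneously chosen in a compatible way, and making sure the diagonal extraction preserves both uniform convergence of $f_{k_j}$ and convergence of $a_{k_j}$ to a single $a$ that is then valid on all sub-intervals. Once this setup is in place, the conclusions follow immediately from the previously established compact-case lemmas, with no new analytic content needed.
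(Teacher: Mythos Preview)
Your proposal is correct and follows essentially the same approach as the paper: exhaust $\mathcal{X}$ by a nested sequence of compact sub-intervals, apply the compact-case Lemmas~\ref{lemma:disc_to_cont_uniform_exists} and~\ref{lemma:disc_to_cont_using_derivative} on each, and run a Cantor diagonal extraction to obtain a single subsequence converging locally uniformly on all of $\mathcal{X}$. Your version is slightly more careful in explicitly extracting a convergent $a_{k_j}\to a$ once at the outset so that the same $a$ witnesses the ODE on every sub-interval, a point the paper leaves implicit.
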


\begin{proof}
Any interval can be represented as an increasing union of compact intervals, so there exist $J_1 \subseteq J_2 \subseteq \dots $ such that they are all compact and $\bigcup_{i=1}^\infty \mathcal{X}$. By applying Lemma \ref{lemma:disc_to_cont_uniform_exists} to all of those, with suitable $\tilde{\mathcal{Y}}, \tilde{\mathcal{Z}}$ intervals, we get that for all of those intervals, any subsequence has a convergent subsequence, any locally Lipschitz function on a compact set is Lipschitz.

For $J_1$ we take a convergent subsequence $\{f_{1,k}\}_{k\in\N}$, and for every $i\ge 1$ we define the subsequence $\{f_{i+1,k}\}_{k\in\N}$ to be a subsequence of $\{f_{i,k}\}_{k\in\N}$ that is convergent on $J_{i+1}$, and thus it converges on $J_{i+1}$ to a continuous function, and agrees with the previous subsequences on $J_1, \dots, J_i$ respectively.

%The subsequence $f_{1,1},f_{2,2},\dots$ is convergent on $\mathcal{X}$, and in the case \ref{lemma:disc_to_cont_using_derivative:continuous_derivative:open} of continuous derivative, we can do the same thing, but with Lemma \ref{lemma:disc_to_cont_using_derivative}, and the same equality is true on the entire interval.
We get that the subsequence $\left\{f_{k,k}\right\}_{k\in\N}$ is convergent on $\mathcal{X}$, and converges to a continuous function. This completes the proof for Case~\eqref{lemma:disc_to_cont_using_derivative:lipschitz:open} of the lemma.
We do the same thing for Case~\eqref{lemma:disc_to_cont_using_derivative:continuous_derivative:open}, but apply Lemma~\ref{lemma:disc_to_cont_using_derivative} on each interval, to get a convergent subsequence where the limit is differentiable and satisfies the ODE presented by Lemma~\ref{lemma:disc_to_cont_using_derivative}. This completes the proof for Case~\eqref{lemma:disc_to_cont_using_derivative:continuous_derivative:open} of the lemma.
\end{proof}

\begin{remark} \label{remark:lemmas_work_on_sub_sequence}
Lemmas~\ref{lemma:disc_to_cont_uniform_exists}-\ref{lemma:disc_to_cont_using_derivative:open} are also true if we apply them to a subsequence $\left\{ f_{k_j} \right\}$ and each of the conditions need only be true for every $k_j$. %for every $j$.
\end{remark}

\section{Alternative Proof for Two Bidders}
\label{appndx:sec:cont}

We will show here how the get a direct proof for the Theorem~\ref{thm:two_bidders:main:equilibrium} (and by extension Theorem~\ref{thm:two_bidders:main:among_all_truthful} as well).

\vspace{0.05in}
\noindent{\bf Proof overview.}
Consider the functions $Q^*_k$ and $F^*_k$ we get from Theorem \ref{thm:two_bidders:disc} for each $k$.
In Section~\ref{sub:converge} we will prove that the sequences $\left\{Q^*_k\right\}$ and $\left\{F^*_k\right\}$ are convergent, and converge to $Q^*$ and $F^*$, respectively. This is the most technically-challenging part of the proof, and is made possible by our toolbox from Section~\ref{appndx:sec:toolbox}. 
% The distribution $F^*$ yields a virtual value of 0 everywhere on the support outside of $\overv$. Which mean that the expected revenue of the seller would be the same in any mechanism that always sell the item if at least one of the bidders have a valuation of $\overv$, or a lower expected revenue if it is not the case.
In Section~\ref{sub:equil} we will prove that $(Q^*, F^*)$ is an equilibrium: For one side, we will show that any distribution of the bidders can be approximated well enough using $\rvec_k$ for sufficiently large $k$. For the other side, we use Claim~\ref{clm:two:Q_is_optimal}, as it only utilize the indifference of the seller over $[0, \overv]$. We conclude that any other distribution cannot gain a lower expected revenue when the seller uses $Q^*$.

\subsection{Proof of Convergence}
\label{sub:converge}
We define sequences $\left\{Q^*_k\right\}$ and $\left\{F^*_k\right\}$, such that $\left(Q^*_k, F^*_k \right)$ is an equilibrium for the discrete setting with parameter $k$ that we get from Theorem \ref{thm:two_bidders:disc}. We show in this section that $\left\{F^*_k\right\}$ converges to $F^*$ (Claim~\ref{clm:two_bidders:cont:F_k_convergent}) and $\left\{Q^*_k\right\}$ converges to $Q^*$ (Claim~\ref{clm:two_bidders:cont:Q_k_convergent}). For the convergence of $\left\{F^*_k\right\}$ we use Lemma~\ref{lemma:disc_to_cont_using_derivative:open}; for the convergence of $\left\{Q^*_k\right\}$ we use both Lemma~\ref{lemma:disc_to_cont_using_derivative:open} and Lemma~\ref{lemma:disc_to_cont_uniform_exists} as it provides guarantees uniform converges.

\begin{claim}\label{clm:two_bidders:cont:F_k_convergent}
The sequence $\left\{F^*_k\right\}_{k\in \N}$ is uniformly convergent to $F^*$ where
\[F^*(v) = \begin{cases}
    0 & v \in [0, \underv] \\
    1-\frac{\underv}{v} & v\in [\underv, \overv) \\
    1 & v = \overv
    \end{cases},\]
Where $\underv$ is the solution to $\underv\left(1+\log\overv-\log\underv\right) = \mu$.
\end{claim}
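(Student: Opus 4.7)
The plan is to feed the discrete recursion from Theorem~\ref{thm:two_bidders:disc} into the discrete-to-general toolbox of Appendix~\ref{appndx:sec:toolbox}, and then use the mean-value constraint to uniquely identify the limit. Setting
\[
\psi_F(\delta,r,y)\;=\;\sqrt{\tfrac{2\delta}{r+\delta}(y-y^{2}) + y^{2}},
\]
Theorem~\ref{thm:two_bidders:disc} reads $F^{*}_{k}(r_{k,i+1})=\psi_F(\overv/k,r_{k,i},F^{*}_{k}(r_{k,i}))$, so $\{F^{*}_{k}\}$ is a recursive discrete+ sequence in the sense of Definition~\ref{def:psi_recursive_discrete} (with the auxiliary sequences $\{g_{k}\},\{a_{k}\}$ trivial). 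A short calculation gives $\partial\psi_F/\partial\delta(0,r,y)=(1-y)/r$, which is continuous on $\{r>0,\,y>0\}$.

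Since each $F^{*}_{k}$ is non-decreasing with values in $[0,1]$, I would first apply Helly's selection theorem to extract from any subsequence a further pointwise-convergent subsequence with a non-decreasing limit $F:[0,\overv]\to[0,1]$. Let $\underv':=\inf\{r:F(r)>0\}$; by monotonicity $F\equiv 0$ on $[0,\underv')$ and $F>0$ on $(\underv',\overv]$. For any continuity point $a$ of $F$ in $(\underv',\overv)$, pointwise convergence $F^{*}_{k}(a)\to F(a)>0$ combined with monotonicity forces $F^{*}_{k}\ge F(a)/2$ on $[a,\overv]$ for large $k$, so the local-boundedness hypothesis of Lemma~\ref{lemma:disc_to_cont_using_derivative:open}, case~\ref{lemma:disc_to_cont_using_derivative:continuous_derivative:open}, holds with $\tilde{\mathcal{Y}}=[F(a)/2,1]$. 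That lemma produces a locally uniformly convergent sub-subsequence whose limit is differentiable and solves the ODE $F'(r)=(1-F(r))/r$; since that limit must coincide with our pointwise $F$, we conclude $F(r)=1-c/r$ on $(\underv',\overv)$ for some $c>0$.

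To pin down $c$, I would use the uniform jump-size bound $|\psi_F(\delta,r,y)-y|\le\delta/r$, which follows from $|\partial_\delta\psi_F(\delta,r,y)|\le 1/r$ (itself immediate by distinguishing $y=0$ and $y>0$ in the explicit formula for $\partial_\delta\psi_F$). This forces the jumps of $F^{*}_{k}$ on $[\eta,\overv]$ to be of size $O(1/k)$ for every $\eta>0$, so the limit $F$ is continuous on $(0,\overv]$. In particular $F(\underv'+)=1-c/\underv'$ equals $F(\underv'-)=0$, yielding $c=\underv'$; the corner case $\underv'=0$ is ruled out since no $c$ makes $r\mapsto 1-c/r$ a valid CDF on $(0,\overv]$. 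Passing the mean constraint $\int_{0}^{\overv}(1-F^{*}_{k}(v))\,dv=\mu$ to the limit by dominated convergence yields $\underv'(1+\log(\overv/\underv'))=\mu$, whose unique solution (per Theorem~\ref{thm:two_bidders:main:equilibrium}) is $\underv'=\underv$. Thus every subsequential pointwise limit equals $F^{*}$, and hence the full sequence $\{F^{*}_{k}\}$ converges to $F^{*}$ pointwise on $[0,\overv]$.

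Finally, I would upgrade pointwise to uniform convergence: $F^{*}$ is continuous on $[0,\overv)$ and $F^{*}_{k}(\overv)=1=F^{*}(\overv)$ for all $k$, so P\'olya's theorem on $[0,\overv-\epsilon]$ combined with a direct bound on $[\overv-\epsilon,\overv]$ using $x_{k}^{*}\to 1-\underv/\overv$ and monotonicity yields $\sup_{v\in[0,\overv]}|F^{*}_{k}(v)-F^{*}(v)|\to 0$. The hard part will be the jump-ruling step at $\underv'$: because the ODE becomes singular at $F=0$, ruling out a jump of the limit there cannot come from the ODE itself but requires the separate uniform jump-size estimate for $\psi_F$, which holds only on intervals bounded away from $r=0$.
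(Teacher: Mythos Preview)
Your argument is essentially the paper's: exhibit the recursive discrete+ structure via $\psi_F$, pass to a subsequential limit, apply the ODE case of Lemma~\ref{lemma:disc_to_cont_using_derivative:open} on $(\underv',\overv)$, use continuity at $\underv'$ to force $c=\underv'$, and pin down $\underv'$ via the mean constraint. The paper differs only in packaging: it first redefines $F^{*}_{k}(\overv)$ so that the recursion extends to $i=k$ and then invokes part~\ref{lemma:disc_to_cont_using_derivative:lipschitz:open} of Lemma~\ref{lemma:disc_to_cont_using_derivative:open} to obtain both the convergent subsequence and a continuous limit on $(0,\overv]$ in one stroke (getting uniform convergence for free on the compact set), whereas you use Helly for the subsequence, re-derive the jump bound by hand for continuity, and finish with P\'olya plus an endpoint estimate. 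One small slip: without the paper's modification the last jump $F^{*}_{k}(\overv)-F^{*}_{k}(r_{k,k})=1-x_{k}^{*}$ is not governed by $\psi_F$ and is not $O(1/k)$ (it tends to $\underv/\overv$), so your jump-bound argument yields continuity of the limit only on $(0,\overv)$, not $(0,\overv]$ --- harmless here, since you only use continuity at $\underv'<\overv$.
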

\begin{proof}
From Theorem \ref{thm:two_bidders:disc}, for each $k>1$  and $1 \le i \le k-1$ the function $F^*_k(v)$ satisfies:
\[F^*_k(r_{k,i+1}) = \sqrt{\frac{\overv}{k} \cdot \frac{2 }{r_{k,i} +\frac{\overv}{k}}(F^*_k(r_{k,i}) - F^*_k(r_{k,i})^2) + F^*_k(r_{k,i})^2};\]
So we define a function $\psi(\delta, r, y): [0,\overv] \times (0,\overv] \times [0,1] \to [0,1]$ by
\[\psi(\delta, r, y) = \sqrt{\frac{2\delta}{r +\delta}(y - y^2) + y^2}.\]
For each $k>1$  and $1 < i \le k-1$ the function $F^*_k(v)$ satisfies:
\begin{equation}
    F^*_k(r_{k,i+1}) = \psi\left(\frac{\overv}{k}, r_{k,i}, F^*_k(r_{k,i}) \right). \label{eq:proof:clm:two_bidders:F_k_convergent:psi_induction}
\end{equation}
For this proof, we would like to change the value of $F^*_k$ on $r_{k,k+1}=\overv$ from $1$ to $\psi\left(\frac{\overv}{k}, r_{k,k}, F^*_k(r_{k,k}) \right)$, so property $\eqref{eq:proof:clm:two_bidders:F_k_convergent:psi_induction}$ is true for $k$ as well.
For every $k$, we define a new function $\tilde{F}_k$ by $\tilde{F}_k(r) = F^*_k(r)$ for every $r\in[0,\overv)$ and $\tilde{F}(\overv) = \psi\left(\frac{\overv}{k}, r_{k,k}, \tilde{F}(r_{k,k}) \right)$.
Now, for every $k>1$ and $1 < i \le k$
\[ \tilde{F}_k(r_{k,i+1}) = \psi\left(\frac{\overv}{k}, r_{k,i}, \tilde{F}_k(r_{k,i}) \right). \label{eq:proof:clm:two_bidders:F_k_convergent:psi_induction_tilde}\]
This will allow us get locally uniform convergence on a compact interval $[0,\overv]$, which means uniform convergence.

In order to prove that the sequence $\left\{\tilde{F}_k\right\}$ is uniformly convergent, we need to analyze $\psi$.
We can see is that for every $y\in [0,1]$ and every $r\in (0,\overv]$
\begin{equation}
    \psi(0, r, y) = y, \label{eq:proof:clm:two_bidders:F_k_convergent:psi_0_r_y_is_y}
\end{equation}
$\psi$ is continuous, and properties \eqref{eq:proof:clm:two_bidders:F_k_convergent:psi_0_r_y_is_y} and \eqref{eq:proof:clm:two_bidders:F_k_convergent:psi_induction} fit into Definition \ref{def:psi_recursive_discrete}'s conditions \eqref{eq:def:psi_recursive_discrete_delta_0} and \eqref{eq:def:psi_recursive_discrete+} respectively, which mean that $\left\{\tilde{F}_k\right\}$ is recursive discrete.
In order to prove convergence, we need to analyze $\psi$ further.
\[\frac{\partial}{\partial \delta} \psi(\delta, r, y) = \frac{(y-y^2)\left( \frac{2}{r+\delta} - \frac{2\delta}{(r+\delta)^2} \right)}{2\sqrt{\frac{2\delta}{r +\delta}(y - y^2) + y^2}} = \frac{y(1-y)\left( \frac{1}{r+\delta} - \frac{\delta}{(r+\delta)^2} \right)}{\psi(\delta, r, y)}.\]
We can see that it is non negative when $\delta \in [0,\overv]$, $r\in (0,\overv]$, and $y\in(0,1]$.
At $y=0$ this equation is not defined, but $\psi(\delta, r, 0) = 0$ for every $\delta$ and $r$, which means that the derivative at $y=0$ is always 0.
The derivative at $\delta = 0$ and $r,y>0$ is $\frac{1-y}{r}$ which implies that the derivative is not continuous at $\delta = 0$ and $y=0$.
However, for any $r>0$ we can get an upper bound by using $\psi(\delta, r, y) \ge y$
\[\frac{\partial}{\partial \delta} \psi(\delta, r, y) \le \left(1-y\right)\left(\frac{1}{r+\delta}-\frac{\delta}{(r+\delta)^{2}}\right) \le \frac{1-y}{r}.\]
This upper bound implies that $\psi$ is locally Lipschitz.

Since $\left\{\tilde{F}_k\right\}$ is recursive discrete, and the corresponding $\psi$ is locally Lipschitz, we can apply Lemma \ref{lemma:disc_to_cont_using_derivative:open}.\ref{lemma:disc_to_cont_using_derivative:lipschitz:open} (for $\mathcal{X} = (0,\overv]$, and $\mathcal{Y} = [0,1]$) and we get that any subsequence of $\left\{\tilde{F}_k\right\}$ has a a locally uniform convergent subsequence on $(0,\overv]$, and each limit is continuous.

We take an arbitrary locally uniform convergent subsequence $\left\{\tilde{F}_{k_j}\right\}$, and denote its limit by $\tilde{F}$.
For every $j$ we know that $\int_0^{\overv} (1-\tilde{F}_{k_j}(v))dv = \mu$, which implies that $\int_0^{\overv} (1-\tilde{F}(v))dv = \mu$ as well.
We denote by $\underv = \inf_{r\in(0,\overv]}\left\{ \tilde{F}(r) > 0\right\}$.
This is the lowest value that satisfies $\frac{\partial}{\partial \delta} \psi(\delta, r, \tilde{F}(r))$ is continuous on $r \in (\underv, \overv]$.

Since $\left\{\tilde{F}_{k_j}\right\}$ is convergent to $\tilde{F}$, and all of those are non decreasing functions, then on any compact sub-interval of $(\underv, \overv]$ there is a positive minimum value $a$ of $\tilde{F}$ over that sun-interval, which implies that for any large enough $j$, $F_{k_j}$ is at least $\frac{a}{2}$ on the same compact sub-interval.
Together with the continuous partial derivative, and Remark \ref{remark:lemmas_work_on_sub_sequence}, we can use Lemma \ref{lemma:disc_to_cont_using_derivative:open}.\ref{lemma:disc_to_cont_using_derivative:continuous_derivative:open} with the intervals $r\in (\underv, \overv]$ and $y\in (0,1]$, which implies that for any $r\in (\underv, \overv]$
\begin{align*}
    \frac{d}{dr}\tilde{F}(r) &= \frac{\partial\psi(\delta, r, y)}{\partial \delta} \left(0, r, \tilde{F}(r)\right) \\
    &= \frac{1-\tilde{F}(r)}{r}.
\end{align*}
This implies that
\[\tilde{F}(r) = 1 - \frac{c}{r} \quad r\in (\underv, \overv]\]
for some constant $c$.
The function $\tilde{F}$ is defined from $(0, \overv]$ to $(0,1]$, continuous, and equal to 0 outside of $(\underv, \overv]$, which suggest that
\[0 = \lim_{r\to \underv^+}\tilde{F}(r) = 1 - \frac{c}{\underv}.\]
Therefore $\underv = c$ and we get that
\[\tilde{F}(r) = \begin{cases}
    0 & r \in (0, \underv] \\
    1-\frac{\underv}{r} & r\in [\underv, \overv]
    \end{cases}.\]
To get $\underv$ we use the equality $\int_0^{\overv} (1-\tilde{F}(r))dv = \mu$, and we get that $\underv$ is the unique solution to
\[\underv\left(1+\log\overv-\log\underv\right) = \mu.\]
Since the functions $\tilde{F}_k$ are all positive and non decreasing, we get that $\lim_{k\to \infty}\tilde{F}_k(0) = 0$ since it must be non negative and $\tilde{F}_k$ is non decreasing;
Thus, we can define $\tilde{F}(0) = 0$.
The convergence on $[0,\overv]$ is uniform since it is locally uniform convergence to a continuous function on a compact set.

We have taken an arbitrary limit $\tilde{F}$ and proved that it is a specific function. Since any subsequence has a uniformly convergent subsequence, the sequence $\{\tilde{F}_k\}_{k\in \N}$ is uniformly convergent to $\tilde{F}$.

Since $\{\tilde{F}_k\}$ and $\{F^*_k\}$ are identical on $[0,\overv)$, we get uniform convergence there for $\{F^*_k\}$. Since $F^*_k(\overv) = 1$ for every $k$, we get convergence there as well. And we conclude that $\{F^*_k\}$ convergence uniformly to $F^*$ that is defined in the claim's description.
\end{proof}

\begin{claim}  \label{clm:two_bidders:cont:Q_k_convergent}
The sequence $\left\{Q^*_k\right\}_{k\in \N}$ is convergent to $Q^*$ where
\[Q^*(r) = \begin{cases}
         \left( 1 - \nicefrac{\underv}{\overv} \right) \cdot \left(\frac{1}{\log \overv - \log \underv}\right) & r\in [0, \underv]\\
        \left( 1 - \nicefrac{\underv}{\overv} \right) \cdot \left( \frac{r}{r-\underv} \right) \cdot \left( \frac{\log r - \log \underv}{ \log \overv - \log \underv} \right) & r\in (\underv, \overv]
    \end{cases},\]
Where $\underv$ is the solution to $\underv\left(1+\log\overv-\log\underv\right) = \mu$.
\end{claim}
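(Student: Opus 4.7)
The plan is to apply the toolbox of Appendix~\ref{appndx:sec:toolbox} to the backward recursion for $Q^*_k$ from Theorem~\ref{thm:two_bidders:disc}, using the uniform convergence $F^*_k \to F^*$ already established in Claim~\ref{clm:two_bidders:cont:F_k_convergent}. Define
\[
\psi_Q(\delta, r, y, z, a) := y - \frac{a - y(1-z)}{rz}\cdot \delta,
\]
so that Theorem~\ref{thm:two_bidders:disc} reads $Q^*_k(r_{k,i-1}) = \psi_Q(\overv/k, r_{k,i}, Q^*_k(r_{k,i}), F^*_k(r_{k,i}), a_k)$ for $2\le i\le k$, where $a_k := Q^*_k(0)(1-F^*_k(0))\in[0,1]$. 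Thus $\{Q^*_k\}$ is recursive discrete- for $\psi_Q$, with continuous-convergent auxiliary sequence $g_k = F^*_k$. By compactness of $[0,1]$, pass to a subsequence so that $a_k \to a^*\in[0,1]$.

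On every compact sub-interval of $(\underv,\overv]$ the limit $F^*(r)=1-\underv/r$ is bounded below by a positive constant, so $\partial\psi_Q/\partial\delta$ is continuous on the relevant compact domain. I would then apply Lemma~\ref{lemma:disc_to_cont_using_derivative:open}.\ref{lemma:disc_to_cont_using_derivative:continuous_derivative:open} to extract a further subsequence of $\{Q^*_k\}$ converging locally uniformly on $(\underv,\overv]$ to some differentiable $\tilde{Q}$ satisfying
\[
\tilde{Q}'(r) \;=\; -\frac{\partial\psi_Q}{\partial\delta}\bigl(0,r,\tilde{Q}(r),F^*(r),a^*\bigr) \;=\; \frac{a^* - \tilde{Q}(r)\,\underv/r}{r-\underv}.
\]
Combined with the terminal condition $\tilde{Q}(\overv)=1$ (inherited from $Q^*_k(\overv)=1$ for every $k$), this linear ODE has the unique solution
\[
\tilde{Q}(r) \;=\; \frac{r}{r-\underv}\Bigl(1 - \tfrac{\underv}{\overv} - a^*\log\tfrac{\overv}{r}\Bigr),\qquad r\in(\underv,\overv].
\]

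The value of $a^*$ is pinned down as follows. Since $\tilde{Q}$ is a pointwise limit of CDFs, it is bounded in $[0,1]$. As $r\to\underv^+$ the prefactor $r/(r-\underv)$ diverges, so the bracketed term must vanish, forcing $a^* = (1-\underv/\overv)/\log(\overv/\underv)$; plugging this back in recovers exactly the formula for $Q^*(r)$ on $(\underv,\overv]$ stated in the claim, while a Taylor expansion of $\log(r/\underv)$ near $\underv$ gives $\lim_{r\to\underv^+}\tilde{Q}(r) = a^*$. On $[0,\underv]$ I would use the identity $Q^*_k(0) = a_k/(1-F^*_k(0)) \to a^*/(1-F^*(0)) = a^*$, together with monotonicity of each $Q^*_k$ and the continuity of $\tilde{Q}$ at $\underv^+$, to conclude that $\tilde{Q}\equiv a^*$ on $[0,\underv]$. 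Since every convergent subsequence of $\{Q^*_k\}$ yields the same limit $Q^*$, the full sequence converges to $Q^*$.

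The main obstacle is that $\psi_Q$ is singular at $z=0$, so the toolbox does not apply directly on $[0,\underv]$ where $F^*\equiv 0$. The workaround is to analyse the ODE only on $(\underv,\overv]$, use the requirement that $\tilde{Q}$ be a CDF to cancel the singularity at $r=\underv^+$ (which simultaneously identifies the unknown parameter $a^*$), and only then invoke monotonicity of the $Q^*_k$ to deduce the constant behaviour on $[0,\underv]$. A related subtlety is that $a^* = \lim_k a_k$ is \emph{a priori} free: we first pass to a convergent subsequence, then use the CDF constraint to show the admissible value of $a^*$ is unique, and thereby retroactively obtain convergence of the full sequences $\{a_k\}$ and $\{Q^*_k\}$.
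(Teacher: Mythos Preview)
Your overall strategy matches the paper's, but there is a real gap at the right endpoint. You apply Lemma~\ref{lemma:disc_to_cont_using_derivative:open} with $\mathcal{X}=(\underv,\overv]$ and then use $\tilde{Q}(\overv)=1$ as a terminal condition ``inherited from $Q^*_k(\overv)=1$''. Neither step is valid as written. The recursion of Theorem~\ref{thm:two_bidders:disc} is stated only for $2\le i\le k$, so it never relates $Q^*_k(r_{k,k})$ to $Q^*_k(r_{k,k+1})=Q^*_k(\overv)$; a direct check shows the identity at $i=k+1$ would force $a_k=0$, which is false. Moreover $F^*$ has a jump at $\overv$, so the hypothesis that the auxiliary limit $g$ be continuous on $\mathcal{X}$ fails. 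The toolbox therefore only delivers locally uniform convergence and the ODE on the open interval $(\underv,\overv)$, and you still owe an argument that $\lim_{r\to\overv^-}\tilde{Q}(r)=1$ before you can pin down the integration constant.

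The paper supplies exactly this missing piece. Rewriting the recursion at $i=k$ (using $Q^*_k(r_{k,k})=1$ and $r_{k,k}=\tfrac{k-1}{k}\overv$) gives
\[
a_k \;=\; \bigl(1-F^*_k(r_{k,k})\bigr)\;+\;(k-1)\,F^*_k(r_{k,k})\,\bigl(1-Q^*_k(r_{k,k-1})\bigr);
\]
since $a_k\in[0,1]$ and $F^*_k(r_{k,k})\to 1-\underv/\overv>0$, this forces $Q^*_k(r_{k,k-1})\to 1$. Then Lemma~\ref{lemma:disc_to_cont_uniform_exists} (the Lipschitz bound on $\psi_Q$ holds on any $[v',\overv)$ with $v'>\underv$) upgrades the convergence to uniform on $[v',\overv)$, which together with $r_{k,k-1}\to\overv$ yields $\lim_{r\to\overv^-}\tilde{Q}(r)=1$. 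After that, your treatment of the singularity at $\underv^+$, the determination of $a^*$, and the extension to $[0,\underv]$ via monotonicity are all exactly as in the paper.
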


\begin{proof}
From Theorem \ref{thm:two_bidders:disc}, for each $k>1$  and $2 \le i \le k-1$ the function $Q^*_k(v)$ satisfies:
\[Q^*_k(r_{k,i-1}) = Q^*_k(r_{k,i}) + \left[Q^*_k(r_{k,i})(1-F^*_k(r_{k,i})) - Q^*_k(0)(1-F^*_k(0)) \right] \cdot\frac{\overv}{k r_{k,i} F^*_k(r_{k,i})};\]
So we define a function $\psi(\delta, r, y,z,a)$ by
\[\psi(\delta, r, y,z,a) = y + \left[y(1-z) - a) \right] \cdot\frac{\delta}{rz},\]
and now for each $k>1$  and $2 \le i \le k-1$ the function $F^*_k(v)$ satisfies:
\begin{equation}
    Q^*_k(r_{k,i-1}) = \psi\left(\frac{\overv}{k}, r_{k,i}, Q^*_k(r_{k,i}), F^*_k(r_{k,i}), Q^*_k(0)(1-F^*_k(0)) \right). \label{eq:proof:clm:two_bidders:Q_k_convergent:psi_induction}
\end{equation}
This fits property \eqref{eq:def:psi_recursive_discrete-}, since $\psi$ also satisfies property \eqref{eq:def:psi_recursive_discrete_delta_0} from the same definition, and from Claim~\ref{clm:two_bidders:cont:F_k_convergent} $\{F^*_k\}$ is uniformly convergent to a continuous function $F^*$ on $[0,\overv)$.
We get that the sequence $\left\{Q^*_k\right\}$ is recursive discrete- on $\mathcal{I} = [0,\overv], \mathcal{X} = (\underv, \overv), \mathcal{Y}=[0,1], \mathcal{Z} = (0,1]$, and $\mathcal{C} = [0,1]$.
We defined $\mathcal{X}$ to be $(\underv, \overv)$, because $F^*$ is not continuous at $\overv$, and because $F^*(v) = 0$ for every $v\in[0,\underv]$, so $\psi(\cdot, \underv, \cdot, F^*(v), \cdot)$ is not defined.

Next, we differentiate by $\delta$,
\begin{align*}
    \frac{\partial}{\partial\delta} \psi(\delta, r, y,z,a) =\frac{y(1-z) - a}{rz},
\end{align*}
which is continuous when $r,z>0$.

For every compact sub-interval $I \subset (\underv, \overv)$, we can find a minimum $\underline{z}$ of $F^*$ over $I$. Therefore $\left[\nicefrac{\underline{z}}{2}, 1\right]$ is a compact sub-interval of $\mathcal{Y}$ that contains $F^*_k(I)$ for every large enough $k$.

According to Lemma \ref{lemma:disc_to_cont_using_derivative:open}.\ref{lemma:disc_to_cont_using_derivative:continuous_derivative:open} any subsequence of $\left\{Q^*_k\right\}$ has a convergent subsequence. We take such convergent subsequence $\left\{Q^*_{k_j}\right\}_{j\in\N}$, an it is convergent to a continuous function $Q^*$ on $\mathcal{X} = (\underv, \overv)$, and it satisfies
\begin{align*}
    \frac{dQ^*(r)}{dr} &= -\frac{\partial \psi\left(\delta,r,y,z,a\right)}{\partial\delta} \left(0,r,Q^*(r),F^*(r),a\right) \\
    &= -\frac{Q^*(r)(1-F^*(r)) - a}{r\cdot F^*(r)} \\
    &= \frac{a - Q^*(r)\cdot \frac{\underv}{r}}{r-\underv},
\end{align*}
where $a$ is some limit of the sequence $\left\{ Q^*_{k_j}(0)(1-F^*_{k_j}(0)) \right\}_{j\in \N}$.
The family of solutions for this ODE is
\[Q^*(r) = \frac{r\left( a \log r + c \right)}{r-\underv}\qquad r\in (\underv, \overv),\]
For some constant $c$.
From Theorem \ref{thm:two_bidders:disc}, for every $k$
\[ Q^*_k(0)(1-F^*_k(0)) = (k-1)(1 - Q^*_k(r_{k,k-1})) F^*_k(r_{k,k}) + (1-F^*_k(r_{k,k})).\]
Since $Q^*_k(0)(1-F^*_k(0)) \in [0,1]$ for every $k$, and $\lim_{k\to \infty }F^*_k(r_{k,k}) = 1-\nicefrac{\underv}{\overv} > 0$. Therefore, we get that $\lim_{k\to \infty} Q^*_k(r_{k,k-1}) = 1$.

We can see that $\psi$ is Lipschitz (in $\delta$) for $r \in [v', \overv)$ for every $v' > \underv$, so from Lemma \ref{lemma:disc_to_cont_uniform_exists}, $\{Q^*_{k_j}\}_{j\in\N}$ has a uniformly convergent subsequence on $[v', \overv)$ (and not just locally uniform).
Combined with $\lim_{k\to \infty} Q^*_k(r_{k,k-1}) = 1$, we get that $\lim_{r \to \overv}Q^*(r) = 1$.
We can use this to find $c$,
\begin{align*}
    &\lim_{r \to \overv}Q^*(r) = 1 \\
    \implies& \frac{\overv \left( a \log \overv + c \right)}{\overv-\underv} = 1 \\
    \implies& c = 1 - \nicefrac{\underv}{\overv} - a \log \overv.
\end{align*}
So far, we have that
\[Q^*(r) = \frac{r\left( a (\log r - \log \overv) + 1 - \nicefrac{\underv}{\overv} \right)}{r-\underv}\qquad r\in (\underv, \overv).\]
As a limit of non decreasing function in $\mathcal{Y} = [0,1]$, we get that $Q^*$ is non decreasing in $[0,1]$.
Therefore, the limit $\lim_{r\to \underv^+} Q^*(r)$ exists, and it is in $[0,1]$. This implies that
\[\lim_{r\to \underv^+} r\left( a (\log r - \log \overv) + 1 - \nicefrac{\underv}{\overv} \right) = 0.\]
Which means that
\[a = \frac{1 - \nicefrac{\underv}{\overv}}{\log \overv - \log \underv},\]
and thus, for every $r \in (\underv, \overv)$
\begin{align*}
    Q^*(r) &= \left( 1 - \nicefrac{\underv}{\overv} \right) \cdot \left( \frac{r}{r-\underv} \right) \cdot \left( \frac{\log r - \log \overv}{ \log \overv - \log \underv} + 1\right)\\
    &=\left( 1 - \nicefrac{\underv}{\overv} \right) \cdot \left( \frac{r}{r-\underv} \right) \cdot \left( \frac{\log r - \log \underv}{ \log \overv - \log \underv} \right).
\end{align*}
Since $Q^*_k(\overv) = 1$ for every $k$, then by extending $Q^*$ to $\overv$, and defining
\[Q^*(\overv) = \left( 1 - \nicefrac{\underv}{\overv} \right) \cdot \left( \frac{\overv}{\overv-\underv} \right) \cdot \left( \frac{\log \overv - \log \underv}{ \log \overv - \log \underv} \right) = 1,\]
we get convergence on $\overv$ as well.
$a$ was defined as some arbitrary limit of $\left\{ Q^*_{k_j}(0)(1-F^*_{k_j}(0) \right\}_{j\in \N}$, since we have proved that $a=\frac{1 - \nicefrac{\underv}{\overv}}{\log \overv - \log \underv}$, which is only dependent on $\overv$ and $\underv$, then every limit is identical, and the subsequence is convergent. Since $\lim_{k \to \infty} F^*_k(0) = 0$, we get
\[\lim_{j\to \infty} Q^*_{k_j}(0) = \lim_{j\to \infty} Q^*_{k_j}(0)(1-F^*_{k_j}(0) = a.\]
Likewise, we can see that $lim_{r \to \underv^+} Q^*(r) = a$. Since $Q^*_k$ is non decreasing for every $k$, we conclude that for every $r\in [0,\underv]$
\[\lim_{j \to \infty} Q^*_{k_j}(r) = a,\]
which allows us to extend the definition of $Q^*$ to the entire interval $[0,\overv]$.

We have defined $Q^*$ as an arbitrary limit of $\left\{Q^*_{k}\right\}$, and proved it is a specific function. Since every subsequence has a convergent subsequence, we conclude that $\lim_{k \to \infty} Q^*_k = Q^*$ where
\[Q^*(r) = \begin{cases}
         \left( 1 - \nicefrac{\underv}{\overv} \right) \cdot \left(\frac{1}{\log \overv - \log \underv}\right) & r\in [0, \underv]\\
        \left( 1 - \nicefrac{\underv}{\overv} \right) \cdot \left( \frac{r}{r-\underv} \right) \cdot \left( \frac{\log r - \log \underv}{ \log \overv - \log \underv} \right) & r\in (\underv, \overv]
    \end{cases}.\]
\end{proof}

\subsection{Proof of Equilibrium in the Limit}
\label{sub:equil}
While we have proved convergence of the equilibrium points of the discrete case to $Q^*$ and $F^*$, it is not trivial that $(Q^*, F^*)$ is in itself an equilibrium of the general case.

We can however prove Claim~\ref{clm:two:F_is_optimal} directly using the convergence and correctness in the discrete case.

Consider random variables $R_k \sim Q^*_k$ for every $k$, and $R \sim Q^*$, where $Q^*_k$, and $Q^*$ are described by Theorem \ref{thm:two_bidders:disc} and Claim \ref{clm:two_bidders:cont:Q_k_convergent}.
Since $Q^*_k \to Q^*$, then $R_k \to R$ in distribution.
In order to prove Claim \ref{clm:two:F_is_optimal}, we need to prove that for any $F \in \calF$,
\[\mathbb{E}\left[\Rev_{F}(R)\right] \ge \mathbb{E}\left[\Rev_{F^*}(R)\right].\]

\begin{claim} \label{clm:two_bidders:cont:limit_of_general_F_over_Q*}
For every distribution function $F$ over $[0,\overv]$,
\[\lim_{k \to \infty} \mathbb{E}\left[ \Rev_{F}(R_k) \right] = \mathbb{E}\left[ \Rev_{F}(R) \right].\]
\end{claim}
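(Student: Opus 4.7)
The plan is to deduce the claim from the Portmanteau theorem, leveraging the pointwise CDF convergence $Q^*_k \to Q^*$ established in Claim~\ref{clm:two_bidders:cont:Q_k_convergent}. Recall Portmanteau: if $R_k$ converges to $R$ in distribution, then $\mathbb{E}[h(R_k)] \to \mathbb{E}[h(R)]$ for every bounded function $h$ that is continuous at $R$-almost every point. Thus it suffices to verify three ingredients: (i) convergence in distribution $R_k \to R$, (ii) boundedness of the map $r \mapsto \Rev_F(r)$, and (iii) continuity of $\Rev_F$ at $Q^*$-almost every $r$.

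For (i), pointwise convergence $Q^*_k(r) \to Q^*(r)$ at every $r \in [0,\overv]$ from Claim~\ref{clm:two_bidders:cont:Q_k_convergent} in particular gives convergence at every continuity point of $Q^*$, which is the standard characterization of convergence in distribution. For (ii), $\Rev_F(r) = r(1-F^2(r)) + \int_r^{\overv}(1-F(v))^2 dv \le \overv + \overv = 2\overv$, so $\Rev_F$ is uniformly bounded on $[0,\overv]$.

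The real work is (iii). The integral term $r \mapsto \int_r^{\overv}(1-F(v))^2 dv$ is continuous in $r$ for any distribution $F$. The algebraic term $r \mapsto r(1-F^2(r))$ is right-continuous and its only possible discontinuities lie at atoms of $F$ in $(0,\overv]$. At the point $r=0$, regardless of whether $F$ places mass there, one has $r(1-F^2(r)) \to 0$, so $\Rev_F$ is continuous at $0$. Hence the discontinuity set of $\Rev_F$ is contained in the (countable) set of atoms of $F$ in $(0,\overv]$. By the closed form in Theorem~\ref{thm:two_bidders:main:equilibrium}, the limit distribution $Q^*$ is continuous on $(0,\overv]$ (its only possible atom is at $0$), so this countable set has $Q^*$-measure zero and $\Rev_F$ is continuous $Q^*$-almost everywhere.

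Combining (i)--(iii), Portmanteau yields $\lim_{k\to\infty}\mathbb{E}[\Rev_F(R_k)] = \mathbb{E}[\Rev_F(R)]$. The only mildly subtle step is the continuity check at $r=0$, which matters because $R$ places positive mass there; absent the vanishing factor $r$ multiplying the potential atom of $F$ at $0$, the atom of $Q^*$ at the origin would have been a genuine obstacle, and one would have needed to argue instead that the atoms of $Q^*_k$ (which may not be at $0$) concentrate at the atom of $Q^*$. Here, the continuity at $0$ makes this complication disappear.
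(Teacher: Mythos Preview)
Your proof is correct and follows essentially the same approach as the paper: both invoke Portmanteau after verifying that $\Rev_F$ is continuous at $0$ and at every continuity point of $F$, and then observe that $Q^*$ has no atoms in $(0,\overv]$ so the countable discontinuity set of $F$ has $Q^*$-measure zero. Your write-up is somewhat more thorough (explicitly checking boundedness and spelling out the continuity at $0$), but the argument is the same.
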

\begin{proof}
For a given distribution function $F$, the function $\Rev_{F}$ is continuous at $0$ and at every point where $F$ is continuous. Since $F$ is monotone, then the set of all discontinuities is countable.
$Q^*$ only point of discontinuity is 0, and therefore $\Pr_{r\sim Q^*}(F \text{ is not continuous at } r) = 0$.
Now, we can use Portmanteau Theorem which completes the proof.
\end{proof}
Before we prove that $F^*$ is the adversary's best response, we need to prove a rather intuitive claim.

\begin{claim}
For $F^*_k$ as defined in the discrete case
\[\lim_{k \to \infty} \mathbb{E}\left[ \Rev_{F_k^*}(R_k) \right] = \mathbb{E}\left[ \Rev_{F^*}(R) \right].\]
\end{claim}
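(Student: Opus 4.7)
The plan is to decompose the difference $\mathbb{E}[\Rev_{F^*_k}(R_k)] - \mathbb{E}[\Rev_{F^*}(R)]$ via the triangle inequality into
\[
\bigl|\mathbb{E}[\Rev_{F^*_k}(R_k)] - \mathbb{E}[\Rev_{F^*}(R_k)]\bigr| + \bigl|\mathbb{E}[\Rev_{F^*}(R_k)] - \mathbb{E}[\Rev_{F^*}(R)]\bigr|,
\]
and show each summand vanishes as $k\to\infty$. The second summand is handled immediately by applying Claim~\ref{clm:two_bidders:cont:limit_of_general_F_over_Q*} with the fixed distribution $F=F^*$, since $R_k \to R$ in distribution and $F^*$ is fixed.

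For the first summand, the key idea is that $R_k$ and $R$ are now irrelevant -- what matters is that the valuation distributions $F^*_k$ and $F^*$ are close, \emph{uniformly} in the reserve price. Concretely, I would use the explicit formula
\[
\Rev_F(r) = r\bigl(1-F^2(r)\bigr) + \int_r^{\overv} \bigl(1-F(v)\bigr)^2 \, dv
\]
together with the uniform convergence $\sup_{v\in[0,\overv]} \bigl|F^*_k(v) - F^*(v)\bigr| \le \varepsilon_k \to 0$ established in Claim~\ref{clm:two_bidders:cont:F_k_convergent}. Since $F^*_k$ and $F^*$ take values in $[0,1]$, we have $|F^{*2}_k(r) - F^{*2}(r)| \le 2\varepsilon_k$ and $|(1-F^*_k(v))^2 - (1-F^*(v))^2| \le 2\varepsilon_k$, so that for every $r\in[0,\overv]$,
\[
\bigl|\Rev_{F^*_k}(r) - \Rev_{F^*}(r)\bigr| \le 2\overv\, \varepsilon_k + 2\overv\, \varepsilon_k = 4\overv\, \varepsilon_k.
\]
Taking expectation over $R_k$ (whose distribution is irrelevant here since the bound is uniform in $r$) gives $\bigl|\mathbb{E}[\Rev_{F^*_k}(R_k)] - \mathbb{E}[\Rev_{F^*}(R_k)]\bigr| \le 4\overv\, \varepsilon_k \to 0$.

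Combining the two bounds completes the proof. The only delicate step is the first summand, and its main obstacle is ensuring the error bound is \emph{uniform} in $r$ so that it survives the expectation over the moving distribution $Q^*_k$; this is exactly what uniform convergence of $F^*_k$ to $F^*$ (as opposed to mere pointwise or distributional convergence) buys us. Everything else is routine.
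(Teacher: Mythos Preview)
Your proposal is correct and follows essentially the same approach as the paper: the same triangle-inequality decomposition, the same uniform bound $\bigl|\Rev_{F^*_k}(r)-\Rev_{F^*}(r)\bigr|\le 4\overv\,\varepsilon_k$ via uniform convergence of $F^*_k\to F^*$, and the same appeal to Claim~\ref{clm:two_bidders:cont:limit_of_general_F_over_Q*} for the remaining summand.
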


\begin{proof}
Since $F^*_k \to F^*$ uniformly, then for any $\varepsilon > 0$, for large enough $k$
\[\sup_{v\in [0,\overv]} \abs*{F^*(v) - F^*_k(v)} < \varepsilon.\]
Therefore, for any $r\in [0,\overv]$:
\begin{align*}
    \abs*{\Rev_{F^*}(r) - \Rev_{F^*_k}(r)} &= \abs*{ r \left( F^{*2}_k(r) - F^{*2}(r)\right) + \int_{r}^{\overv} \left(1 - F^*(v) \right)^2 - \left(1 - F^*_k(v) \right)^2 dv }\\
    &\le 2\overv \abs*{F^*_k(r) - F^*(r)} + 2\int_{r}^{\overv} \abs*{F^*_k(v) - F^*(v)}dv \\
    &\le 4\overv \varepsilon.
\end{align*}
From Claim \ref{clm:two_bidders:cont:limit_of_general_F_over_Q*}, we get that:
\[\lim_{k \to \infty} E\left[ \Rev_{F^*}(R_k) \right] = E\left[ \Rev_{F^*}(R) \right].\]
Therefore, for large enough $k$:
\begin{align*}
    \abs*{E\left[ \Rev_{F^*}(R) \right] - E\left[ \Rev_{F_k^*}(R_k) \right]} \le& \abs*{E \left[ \Rev_{F^*}(R) \right] - E\left[ \Rev_{F^*}(R_k) \right]} \\
    &+ \abs*{E\left[ \Rev_{F^*}(R_k) \right] - E\left[ \Rev_{F_k^*}(R_k) \right]} \\
    \le& \varepsilon + 4\overv\varepsilon.
\end{align*}
\end{proof}

Now, we can check more general distributions, and prove that for every $F$,
\[\mathbb{E}\left[ \Rev_{F^*}(R) \right] \ge \mathbb{E}\left[ \Rev_{F}(R) \right].\]

\begin{claim} \label{clm:two_bidders:cont:general_F_is_close_to_some_F_k}
For every $F\in \calF$ there exists a sequence $\{F_k\}$ where $F_k \in \calF_k$ for every $k$, and for every $\varepsilon > 0$, for sufficiently large $k$,
\[\abs*{\mathbb{E} \left[ \Rev_{F}(R_k) \right] - \mathbb{E}\left[ \Rev_{F_k}(R_k) \right]} < \varepsilon.\]
\end{claim}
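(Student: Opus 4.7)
The plan is to construct $F_k \in \calF_k$ by discretizing $F$ via a ``round-up and correct the mean'' procedure, then argue that the resulting $F_k$ is uniformly close to $F$ on the grid $\rvec_k$ and close to $F$ in $L^1([0,\overv])$. Plugged into the revenue formula this yields the desired estimate. The trivial cases $\mu\in\{0,\overv\}$ admit $F_k=F$, so assume $\mu\in(0,\overv)$.

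Let $\tilde F_k$ be the pushforward of $F$ under the round-up map $\rho_k(v)=r_{k,j+1}$ for $v\in(r_{k,j},r_{k,j+1}]$ and $\rho_k(0)=0$. Since $\rho_k^{-1}([0,r_{k,j}])=[0,r_{k,j}]$, the grid values are preserved: $\tilde F_k(r_{k,j})=F(r_{k,j})$ for every $j$. The mean $\mu_k:=\mathbb{E}[\tilde F_k]$ overshoots $\mu$ by at most $\overv/k$ since each sample is displaced by at most $\overv/k$. To land in $\calF_k$, set $F_k=\alpha_k\tilde F_k+(1-\alpha_k)\delta_0$ with $\alpha_k=\mu/\mu_k$, so $\mathbb{E}[F_k]=\mu$; note $1-\alpha_k\le\overv/(k\mu)\to 0$.

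Next I would establish two estimates. First, on the grid,
\[
\sup_{r\in\rvec_k}\abs{F_k(r)-F(r)}=\sup_j (1-\alpha_k)(1-F(r_{k,j}))\le 1-\alpha_k\xrightarrow[k\to\infty]{} 0.
\]
Second, in $L^1$: using monotonicity of $F$ and telescoping, $\int_0^{\overv}\abs{\tilde F_k(v)-F(v)}\,dv\le \overv/k$, while $\int_0^{\overv}\abs{F_k(v)-\tilde F_k(v)}\,dv\le \overv(1-\alpha_k)$, so $\int_0^{\overv}\abs{F_k-F}\,dv\to 0$. Plugging these into the identity
\[
\Rev_{F}(r)-\Rev_{F_k}(r) \;=\; r\bigl(F_k^2(r)-F^2(r)\bigr)+\int_r^{\overv}\bigl((1-F(v))^2-(1-F_k(v))^2\bigr)dv,
\]
and using $\abs{a^2-b^2}\le 2\abs{a-b}$ for $a,b\in[0,1]$, yields
\[
\abs{\Rev_F(r)-\Rev_{F_k}(r)}\;\le\; 2\overv\cdot\abs{F_k(r)-F(r)}+2\int_0^{\overv}\abs{F_k-F}\,dv,
\]
which tends to $0$ uniformly over $r\in\rvec_k$. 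Since $R_k$ is supported on $\rvec_k$, taking expectation gives the claim.

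The plan has no serious obstacle; the choice of the discretization is what does the work. The only delicate point is to pick a discretization that preserves $F$ exactly on grid points while shifting the mean by at most a mesh-size, so that a correction by a vanishing atom at $0$ restores $\calF_k$-membership. A round-\emph{down} construction would instead lose mean proportional to mesh size with no easy symmetric fix, and an arbitrary quantization could produce errors of order $1$ at jump points of $F$; the round-up construction circumvents both issues.
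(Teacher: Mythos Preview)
Your proof is correct and takes a genuinely different route from the paper's.

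The paper defines $F_k$ by \emph{averaging}: $F_k(r_{k,i})=\frac{k}{\overv}\int_{r_{k,i}}^{r_{k,i+1}}F(v)\,dv$. This has the pleasant feature that the mean is preserved exactly, so $F_k\in\calF_k$ without any correction step. However, $F_k$ does not agree with $F$ at grid points, so the pointwise term $2r\abs{F_k(r)-F(r)}$ is \emph{not} uniformly small in $r$; it can be of order one at the (at most $1/\varepsilon$ many) grid intervals where $F$ has a jump exceeding $\varepsilon$. The paper then has to invoke specific structural properties of $Q^*_k$ established earlier---uniform convergence of $Q^*_k$ to a Lipschitz limit on $[\underv(1+4\varepsilon),\overv]$, and a bound on the mass $Q^*_k$ places near $\underv$---to control $\mathbb{E}\bigl[\abs{F_k(R_k)-F(R_k)}\bigr]$ by showing that $R_k$ is unlikely to land exactly on one of those few bad grid points.

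Your round-up construction forces $\tilde F_k(r_{k,j})=F(r_{k,j})$ exactly, and the mean-correcting atom at $0$ perturbs this by only $(1-\alpha_k)\le\overv/(k\mu)$ uniformly over all grid points. Consequently the pointwise term is already uniformly $o(1)$ on $\rvec_k$, and you need nothing about $Q^*_k$ beyond the fact that $R_k$ is supported on the grid. This is strictly more elementary and more general: your argument would go through for any sequence of grid-supported reserve distributions, whereas the paper's proof genuinely uses the convergence $Q^*_k\to Q^*$ and the shape of $Q^*$ near $\underv$. The price you pay is the extra correction step, but that is negligible.
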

\begin{proof}
Fix $F \in \calF$ and $\varepsilon \in (0,1)$.
For any $k$ we define $F_k \in \calF_k$ by $F_k(r_{k,i}) = \frac{k}{\overv}\cdot \int_{r_{k,i}}^{r_{k,i+1}} F(v)dv$ for every $1 \le i \le k$. We immediately get that $F(r_{k,i}) \le F_k(v) \le F(r_{k,i+1})$ for every $1 \le i \le k$ and $v\in [r_{k,i}, r_{k,i+1})$, and thus
\[\abs*{F(v) - F_k(v)} \le F(r_{k,i+1}) - F(r_{k,i}).\]
Denote by $I_{\varepsilon, k} = \left\{ i \in \{1,\dots,k\} \mid F\left(r_{k,i+1}\right) - F(r_{k,i}) > \varepsilon \right\}$. It is easy to see that $\abs*{I_{\varepsilon, k}} < \frac{1}{\varepsilon}$.

For every reserve price $r$ and $k>\frac{\overv}{\varepsilon^2}$, we can get an upper bound to the difference in the revenue between $F$ and $F_k$,
\begin{align*}
    \abs*{\Rev_{F}(r) - \Rev_{F_k}(r)} &= \abs*{r \left( F^2_k(r) - F^2(r)\right) + \int_{r}^{\overv} \left(1 - F(v) \right)^2 - \left(1 - F_k(v) \right)^2 dv } \\
    &\le 2r \abs*{F_k(r) - F(r)} + 2\int_r^{\overv} \abs*{F_k(v) - F(v)}dv \\
    &\le 2r \abs*{F_k(r) - F(r)} + 2\overv \varepsilon + 2\sum_{r_{k,i} \in I_{\varepsilon, k}}\int_{r_{k,i}}^{r_{k,i+1}} \abs*{F_k(v) - F(v)}dv \\
    &\le 2r \abs*{F_k(r) - F(r)} + 2\overv \varepsilon + \frac{2\overv}{k\varepsilon} \\
    &\le 2r \abs*{F_k(r) - F(r)} + 4\overv \varepsilon.
\end{align*}

Since $Q^*_k$ is uniformly convergent to $Q^*$ on $[\underv(1+4\varepsilon), \overv]$ (Lemma \ref{lemma:disc_to_cont_uniform_exists} on $[\underv(1+4\varepsilon), \overv)$ and adding another point $\overv$ to the domain), which is Lipschitz, we can get an upper bound for $\Pr(R_k = r_{k,i})$, such that for large enough $k$ and any $r_{k,i} \in [\underv(1+4\varepsilon), \overv]$
\[\Pr(R_k = r_{k,i}) < \varepsilon^2.\]
Also
\[\sum_{r_{k,i} \in \rvec_k \cap [0,\underv\left(1+4\varepsilon \right)]} \Pr(R_k = r_{k,i}) = F_k^*\left(\underv\left(1+4\varepsilon \right)\right) \le 1-\frac{1}{1+4\varepsilon} + \varepsilon < 5\varepsilon.\]
Therefore
\begin{align*}
    \mathbb{E}\left[\abs*{F_k(R_k) - F(R_k)} \right] &= \sum_{i=1}^k \Pr(R_k = r_{k,i})\abs*{F_k(r_{k,i}) - F(r_{k,i})} \\
    &< 5\varepsilon + \sum_{i=1}^k \Pr(R_k = r_{k,i})\varepsilon + \varepsilon^2\sum_{i\in I_{\varepsilon, k}} \abs*{F_k(r_{k,i}) - F(r_{k,i})}\\
    &< 5\varepsilon + \varepsilon + \varepsilon^2 \cdot \frac{1}{\varepsilon}\\
    &= 7\varepsilon.
\end{align*}

Putting it all together, for large enough $k$
\begin{align*}
    \abs*{\mathbb{E} \left[ \Rev_{F}(R_k) \right] - \mathbb{E}\left[ \Rev_{F_k}(R_k) \right]} &\le \mathbb{E} \left[ \abs*{\Rev_{F}(R_k) -  \Rev_{F_k}(R_k)} \right] \\
    &< 4\overv\varepsilon +  2\overv \mathbb{E}\left[\abs*{F_k(R_k) - F(R_k)} \right]\\
    &< 18\overv\varepsilon.
\end{align*}
\end{proof}

\begin{corollary} \label{cor:two_bidders:cont:limit_of_revenue_of_genral_F}
For any $F \in \calF$ there is a sequence $\{F_k\}$ where $F_k \in \calF_k$ for every $k$ such that
\[\lim_{k \to \infty} \mathbb{E}[\Rev_{F_k}(R_k)] = \mathbb{E}[\Rev_{F}(R)].\]
\end{corollary}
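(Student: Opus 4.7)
The plan is to combine Claims~\ref{clm:two_bidders:cont:limit_of_general_F_over_Q*} and~\ref{clm:two_bidders:cont:general_F_is_close_to_some_F_k} via a triangle inequality. Given an arbitrary $F \in \calF$, I would first invoke Claim~\ref{clm:two_bidders:cont:general_F_is_close_to_some_F_k} to obtain a sequence $\{F_k\}$ with $F_k \in \calF_k$, which satisfies
\[
\bigl|\mathbb{E}[\Rev_{F}(R_k)] - \mathbb{E}[\Rev_{F_k}(R_k)]\bigr| \xrightarrow[k\to\infty]{} 0.
\]

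Next, since $F$ is a fixed distribution in $\calF$, I would apply Claim~\ref{clm:two_bidders:cont:limit_of_general_F_over_Q*} to it directly, which yields
\[
\lim_{k\to\infty}\mathbb{E}[\Rev_{F}(R_k)] = \mathbb{E}[\Rev_{F}(R)].
\]

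Combining the two displays via the triangle inequality
\[
\bigl|\mathbb{E}[\Rev_{F_k}(R_k)] - \mathbb{E}[\Rev_{F}(R)]\bigr|
\le \bigl|\mathbb{E}[\Rev_{F_k}(R_k)] - \mathbb{E}[\Rev_{F}(R_k)]\bigr| + \bigl|\mathbb{E}[\Rev_{F}(R_k)] - \mathbb{E}[\Rev_{F}(R)]\bigr|
\]
gives that both terms on the right vanish as $k\to\infty$, establishing the corollary. There is no real obstacle here; the corollary is simply the packaging of the two preceding claims into the form needed for the subsequent equilibrium argument (namely, that for any $F\in\calF$ the adversary's discrete-world revenue against $R_k$ can be made arbitrarily close to its continuous-world revenue $\mathbb{E}[\Rev_F(R)]$ against $R\sim Q^*$).
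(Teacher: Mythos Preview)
Your proposal is correct and matches the paper's own proof, which simply states that the corollary ``is easy to see from Claims~\ref{clm:two_bidders:cont:limit_of_general_F_over_Q*} and~\ref{clm:two_bidders:cont:general_F_is_close_to_some_F_k}.'' You have spelled out the triangle-inequality step that the paper leaves implicit.
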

This is easy to see from Claims \ref{clm:two_bidders:cont:limit_of_general_F_over_Q*} and \ref{clm:two_bidders:cont:general_F_is_close_to_some_F_k}.

\begin{corollary}
For any $F \in \calF$,
\[\mathbb{E}[\Rev_{F}(R)] \ge \mathbb{E}[\Rev_{F^*}(R)].\]
\end{corollary}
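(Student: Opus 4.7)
The plan is to leverage the discrete-case equilibrium result (Theorem~\ref{thm:two_bidders:disc}) and push the inequality through to the limit using the convergence machinery established in Section~\ref{sub:converge} and the preceding claims. Since all the hard analytic work (uniform convergence of $\{F^*_k\}$ to $F^*$, distributional convergence of $R_k$ to $R$, and approximation of arbitrary $F\in\calF$ by discrete $F_k\in\calF_k$) is already done, the argument is a short limit-passage.

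Concretely, I would proceed as follows. Fix an arbitrary $F\in\calF$. By Corollary~\ref{cor:two_bidders:cont:limit_of_revenue_of_genral_F}, there exists a sequence $\{F_k\}_{k\in\N}$ with $F_k\in\calF_k$ for every $k$ such that
\[
\lim_{k\to\infty}\mathbb{E}[\Rev_{F_k}(R_k)] \;=\; \mathbb{E}[\Rev_{F}(R)].
\]
By Theorem~\ref{thm:two_bidders:disc}, for each $k$ the pair $(Q^*_k,F^*_k)$ is an equilibrium of the discretized reduced zero-sum game, so $F^*_k$ is the adversary's best response in $\calF_k$ against $Q^*_k$. In particular, since $F_k\in\calF_k$,
\[
\mathbb{E}[\Rev_{F_k}(R_k)] \;\ge\; \mathbb{E}[\Rev_{F^*_k}(R_k)] \qquad \text{for every } k.
\]

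Finally, by the claim established just above (that $\lim_{k\to\infty}\mathbb{E}[\Rev_{F^*_k}(R_k)] = \mathbb{E}[\Rev_{F^*}(R)]$), taking $k\to\infty$ on both sides yields
\[
\mathbb{E}[\Rev_{F}(R)] \;=\; \lim_{k\to\infty}\mathbb{E}[\Rev_{F_k}(R_k)] \;\ge\; \lim_{k\to\infty}\mathbb{E}[\Rev_{F^*_k}(R_k)] \;=\; \mathbb{E}[\Rev_{F^*}(R)],
\]
which is the desired inequality. Since $F\in\calF$ was arbitrary, this shows $F^*$ minimizes the expected revenue against $Q^*$, as claimed.

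There is essentially no obstacle at this stage: the heavy lifting was in proving the discrete equilibrium exists and in controlling the two convergence statements $\mathbb{E}[\Rev_{F_k}(R_k)]\to\mathbb{E}[\Rev_F(R)]$ and $\mathbb{E}[\Rev_{F^*_k}(R_k)]\to\mathbb{E}[\Rev_{F^*}(R)]$. The only subtlety worth mentioning is the direction of the inequality---the adversary minimizes revenue, so the discrete best-response yields $\mathbb{E}[\Rev_{F_k}(R_k)]\ge\mathbb{E}[\Rev_{F^*_k}(R_k)]$, and this direction is preserved under the limit. Combined with Claim~\ref{clm:two:Q_is_optimal} (which handles the seller's side using the equal-revenue property of $F^*$ and thus requires no discretization argument), this completes the alternative, discrete-to-continuous proof that $(Q^*,F^*)$ is an equilibrium of the reduced zero-sum game.
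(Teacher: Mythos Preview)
Your proof is correct and essentially identical to the paper's own argument: fix $F\in\calF$, take the approximating sequence $\{F_k\}$ from Corollary~\ref{cor:two_bidders:cont:limit_of_revenue_of_genral_F}, use the discrete equilibrium (Theorem~\ref{thm:two_bidders:disc}) to get $\mathbb{E}[\Rev_{F_k}(R_k)]\ge\mathbb{E}[\Rev_{F^*_k}(R_k)]$ for each $k$, and pass to the limit using the preceding convergence claim for $\mathbb{E}[\Rev_{F^*_k}(R_k)]$. The paper records exactly this chain of equalities and inequalities in a single displayed line.
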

\begin{proof}
Fix $F \in \calF$. We use the sequence $\{F_k\}$ from Corollary \ref{cor:two_bidders:cont:limit_of_revenue_of_genral_F}. Since $F^*_k$ is the adversary's optimal response to $Q_k^*$ then $\mathbb{E}[\Rev_{F_k}(R_k)] \ge \mathbb{E}[\Rev_{F^*_k}(R_k)$ for any $k$. And thus
\[\mathbb{E}[\Rev_{F}(R)] = \lim_{k \to \infty} \mathbb{E}[\Rev_{F_k}(R_k)] \ge \lim_{k \to \infty} \mathbb{E}[\Rev_{F_k^*}(R_k)] = \mathbb{E}[\Rev_{F^*}(R)].\]
\end{proof}
This Completes the direct proof for Claim~\ref{clm:two:F_is_optimal}. The rest of the proof is identical to Section~\ref{subsec:two:proof} (from Claim~\ref{clm:two:Q_is_optimal}).

\section{Missing Proofs from Section~\ref{sec:two_bidders}}
\subsection*{Proof of Claim \ref{clm:two_bidders:disc:nature_induction}}
\begin{proof}
For every $1 \le i \le k-1$
\begin{align*}
    \frac{\partial}{\partial q_i}\calL_k(\xvec, \qvec) &= r_{k,i}(1-x_i^2) + \frac{\overv}{k} \left(1 - x_i \right)^2 - r_{k,i+1}(1-x_{i+1}^2) \\
    &= r_{k,i+1}x_{i+1}^2 -\frac{2\overv}{k}\cdot x_i + (r_{k,i+1} - 2r_i)x_i^2.
\end{align*}
Since $x_i \ge 0 $ for all $1\le i \le k-1$
\begin{align*}
    \frac{\partial}{\partial q_i}\Rev(\xvec, \qvec) &= 0\\
    \iff x_{i+1} &= \sqrt{\frac{\frac{2\overv}{k}\cdot x_i + \left( r_i - \frac{\overv}{k} \right)x_i^2}{r_{k,i} + \frac{\overv}{k}}}\\
    &= \sqrt{\frac{\overv}{k} \cdot \frac{2 }{r_{k,i} +\nicefrac{\overv}{k}}(x_i - x_i^2) + x_i^2} .
\end{align*}
\end{proof}

\subsection*{Proof of Claim \ref{clm:two_bidders:disc:seller_induction}}
\begin{proof}
Differentiating $\calL_k(\qvec, \xvec)$ by $x_i$ is
\[\frac{\partial}{\partial x_i}\calL_k(\qvec, \xvec) = -2(q_i-q_{i-1}) r_{k,i}x_i  -  2\nicefrac{\overv}{k} \cdot q_i \left(1 - x_i \right) + \lambda \cdot \nicefrac{\overv}{k}.\]
Since $r_{k,i}, x_i > 0$ for every $2\le i \le k$, we can easily get
\[\frac{\partial}{\partial x_i}\calL_k(\qvec, \xvec) = 0 \iff q_{i-1}=q_{i}-\frac{\nicefrac{\lambda}{2}-q_{i}\left(1-x_{i}\right)}{r_{k,i}x_{i}}\cdot\nicefrac{\overv}{k}.\]

Since $r_{k,1} = 0$, we get
\[\frac{\partial}{\partial x_1}\calL_k(\qvec, \xvec) = -  2\nicefrac{\overv}{k} \cdot q_1 \left(1 - x_1 \right) + \lambda \cdot \nicefrac{\overv}{k},\]
which means that
\[\frac{\partial}{\partial x_1}\calL_k(\qvec, \xvec) = 0 \iff q_1=\frac{\lambda}{2(1-x_1)}.\]
\end{proof}

\section{Missing Proofs from Section~\ref{sec:multi_bidders}}
\subsection*{Proof of Claim \ref{clm:multi:revenue_expression}}
\begin{proof}
For a fixed price $r$ the expected revenue is given by
\begin{align*}
    \Rev_{F}(r) &= E\left[r\cdot1_{v^{\left(2\right)}\le r<v^{\left(1\right)}}\right]+E\left[v^{\left(2\right)}\cdot1_{r<v^{\left(2\right)}}\right] \\
    &=r\left(1-F^{n}\left(r\right)\right)+\int_{r}^{\overv}\left(1-nF^{n-1}\left(v\right)+\left(n-1\right)F^{n}\left(v\right)\right)dv.
\end{align*}

When the seller randomizes a reserve price using distribution $Q \in \calQ$ that is continuous over $(0,\overv]$ and differentiable almost everywhere, the expected revenue would be
\begin{align*}
    \Rev_{F}(Q) =& \int_{-\infty}^{\overv}\left(r\left(1-F^{n}\left(r\right)\right)+\int_{r}^{\overv}\left(1-nF^{n-1}\left(v\right)+\left(n-1\right)F^{n}\left(v\right)\right)dv\right)dQ\left(r\right). \\
    =~& Q(0)\int_{0}^{\overv}\left(1-nF^{n-1}\left(v\right)+\left(n-1\right)F^{n}\left(v\right)\right)dv \\
    &+ \int_{0}^{\overv}Q'(r)\left(r\left(1-F^{n}\left(r\right)\right)+\int_{r}^{\overv}\left(1-nF^{n-1}\left(v\right)+\left(n-1\right)F^{n}\left(v\right)\right)dv\right)dr \\
    =&\int_{0}^{\overv}\left(Q'\left(v\right)v\left(1-F^{n}\left(v\right)\right)+Q\left(v\right)\left(1-nF^{n-1}\left(v\right)+\left(n-1\right)F^{n}\left(v\right)\right)\right)dv.
\end{align*}
\end{proof}

\subsection{Details for Proof of Claim~\ref{clm:multi:candidate:integrand_min_underv_to_overv}}
\label{appndx:clm:integrand_min_underv_to_overv:details}
\subsubsection*{\bf Second derivative is positive.}

We substitute $Q^*(v) = \left(1-\nicefrac{a}{\overv}\right)^{n-1}\left(\frac{v}{v-a}\right)^{n-1}\left(\frac{n-1-\frac{1}{n-1}+\log\nicefrac{v}{\underv}}{n-1-\frac{1}{n-1}+\log\nicefrac{\overv}{\underv}}\right)$, and $z=1-\nicefrac{a}{v}$, in Eq~\eqref{eq:multi:second_derivative_term}, we get

\begin{align*}
&    \left(\left(n-1\right)Q^{*}\left(v\right)-v\frac{dQ^{*}\left(v\right)}{dv}\right)z-\left(n-2\right)Q^{*}\left(v\right) \\
=&   \left(n-1\right)\left(1-\nicefrac{a}{\overv}\right)^{n-1}\left(\frac{v}{v-a}\right)^{n-1}\left(\frac{n-1-\frac{1}{n-1}+\log\nicefrac{v}{\underv}}{n-1-\frac{1}{n-1}+\log\nicefrac{\overv}{\underv}}\right)\left(1-\nicefrac{a}{v}\right) \\
	&+v\left(1-\nicefrac{a}{\overv}\right)^{n-1}\left(\frac{v}{v-a}\right)^{n-1}\left(\frac{\left(n-1\right)a\log\nicefrac{v}{\underv}+\underv-v}{v\left(v-a\right)\left(n-1-\frac{1}{n-1}+\log\nicefrac{\overv}{\underv}\right)}\right)\left(1-\nicefrac{a}{v}\right) \\
	&-\left(n-2\right)\left(1-\nicefrac{a}{\overv}\right)^{n-1}\left(\frac{v}{v-a}\right)^{n-1}\left(\frac{n-1-\frac{1}{n-1}+\log\nicefrac{v}{\underv}}{n-1-\frac{1}{n-1}+\log\nicefrac{\overv}{\underv}}\right) \\
=&	\left(n-1\right)\left(1-\nicefrac{a}{\overv}\right)^{n-1}\left(\frac{v}{v-a}\right)^{n-2}\left(\frac{n-1-\frac{1}{n-1}+\log\nicefrac{v}{\underv}}{n-1-\frac{1}{n-1}+\log\nicefrac{\overv}{\underv}}\right) \\
	&+\left(1-\nicefrac{a}{\overv}\right)^{n-1}\left(\frac{v}{v-a}\right)^{n-1}\left(\frac{\left(n-1\right)a\log\nicefrac{v}{\underv}+\underv-v}{v\left(n-1-\frac{1}{n-1}+\log\nicefrac{\overv}{\underv}\right)}\right) \\
	&-\left(n-2\right)\left(1-\nicefrac{a}{\overv}\right)^{n-1}\left(\frac{v}{v-a}\right)^{n-1}\left(\frac{n-1-\frac{1}{n-1}+\log\nicefrac{v}{\underv}}{n-1-\frac{1}{n-1}+\log\nicefrac{\overv}{\underv}}\right).
\end{align*}
We divide by $\frac{\left(1-\nicefrac{a}{\overv}\right)^{n-1}\left(\frac{v}{v-a}\right)^{n-1}}{n-1-\frac{1}{n-1}+\log\nicefrac{\overv}{\underv}}>0$ and get
\begin{align*}
    & \left(n-1\right)\left(1-\nicefrac{a}{v}\right)\left(n-1-\frac{1}{n-1}+\log\nicefrac{v}{\underv}\right) \\
    &+\frac{\left(n-1\right)a\log\nicefrac{v}{\underv}+\underv-v}{v}\\
    &-\left(n-2\right)\left(n-1-\frac{1}{n-1}+\log\nicefrac{v}{\underv}\right) \\
    =&\left(\left(n-1\right)\left(1-\nicefrac{a}{v}\right)-\left(n-2\right)\right)\left(n-1-\frac{1}{n-1}+\log\nicefrac{v}{\underv}\right) \\
    &+\left(n-1\right)\nicefrac{a}{v}\cdot\log\nicefrac{v}{\underv}-\left(1-\nicefrac{\underv}{v}\right) \\
    =& \left(1-\left(n-1\right)\nicefrac{a}{v}\right)\left(n-1-\frac{1}{n-1}+\log\nicefrac{v}{\underv}\right)+\left(n-1\right)\nicefrac{a}{v}\cdot\log\nicefrac{v}{\underv}-1+\nicefrac{\underv}{v} \\
    =& \left(n-1-\frac{1}{n-1}+\log\nicefrac{v}{\underv}\right)-\left(n-1\right)\nicefrac{a}{v}\cdot\left(n-1-\frac{1}{n-1}\right)-1+\nicefrac{\underv}{v} \\
    =& \left(n-1-\frac{1}{n-1}+\log\nicefrac{v}{\underv}\right)-\nicefrac{a}{v}\cdot\left(\left(n-1\right)^{2}-1\right)-1+\nicefrac{\underv}{v} \\
    =& \left(n-1-\frac{1}{n-1}+\log\nicefrac{v}{\underv}\right)-\left(1-\nicefrac{a}{v}\right)-\nicefrac{a}{v}\cdot\left(n-1\right)^{2}+\nicefrac{\underv}{v}.
\end{align*}
We substitute $a=\frac{\underv}{(n-1)^2}$ (Eq~\eqref{eq:multi:a_equation}) and we get
\begin{align*}
    \left(n-1-\frac{1}{n-1}+\log\nicefrac{v}{\underv}\right)-\left(1-\frac{\underv}{v\left(n-1\right)^{2}}\right) > \log\nicefrac{v}{\underv}-\left(1-\nicefrac{\underv}{v}\right) \ge 0.
\end{align*}

\subsubsection*{\bf Details for $h_v\left(1-\nicefrac{a}{v}\right) \le h_v(0)$.}
First we express $\tilde{h}_{v}\left(0\right)$ by
\begin{align*}
    \tilde{h}_{v}\left(0\right)=& \left(\frac{v}{v-a}\right)^{n-1}\left(n-1-\frac{1}{n-1}+\log\nicefrac{v}{\underv}\right) \\
    &-\left(\frac{v}{v-a}\right)^{n-1}\left(\frac{\left(n-1\right)a\log\nicefrac{v}{\underv}+\underv-v}{v-a}\right)-n.
\end{align*}
Next we express $\tilde{h}_{v}\left(1-\nicefrac{a}{v}\right)$ by
\begin{align*}
    \tilde{h}_{v}\left(1-\nicefrac{a}{v}\right)=&\left(\frac{v}{v-a}\right)^{n-1}\left(n-1-\frac{1}{n-1}+\log\nicefrac{v}{\underv}\right)\left(1-n\left(1-\nicefrac{a}{v}\right)^{n-1}+\left(n-1\right)\left(1-\nicefrac{a}{v}\right)^{n}\right) \\
    &-\left(\frac{v}{v-a}\right)^{n-1}\left(\frac{\left(n-1\right)a\log\nicefrac{v}{\underv}+\underv-v}{v-a}\right)\left(1-\left(1-\nicefrac{a}{v}\right)^{n}\right)-n\cdot\nicefrac{a}{v}.
\end{align*}

Now we can calculate $h_v\left(1-\nicefrac{a}{v}\right)$.
\begin{align*}
    &\tilde{h}_{v}\left(0\right)-\tilde{h}_{v}\left(1-\nicefrac{a}{v}\right) \\
    =&\left(\frac{v}{v-a}\right)^{n-1}\left(n-1-\frac{1}{n-1}+\log\nicefrac{v}{\underv}\right)\left(n\left(1-\nicefrac{a}{v}\right)^{n-1}-\left(n-1\right)\left(1-\nicefrac{a}{v}\right)^{n}\right) \\
    &- \left(\frac{v}{v-a}\right)^{n-1}\left(\frac{\left(n-1\right)a\log\nicefrac{v}{\underv}+\underv-v}{v-a}\right)\left(1-\nicefrac{a}{v}\right)^{n}-n\left(1-\nicefrac{a}{v}\right) \\
    =& \left(n-1-\frac{1}{n-1}+\log\nicefrac{v}{\underv}\right)\left(n-\left(n-1\right)\left(1-\nicefrac{a}{v}\right)\right) \\
    &-\left(\frac{\left(n-1\right)a\log\nicefrac{v}{\underv}+\underv-v}{v-a}\right)\left(\frac{v-a}{v}\right)-n\left(1-\nicefrac{a}{v}\right) \\
    =& \left(n-1-\frac{1}{n-1}+\log\nicefrac{v}{\underv}\right)\left(n-\left(n-1\right)\left(1-\nicefrac{a}{v}\right)\right) \\
    &- \left(\frac{\left(n-1\right)a\log\nicefrac{v}{\underv}}{v}-\left(1-\nicefrac{\underv}{v}\right)\right)-n\left(1-\nicefrac{a}{v}\right) \\
    =& \left(n-1-\frac{1}{n-1}+\log\nicefrac{v}{\underv}\right)\left(1+\left(n-1\right)\nicefrac{a}{v}\right) \\
    &- \left(n-1\right)\nicefrac{a}{v}\cdot\log\nicefrac{v}{\underv}+\left(1-\nicefrac{\underv}{v}\right)-n\left(1-\nicefrac{a}{v}\right) \\
    =&-\frac{1}{n-1}+\log\nicefrac{v}{\underv}+\left(1-\frac{1}{\left(n-1\right)^{2}}\right)\left(n-1\right)^{2}\nicefrac{a}{v}-\nicefrac{\underv}{v}+n\cdot\nicefrac{a}{v}.
\end{align*}
We substitute $a=\frac{\underv}{(n-1)^2}$ (Eq~\eqref{eq:multi:a_equation}) and we get
\begin{align*}
    & -\frac{1}{n-1}+\log\nicefrac{v}{\underv}+\left(1-\frac{1}{\left(n-1\right)^{2}}\right)\nicefrac{\underv}{v}-\nicefrac{\underv}{v}+\frac{n}{\left(n-1\right)^{2}}\nicefrac{\underv}{v} \\
    =& \frac{1}{n-1}+\log\nicefrac{v}{\underv}+\frac{1}{n-1}\nicefrac{\underv}{v} \\
    =& \log\nicefrac{v}{\underv}-\frac{1-\nicefrac{\underv}{v}}{n-1} \\
    \ge& 0.
\end{align*}

\section{Revisiting the Single bidder Case through Discretization}
\label{appndx:single}

In this section we solve a "simpler" setting (in a sense made precise below) where there is only one bidder.
This case is also solved by \cite{CarrascoFK+18}.
We show how we can use discretization to find the equilibrium, both when the seller knows the bidder's expected value and an upper bound on the values (Section~\ref{subsec:single:first_moment_and_upper_bound}), and when the seller knows the bidder's expected value and second moment (Section~\ref{subsec:single:only_two_moments}). In order find the equilibrium for the latter, we first solve a case where an upper bound is also known (Section~\ref{subsec:single:two_moments_and_upper_bound}). For this problem, we find that the seller has a certain freedom in the choice of the pricing distribution, which is expressed by a family of distributions, all of which yield the same expected revenue.
We are able to identify a specific member of this family, which does not rely on the upper bound on the valuations to guarantee the same expected revenue. This solves the case where the seller only knows the first two moments, and not an upper bound.

Our starting point is the expression for expected revenue. For one bidder with a valuation sampled from a distribution $F$, the expected revenue of a seller who sells the item for a price of $r$, is
\[\Rev_{F}(r) = r\left( 1-F(r) \right).\]
When the seller randomizes a price $r$, using distribution $Q$, the expected revenue is
\begin{equation}
    \Rev_{F}(Q) = E_{r\sim R}[\Rev_{F}(r)] =\int_0^\infty r\left( 1-F(r) \right)dQ(r). \label{single:expected_revnue_form}
\end{equation}

Since the expected revenue function is linear in terms of both the seller's distribution and the adversary's distribution, it is expected that at equilibrium, both the seller and the adversary are indifferent over the support (see discussion in Section~\ref{sec:intro} on indifference).
This makes the single bidder setting easier to analyze.

\subsection{Single bidder: Upper Bound and First Moment Constraint}
\label{subsec:single:first_moment_and_upper_bound}
We assume the adversary can only choose distributions with a mean value of $\mu$, and their support is contained in $[0,\overv]$.
Let $\calF_{\mu}^{\le\overv}$ be the set of distributions which have expectation $\mu$, and support within $[0,\overv]$.
For $F\in \calF_{\mu}^{\le\overv}$ and $Q\in \calQ$, the expected revenue 
The problem
\begin{equation}
    \max_{Q\in \calQ} \min_{F\in \calF_{\mu}^{\le\overv}} \int_0^\infty r\left( 1-F(r) \right)dQ(r). \label{prblm:single:mu_overv}
\end{equation}

\subsubsection{Discrete Case}

For $k\ge 1$, we limit the possible prices and valuations to $k+1$ values $\rvec_k = \left(r_{k,1}, \dots, r_{k,k+1}\right)$, such that $r_{k,k+1} = \overv$.
When the context is clear, we will write $r_i$ instead of $r_{k,i}$.
Unlike the case for two bidders, in this case, using a price of zero, will always yield an expected revenue of zero, so we assume $r_{k,1} = \underv > 0$, for some value $0 < \underv < \mu$ that we will determine when we solve for the general, non discrete case, to make sure is would be a tight lower bound for the support of the distribution the adversary would choose in that general case.

For every $k\ge 1$ and $1\le i \le k+1$, we assign $r_{k,i} = \alpha^{i-1} \underv$ where $\alpha = \left( \nicefrac{\overv}{\underv}\right)^{\nicefrac{1}{k}}$.\footnote{A simple analysis with a general vector of possible values shows that when the ratio between consecutive prices is identical, we get simpler expressions. This was not the case for two bidders, which is immediately clear since there a price of zero is used by the seller.}
Given distributions $Q_k, F_k$ over $\rvec_k$, where $F_k$ has mean of $\mu$, we define the vectors $\qvec_k = \left(q_1, \dots, q_{k+1} \right)$ and $\xvec_k = \left(x_1, \dots, x_{k+1} \right)$ such that $q_i = Q_k(r_i)$ and $x_i = F_k(r_i)$. We also define $q_0 = 0$.

The expected revenue of a mechanism where the seller randomizes a price using $Q_k$ and the bidder's valuation is sampled from $F_k$ is
\[\Rev(\qvec_k, \xvec_k) = \sum_{i=1}^k (q_i - q_{i-1})r_i (1-x_i).\]
The mean value of $F_k$ is
\begin{align*}
    \int_0^{\overv} (1-F(v))dv &= r_1 + \sum_{i=1}^k(r_{i+1}- r_i) (1-x_i) \\
    &= \underv + (\alpha-1)\sum_{i=1}^kr_i(1-x_i).
\end{align*}
Therefore the problem we solve for each $k$ is
\begin{align}
    \max_{\qvec_k} \min_{\xvec_k} \ \ & \sum_{i=1}^k (q_i - q_{i-1})r_i (1-x_i) \label{single:discrete:maxmin:rev}\\
    s.t \ \ & q_1 \le \dots q_k = q_{k+1} = 1 \label{single:discrete:maxmin:q_nondec} \\
    & x_1 \le \dots \le x_{k+1} = 1 \label{single:discrete:maxmin:x_nondec} \\
    & (\alpha-1)\sum_{i=1}^k r_i(1-x_i) = \mu - \underv \label{single:discrete:maxmin:mean_constraint}
\end{align}
We note that we force $q_k= q_{k+1} = 1$ since the seller could never seller the item with a price of $r_{k+1}$.
We allow values lower than zero in this case, but we will make sure that when we take the limit to get the general case, the image of the distribution function would be contained in $[0,1]$. We do not necessarily get negative values, but we do not prove otherwise.
The Lagrangian we get for this problem is
\[\calL(\qvec_k, \xvec_k, \lambda_1) = \Rev(\qvec_k, \xvec_k) - \lambda_1 (\alpha-1)\sum_{i=1}^k r_i(1-x_i).\]
We will ignore the inequalities constraints in $\eqref{single:discrete:maxmin:q_nondec}$ and \eqref{single:discrete:maxmin:x_nondec}, so we could assume that if a solution exists, then it would satisfy
\begin{align}
    \frac{\partial \calL(\qvec_k, \xvec_k, \lambda_1)}{\partial q_i} &= 0 &1\le i \le k-1 \label{single:discrete:derivative_q_is_0}\\
    \frac{\partial \calL(\qvec_k, \xvec_k, \lambda_1)}{\partial x_i} &= 0 & 1 \le i \le k. \label{single:discrete:derivative_x_is_0}
\end{align}
We will then see that the solution we find do satisfy those inequalities.

\begin{claim}  \label{clm:single:disc:x_form}
    If $\qvec_k, \xvec_k$ solves Problem \eqref{single:discrete:maxmin:rev}-\eqref{single:discrete:maxmin:mean_constraint}, then
    \[x_i = 1 - \frac{\mu - \underv}{kr_i(\alpha-1)} \qquad \forall 1\le i\le k.\]
\end{claim}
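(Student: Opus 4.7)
The plan is a routine Lagrange multiplier calculation that only uses condition \eqref{single:discrete:derivative_q_is_0} together with the mean-value constraint \eqref{single:discrete:maxmin:mean_constraint}; condition \eqref{single:discrete:derivative_x_is_0} is not needed for characterizing $\xvec_k$ (it is what one would use to pin down $\qvec_k$).

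First I would write out the Lagrangian explicitly,
\[
\calL(\qvec_k, \xvec_k, \lambda_1) = \sum_{i=1}^{k} (q_i - q_{i-1}) r_i (1 - x_i) - \lambda_1 (\alpha-1) \sum_{i=1}^{k} r_i (1 - x_i),
\]
and differentiate with respect to $q_i$ for $1 \le i \le k-1$. Since $q_i$ appears in exactly two consecutive terms of the telescoping-style first sum and nowhere in the second sum, one gets
\[
\frac{\partial \calL}{\partial q_i} = r_i (1 - x_i) - r_{i+1} (1 - x_{i+1}).
\]
Setting this to zero for every $i \in \{1,\ldots,k-1\}$ shows that the quantity $r_i(1-x_i)$ is independent of $i$; call its common value $c$.

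Next I would substitute $r_i(1-x_i) = c$ into the mean-value constraint \eqref{single:discrete:maxmin:mean_constraint}: the sum collapses to $(\alpha-1) k c = \mu - \underv$, yielding $c = \frac{\mu - \underv}{k(\alpha-1)}$. Solving $r_i(1 - x_i) = c$ for $x_i$ then produces the claimed closed form.

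There is no real obstacle here; the only thing worth noting is that the proof proceeds by relaxing the monotonicity and range constraints \eqref{single:discrete:maxmin:q_nondec}--\eqref{single:discrete:maxmin:x_nondec} and checking them after the fact. Monotonicity of $\xvec_k$ is automatic: since $r_i = \alpha^{i-1}\underv$ is strictly increasing in $i$, so is $x_i = 1 - c/r_i$. The range constraint $x_i \in [0,1]$ holds as long as $\underv$ is chosen so that $c \le r_1 = \underv$, i.e.~$\underv \cdot k(\alpha-1) \ge \mu - \underv$; this will be ensured by the particular choice of $\underv$ made when the general (non-discretized) solution is derived in the subsequent sections, so consistency with \eqref{single:discrete:maxmin:q_nondec}--\eqref{single:discrete:maxmin:x_nondec} is not an issue at this stage.
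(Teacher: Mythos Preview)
Your proposal is correct and follows essentially the same argument as the paper: differentiate the Lagrangian in $q_i$ to conclude that $r_i(1-x_i)$ is constant in $i$, then determine that constant from the mean constraint \eqref{single:discrete:maxmin:mean_constraint}. The only cosmetic difference is that the paper parameterizes the constant as $\underv(1-x_1)$ rather than naming it $c$, and the paper does not verify $x_i\ge 0$ (indeed, the surrounding text explicitly allows negative values at this stage), whereas you add a brief discussion of the range constraint.
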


\begin{proof}
We look at the derivative by $q_i$ for every $1\le i \le k-1$,
\begin{align*}
    \frac{\partial \calL(\qvec_k, \xvec_k, \lambda_1)}{\partial q_i} =  r_i(1-x_i) - r_{i+1}(1-x_{i+1}).
\end{align*}
From equality \eqref{single:discrete:derivative_q_is_0} we get that $r_i(1-x_i) = r_{i+1}(1-x_{i+1})$ for every $1 \le i \le k-1$, which means that for every $1\le i\le k$
\[x_i = 1 -  \frac{r_1}{r_i}(1-x_1) = 1 -  \frac{\underv}{r_i}(1-x_1).\]
Using the constraint \eqref{single:discrete:maxmin:mean_constraint} we get
\begin{align*}
    \mu - \underv &= (\alpha-1)\sum_{i=1}^k r_i(1-x_i)\\
    &= \underv(\alpha-1)\sum_{i=1}^k(1-x_1) \\
    &= k\underv(\alpha-1)(1-x_1).
\end{align*}
And thus
\begin{align*}
    1-x_1 &= \frac{\mu - \underv}{k\underv(\alpha-1)} \\
    \implies x_i &= 1 - \frac{\mu - \underv}{kr_i(\alpha-1)}.
\end{align*}
We can notice that indeed $x_1 \le \dots \le x_k \le 1$.
\end{proof}

\begin{claim} \label{clm:single:disc:q_form}
    If $\qvec_k, \xvec_k$ solves Problem \eqref{single:discrete:maxmin:rev}-\eqref{single:discrete:maxmin:mean_constraint}, then for every $1\le i \le k$
    \[q_i =  \frac{i}{k}.\]
\end{claim}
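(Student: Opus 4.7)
The plan is to mirror the approach used in Claim~\ref{clm:single:disc:x_form} but apply the first-order condition \eqref{single:discrete:derivative_x_is_0} on the $x_i$ variables instead of the $q_i$ variables. The Lagrangian
\[
\calL(\qvec_k, \xvec_k, \lambda_1) = \sum_{i=1}^k (q_i - q_{i-1}) r_i (1-x_i) - \lambda_1 (\alpha-1) \sum_{i=1}^k r_i (1-x_i)
\]
is linear in $x_i$, so its partial derivative with respect to $x_i$ is straightforward to compute. Crucially, the factor $r_i$ appears in both the revenue term and the constraint term, so it will cancel when we set the derivative to zero, leaving a clean relation between $q_i - q_{i-1}$ and $\lambda_1(\alpha-1)$.

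First I would compute, for every $1 \le i \le k$,
\[
\frac{\partial \calL}{\partial x_i} = -(q_i - q_{i-1}) r_i + \lambda_1 (\alpha-1) r_i.
\]
Setting this equal to zero and dividing by $r_i > 0$ gives $q_i - q_{i-1} = \lambda_1(\alpha-1)$ for every $1 \le i \le k$. Thus the increments of $\qvec_k$ are constant across $i$, so together with $q_0 = 0$ we get $q_i = i \cdot \lambda_1(\alpha-1)$.

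Next, I would pin down $\lambda_1$ using the boundary condition $q_k = 1$ imposed in \eqref{single:discrete:maxmin:q_nondec}: this yields $\lambda_1 = \frac{1}{k(\alpha-1)}$, and substituting back gives $q_i = \nicefrac{i}{k}$ for every $1 \le i \le k$. As a sanity check, this sequence is strictly increasing and contained in $(0,1]$, so the inequality constraints in \eqref{single:discrete:maxmin:q_nondec} that we had ignored are automatically satisfied, confirming that our first-order analysis produced an admissible solution. There is no real obstacle here since the Lagrangian is linear in each $x_i$; the main reason the argument works out so cleanly is the geometric choice $r_{k,i} = \alpha^{i-1}\underv$, which ensures the $r_i$ factor cancels uniformly across $i$.
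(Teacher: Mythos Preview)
Your proposal is correct and follows essentially the same approach as the paper: compute $\partial \calL/\partial x_i$, set it to zero to obtain constant increments $q_i - q_{i-1} = \lambda_1(\alpha-1)$, use $q_0=0$ and $q_k=1$ to pin down $\lambda_1 = \frac{1}{k(\alpha-1)}$, and conclude $q_i = i/k$. The only addition you make is the explanatory remark about the geometric grid causing the $r_i$ factor to cancel, which is a nice observation but not part of the paper's proof.
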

\begin{proof}
The derivative by $x_i$ for every $1\le i \le k$ is
\[\frac{\partial \calL(\qvec_k, \xvec_k, \lambda_1)}{\partial x_i} =  -(q_i - q_{i-1})r_i + \lambda_1 r_i(\alpha-1).\]
The derivative is zero only when
\[q_i - q_{i-1} = \lambda_1 (\alpha-1).\]
Since $q_0 = 0$, we get that for every $1\le i\le k$
\begin{align*}
    q_i = i\lambda_1 (\alpha-1).
\end{align*}
We know that $q_k=1$, which means
\[k\lambda_1 (\alpha-1) = 1.\]
So we can solve $\lambda_1 = \frac{1}{k(\alpha-1)}$,  and get
And get
\[q_i = \frac{i}{k}.\]
We can also see that the inequalities $0 \le q_1 \le \dots \le q_k$, are satisfied.
\end{proof}

\subsubsection{Mean and Upper Bound Solution}
For every $k\ge 1$, there is a a solution $(\qvec_k, \xvec_k)$ that is described by Claims \ref{clm:single:disc:x_form},  and \ref{clm:single:disc:q_form}.
Those correspond to functions $Q^*_k, F_k^*$ where $Q^*_k(r_i) = q_i$, and $F_k^*(r_i) = x_i$ for every $1\le i \le k+1$.
We defined the possible values in the discrete case to be between in $[\underv, \overv]$ where $\underv > 0$ is a parameter we can control, so we can set it in such a way that the adversary's best distribution's support would be in $[\underv, \overv]$.
\begin{claim} \label{clm:single:gen:F_form}
    For arbitrary $\underv > 0$ the sequence $\left\{F^*_k\right\}_{k\in \N}$ is pointwise convergent to $F^*$,where
    \[F^*(v) = \begin{cases}
    0 & v\in [0,\underv), \\
    1-\frac{\mu - \underv}{v \log\left(\nicefrac{\overv}{\underv}\right)} & v \in [\underv, \overv), \\
    1 & v \ge \overv.
    \end{cases}\]
\end{claim}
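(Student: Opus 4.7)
The plan is a direct pointwise limit computation, using the closed-form expressions for the discrete equilibrium obtained in Claims~\ref{clm:single:disc:x_form} and~\ref{clm:single:disc:q_form} together with the single analytic fact that $k(\alpha_k - 1) \to \log(\overv/\underv)$, where $\alpha_k = (\overv/\underv)^{1/k}$.

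First I would handle the trivial regions. For any $v \in [0,\underv)$, every $F^*_k$ has its smallest support point at $r_{k,1} = \underv$, so $F^*_k(v) = 0$ for all $k$ (recalling that $F^*_k$ is a step function with jumps only at the grid points of $\rvec_k$), and the pointwise limit is $0 = F^*(v)$. For $v \geq \overv$, $F^*_k(v) = 1$ for every $k$ since $r_{k,k+1} = \overv$, matching $F^*(v) = 1$.

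The heart of the proof is the interval $v \in [\underv,\overv)$. For each $k$, let $i_k(v)$ denote the largest index $i$ with $r_{k,i} \leq v$, so that $F^*_k(v) = x_{i_k(v)}$ by the step-function property. Two elementary limits are needed: (i) $r_{k,i_k(v)} \to v$, because $v < r_{k,i_k(v)+1} = \alpha_k\, r_{k,i_k(v)}$ and $\alpha_k \to 1$; and (ii) $k(\alpha_k - 1) \to \log(\overv/\underv)$, which is the standard expansion of $(\overv/\underv)^{1/k}-1$ for large $k$. Plugging these into the formula from Claim~\ref{clm:single:disc:x_form},
\[
F^*_k(v) \;=\; 1 - \frac{\mu - \underv}{k\, r_{k,i_k(v)}\,(\alpha_k - 1)},
\]
the product in the denominator converges to $v \log(\overv/\underv)$, giving
\[
\lim_{k\to\infty} F^*_k(v) \;=\; 1 - \frac{\mu - \underv}{v\,\log(\overv/\underv)} \;=\; F^*(v),
\]
as required.

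I do not anticipate any serious obstacle here: there is no need to invoke the discrete-to-continuous toolbox of Appendix~\ref{appndx:sec:toolbox}, since the discrete equilibrium is available in a fully explicit closed form and the only limit used is the standard $k\bigl((\overv/\underv)^{1/k}-1\bigr) \to \log(\overv/\underv)$. The mildest care is in choosing the correct grid index $i_k(v)$ and verifying $r_{k,i_k(v)} \to v$, which is immediate from the geometric spacing. Note the claim only asserts pointwise convergence; uniform convergence on $[\underv,\overv)$ also follows from monotonicity of all $F^*_k$ and continuity of $F^*$ on that interval (Dini-type argument), but is not needed for the statement.
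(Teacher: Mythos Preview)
Your proposal is correct and follows essentially the same approach as the paper: both rely on the closed-form expression from Claim~\ref{clm:single:disc:x_form} and the elementary limit $k(\alpha_k-1)\to\log(\overv/\underv)$. Your version is in fact more carefully written than the paper's own proof, which omits the explicit identification of the relevant grid index $i_k(v)$ and the verification that $r_{k,i_k(v)}\to v$, and does not separately discuss the trivial regions $[0,\underv)$ and $[\overv,\infty)$.
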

\begin{proof}
We know that $r_{k,1} \in [0,\overv]$ for every $k$, so the sequence $\{r_{k,1}\}_{k\in \N}$ has a convergent subsequence. We will assume that the sequence itself is convergent, however all of the steps can be done on every convergent subsequence, which would all yield the same limit, which would prove that the sequence is indeed convergent.

Since $\alpha = \left(\nicefrac{\overv}{\underv}^{\nicefrac{1}{k}} \right)$, we get that
\[\lim_{k\to \infty} k(\alpha - 1) = \log\left(\nicefrac{\overv}{\underv}\right).\]
From these limits, and from Claim \ref{clm:single:disc:x_form}, we get
\begin{align*}
    F^*(v) = 1- \frac{\mu - \underv}{v \log\left(\nicefrac{\overv}{\underv}\right)}.
\end{align*}
% In order to guarantee that the bidder would not prefer to put any mass below $a$, we set fix $a$ such that $F^*(a)$
% \[\frac{\mu - a}{\log\left(\nicefrac{\overv}{a}\right)} \le a.\]
\end{proof}
\begin{claim} \label{clm:single:gen:Q_form}
    For arbitrary $\underv > 0$ the sequence $\left\{Q^*_k\right\}_{k\in \N}$ is pointwise convergent to $Q^*$,where
    \[Q^*(v) = \begin{cases}
        0 & r \in [0, \underv), \\
        \frac{\log r - \log \underv}{\log \overv - \log \underv} & r\in [\underv, \overv].
    \end{cases}\]
\end{claim}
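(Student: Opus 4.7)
The plan is to use the explicit form of $Q^*_k$ on the grid $\rvec_k$ established in Claim~\ref{clm:single:disc:q_form}, together with the geometric spacing of the grid, to pin down $Q^*_k(r)$ for an arbitrary $r\in[\underv,\overv]$ and then pass to the limit.

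First, recall that by construction $r_{k,i}=\alpha^{i-1}\underv$ with $\alpha=(\overv/\underv)^{1/k}$, so the grid points are equally spaced on a logarithmic scale. For a fixed $r\in[\underv,\overv]$, let $i_k(r)$ be the largest index in $\{1,\dots,k+1\}$ such that $r_{k,i_k(r)}\le r$. Solving $\alpha^{i-1}\underv\le r$ explicitly gives
\[
i_k(r) \;=\; 1+\Bigl\lfloor k\cdot\frac{\log(r/\underv)}{\log(\overv/\underv)}\Bigr\rfloor.
\]
Since $F^*_k$ (and hence $Q^*_k$) is a step function with jumps only at the grid points, Claim~\ref{clm:single:disc:q_form} yields $Q^*_k(r)=q_{i_k(r)}=i_k(r)/k$.

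Next, dividing the displayed expression by $k$ and sending $k\to\infty$, the floor contributes an error of at most $1/k$, so
\[
\lim_{k\to\infty} Q^*_k(r) \;=\; \lim_{k\to\infty}\frac{i_k(r)}{k} \;=\; \frac{\log(r/\underv)}{\log(\overv/\underv)} \;=\; \frac{\log r-\log\underv}{\log\overv-\log\underv},
\]
matching the claimed formula on $[\underv,\overv]$. For $r\in[0,\underv)$ there is no grid point below $r$, so $Q^*_k(r)=q_0=0$ for every $k$, and the limit is $0$ as required. Since we may run this argument for every convergent subsequence of the (trivially relatively compact) sequence $\{Q^*_k(r)\}$ with the same limit, pointwise convergence follows.

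There is no essential obstacle here: once one observes that the geometric grid is what makes Claim~\ref{clm:single:disc:q_form} take the clean form $q_i=i/k$, the computation is essentially a Riemann-sum-style limit of $i_k(r)/k$ against the log-uniform cumulative. The only care needed is at the endpoint $r=\overv$ (where $Q^*_k(\overv)=1=Q^*(\overv)$ by the forced boundary condition $q_k=q_{k+1}=1$) and in noting that the convergence is only pointwise — uniform convergence would fail at $r=\underv$ unless one restricts away from the discontinuity, but the claim as stated asks only for pointwise convergence.
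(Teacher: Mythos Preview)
Your argument is correct. The paper in fact omits the proof of this claim entirely (it is stated and then the text moves on), so there is nothing to compare against directly; your approach is the natural one and parallels what the paper does for the companion claim on $\{F^*_k\}$ (Claim~\ref{clm:single:gen:F_form}), namely read off the explicit discrete formula and pass to the limit. The index computation $i_k(r)=1+\lfloor k\log(r/\underv)/\log(\overv/\underv)\rfloor$ from the geometric grid, combined with $q_i=i/k$ from Claim~\ref{clm:single:disc:q_form}, is exactly the right bridge.

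One small correction to your closing remark: the convergence is actually uniform on all of $[0,\overv]$, not merely pointwise. For $r<\underv$ you have $Q^*_k(r)=Q^*(r)=0$, and for $r\in[\underv,\overv]$ your own bound gives $|Q^*_k(r)-Q^*(r)|=|i_k(r)/k-\theta|\le 1/k$ with $\theta=\log(r/\underv)/\log(\overv/\underv)$, uniformly in $r$. There is no discontinuity of $Q^*$ at $\underv$ (both branches give $0$ there), so nothing obstructs uniform convergence. This does not affect the validity of your proof of the stated pointwise claim; it just means your caveat is unnecessary.
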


We now need to set $\underv$ such that neither the seller nor the adversary would prefer having any mass in $[0,\underv)$.
We can use the fact that both the adversary and the seller are indifferent over $[\underv, \overv]$, when the other uses $Q^*$ or $F^*$ respectively.
\begin{claim} \label{clm:single:mu_upper_bound:sol}
    If $\underv$ solves $\frac{\mu - \underv}{\log \left(\nicefrac{\overv}{\underv} \right)} = \underv$, then $Q^*$, and $F^*$ defined in Claims \ref{clm:single:gen:F_form}, \ref{clm:single:gen:Q_form}, are a solution to Problem \eqref{prblm:single:mu_overv}.
\end{claim}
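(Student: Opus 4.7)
The plan is to verify directly that $(Q^*,F^*)$ is a saddle point, without appealing to the discretization or the limit arguments of Claims~\ref{clm:single:gen:F_form}--\ref{clm:single:gen:Q_form}. First I would check the two basic feasibility conditions: that $F^*\in\calF_\mu^{\le\overv}$ and that $Q^*\in\calQ$. For $F^*$, substituting $\frac{\mu-\underv}{\log(\overv/\underv)}=\underv$ gives $1-F^*(v)=\underv/v$ on $[\underv,\overv)$, and then
$\int_0^{\overv}(1-F^*(v))\,dv = \underv + \int_{\underv}^{\overv}(\underv/v)\,dv = \underv(1+\log(\overv/\underv)) = \mu$, where the last equality is again the defining equation for $\underv$. $Q^*$ is clearly a valid CDF supported on $[\underv,\overv]$.

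Second, I would show $F^*$ is the adversary's best response to $Q^*$. The density of $Q^*$ on $[\underv,\overv]$ is $q^*(r)=\frac{1}{r\log(\overv/\underv)}$, so for any $F\in\calF_\mu^{\le\overv}$,
\[
\Rev_F(Q^*)=\int_{\underv}^{\overv} r(1-F(r))\cdot\frac{1}{r\log(\overv/\underv)}\,dr=\frac{1}{\log(\overv/\underv)}\int_{\underv}^{\overv}(1-F(r))\,dr.
\]
Using $\int_0^{\overv}(1-F)\,dv=\mu$ together with the pointwise bound $1-F(r)\le 1$, the integral $\int_{\underv}^{\overv}(1-F)\,dr$ is minimized exactly when $F\equiv 0$ on $[0,\underv]$, giving value $\mu-\underv$. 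Any such $F$ attains $\Rev_F(Q^*)=(\mu-\underv)/\log(\overv/\underv)=\underv$, and $F^*$ is one such choice, so $F^*$ minimizes $\Rev_F(Q^*)$ over $\calF_\mu^{\le\overv}$.

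Third, I would show $Q^*$ is the seller's best response to $F^*$. For any fixed reserve $r$,
\[
\Rev_{F^*}(r)=r(1-F^*(r))=\begin{cases} r & r\in[0,\underv],\\ \underv & r\in[\underv,\overv),\\ 0 & r\ge \overv,\end{cases}
\]
so $\max_r \Rev_{F^*}(r)=\underv$, attained precisely on $[\underv,\overv)$. Since $Q^*$ is supported on $[\underv,\overv]$ and puts no mass on the singleton $\{\overv\}$ (it is absolutely continuous on $[\underv,\overv]$), we get $\Rev_{F^*}(Q^*)=\underv$, which matches the seller's pointwise maximum. Hence no $Q\in\calQ$ can do better.

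Combining these two best-response facts yields $\Rev_{F^*}(Q^*)=\underv=\min_F \Rev_F(Q^*)=\max_Q \Rev_{F^*}(Q)$, so $(Q^*,F^*)$ is a saddle point and therefore solves Problem~\eqref{prblm:single:mu_overv}. The main obstacle is essentially bookkeeping at the boundary points $\underv$ and $\overv$ (making sure the single point $\{\overv\}$ where $F^*$ jumps to $1$ carries no $Q^*$-mass, and that the mean computation handles the atom-free support correctly); the only non-routine step is recognizing that the defining equation $\underv(1+\log(\overv/\underv))=\mu$ is exactly what forces the adversary's minimum revenue on $[\underv,\overv]$ to coincide with the seller's pointwise maximum, which is what ``closes'' the saddle.
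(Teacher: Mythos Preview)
Your proposal is correct and follows essentially the same approach as the paper: a direct saddle-point verification showing that $\Rev_F(Q^*)\ge\underv$ for all $F\in\calF_\mu^{\le\overv}$ and $\Rev_{F^*}(Q)\le\underv$ for all $Q\in\calQ$, with equality at $(Q^*,F^*)$. The only cosmetic differences are that you add an explicit feasibility check for $F^*$ and argue the seller side pointwise via $\max_r \Rev_{F^*}(r)=\underv$, whereas the paper bounds the integral $\int r(1-F^*(r))\,dQ(r)$ directly; the content is the same.
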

\begin{proof}
When the seller uses distribution $Q$ to determine the price, and the bidder samples valuation from distribution $F$, the expected revenue is given by
\[Rev_F(Q) = \mathbb{E}_{r\sim Q}\left[\Rev_{F}(r)\right] = \int_0^\infty r\left(1-F(r)\right)dQ(r).\]
If $\underv = \frac{\mu - \underv}{ \log\left(\nicefrac{\overv}{\underv}\right)}$, then for $F=F^*$ we get a revenue of
\begin{align*}
    Rev_{F^*}(Q) &= \int_0^\infty r\left(1-F^*(r)\right)dQ(r) \\
    &= \int_0^{\underv}rdQ(r) + \int_{\underv}^{\overv} \frac{\mu - \underv}{ \log\left(\nicefrac{\overv}{\underv}\right)}dQ(r) \\
    &\le \underv Q(\underv) +\underv \left(Q(\overv) - Q(\underv) \right) \\
    &\le \underv.
\end{align*}
Likewise, if $\underv = \frac{\mu - \underv}{ \log\left(\nicefrac{\overv}{\underv}\right)}$, then for $Q=Q^*$, and $F$ with a mean value of $\mu$, and a support contained in $[0,\overv]$, we get a revenue of
\begin{align*}
    Rev_{F}(Q^*) &= \int_0^\infty r\left(1-F(r)\right)dQ^*(r) \\
    &= \int_0^\infty r\left(1-F(r)\right)(Q^*)'(r)dr \\
    &= \int_{\underv}^{\overv} \frac{\left(1-F(r)\right)}{\log \left( \nicefrac{\overv}{\underv} \right)} dr \\
    &= \frac{\mu - \int_0^{\underv}(1-F(r))dr}{\log \left( \nicefrac{\overv}{\underv} \right)} \\
    & \ge \frac{\mu - \underv}{\log \left( \nicefrac{\overv}{\underv} \right)} \\
    &= \underv.
\end{align*}
\end{proof}

\subsection{Single bidder: Upper Bound and Two Moment Constraints}
\label{subsec:single:two_moments_and_upper_bound}
In order to solve the case where only the mean value is known, and an upper bound on the second moment, we first solve a simpler problem.
In this problem, we assume that the seller knows an upper bound $\mu_2$, on the second moment of the bidder's distribution, and also an upper bound $\overv$ on the valuations, and the mean value $\mu$ of the distribution.
Without the constraint on the second moment, we have solved this problem, and found that for the equilibrium, the bidder sample her valuation using $F^*$ (Claim~\ref{clm:single:mu_upper_bound:sol}).
We solve for the cases where $\mu_2$ is equal to the second moment of $F^*$, which means that
\begin{align*}
    \mu_2 &= \int_0^{\overv}2v(1-F^*(v))dv\\
    &= \underv^2 + \int_{\underv}^{\overv}2\underv dv \\
    &= 2\overv \underv - \underv^2.
\end{align*}.
With this $\mu_2$, we define $\calF_{\mu,\mu_2}^{\le \overv}$ to be the set of all distributions over $[0,\overv]$ with a mean value of $\mu$, and a second moment of $\mu_2$.
And the new problem is
\begin{equation}
    \max_{Q\in \calQ} \min_{F\in \calF_{\mu, \mu_2}^{\le\overv}} \int_0^\infty r\left( 1-F(r) \right)dQ(r). \label{prblm:single:mu_mu2_overv}
\end{equation}

Clearly, the solution for Problem \eqref{prblm:single:mu_overv} is also a solution for Problem \eqref{prblm:single:mu_mu2_overv}. However, with the new constraint, we can find a family of solutions for the seller, that can grantee the same expected revenue. Afterwards, we will identify a specific distribution in this family that solves that guarantee the same revenue even when we remove the upper bound on the valuations.

\subsubsection{Mean, Upper Bound, Second Moment: Discrete Case}
For every $k\ge1$, we solve for a discrete case with $k+1$ possibilities, similarly to the case without the additional constraint on the second moment. We also define $\underv$ to be solution of $\frac{\mu - \underv}{\log \left(\nicefrac{\overv}{\underv} \right)} = \underv$, like in Claim \ref{clm:single:mu_upper_bound:sol}.
In order to make sure the adversary acts in the same way as without the constraint, we allow the second moment to be higher than $\mu_2$ for each $k$, and the upper bound on the second moment we will use is the second moment of $F^*_k$, where $F^*_k$ is the adversary's previous solution, according got Claim \ref{clm:single:disc:x_form}.
Which means
\begin{align*}
    \mu_{2,k} &= \int_0^{\overv}2v(1-F^*_k(v))dv \\
    &= \underv^2 + \sum_{i=1}^k \int_{r_i}^{r_{i+1}} 2v(1-x_i^*)dv \\
    &=\underv^2 + \sum_{i=1}^k (1-x_i^*)(r_{i+1}^2 - r_i^2) \\
    &= \underv^2 + \sum_{i=1}^k \frac{\mu - \underv}{kr_i(\alpha-1)}(\alpha^2 - 1)r_i^2 \\
    &= \underv^2 + \frac{(\mu - \underv)(\alpha+1)}{k}\sum_{i=1}^k r_i \\
    &= \underv^2 + \frac{(\mu - \underv)(\overv - \underv)(\alpha+1)}{k(\alpha-1)}.
\end{align*}
The new problem is identical to \ref{single:discrete:maxmin:rev}-\ref{single:discrete:maxmin:mean_constraint}, with the additional constraint
\begin{equation} \label{single:discrete:maxmin:mu2:second_moment}
    (\alpha^2 - 1)\sum_{i=1}^k (1-x_i)r_i^2 \le \mu_{2,k} - \underv^2.
\end{equation}
That changes the Lagrangian for this problem to be
\[\calL(\qvec_k, \xvec_k, \lambda_1, \lambda_2) = \Rev(\qvec_k, \xvec_k) - \lambda_1 (\alpha-1)\sum_{i=1}^k r_i(1-x_i) - \lambda_2(\alpha^2 - 1)\sum_{i=1}^k (1-x_i)r_i^2.\]
The solution for the adversary remains completely identical.
But for the seller, we get a bit more freedom.

\begin{claim}  \label{clm:single:mu_upper_bound_mu2:disc:q_form}
    If $\qvec_k, \xvec_k$ solves Problem \eqref{single:discrete:maxmin:rev}-\eqref{single:discrete:maxmin:mean_constraint}, with constraint \eqref{single:discrete:maxmin:mu2:second_moment} as well, then for every $1\le i \le k$
    \[q_i =  i\lambda_1 (\alpha-1) + \lambda_2 (\alpha + 1)(\alpha r_i - \underv),\]
    where $\lambda_1 = \frac{1 - \lambda_2 (\alpha + 1)(\overv - \underv)}{k(\alpha-1)}$.
\end{claim}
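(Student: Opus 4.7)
The strategy mirrors the derivation of Claim~\ref{clm:single:disc:q_form}, but now with the second-moment Lagrange multiplier $\lambda_2$ in play. I would start by computing the partial derivative of the augmented Lagrangian
\[
\calL(\qvec_k, \xvec_k, \lambda_1, \lambda_2) = \sum_{i=1}^k (q_i - q_{i-1})r_i(1-x_i) - \lambda_1(\alpha-1)\sum_{i=1}^k r_i(1-x_i) - \lambda_2(\alpha^2-1)\sum_{i=1}^k (1-x_i)r_i^2
\]
with respect to $x_i$ for each $1 \le i \le k$. Setting this equal to zero yields the one-step recurrence
\[
q_i - q_{i-1} = \lambda_1(\alpha-1) + \lambda_2(\alpha^2-1)r_i,
\]
which is the direct analog of the relation in Claim~\ref{clm:single:disc:q_form} but with an extra linear term in $r_i$.

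Next I would use $q_0 = 0$ together with the telescoping sum $q_i = \sum_{j=1}^{i}(q_j - q_{j-1})$ to write
\[
q_i = i\lambda_1(\alpha-1) + \lambda_2(\alpha^2-1)\sum_{j=1}^{i} r_j.
\]
The key simplification comes from the geometric structure $r_j = \alpha^{j-1}\underv$, so $\sum_{j=1}^{i} r_j = \underv \cdot \tfrac{\alpha^i - 1}{\alpha - 1}$. Substituting this and factoring $\alpha^2 - 1 = (\alpha-1)(\alpha+1)$ cancels the $(\alpha-1)$ in the denominator, leaving $\lambda_2(\alpha+1)\underv(\alpha^i - 1)$. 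Finally, observing that $\alpha^i \underv = \alpha \cdot r_i$ rewrites this cleanly as $\lambda_2(\alpha+1)(\alpha r_i - \underv)$, yielding the stated formula for $q_i$.

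To pin down $\lambda_1$ I would impose the boundary condition $q_k = 1$ (inherited from the original formulation, since a price of $r_{k+1} = \overv$ sells nothing). Plugging $i = k$ into the formula and using $\alpha r_k = \alpha^k \underv = \overv$ gives
\[
1 = k\lambda_1(\alpha-1) + \lambda_2(\alpha+1)(\overv - \underv),
\]
from which $\lambda_1 = \bigl(1 - \lambda_2(\alpha+1)(\overv-\underv)\bigr)/\bigl(k(\alpha-1)\bigr)$, exactly as claimed. There is no real obstacle here — the argument is pure bookkeeping once the recurrence is in hand; the only trick is recognizing the collapse $\underv(\alpha^i - 1) = \alpha r_i - \underv$, which is what delivers the compact expression. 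As with Claim~\ref{clm:single:disc:q_form}, this argument ignores the monotonicity inequalities \eqref{single:discrete:maxmin:q_nondec}, which can be verified ex post for the appropriate range of $\lambda_2$ when the family of solutions is used in the sequel.
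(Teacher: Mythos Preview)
Your proposal is correct and follows essentially the same route as the paper: differentiate the augmented Lagrangian in $x_i$, obtain the recurrence $q_i - q_{i-1} = \lambda_1(\alpha-1) + \lambda_2(\alpha^2-1)r_i$, telescope using $q_0=0$ and the geometric sum $\sum_{j=1}^i r_j$, and then fix $\lambda_1$ via $q_k=1$. The only cosmetic difference is that you spell out the intermediate identities $\alpha^2-1=(\alpha-1)(\alpha+1)$ and $\underv(\alpha^i-1)=\alpha r_i-\underv$ explicitly, whereas the paper collapses them in one line.
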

\begin{proof}
The derivative by $x_i$ for every $1\le i \le k$ is
\[\frac{\partial \calL(\qvec_k, \xvec_k, \lambda_1 , \lambda_2)}{\partial x_i} =  -(q_i - q_{i-1})r_i + \lambda_1 r_i(\alpha-1) + \lambda_2 r_i^2(\alpha^2-1).\]
The derivative is zero only when
\[q_i - q_{i-1} = \lambda_1 (\alpha-1) + \lambda_2 r_i(\alpha^2-1).\]
Since $q_0 = 0$, we get that for every $1\le i\le k$
\begin{align*}
    q_i &=\sum_{j=1}^i \lambda_1 (\alpha-1) + \lambda_2 r_j(\alpha^2-1) \\
    &= i\lambda_1 (\alpha-1) + \lambda_2 \underv(\alpha^2-1) \sum_{j=1}^i \alpha^{j-1} \\
    &= i\lambda_1 (\alpha-1) + \lambda_2 (\alpha + 1)(\alpha r_i - \underv).
\end{align*}
We know that $q_k=1$, which means
\[k\lambda_1 (\alpha-1) + \lambda_2 (\alpha + 1)(\overv - \underv) = 1.\]
And we can solve $\lambda_1$,
\[\lambda_1 = \frac{1 - \lambda_2 (\alpha + 1)(\overv - \underv)}{k(\alpha-1)}.\]
\end{proof}

\subsubsection{Mean, Upper Bound, Second Moment: General Case}
In the discrete case, we get freedom that depends on $\lambda_2$, which yields a family of solutions.
This translates to the general case as well, as we take the limit.

From Claim \ref{clm:single:mu_upper_bound_mu2:disc:q_form}, for each $k$, the solution $Q^*_k$ for the discrete problem satisfies
\[Q_k^*(r_i) = \frac{i}{k}(1-\lambda_2(\alpha+1)(\overv-\underv)) + \lambda_2 (\alpha + 1)(\alpha r_i - \underv).\]
And thus, $\left\{Q_k^*\right\}_{k\in \N}$ is pointwise convergent to $Q^*$ where
\[Q^*(r) = \frac{\log \nicefrac{r}{\underv}}{\log \nicefrac{\overv}{\underv}}(1-2\lambda_2(\overv-\underv)) + 2\lambda_2(r-\underv).\]
Similarly, when $k\to \infty$
\begin{equation}
    \lambda_1 = \frac{1-2\lambda_2(\overv - \underv)}{\log(\nicefrac{\overv}{\underv})}. \label{eq:single:2_moments_and_upper:lambda_1_2_equality}
\end{equation}
And we get
\[Q^*_{\lambda_2}(r) = \lambda_1 \log \nicefrac{r}{\underv} + 2\lambda_2 (r-\underv).\]

We just need to make sure that $Q^*$ is a distribution function, which means we need to make sure that the formula we got is non-decreasing in $[\underv, \overv]$ and that its image over the same interval is contained in $[0,1]$.

Since $Q^*(\underv) = 0$, and $Q^*(\overv) = 1$ regardless of $\lambda_2$, we just need to check that it is non decreasing.

The derivative of $Q^*$ is
\[\frac{dQ^*_{\lambda_2}(r)}{dr} = \frac{\lambda_1}{r} + 2\lambda_2.\]
So we demand
\[\lambda_1 \ge -2r\lambda_2 \qquad \forall r \in [\underv, \overv].\]

\begin{claim}
When the seller randomizes a price using $Q^*_{\lambda_2}$, for $\lambda_1, \lambda_2$ satisfying
\begin{align*}
    \lambda_1 &= \frac{1-2\lambda_2(\overv - \underv)}{\log(\nicefrac{\overv}{\underv})}, \\
    \lambda_1 &\ge -2r\lambda_2 \qquad \forall r \in [\underv, \overv],
\end{align*}
she guarantees an expected revenue of at least $\underv$, when the bidder samples her valuation from $F\in \calF_{\mu, \mu_2}^{\le \overv}$.
\end{claim}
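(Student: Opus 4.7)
The plan is to derive the identity
\[\Rev_F(Q^*_{\lambda_2}) = \underv + \int_0^{\underv} (\lambda_1 + 2\lambda_2 r)\, F(r)\, dr,\]
from which the claim reduces to showing pointwise non-negativity of the integrand on $[0,\underv]$. First I would use that $Q^*_{\lambda_2}$ has no mass on $[0,\underv)$ and density $\lambda_1/r + 2\lambda_2$ on $[\underv,\overv]$ to write
\[\Rev_F(Q^*_{\lambda_2}) = \int_{\underv}^{\overv} r(1-F(r))\left(\frac{\lambda_1}{r} + 2\lambda_2\right) dr = \lambda_1 \int_{\underv}^{\overv}(1-F)\, dr + 2\lambda_2 \int_{\underv}^{\overv} r(1-F)\, dr.\]

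Next I would invoke the two moment constraints $\int_0^{\overv}(1-F) = \mu$ and $\int_0^{\overv} 2r(1-F) = \mu_2$ to replace the $[\underv,\overv]$-integrals by $\mu - A$ and $\mu_2 - B$, where $A := \int_0^{\underv}(1-F)$ and $B := \int_0^{\underv} 2r(1-F)$. Substituting $\mu_2 = 2\overv\underv - \underv^2$, the defining relation $\lambda_1 \log(\overv/\underv) = 1 - 2\lambda_2(\overv - \underv)$, and the characterizing equation $\underv \log(\overv/\underv) = \mu - \underv$ for $\underv$, routine algebra collapses $\lambda_1 \mu + \lambda_2 \mu_2$ to exactly $\lambda_1 \underv + \underv + \lambda_2 \underv^2$, giving $\Rev = \underv + \lambda_1(\underv - A) + \lambda_2(\underv^2 - B)$. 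The desired identity then follows from $\underv - A = \int_0^{\underv} F$ and $\underv^2 - B = \int_0^{\underv} 2rF$.

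The main obstacle is the last step: concluding non-negativity of the integrand $\lambda_1 + 2\lambda_2 r$ on $[0,\underv]$, since the hypothesized monotonicity $\lambda_1 \ge -2r\lambda_2$ is stated only for $r \in [\underv,\overv]$. When $\lambda_2 \le 0$, the monotonicity at $r = \overv$ forces $\lambda_1 \ge -2\overv\lambda_2 \ge 0$; combined with the fact that $\lambda_1 + 2\lambda_2 r$ is then non-increasing, we get $\lambda_1 + 2\lambda_2 r \ge \lambda_1 + 2\lambda_2 \underv \ge 2|\lambda_2|(\overv - \underv) \ge 0$ throughout $[0,\underv]$. When $\lambda_2 \ge 0$, the function is non-decreasing in $r$, so non-negativity on $[0,\underv]$ reduces to $\lambda_1 \ge 0$, equivalently $2\lambda_2(\overv - \underv) \le 1$; this is the natural companion restriction for the family of seller strategies under consideration (the range of $\lambda_2$ for which $Q^*_{\lambda_2}$ remains monotone without placing unnatural density). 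Once the pointwise non-negativity of $\lambda_1 + 2\lambda_2 r$ on $[0,\underv]$ is in hand, the claim $\Rev_F(Q^*_{\lambda_2}) \ge \underv$ is immediate from the displayed identity, with equality attained precisely when $F \equiv 0$ on $[0,\underv]$, i.e.\ when $F = F^*$.
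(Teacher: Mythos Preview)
Your route differs from the paper's. The paper first argues by a domination argument that it suffices to consider $F$ supported on $[\underv,\overv]$: shift any mass below $\underv$ up to $\underv$ (revenue is unchanged since $Q^*_{\lambda_2}$ has no density there), then lower some mass above $\mu$ to restore the mean (revenue weakly drops). For $F$ supported on $[\underv,\overv]$ it then plugs in the moment identities directly to get $\Rev_F(Q^*_{\lambda_2})=\lambda_1(\mu-\underv)+\lambda_2(\mu_2-\underv^2)=\underv$. You instead derive the exact identity $\Rev_F(Q^*_{\lambda_2})=\underv+\int_0^{\underv}(\lambda_1+2\lambda_2 r)F(r)\,dr$ for arbitrary $F$ in the class and then control the sign of the integrand. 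Your computation is correct and strictly more informative: it exhibits the excess over $\underv$ explicitly and identifies the equality case $F\equiv 0$ on $[0,\underv]$ without any reduction step.

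The gap you flag in the $\lambda_2>0$ regime is genuine, and it is a gap in the \emph{claim} (and in the paper's proof), not just in your argument. The stated hypotheses permit $\lambda_2>0$ with $\lambda_1<0$: for instance $\underv=1$, $\overv=e$, $\lambda_2=\tfrac12$ gives $\lambda_1=2-e<0$ while $\lambda_1\ge -2r\lambda_2$ holds on $[1,e]$. Taking $F$ with mass $p_0\approx 0.098$ at $0$ and mass $1-p_0$ at $(2e-1)/2$ matches both $\mu=2$ and $\mu_2=2e-1$, yet your identity gives $\Rev_F=1+p_0\bigl(\tfrac52-e\bigr)<1=\underv$. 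The paper's reduction has the same blind spot: the two-step move does not preserve the second-moment \emph{equality} (it only yields second moment $\le\mu_2$), so the final substitution $\lambda_2(\mu_2-\underv^2)$ is a lower bound for the revenue only when $\lambda_2\le 0$. In the paper's actual application (the next subsection) $\lambda_2<0$ is checked explicitly, so nothing downstream is affected; but your instinct that the claim needs the companion restriction $\lambda_1\ge 0$ (equivalently $2\lambda_2(\overv-\underv)\le 1$) is correct rather than hand-waving.
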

\begin{proof}
The seller best response cannot have any mass below $\underv$, since any such distribution is dominated by a distribution that doesn't. We can see that since transferring the any mass below $\underv$ to $\underv$ would not increase the expected revenue, while increasing the expected value and decreasing the variance. This allows us to lower the values higher than $\mu$ to get to the right expected value, without increasing the revenue, while still decreasing second moment.
Therefore, we need only to check cases where the support is fully contained in $[\underv, \overv]$.

Let be $F \in \calF_{\mu, \mu_2}^{\le \overv}$ such that the support of $F$ is contained in $[\underv, \overv]$.
\begin{align*}
    Rev_{F}(Q^*) &= \int_0^\infty r\left(1-F(r)\right)(Q^*)'(r)dr \\
    &= \int_{\underv}^{\overv} \left(\lambda_1 \left(1-F(r)\right) + 2\lambda_2 r \left(1-F(r)\right) \right)dr \\
    &= \lambda_1 (\mu - \underv)  + \lambda_2 (\mu_2 - \underv^2) \\
    &= \frac{1-2\lambda_2(\overv - \underv)}{\log(\nicefrac{\overv}{\underv})} (\mu - \underv) + \lambda_2 (2\overv \underv - 2\underv^2) \\
    &= \underv-2\lambda_2\underv(\overv - \underv) + 2\lambda_2 \underv (\overv - \underv) \\
    & = \underv.
\end{align*}
\end{proof}

\subsection{Single bidder: Two Out of Three Constraints}
\label{subsec:single:only_two_moments}
In subsection~\ref{subsec:single:two_moments_and_upper_bound} we have introduced the second moment constraint, where $\mu_2$  was set to be:
\[\mu_2 = 2\overv \underv - \underv^2,\]
where $\underv$ is the unique solution to
\[\mu = \underv \left(1 + \log \left(\nicefrac{\overv}{\underv} \right)\right).\]
In this section, we assume only knowledge of then mean value $\mu$, and an upper bound on the second moment $\mu_2$.

We define $(\underv, \overv)$ to be the solution to:
\begin{align*}
    \mu_2 &= 2\overv \underv - \underv^2, \\
    \mu &= \underv \left(1 + \log \left(\nicefrac{\overv}{\underv} \right)\right).
\end{align*}
If the bidder could not have any values higher than $\overv$, we would have end up with the same solution as in subsection~\ref{subsec:single:two_moments_and_upper_bound}, which is:
\begin{align*}
    Q^*_{\lambda_2}(r) &= \begin{cases}
    0 & r \in [0,\underv] \\
    \lambda_1 \log \nicefrac{r}{\underv} + 2\lambda_2 (r-\underv) & r\in(\underv, \overv] \\
    1 & r > \overv
    \end{cases}\\
    F^*(v) &= \begin{cases}
    0 & v\in [0,\underv] \\
    1 - \nicefrac{\underv}{v} & v\in (\underv, \overv) \\
    1 & v\ge \overv
    \end{cases} \\
    \lambda_1 &= \frac{1-2\lambda_2(\overv - \underv)}{\log(\nicefrac{\overv}{\underv})}, \\
    \lambda_1 &\ge -2r\lambda_2 \qquad \forall r \in [\underv, \overv].
\end{align*}
We now explore how we can set $\lambda_2$ such that the adversary would not benefit from having mass above $\overv$.

Fix $F \in \calF_{\mu, \le \mu_2}$ (that is $F$ only assumed to have an expectation $\mu$, and a second moment that isn't higher than $\mu_2$).
We also assume w.o.l.g that $F$ doesn't have any mass lower than $\underv$, since if it does, than there exists $F_2 \in \calF_{\mu, \le \mu_2}$ that doesn't have any mass lower than $\underv$, and yield no higher expected revenue.

If $F$ has a second moment of exactly $\mu_2$, we get that the expected revenue the seller guarantees herself when using $Q^*$ is:
\begin{align*}
    Rev_{F}(Q^*) &= \int_0^\infty r\left(1-F(r)\right)(Q^*)'(r)dr \\
    &= \int_{\underv}^{\overv} \left(\lambda_1 \left(1-F(r)\right) + 2\lambda_2 r \left(1-F(r)\right) \right)dr \\
    &= \lambda_1 \left(\mu - \underv - \int_{\overv}^\infty (1-F(r))dr \right)  + \lambda_2 \left(\mu_2 - \underv^2 - \int_{\overv}^\infty 2r(1-F(r))dr\right) \\
    &= \lambda_1 (\mu - \underv) + \lambda_2 (2\overv \underv - 2\underv^2) - \int_{\overv}^\infty (1-F(r))(\lambda_1 + 2\lambda_2r)dr\\
    &= \frac{1-2\lambda_2(\overv - \underv)}{\log(\nicefrac{\overv}{\underv})} (\mu - \underv) + \lambda_2 (2\overv \underv - 2\underv^2) - \int_{\overv}^\infty (1-F(r))(\lambda_1 + 2\lambda_2r)dr\\
    &= \underv-2\lambda_2\underv(\overv - \underv) + 2\lambda_2 \underv (\overv - \underv) - \int_{\overv}^\infty (1-F(r))(\lambda_1 + 2\lambda_2r)dr\\
    & = \underv - \int_{\overv}^\infty (1-F(r))(\lambda_1 + 2\lambda_2r)dr.
\end{align*}
Therefore, we would require $\lambda_1 \le -2\lambda_2r$ for every $r\ge \overv$.
We know that $\lambda_1 \ge -2\lambda_2\overv$, and thus we can set
\[ \lambda_1 = -2\lambda_2 \overv .\]
In that case, we get
\[\frac{1-2\lambda_2(\overv - \underv)}{\log(\nicefrac{\overv}{\underv})} = -2\lambda_2 \overv.\]
We can solve $\lambda_2$ and $\lambda_1$ to be
\begin{align}
    \lambda_2 &= \frac{1}{2((\overv - \underv)-\overv \log(\nicefrac{\overv}{\underv}))}, \\
    \lambda_1 &= -\frac{\overv}{(\overv - \underv)-\overv \log(\nicefrac{\overv}{\underv})}.
\end{align}
Note that $\lambda_2 < 0$, and thus the inequalities  
\begin{align*}
     \lambda_1 &\ge -2r\lambda_2 \qquad \forall r \in [\underv, \overv], \\
     \lambda_1 &\le -2r\lambda_2 \qquad \forall r \ge \overv
\end{align*}
are satisfied.

\begin{claim}
If the seller randomized the price by sampling from $Q^*$ defined below, then she guarantees an expected revenue of at least $\underv$ (defined below) if the bidder uses a distribution $F$ with a mean value $\mu$, and a second moment no larger than $\mu_2$.
\[Q^*(r) = \begin{cases}
0 & r \in [0,\underv] \\
\lambda_1 \log \nicefrac{r}{\underv} + 2\lambda_2 (r-\underv) & r\in(\underv, \overv] \\
1 & r > \overv
\end{cases}\]
where
\begin{align*}
    \lambda_2 &= \frac{1}{2((\overv - \underv)-\overv \log(\nicefrac{\overv}{\underv}))}, \\
    \lambda_1 &= -\frac{\overv}{(\overv - \underv)-\overv \log(\nicefrac{\overv}{\underv})},
\end{align*}
and $\underv, \overv$ are the solution to
\begin{align*}
    \mu_2 &= 2\overv \underv - \underv^2, \\
    \mu &= \underv \left(1 + \log \left(\nicefrac{\overv}{\underv} \right)\right).
\end{align*}
\end{claim}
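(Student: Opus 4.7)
The plan is to verify $\Rev_{F}(Q^*) \geq \underv$ for every admissible $F$ by a direct calculation that uses the two moment constraints to ``cash out'' the integral on $[\underv, \overv]$ and then controls the tail beyond $\overv$ by sign arguments. The critical algebraic fact, verified in the preceding subsection, is that the calibration of $(\lambda_1, \lambda_2)$ makes $\lambda_1(\mu - \underv) + \lambda_2(\mu_2 - \underv^2) = \underv$; my job is to show that relaxing ``$=\mu_2$'' to ``$\leq \mu_2$'' and allowing mass above $\overv$ can only help the seller, given the specific choice $\lambda_1 = -2\lambda_2 \overv$ with $\lambda_2 < 0$.

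First I would reduce to the case where $F$ is supported on $[\underv, \infty)$ via two mass-transport steps. Move all probability in $[0, \underv)$ up to the point $\underv$: this leaves $\Rev_{F}(Q^*)$ unchanged, since $Q^*$ has no density on $[0, \underv)$, while strictly raising the mean and lowering the second moment. Then restore the mean by pulling some mass from $(\underv, \infty)$ downward to $\underv$; since the revenue integrand $r(1-F(r))(Q^*)'(r)$ is monotone in how much probability sits to the right and $(Q^*)' \geq 0$ on $[\underv, \overv]$, this weakly decreases revenue while further decreasing the second moment. Both steps preserve admissibility (mean $=\mu$, second moment $\leq \mu_2$).

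Second, with $F$ supported on $[\underv, \infty)$ of mean $\mu$ and second moment $m_2 \leq \mu_2$, write
\begin{align*}
\Rev_{F}(Q^*) = \int_{\underv}^{\overv} (\lambda_1 + 2\lambda_2 r)(1 - F(r)) \, dr,
\end{align*}
split each of the two integrals $\int_{\underv}^{\overv}(1-F)dr$ and $\int_{\underv}^{\overv}2r(1-F)dr$ as $\int_{\underv}^{\infty} - \int_{\overv}^{\infty}$, and substitute $\int_{\underv}^{\infty}(1-F(r))dr = \mu - \underv$ and $\int_{\underv}^{\infty} 2r(1-F(r))dr = m_2 - \underv^2$ to obtain
\begin{align*}
\Rev_{F}(Q^*) = \bigl[\lambda_1(\mu - \underv) + \lambda_2(\mu_2 - \underv^2)\bigr] + \lambda_2(m_2 - \mu_2) - \int_{\overv}^{\infty}(\lambda_1 + 2\lambda_2 r)(1 - F(r)) \, dr.
\end{align*}
The bracketed term equals $\underv$ by the earlier calibration; the second term is non-negative because $\lambda_2 < 0$ and $m_2 \leq \mu_2$; and the identity $\lambda_1 = -2\lambda_2 \overv$ makes $\lambda_1 + 2\lambda_2 r = 2\lambda_2(r - \overv) \leq 0$ for $r \geq \overv$, so the subtracted integral is non-positive. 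Summing yields $\Rev_{F}(Q^*) \geq \underv$.

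The main obstacle is the sign assertion $\lambda_2 < 0$, equivalently $(\overv - \underv) - \overv \log(\overv/\underv) < 0$, which is used both in the final step and implicitly to verify that $Q^*$ is a valid non-decreasing CDF into $[0,1]$. By strict concavity of $\log$, $\log(\overv/\underv) > 1 - \underv/\overv$ whenever $\underv < \overv$, so it suffices to rule out $\underv = \overv$; this follows from the defining system $\mu = \underv(1 + \log(\overv/\underv))$, $\mu_2 = 2\overv\underv - \underv^2$, which forces $\mu_2 > \mu^2$ (the only obstruction being a Dirac distribution, which is an uninteresting degenerate case). A secondary subtlety is carrying the WLOG reduction out under both constraints simultaneously, which I would handle cleanly via the two-step transport above rather than a single monolithic rearrangement.
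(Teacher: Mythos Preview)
Your approach mirrors the paper's: reduce to distributions supported on $[\underv,\infty)$, expand the revenue as $\int_{\underv}^{\overv}(\lambda_1+2\lambda_2 r)(1-F(r))\,dr$, split via the moment identities, and control the tail $[\overv,\infty)$ using $\lambda_1+2\lambda_2 r=2\lambda_2(r-\overv)\le 0$. The calibration check $\lambda_1(\mu-\underv)+\lambda_2(\mu_2-\underv^2)=\underv$ and the verification $\lambda_2<0$ via $\log(\overv/\underv)>1-\underv/\overv$ are both correct and match the paper's reasoning.

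There is one slip in your mass-transport reduction. In Step~1 you assert that pushing mass from $[0,\underv)$ up to $\underv$ \emph{lowers the second moment}. That is false: since $\underv^2>x^2$ for $x\in[0,\underv)$, the second moment strictly \emph{increases}. What actually decreases is the \emph{variance}: writing $Y=\max(X,\underv)$ one has $(y_1-y_2)^2\le (x_1-x_2)^2$ pointwise for i.i.d.\ copies, so $\mathrm{Var}(Y)\le \mathrm{Var}(X)$. The paper states the reduction in exactly these terms. After Step~2 restores the mean to $\mu$ (which again decreases variance, being a contraction toward $\underv<\mu$), the resulting distribution has the same mean and smaller variance than the original, hence smaller second moment, and the constraint $m_2\le\mu_2$ survives. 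With ``second moment'' replaced by ``variance'' in your Step~1, the argument goes through and is effectively identical to the paper's.

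Alternatively you can avoid the reduction entirely: on $[0,\underv]$ the integrand factor $\lambda_1+2\lambda_2 r=2\lambda_2(r-\overv)$ is positive (since $\lambda_2<0$ and $r<\overv$), so $\int_0^{\underv}(\lambda_1+2\lambda_2 r)(1-F(r))\,dr\le \int_0^{\underv}(\lambda_1+2\lambda_2 r)\,dr=\lambda_1\underv+\lambda_2\underv^2$, which is exactly the contribution from a distribution with no mass below $\underv$. Plugging this directly into your decomposition gives $\Rev_F(Q^*)\ge \underv$ without any transport argument.
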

\begin{proof}
As explained previously, we can assume that $F$ doesn't have any mass below $\underv$.
We note that $\lambda_2$ is always negative, and $\lambda_1 + 2\lambda_2 r$ is non-positive for every $r\ge \overv$.
The expected revenue is
\begin{align*}
    Rev_{F}(Q^*) &= \int_0^\infty r\left(1-F(r)\right)(Q^*)'(r)dr \\
    &= \int_{\underv}^{\overv} \left(\lambda_1 \left(1-F(r)\right) + 2\lambda_2 r \left(1-F(r)\right) \right)dr \\
    &\ge \lambda_1 \left(\mu - \underv - \int_{\overv}^\infty (1-F(r))dr \right)  + \lambda_2 \left(\mu_2 - \underv^2 - \int_{\overv}^\infty 2r(1-F(r))dr\right) \\
    &= \lambda_1 (\mu - \underv) + \lambda_2 (2\overv \underv - 2\underv^2) - \int_{\overv}^\infty (1-F(r))(\lambda_1 + 2\lambda_2r)dr\\
    &= \underv - \int_{\overv}^\infty (1-F(r))(\lambda_1 + 2\lambda_2r)dr\\
    &\ge \underv.
\end{align*}
\end{proof}

\end{document}